\documentclass[11pt]{article}

\usepackage{amsmath,amssymb,amsthm,amsfonts,latexsym,bbm,xspace,graphicx,float,mathtools,braket,dirtytalk}
\usepackage[inkscapeformat=png]{svg}
\usepackage[letterpaper,margin=1in]{geometry}

\usepackage{enumitem}

\usepackage{yhmath}
\usepackage{thmtools}
\usepackage{thm-restate}
\usepackage{subcaption}
\usepackage{yfonts}
\usepackage[linesnumbered,ruled,vlined]{algorithm2e}
\SetKwInput{KwInput}{Input}                
\SetKwInput{KwOutput}{Output}
\SetKwFor{RepTimes}{repeat}{times}{end}



\usepackage{svg}

\usepackage[colorinlistoftodos]{todonotes}

\usepackage[pagebackref,colorlinks,citecolor=blue,,linkcolor=magenta,bookmarks=true,breaklinks]{hyperref}
\usepackage[nameinlink,capitalize]{cleveref}
\newcommand{\trc}{\mathrm{tr}\!}

\renewcommand{\backref}[1]{}

\renewcommand{\backrefalt}[4]{%
\ifcase #1 %
\or
[p.\ #2]%
\else
[pp.\ #2]%
\fi}

\newtheorem{theorem}{Theorem}[section]
\newtheorem{lemma}[theorem]{Lemma}

\newtheorem{proposition}[theorem]{Proposition}
\newtheorem{fact}[theorem]{Fact}
\newtheorem{corollary}[theorem]{Corollary}

\theoremstyle{definition}
\newtheorem{definition}[theorem]{Definition}

\newtheorem{remark}[theorem]{Remark}

\crefformat{footnote}{#2\footnotemark[#1]#3}

\Crefname{enumi}{Fact}{Facts}
\crefname{enumi}{fact}{facts}


\renewcommand{\Pr}{\mathop{\bf Pr\/}}
\newcommand{\E}{\mathop{\bf E\/}}
\newcommand{\Ex}{\mathop{\bf E\/}}

\newcommand{\tr}{\mathrm{tr}}  
\newcommand{\poly}{\mathrm{poly}}

\newcommand{\polylog}{\mathrm{polylog}}

\newcommand{\purestate}{\calS}
\newcommand{\mixedstate}{\calD}

\newcommand{\R}{\mathbb R}

\newcommand{\N}{\mathbb N}


\newcommand{\QAC}{\mathsf{QAC}}

\newcommand{\BQE}{\mathsf{BQE}}
\newcommand{\BQTIME}{\mathsf{BQTIME}}

\newcommand{\PSPACE}{\mathsf{PSPACE}}

\newcommand{\sPSPACE}{\mathsf{statePSPACESIZE}}
\newcommand{\sBQP}{\mathsf{stateBQP}}
\newcommand{\sBQTIME}{\mathsf{stateBQTIME}}
\newcommand{\sBQSUBEXP}{\mathsf{stateBQSUBEXP}}
\newcommand{\puresBQSUBEXP}{\mathsf{pureStateBQSUBEXP}}
\newcommand{\sBQE}{\mathsf{stateBQE}}
\newcommand{\puresBQE}{\mathsf{pureStateBQE}}

\newcommand{\sCircuit}[1]{\mathsf{stateBQSIZE}\left[#1\right]}
\newcommand{\puresCircuit}[1]{\mathsf{pureStateBQSIZE}\!\left[#1\right]}
\newcommand{\puresPSPACE}{\mathsf{pureStatePSPACESIZE}}
\newcommand{\puresBQTIME}{\mathsf{pureStateBQTIME}}
\newcommand{\uPSPACE}{\mathsf{unitaryPSPACESIZE}}
\newcommand{\uBQTIME}{\mathsf{unitaryBQTIME}}
\newcommand{\uBQP}{\mathsf{unitaryBQP}}
\newcommand{\uBQSPACE}{\mathsf{unitaryBQSPACESIZE}}

\newcommand{\eps}{\varepsilon}


\newcommand{\calA}{\mathcal{A}}
\newcommand{\calB}{\mathcal{B}}
\newcommand{\calC}{\mathcal{C}}
\newcommand{\calD}{\mathcal{D}}

\newcommand{\calF}{\mathcal{F}}
\newcommand{\calG}{\mathcal{G}}

\newcommand{\calO}{\mathcal{O}}
\newcommand{\calP}{\mathcal{P}}

\newcommand{\calS}{\mathcal{S}}

\newcommand{\calU}{\mathcal{U}}

\newcommand{\calY}{\mathcal{Y}}


\newcommand{\tracedistance}[1]{d_{\mathrm{tr}}(#1)}
\newcommand{\diamonddistance}[1]{d_{\Diamond}(#1)}




\newcommand{\npack}{\mathcal{N}_\text{pack}}

\newcommand{\abs}[1]{\lvert #1 \rvert}

\newcommand{\ketbra}[2]{\ket{#1}\!\!\bra{#2}}


\renewcommand{\hat}{\widehat}



\newcommand{\ignore}[1]{}

\renewcommand{\Re}{\mathrm{Re}}

\newcommand{\anote}[1]{}
\newcommand{\jnote}[1]{}
\newcommand{\hnote}[1]{}

\newcommand{\ainnote}[1]{}
\newcommand{\jinnote}[1]{}
\newcommand{\hinnote}[1]{}
\newcommand{\tnote}[1]{}
\newcommand{\znote}[1]{}

\newcounter{termcounter}[equation]
\renewcommand{\thetermcounter}{\the\numexpr\value{equation}+1\relax.\roman{termcounter}}
\crefname{term}{term}{terms}
\creflabelformat{term}{#2\textup{(#1)}#3}

\makeatletter
\def\term{\@ifnextchar[\term@optarg\term@noarg}
\def\term@optarg[#1]#2{%
  \textup{#1}%
  \def\@currentlabel{#1}%
  \def\cref@currentlabel{[][2147483647][]#1}%
  \cref@label[term]{#2}}
\def\term@noarg#1{%
  \refstepcounter{termcounter}%
  \textup{\thetermcounter}%
  \cref@label[term]{#1}}
\makeatother

\title{
Quantum State Learning Implies Circuit Lower Bounds
}

\author{Nai-Hui Chia\thanks{\texttt{nc67@rice.edu}. Rice University.}\and Daniel Liang\thanks{\texttt{dl88@rice.edu}. Rice University.} \and Fang Song\thanks{\texttt{fsong@pdx.edu}. Portland State University.}}

\date{\today}

\begin{document}

\maketitle

\begin{abstract}
    We establish connections between state tomography, pseudorandomness, quantum state synthesis, and circuit lower bounds.
    In particular, let $\mathfrak{C}$ be a family of non-uniform quantum circuits of polynomial size and suppose that there exists an algorithm that, given copies of $\ket \psi$, distinguishes whether $\ket \psi$ is produced by $\mathfrak{C}$ or is Haar random, promised one of these is the case.
    For arbitrary fixed constant $c$, we show that if the algorithm uses at most $O\!\left(2^{n^c}\right)$ time and $2^{n^{0.99}}$ samples then $\sBQE \not\subset \mathsf{state}\mathfrak{C}$.
    Here $\mathsf{stateBQE} \coloneqq \mathsf{stateBQTIME}\left[2^{O(n)}\right]$ and $\mathsf{state}\mathfrak{C}$ are state synthesis complexity classes as introduced by Rosenthal and Yuen \cite{rosenthal_et_al:LIPIcs.ITCS.2022.112}, which capture problems with classical inputs but quantum output.
    Note that efficient tomography implies a similarly efficient distinguishing algorithm against Haar random states, even for nearly exponential-time algorithms.
    Because every state produced by a polynomial-size circuit can be learned with $2^{O(n)}$ samples and time, or $O\!\left(n^{\omega(1)}\right)$ samples and $2^{O(n^{\omega(1)})}$ time, we show that even slightly non-trivial quantum state tomography algorithms would lead to new statements about quantum state synthesis.
    Finally, a slight modification of our proof shows that distinguishing algorithms for quantum states can imply circuit lower bounds for decision problems as well.
    This help sheds light on why time-efficient tomography algorithms for non-uniform quantum circuit classes has only had limited and partial progress.

    Our work parallels results by Arunachalam, Grilo, Gur, Oliveira, and Sundaram \cite{arunachalam2022quantum} that revealed a similar connection between quantum learning of Boolean functions and circuit lower bounds for classical circuit classes, but modified for the purposes of state tomography and state synthesis.
    As a result, we establish a conditional pseudorandom state generator, a circuit size hierarchy theorems for non-uniform state synthesis, and connections between state synthesis class separations and decision class separations, which may be of independent interest.
\end{abstract}

\newpage 
\tableofcontents
\vfill
\thispagestyle{empty}
\newpage

\section{Introduction}
\emph{Quantum state tomography} is the task of constructing an accurate classical description of an unknown quantum state given copies of said unknown state, and is the quantum generalization of learning a probability distribution given access to samples from said distribution.
Dating back to the 1950s \cite{fano1957description}, it has become a fundamental problem in quantum information that has numerous applications in verification of quantum experiments and the like \cite{d2003quantum,Banaszek_2013}.

However, for general quantum states this becomes a famously expensive task \cite{o2016efficient,haah2017sample,chen2022tight} and requires $\Omega\!\left(2^n\right)$ samples even for pure states \cite{BRU1999249}.
As such, major attention has been placed on performing efficient tomography for specific classes of quantum states, such as stabilizer states \cite{aaronson43identifying,montanaro-bell-sampling} (and some of their generalizations \cite{lai2022learning,grewal2023efficient,leone2023learning,hangleiter2023bell,Chia2023,grewal2023efficient2}), non-interacting fermion states \cite{aaronson2023efficient}, matrix product states \cite{landoncardinal2010efficient}, and low-degree phase states \cite{arunachalam2022phase}.

However, the class of states produced by low-complexity circuits has remained particularly challenging.
Informally, we define low-complexity circuits to have depth or number of gates that cannot be too large.
For instance, only recently do we have an efficient algorithm for learning the output of states produced by polynomial-size constant-depth unitary circuits of $2$-local gates (i.e., $\mathsf{QNC}^0$), but with the strong restriction that their connectivity must lie on a 2D lattice \cite{huang2024learning}.
Likewise, only recently do we have a \emph{quasi-poly sample} algorithm for learning the $\emph{Choi}$ states produced by constant-depth unitary circuits with both $2$-local gates and $n$-ary Toffoli gates (i.e. Choi states of $\mathsf{QAC}^0$ circuits), with the runtime still exponential in the number of qubits \cite{nadimpalli2024pauli} and again with a strong restriction on the number of ancilla qubits allowed and that there is only one qubit of output.
In contrast to these quantum results, $\mathsf{NC}^0$ is trivially easy to Probably Approximately Correct (PAC) learn and $\mathsf{AC}^0$ has a quasi-poly \emph{time} PAC learning algorithm \cite{linial1993constant}.
See \cite{kearns1994survey,hanneke2016pac} for details on the PAC learning model.

Lower bounds for non-uniform circuit classes have been similarly challenging in the world of computational complexity theory.
The best known circuit lower bounds for explicit functions follow from \cite{kumar:LIPIcs.CCC.2023.18}, which holds for a class of circuits in-between $\mathsf{AC}^0$ and $\mathsf{TC}^0$.
Meanwhile, the breakthrough results of Williams \cite{williams2014acc,williams2018new,murray2020circuit} showed that non-deterministic quasi-poly time (i.e., $\mathsf{NTIME}\left[n^{\log^{O(1)} n}\right]$) cannot be expressed as quasi-poly-size $\mathsf{ACC}^0$ circuits (or even $\mathsf{ACC}^0$ with a bottom layer of threshold gates), where $\mathsf{ACC}^0$ is another class that sits between $\mathsf{AC}^0$ and $\mathsf{TC}^0$.
In both of these cases, $\mathsf{TC}^0$ remains a major roadblock for proving circuit lower bounds.

In this work, we relate the hardness of learning low-depth quantum circuit classes to lower bounds for non-uniform state synthesis.
Specifically, let $\mathfrak{C}$ refer to a class of non-uniform polynomial-size quantum circuits.
We relate the difficulty of giving learning algorithms for states produced by $\mathfrak{C}$ to lower bounds for the set of quantum states that $\mathfrak{C}$ can produce.
This is done through the language of state synthesis complexity classes, such as $\sBQP, \mathsf{statePSPACE}, \mathsf{state}\mathfrak{C}$, etc., which were introduced in a series of recent work \cite{rosenthal_et_al:LIPIcs.ITCS.2022.112,metger2023pspace,bostanci2023unitary,rosenthal2023efficient}.
While the standard decision problem complexity classes capture problems with classical input and classical output (even for quantum models of computations), these state synthesis complexity classes attempt to capture the complexity of problems with classical inputs and quantum outputs.
Despite their differences, these state synthesis complexity classes seem to (and, in many cases, are \emph{designed to}) mirror the usual decision classes, such as in the case of $\mathsf{statePSPACE} = \mathsf{stateQIP}$ \cite{rosenthal_et_al:LIPIcs.ITCS.2022.112, metger2023pspace, rosenthal2023efficient}.
We also generalize these results by relating the complexity of state tomography to circuit lower bounds for decision problems (\cref{thm:main-informal2}), as well as relating the complexity of process tomography to circuit lower bounds for unitary synthesis (\cref{thm:main-unitary-informal}).
We hope that future work will both further explore the relationship between learning and lower bounds against non-uniform circuits, as well instantiate this relationship to give useful and novel lower bounds.

We now informally state our main result about state synthesis.
We show that the existence of a sufficiently efficient (in both time and samples) learner for states produced by a circuit class $\mathfrak{C}$ implies that, for every $k \geq 1$, there exists a sequence of pure states $(\ket{\psi_x})_{x \in \{0, 1\}^*}$ that can be synthesized to arbitrary inverse-exponential accuracy by a uniform quantum algorithm in time $2^{O(n)}$ but not by non-uniform $\mathfrak{C}$ circuits of size at most $O(n^k)$.

\begin{theorem}[Informal statement of \cref{cor:main}]\label{thm:main-informal}
    Let $\mathfrak{C}$ be a class of non-uniform quantum circuits.
    Suppose that for some fixed constant $c$, states produced by $\mathfrak{C}$ could be learned to constant precision (in trace distance) and constant success probability using no more than $O\left(\frac{2^{n^c}}{n^c}\right)$ time and $2^{n^{0.99}}$ many samples.
    Then, for every $k \geq 1$, there exists a state sequence in $\mathsf{stateBQE}$ that cannot be synthesized by non-uniform $\mathfrak{C}$ circuits of size at most $n^k$.\footnote{We define $\mathsf{BQE} = \mathsf{BQTIME}\left[2^{n}\right]$ to be algorithms that run in strictly $2^{O(n)}$ time, rather than $2^{\poly(n)}$, which is how $\mathsf{BQEXP}$ is defined.}
\end{theorem}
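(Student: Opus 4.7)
The plan is to argue by contradiction: fix $k\ge 1$ and suppose that every sequence in $\sBQE$ is synthesized by $\mathfrak{C}$-circuits of size $\le n^k$; I will contradict the existence of the hypothesized learner. First I would upgrade the learner into a distinguisher $D$ between $\mathfrak{C}$-producible states (of size $\le n^k$) and Haar-random states. Given copies of an unknown $\ket\psi$, $D$ runs the learner on $2^{n^{0.99}}$ copies to obtain a classical description $\widetilde C$ of some $\ket{\widetilde\psi}$, then uses $\poly(n)$ additional copies of $\ket\psi$ together with $\poly(n)$ fresh preparations of $\ket{\widetilde\psi}$ to estimate $|\langle\psi\mid\widetilde\psi\rangle|^2$ via SWAP tests. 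For $\mathfrak{C}$-produced inputs, the learner succeeds and $D$ accepts; for Haar-random inputs, the learner's output lies in the $2^{n^{O(k)}}$-sized family of $\mathfrak{C}[n^k]$-producible states, which has typically exponentially small overlap with a Haar-random vector on $n$ qubits, so $D$ rejects. Thus $D$ runs in time $O(2^{n^c})$ using $2^{n^{0.99}}+\poly(n)$ copies.

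\textbf{Hard BQE state.} Next I would construct a state sequence $(\ket{\phi_n})_n$ in $\sBQE$ that $D$ rejects. Introduce a padding in which $\ket{\phi_n}$ lives on $N=N(n,c,k)$ qubits, with $N$ chosen so that $D$'s time $2^{N^c}$ and sample cost $2^{N^{0.99}}$ both fit inside the BQE budget $2^{O(n)}$, while $N^k$ simultaneously matches (up to constants in the exponent) the targeted circuit-size bound $n^k$ after reparameterization. Within this budget I would produce $\ket{\phi_n}$ either (i) as the output of a BQE-computable high-order approximate unitary design on $N$ qubits applied to a fixed basis state, which fools any adversary whose behavior is determined by the first $2^{N^{0.99}}$ moments of its input; or (ii) by explicit diagonalization---enumerating a fine $\eps$-net of candidate states and outputting the first one whose trace distance from every $\mathfrak{C}[N^k]$-producible state is at least $1-\eps$, whose existence is guaranteed by a packing argument since there are only $2^{N^{O(k)}}$ such states in a $2^N$-dimensional sphere.

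\textbf{Contradiction and main obstacle.} By the construction above, $(\ket{\phi_n})_n\in\sBQE$ and $D$ rejects $\ket{\phi_n}$; but the contradiction hypothesis forces $\ket{\phi_n}$ to be $\mathfrak{C}[n^k]$-synthesizable after reparameterization, so $D$ must accept---contradiction. The hardest step is the second one: producing a BQE-computable state that \emph{provably} fools a distinguisher equipped with resources $T=2^{n^c}$ and $S=2^{n^{0.99}}$, since a direct enumeration of $\mathfrak{C}$-circuits at native size costs $2^{n^{\Omega(k)}}$ time and does not by itself fit in $\sBQE$. The crucial slack is that the sample bound is strictly sub-exponential ($2^{n^{0.99}}$ rather than $2^{\Theta(n)}$), which is exactly what allows the padding to simultaneously place $D$'s cost inside the BQE budget after rescaling and preserve a meaningful $n^k$-sized circuit separation. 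Carefully tracking this padding calibration, together with the concentration estimates on Haar-random overlaps, constitutes the bulk of the remaining technical work.
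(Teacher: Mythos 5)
Your first step (learner $\Rightarrow$ distinguisher via SWAP tests) matches the paper's \cref{lem:learning-to-distinguishing} in spirit, though your soundness argument for Haar inputs is shakier than the paper's: you cannot union-bound over ``the $2^{n^{O(k)}}$-sized family of $\mathfrak{C}[n^k]$-producible states,'' because the learner is free to output \emph{any} circuit of size up to its running time $2^{n^c}$, a doubly-exponential family against which Levy-type concentration does not survive a union bound. The paper instead invokes the information-theoretic bound of \cite{BRU1999249} (\cref{lem:haar-hard-to-learn}): \emph{no} algorithm given $m=2^{o(n)}$ copies of a Haar-random state can output anything with expected fidelity better than $(m+1)/(m+2^n)$, which bounds the SWAP-test bias directly with no enumeration.

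The genuine gap is your second step. Neither of your two routes produces the hard state inside $\sBQE$. Route (ii) enumerates an $\eps$-net of the $2^N$-dimensional sphere, which has $2^{2^{\Theta(N)}}$ points; with the padding $N\approx n^{1/c}$ needed to fit the distinguisher's $2^{N^c}$ running time into the $2^{O(n)}$ budget, this enumeration is still doubly exponential in $N$ and does not fit in $\sBQE$ (or even come close). Route (i) fails for a different reason: a $t$-design guarantees indistinguishability from Haar only \emph{on average over the ensemble}; $\sBQE$ requires one fixed, uniformly computable state per input length, and you give no way to select a specific ensemble member that the distinguisher rejects without re-running the enumeration you cannot afford. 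This is exactly the obstruction the paper circumvents with a win-win argument on $\puresPSPACE$ versus $\mathsf{pureStateBQSUBEXP}$: if the inclusion holds, a packing-number ``diagonalization'' (\cref{thm:pspace-diagonalization}, enumerating \emph{circuits} rather than net points, hence only $2^{\poly(n)}$ candidates, and running in polynomial \emph{space}) yields the separation; if it fails, that very separation is converted (via \cref{lem:state-implies-decision}, the conditional PRG of \cite{arunachalam2022quantum}, and the PRG$\to$PRF$\to$PRS chain of \cref{sec:pseudo}) into a pseudorandom state family lying in $\puresBQE_{\exp}$ that your distinguisher would break if $\mathfrak{C}$ could synthesize it. Without some such case analysis or an unconditional PRS, there is no known way to exhibit the fooling state unconditionally in $\sBQE$, and your proof does not close this hole.
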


Note that any class of pure states can be learned in $2^{\Theta(n)}$ time and samples by running general pure state tomography \cite{franca_et_al:LIPIcs.TQC.2021.7}.
Furthermore, classical shadows \cite{HKP20-classical-shadows} allows one to use $\omega(\poly(n))$ samples instead, at the cost of $2^{\omega(\poly(n))}$ time.
Thus, even slightly non-trivial learning algorithms would imply state synthesis lower bounds for $\mathfrak{C}$.

More carefully, we actually show that non-trivial $\emph{distinguishing}$ of pure states produced by $\mathfrak{C}$ from Haar random states either separates $\sBQE$ from $\mathsf{state}\mathfrak{C}$ or separates $\sBQSUBEXP \coloneqq \bigcap_{\gamma \in (0, 1)} \mathsf{stateBQTIME}\!\left[2^{n^\gamma}\right]$ from all polynomial-size circuits from $1$ and $2$-qubit quantum gates (i.e., $\mathsf{BQSIZE}[n^k]$).
This distinguishing task is generally a much easier task to do than tomography (see \cref{ssec:distinguish-without-learning} for a discussion of this), making the result all the more striking.
In particular, we show that even a $\frac{1}{2^{n^{0.99}}}$ advantage in distinguishing a state from $\mathfrak{C}$ from Haar random, while using at most $2^{n^{0.99}}$ samples and $O\left(2^{n^c}\right)$ time gives interesting and novel lower bounds for state synthesis.

\begin{theorem}[Informal statement of \cref{cor:main2}]\label{thm:main-informal3}
    Let $\mathfrak{C}$ be a class of non-uniform quantum circuits.
    Suppose for some fixed constant $c$ that there exists an algorithm that can distinguish states produced by $\mathfrak{C}$ from Haar random states, with $\frac{1}{2} + \frac{1}{2^{n^{0.99}}}$ success probability, using no more than $O\left(2^{n^c}\right)$ time and $2^{n^{0.99}}$ many samples.
    Then at least one of the following is true:
    \begin{itemize}
        \item For every $k \geq 1$, there exists a state sequence in $\mathsf{stateBQSUBEXP}$ that cannot be synthesized by non-uniform quantum circuits of arbitrary $1$ and $2$ qubit gates of size at most $O(n^k)$,
        \item There exists a state sequence in $\mathsf{stateBQE}$ that cannot be synthesized by polynomial size $\mathfrak{C}$ circuits (i.e., $\mathsf{stateBQE} \not\subset \mathsf{state}\mathfrak{C})$.
    \end{itemize}
\end{theorem}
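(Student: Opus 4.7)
The plan is to prove the contrapositive: assume both conclusions of the dichotomy fail simultaneously and contradict the hypothesized distinguisher. Explicitly, suppose (i) there is a fixed $k_0$ with $\sBQSUBEXP \subset \sCircuit{n^{k_0}}$, so every subexponential-time state synthesis admits a polynomial-size general-gate circuit, and (ii) $\sBQE \subset \mathsf{state}\mathfrak{C}$, so every $2^{O(n)}$-time synthesizable state admits a polynomial-size $\mathfrak{C}$ description. Mirroring the strategy of \cref{thm:main-informal} but with the distinguisher $D$ taking the place of the learner, the goal is to manufacture, under (i), a pseudorandom state sequence inside $\sBQE$ that fools every quantum adversary of size substantially larger than $D$'s total resource budget $T \cdot S = O(2^{n^c} \cdot 2^{n^{0.99}})$, and then invoke (ii) to relocate this PRS into $\mathsf{state}\mathfrak{C}$, where by hypothesis $D$ distinguishes it from Haar, contradicting pseudorandomness.

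The construction proceeds in two steps. First, I would use (i) together with the paper's circuit size hierarchy for non-uniform state synthesis (advertised as a standalone contribution) to extract from $\sBQE$ a state sequence $(\ket{\phi_n})$ whose state-synthesis circuit complexity exceeds $2^{n^{c'}}$ for some $c' > c$; the collapse asserted by (i) squeezes the whole subexponential class inside a single polynomial-size bound, so hierarchy-based diagonalization inside the strictly larger class $\sBQE$ still yields a genuinely hard sequence against sub-exponential-size general-gate circuits. Second, I would convert $(\ket{\phi_n})$ into a conditional pseudorandom state generator $G \in \sBQE$ via a quantum analog of the Nisan--Wigderson construction composed with a PRF-to-PRS transformation in the spirit of Ji--Liu--Song; $G$ is secure against every quantum adversary using $\leq 2^{n^{0.99}}$ copies and $\leq 2^{n^c}$ time with advantage strictly smaller than $1/2^{n^{0.99}}$. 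By (ii), the outputs of $G$ lie in $\mathsf{state}\mathfrak{C}$, so $D$ must distinguish them from Haar with advantage $\geq 1/2^{n^{0.99}}$; but $D$ fits exactly the adversary profile against which $G$ was built to be secure, giving the contradiction.

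The main obstacle I anticipate is step one: producing a hard sequence in $\sBQE$ of complexity sufficient to drive the PRS construction, using only the collapse hypothesis (i). Standard hardness-to-pseudorandomness transformations require hardness scaled to the target adversary size, and fooling $D$ demands hardness against general quantum circuits of roughly $2^{n^c + n^{0.99}}$ size, which is strictly sub-exponential yet far larger than polynomial. Bridging the quantitative gap between the polynomial collapse asserted by (i) and the sub-exponential hardness required downstream, all while staying at the level of state synthesis rather than Boolean decision, is the technically loaded step; it presumably requires the paper's promised connection between state-synthesis and decision-class separations together with padding arguments inside the quantum time hierarchy. A secondary delicate point is the PRF-to-PRS lift: the Ji--Liu--Song-style conversion translates a quantum-secure pseudorandom function into a pseudorandom state with some parameter degradation, and that degradation must comfortably fit inside the $1/2^{n^{0.99}}$ advantage budget furnished by the hypothesized distinguisher.
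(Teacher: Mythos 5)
Your overall framing (contrapositive of the disjunction, build a PRS in $\sBQE$, use the second failed conclusion to push it into $\mathsf{state}\mathfrak{C}$, and let the distinguisher break it) is in the right family, but the route you take to the pseudorandom state generator has a genuine gap, and it is exactly the step you flag as ``technically loaded.'' The paper does not, and cannot with current techniques, run a hardness-to-randomness construction at the level of quantum states. Its PRS comes from a very specific chain: the win-win case $\puresPSPACE_{\exp} \not\subset \puresBQSUBEXP_{\exp}$ is first converted into the \emph{decision-class} separation $\PSPACE \not\subset \mathsf{BQSUBEXP}$ (\cref{lem:state-implies-decision}, which learns the polynomial-size circuit description bit by bit — this is why $\sPSPACE$ rather than $\mathsf{statePSPACE}$ is used), that separation feeds the Arunachalam et al.\ PRG (\cref{lem:conditional-prg}), and only then is the PRG lifted to a PRF and to a binary-phase-state PRS (\cref{lem:prg-to-prf,lem:prf-to-prs}). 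Your proposal instead extracts ``hardness'' from the collapse $\sBQSUBEXP \subset \sCircuit{n^{k_0}}$ together with the circuit-size hierarchy, and then invokes ``a quantum analog of the Nisan--Wigderson construction'' applied to a hard state sequence. No such construction exists: NW-style generators take a hard \emph{Boolean function}, not a hard state sequence, and the paper explicitly lists obtaining a PRS directly from a state-synthesis separation (without a PRG/PRF intermediary) as an open problem in \cref{ssec:improvements-to-prs}. Moreover, the hierarchy theorem gives states hard for small \emph{non-uniform circuits}, which is not the kind of hardness (against \emph{uniform} subexponential-time quantum algorithms) that the conditional PRG of \cref{lem:conditional-prg} requires.

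A second, structural point: the paper does not argue by contrapositive at all. It performs an unconditional case split on whether $\puresPSPACE_{\exp} \subseteq \puresBQSUBEXP_{\exp}$. In the containment case, conclusion one follows outright from the diagonalization theorem (\cref{thm:pspace-diagonalization}) with no use of the distinguisher; in the non-containment case, the PRS exists and the distinguisher yields conclusion two via \cref{lem:prs-and-learning-to-lowerbound}. Your contrapositive could in principle be repaired — the negation of conclusion one combined with \cref{thm:pspace-diagonalization} does imply $\puresPSPACE \not\subset \puresBQSUBEXP$, which drops you into the paper's second case — but that repair runs through $\sPSPACE$, \cref{lem:state-implies-decision}, and the Boolean PRG, none of which appear in your argument. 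As written, the proposal does not close.
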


The only restriction on the circuit class is that (1) all states produced by $\mathfrak{C}$ circuits of size at most $s$ be approximated to $0.49$ in trace distance to states produced by non-uniform circuits of size $\poly(s)$ in the commonly used $\{H, \mathrm{CNOT}, T\}$ gate set and (2) the circuits do not somehow get `weaker' when the input size increases such that if $\ket{\psi}$ is a quantum state that can be efficiently synthesized in terms of input size $n$, then $\ket \psi$ can also be synthesized on inputs of size greater than $n$.
This is a very weak set of restrictions and includes a wide variety of circuit classes, such as circuits of bounded depth (such as $\mathsf{QNC}$), circuits with bounded locality, circuits with non-standard gate sets (such as $\mathsf{QAC}^0_f$), circuits with bounds on the how many times a particular gate can be used (such as the $T$ gate count in the Clifford + $T$ model), etc.
Here $\mathsf{QAC}^0_f$ is defined as constant-depth unitary circuits with both $2$-local gates, $n$-ary Toffoli gates, and the additional \emph{fanout} gate, a unitary that allows parallel \emph{classical} copying of a single qubit to many output qubits.\footnote{Think of the $\mathrm{CNOT}$ gate as \emph{classical} copying of a single qubit to a single output qubit. The fanout gate is multiple $\mathrm{CNOT}$ with the same target qubit being applied as a single action.}
This mimics the power of classical circuits to have \emph{unbounded} fanout, whereas the laws of quantum mechanics do not permit the cloning of quantum data.
In this way, the fanout gate ensures that $\mathsf{AC}^0 \subset \mathsf{QAC}^0_f$, whereas it is unknown if $\mathsf{AC}^0 \subset \mathsf{QAC}^0$.
The addition of the fanout gate even implies $\mathsf{TC}^0 \subset \mathsf{QAC}^0_f$ \cite{hoyer2005fan,takahashi2016collapse}.

We remark that, while both conclusions of \ref{thm:main-informal3} are certainly plausible, showing this is another matter.
And if the intuition that state synthesis complexity classes mirror their decision problem counterparts, formally proving these separations would be highly non-trivial.
Interestingly, when the circuit class in question contains $\mathsf{QAC}^0_f$, we can somewhat formalize this connection and show that distinguishing would imply breakthrough circuit lower bounds for the traditional setting of decision problems.
As such, it illustrates how problems purely about state synthesis and state distinguishing can have breakthrough consequences for the traditional model of complexity theory.

\begin{theorem}[Informal statement of \cref{thm:decision-lower}]\label{thm:main-informal2}

    Let $\mathfrak{C} \supseteq \mathsf{QAC}^0_f$ be a class of non-uniform quantum circuits.
    Suppose that there exists a fixed constant $c$ such that states produced by $\mathfrak{C}$-circuits of depth at most $d+3$ could be distinguished from Haar random states, with $\frac{1}{2} + \frac{1}{2^{n^{0.99}}}$ success probability, using no more than $O\left(2^{n^c}\right)$ time and $2^{n^{0.99}}$ many samples.
    Then at least one of the following is true:
    \begin{itemize}
        \item For every $k \geq 1$, there exists a language in $\mathsf{BQSUBEXP}$ that cannot be decided by non-uniform quantum circuits of arbitrary $1$ and $2$ qubit gates of size at most $O(n^k)$,
        \item There exists a language in $\mathsf{E}$ that cannot be decided to bounded error by $\mathfrak{C}$ circuits of depth at most $d$ (i.e., $\mathsf{E} \not\subset \text{(depth $d$)-}\mathfrak{C}$).\footnote{As is usually the case in complexity theory, we abuse notation and also refer to the set of languages that can be decided by circuits in $\mathfrak{C}$ as $\mathfrak{C}$ as well. We also take $\mathsf{E} \coloneqq \mathsf{DTIME}\!\left[2^{O(n)}\right].$}
    \end{itemize}
\end{theorem}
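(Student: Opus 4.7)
The plan is to bootstrap the state-synthesis version \cref{cor:main2} (informally \cref{thm:main-informal3}) into a decision-problem statement by exploiting the assumption $\mathfrak{C} \supseteq \mathsf{QAC}^0_f$, which lets us compile efficient decision circuits into efficient state-preparation circuits with only constant depth overhead. Concretely, apply \cref{cor:main2} to the hypothesized distinguisher against depth-$(d+3)$ $\mathfrak{C}$ state circuits; this yields either $\sBQSUBEXP \not\subset \puresCircuit{n^k}$ for every $k$, or $\sBQE \not\subset$ depth-$(d+3)$-$\mathsf{state}\mathfrak{C}$. Then I would establish two contrapositive ``decision-easy $\Rightarrow$ state-easy'' reductions: (i) if every language in $\mathsf{BQSUBEXP}$ has polynomial-size non-uniform decision circuits of arbitrary $1$- and $2$-qubit gates, then every sequence in $\sBQSUBEXP$ has polynomial-size non-uniform state-synthesis circuits; and (ii) if $\mathsf{E} \subset $ depth-$d$-$\mathfrak{C}$, then $\sBQE \subset $ depth-$(d+3)$-$\mathsf{state}\mathfrak{C}$. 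Combining the dichotomy from \cref{cor:main2} with the contrapositives of (i) and (ii) gives precisely the two bullets of the theorem.

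The central technical step is reduction (ii), which also explains the depth offset ``$+3$''. Given a sequence $(\ket{\psi_x})_x \in \sBQE$, I would first argue via Feynman-path simulation that the amplitudes $\alpha_{x,y} = \langle y \,|\, \psi_x \rangle$ can be truncated to $O(n)$-bit magnitudes and phases while preserving trace distance up to inverse-exponential error, and that the bit-access function $f(x,y,i)$ returning the $i$-th such bit lies in $\mathsf{DTIME}[2^{O(n)}] = \mathsf{E}$. Under the hypothesis $\mathsf{E} \subset $ depth-$d$-$\mathfrak{C}$, each output bit of $f$ is then computable by a depth-$d$ $\mathfrak{C}$ circuit. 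State synthesis proceeds in three additional layers: one Hadamard layer to produce $\tfrac{1}{\sqrt{2^n}}\sum_y \ket{y}\ket{0^m}$; the coherent depth-$d$ $\mathfrak{C}$ circuit writing the bits of $|\alpha_{x,y}|$ and $\arg \alpha_{x,y}$ into ancillas indexed by the $y$-register; and a constant-depth block of fanout-enabled controlled rotations that both imprints the magnitudes and phases onto the state and coherently erases the ancilla register, in the spirit of the H\o yer--\v{S}palek and Takahashi--Tani fanout tricks that underpin the $\mathsf{QAC}^0_f$ simulation results.

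The main obstacle I anticipate is pinning down the depth overhead exactly at $3$. A naive compute--rotate--uncompute schedule already costs $2d + O(1)$, so the argument must exploit the fanout gate's ability to broadcast and cancel classical ancillas in constant depth, and must carefully parallelize the rotation and erasure steps into a single $O(1)$-depth block that does not re-invoke the decision circuit. Additionally, encoding an amplitude from a bit register typically uses either Grover-style amplitude amplification or a Grover--Rudolph recursion, both of which cost depth; avoiding these by directly applying fanout-assisted controlled rotations, and then verifying that the total overhead collapses to exactly three layers, is the delicate bookkeeping step. Reduction (i) is analogous but easier, since its target class places no depth restriction and one may afford naive compute--rotate--uncompute; the amplitude approximation for $\sBQSUBEXP$ states then falls within $\mathsf{BQSUBEXP}$ by a scaled Feynman-path argument, so the contrapositive closes cleanly and yields the first bullet.
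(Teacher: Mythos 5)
Your overall architecture differs from the paper's, and the difference is where the gap lies. The paper does \emph{not} derive \cref{thm:decision-lower} by post-composing \cref{cor:main2} with a generic ``decision-easy $\Rightarrow$ state-easy'' reduction. Instead it runs a fresh win-win argument directly on the decision classes $\mathsf{PSPACE}$ vs.\ $\mathsf{BQSUBEXP}$: in one branch it invokes the known diagonalization $\mathsf{PSPACE}\not\subset\mathsf{BQSIZE}[n^k]$ of \cite{chia_et_al:LIPIcs.ITCS.2022.47}, and in the other it builds a PRF $\{F_k\}$ computable in $\mathsf{E}$ (\cref{cor:conditional-prf}) and defines the hard language $L = \{(x,k): F_k(x)=1\}$. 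The only ``compilation'' needed is then for the one very special family of states $\ket{F_k}=2^{-\ell/2}\sum_x(-1)^{F_k(x)}\ket{x}$: a depth-$d$ decision circuit for $L$ yields a depth-$(d+3)$ preparation of $\ket{F_k}$ because binary phase states need only Hadamards and phase kickback on top of the function evaluation, plus fanout-and-majority error reduction. That is the entire content of the ``$+3$''.

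Your reduction (ii) --- that $\mathsf{E}\subset\text{(depth $d$)-}\mathfrak{C}$ implies \emph{all} of $\mathsf{stateBQE}$ is synthesizable by depth-$(d+3)$ $\mathfrak{C}$ circuits --- is far stronger than anything the argument requires, and I do not believe it is true (or provable). Preparing a generic state from bit-access to its amplitudes is not a constant-depth operation even with fanout: the Grover--Rudolph-style encoding requires conditional rotations by angles determined by \emph{conditional probability masses} (nested partial sums over exponentially many basis states), not by the amplitude bits themselves, and the recursion has depth $\Omega(n)$; moreover, coherently uncomputing the garbage register re-invokes the decision circuit, already pushing you to depth $2d+O(1)$ before the rotation cascade. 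You flag this yourself as ``delicate bookkeeping,'' but it is an obstruction, not bookkeeping: known query-based state-synthesis results (e.g.\ Rosenthal's) cost polynomial, not constant, depth. The fix is exactly the paper's move --- do not try to synthesize arbitrary $\mathsf{stateBQE}$ states; choose the pseudorandom object to be a binary phase state so that the decision-to-state step is trivially constant depth, and run the win-win on the decision classes so that the first bullet comes from decision-level diagonalization rather than from your reduction (i).
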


This follows as a combination of the proof of \cref{thm:main-informal3} along with ideas from \cite{arunachalam2022quantum} and \cite{chia_et_al:LIPIcs.ITCS.2022.47}.
We remark that \cref{thm:main-informal,thm:main-informal3} (as well as \cref{thm:main-unitary-informal}) preserve the fine-grained depth of $\mathfrak{C}$ exactly, unlike \cref{thm:main-informal2}, which only preserves it up to a constant.
As an example, non-trivial distinguishing of states produced by depth $5$ $\mathsf{QAC}^0_f$ circuits would imply $\mathsf{stateBQE} \not\subset \text{(depth 5)-}\mathsf{stateQAC^0_f}$ and $\mathsf{E} \not\subset \text{(depth 2)-}\mathsf{QAC^0_f}$ respectively as the second possible scenario in \cref{thm:main-informal3,thm:main-informal2}.

As a result of \cref{thm:main-informal,thm:main-informal3,thm:main-informal2}, it would seem unlikely to prove formal learning results for states produced by non-uniform polynomial-size quantum circuits given the difficulty that surrounds non-uniform circuit lower bounds.
A more optimistic view would indicate that this illuminates a possible plan of attack for showing state synthesis separation against non-uniform models of computation.

Finally, we show a related result that the ability to distinguish \emph{unitaries} produced by $\mathfrak{C}$ circuits of size $n^k$ using non-adaptive queries implies circuit lower bounds for \emph{unitary synthesis}.
Unitary complexity classes consider an even broader class of problems with quantum outputs \cite{metger2023pspace,bostanci2023unitary}.
\begin{theorem}[Informal statement of \cref{thm:main-unitary}]\label{thm:main-unitary-informal}
    Let $\mathfrak{C}$ be a class of non-uniform quantum circuits.
    Suppose that there exists a fixed constant $c$ such that unitaries produced by $\mathfrak{C}$ could be distinguished from Haar random unitaries with $\frac{1}{2^{n^{0.99}}}$ success probability, using no more than $O\left(2^{n^c}\right)$ time and $2^{n^{0.99}}$ many \emph{non-adaptive} queries.
    Then, at least one of the following is true:
    \begin{itemize}
        \item For every $k \geq 1$, there exists a unitary sequence in $\mathsf{unitaryBQSUBEXP}$ that cannot be synthesized by non-uniform quantum circuits of arbitrary $1$ and $2$ qubit gates of size at most $O(n^k)$,
        \item There exists a unitary sequence in $\mathsf{unitaryBQE}$ that cannot be synthesized by polynomial size $\mathfrak{C}$ circuits (i.e., $\mathsf{unitaryBQE} \not\subset \mathsf{unitary}\mathfrak{C}$).
    \end{itemize}
\end{theorem}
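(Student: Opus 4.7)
The plan is to mirror the proof of Theorem \ref{thm:main-informal3} --- distinguishing-implies-state-separation --- and lift it to the unitary-synthesis and non-adaptive-query setting. I would proceed by contrapositive: assume that both conclusions fail, so there exist a polynomial $p$ and an integer $k$ for which $\mathsf{unitaryBQE} \subseteq$ size-$p(n)$ $\mathsf{unitary}\mathfrak{C}$ and $\mathsf{unitaryBQSUBEXP}$ sits inside non-uniform $1$- and $2$-qubit quantum circuits of size $n^k$.

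The core of the argument is to exhibit an explicit unitary sequence $(V_n)$ lying in $\mathsf{unitaryBQE}$ (or, alternatively, in $\mathsf{unitaryBQSUBEXP}$) for which the distinguisher $A$'s acceptance probability matches its Haar acceptance rate up to additive $1/2^{\Omega(n^{0.99})}$. The key observation is that $A$ issues only $t := 2^{n^{0.99}}$ non-adaptive queries, so its output statistics on input $U$ depend only on the action of $U^{\otimes t}$ on some fixed prepared query state; any approximate $t$-design therefore fools $A$. Since standard random-quantum-circuit constructions realize approximate $t$-designs with gate count $\tilde{O}(t) = 2^{\tilde{O}(n^{0.99})}$, I would deterministically enumerate design seeds, estimate $A$'s acceptance probability on each by repeated simulation, and output the first seed whose circuit $A$ accepts at (near-)Haar rate. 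Each simulation costs $2^{O(n^c)}$ time and the estimation uses $\mathrm{poly}(2^{n^{0.99}})$ samples, fitting into the $\mathsf{unitaryBQE}$ (respectively $\mathsf{unitaryBQSUBEXP}$) budget after appropriate padding in the input length, where the choice of padding dictates which of the two alternatives is invoked.

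With $V_n$ in hand, the contradiction is immediate: the contrapositive hypothesis forces $V_n$ to be $\varepsilon$-approximable in diamond distance by either a $\mathrm{poly}(n)$-size $\mathfrak{C}$-circuit or by an $n^k$-size generic circuit, and then the soundness of $A$ on $\mathfrak{C}$-produced unitaries forces $A$ to reject $V_n$, contradicting the construction's Haar-matching guarantee. As in the state case, this gives a true size-hierarchy statement across all $k$, since only the mild closure hypotheses on $\mathfrak{C}$ stated earlier in the excerpt are needed to translate between gate sets.

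The main obstacle will be executing the design-based fooling argument within the stringent $2^{O(n)}$ time budget: in particular, one must verify that circuit-level approximate-$t$-design bounds control \emph{non-adaptive query} distinguishers and not merely sample-access distinguishers, which requires transferring moment-matching estimates from state-preparation channels to parallel query channels $U^{\otimes t}$ and carefully tracking how the $1/2^{n^{0.99}}$ distinguishing advantage interacts with the approximation error of the design. A secondary subtlety is that the $2^{n^{0.99}}$ query budget and $2^{n^c}$ time budget must be jointly balanced against the padding length so that $V_n$ lands in precisely one of $\mathsf{unitaryBQE}$ or $\mathsf{unitaryBQSUBEXP}$ --- the split that produces the two-alternative structure of the theorem.
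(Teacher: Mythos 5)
Your proposal takes a genuinely different route from the paper, and unfortunately it has gaps that I do not see how to close. The paper proves \cref{thm:main-unitary} by a win--win argument on whether $\mathsf{pureUnitaryPSPACESIZE}_{\exp} \subseteq \mathsf{pureUnitaryBQSUBEXP}_{\exp}$: one branch invokes a unitary-synthesis diagonalization theorem (\cref{thm:pspace-diagonalization-unitary}) to get the first bullet, and the other branch uses \cref{lem:unitary-implies-decision} to extract $\PSPACE \not\subset \mathsf{BQSUBEXP}$, builds a conditional \emph{non-adaptive PRU} in $\mathsf{unitaryBQE}$ (PRG $\to$ PRF and qCCA-secure PRP $\to$ the $U_pFC$ construction of \cref{lem:non-adaptive-pru-td}), and lets the assumed distinguisher break it via \cref{lem:pru-and-learning-to-lowerbound}. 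You instead try to bypass pseudorandomness entirely with an unconditional approximate $t$-design and a brute-force search for a single "Haar-matching" unitary.

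The first gap is statistical: a $t$-design only guarantees that the \emph{ensemble average} of $A$'s acceptance probability matches the Haar average. It does not guarantee that any \emph{individual} design element $V$ satisfies $\abs{A(V) - \mu_{\mathrm{Haar}}} < \eps$; the elements can deviate by $\pm\eps$ in cancelling directions, so the seed your enumeration is looking for may simply not exist. (This is exactly why the paper works with an average-case pseudorandomness guarantee over the key rather than trying to isolate one fooling unitary.) The second gap is resource accounting: a $t$-design for $t = 2^{n^{0.99}}$ requires seed length $\poly(t,n) = 2^{\Omega(n^{0.99})}$, so enumerating seeds costs doubly-exponential time, and certifying that a candidate's acceptance probability is within the $2^{-n^{0.99}}$ advantage of the Haar rate requires $2^{\Omega(n^{1.98})}$ repeated simulations plus an independent estimate of the Haar rate itself --- none of which fits in the $2^{O(n)}$ budget needed to place $V_n$ in $\mathsf{unitaryBQE}$. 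Finally, your padding heuristic does not reproduce the theorem's actual dichotomy: the first bullet (a $\mathsf{unitaryBQSUBEXP}$ sequence hard for \emph{generic} $n^k$-size circuits) comes from the diagonalization branch of the win--win, not from a choice of padding length, and obtaining it requires the packing-number hierarchy and $\PSPACE$-computability of trace/diamond distance rather than anything about the distinguisher.
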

This is a consequence of recent results on non-adaptive pseudorandom unitaries by \cite{metger2024pseudorandom}.
We leave the problem of recovering a result similar to \cref{thm:main-informal2} for unitary learning to future work (see \cref{para:intro-unitary}).

\subsection{Proof Techniques}

\subsubsection{Learning Boolean Functions Implies Circuit Lower Bounds}
We model our proof of \cref{thm:main-informal} heavily after the work of Arunachalam, Grilo, Gur, Oliveira, and Sundaram \cite{arunachalam2022quantum}, which examined the relationship of quantum learning of Boolean functions with separations between $\mathsf{BQE}$ and circuit lower bounds.
This was itself a generalization of a line of works for \emph{classical} learning of Boolean functions\,\cite{FORTNOW2009efficient,harkins2013exact,klivans2013constructing,volkovich2014learning,volkovich16,oliveira2017conspiracies,carbonioliveira_et_al:LIPIcs.APPROX-RANDOM.2018.55}.
Let $\mathfrak{C}[s]$ denote circuits from $\mathfrak{C}$ of size at most $O(s)$.
Informally, \cite[Theorem 3.7]{arunachalam2022quantum} states that sufficiently efficient learning algorithms for boolean functions implemented by $O(n^k)$-size \emph{classical} circuit class $\mathfrak{C}$ implies that $\mathsf{BQE} \not\subset \mathfrak{C}[n^k]$.
The high-level of their proof works as follows:
\begin{enumerate}
    \item Assume there exists a Pseudorandom Generator (PRG) that is secure against sub-exponential-time uniform \emph{quantum} adversaries that can also be computed in $2^{O(n)}$ time.
    \item Create a language $L \in \BQE$ that requires being able to compute the PRG.
    \item Show that a sufficiently efficient learner for $\mathfrak{C}$ implies that no sub-exponential-secure PRG can be computed by $\mathfrak{C}$.
    \item Conclude that $L \not\in \mathfrak{C}$.
\end{enumerate}
Unfortunately, this proof also has the added condition of this PRG existing, and no unconditional PRGs with even close to that security guarantee are known exist.
As is the case in many predecessor works, \cite{arunachalam2022quantum} uses a win-win argument to get around this.
The two cases depend on the relationship between $\PSPACE$ being `a subset of' or `not a subset of' $\mathsf{BQSUBEXP} = \bigcap_{\gamma \in (0, 1)} \mathsf{BQTIME}\left[2^{n^\gamma}\right]$.
\begin{enumerate}
    \item Unconditionally, for every $k \geq 1$, $\PSPACE \not\subset \mathfrak{C}[n^k]$ via a diagonalization argument \cite[Lemma 3.3]{arunachalam2022quantum}.
    \item Appeal to a win-win argument based on the relationship between $\PSPACE$ and $\mathsf{BQSUBEXP}$:
    \begin{itemize}
        \item If $\PSPACE \subset \mathsf{BQSUBEXP}$ then, for every $k \geq 1$, $\mathsf{BQSUBEXP}$ and $\mathsf{BQE}$ can also diagonalize against $\mathfrak{C}[n^k]$.
        \item If $\PSPACE \not \subset \mathsf{BQSUBEXP}$ then a PRG exists such that the above argument holds.
    \end{itemize}
    \item Observe that either way, for every $k \geq 1$, $\mathsf{BQE} \not\subset \mathfrak{C}[n^k]$.
\end{enumerate}

\subsubsection{High-Level Proof for \texorpdfstring{\cref{thm:main-informal}}{Theorem 1.1}}

In our work, we follow a similar high-level path except where (1) $\mathfrak{C}$ now refers to a quantum circuit and (2) the PRG is replaced by a new, inherently quantum, pseudorandom object called a Pseudorandom State (PRS).
This object was introduced by \cite{ji-pseudorandom-states2018} as a quantum analogue for PRGs and involves an ensemble of quantum \emph{pure} states that are indistinguishable from a Haar random state by computationally bounded adversaries.
Our win-win argument is then replaced by looking at the relationship of a variant of $\mathsf{statePSPACE}$, which we call $\sPSPACE$ (see \cref{def:pspacesize}), versus $\mathsf{stateBQSUBEXP}$.

In particular, when $\sPSPACE \not \subset \mathsf{stateBQSUBEXP}$ then there exists a PRS that is secure against sub-exponential time uniform quantum adversaries that can also be synthesized in $2^{O(n)}$ time (i.e., is in $\mathsf{stateBQE}$).
Conversely, a sufficiently efficient learning algorithm for states produced by $\mathfrak{C}$ implies that $\mathfrak{C}$ cannot synthesize \emph{any} PRS that is sub-exponential-time-secure against uniform quantum adversaries.

On the other hand, when $\sPSPACE \subset \mathsf{stateBQSUBEXP}$, we can similarly prove that, for all $k\geq 1$, $\sPSPACE \not\subset \mathsf{state}\mathfrak{C}[n^k]$ to complete both sides of the win-win argument.

Despite the noticeable similarities between our proof and that of \cite{arunachalam2022quantum}, there are a number of differences that arise due to the difference in learning models as well as the change to state synthesis complexity classes.
We highlight the effect of some of the most salient differences, as well as how we alter the proofs with respect to them.

\paragraph{Learning Implies Non-Pseudorandomness}\label{para:learning-implies-distinguishing}
Learning and cryptography are natural antagonists of one another: learning implies lower bounds for constructing cryptographic objects and cryptographic objects imply hardness for learning.
One of the key ingredients in both our proof as well as \cite{arunachalam2022quantum} is utilizing this relationship to show that learning implies that certain complexity classes cannot contain pseudorandom objects.
This was done previously via a quantum natural property, where classical natural properties also appeared in \cite{volkovich2014learning,oliveira2017conspiracies}.
Without going into too much detail, a natural property against classical circuit class $\mathfrak{C}$ is an polynomial time (in the input size of $N = 2^n$) algorithm that acts on the truth table of a Boolean function and accepts with high probability for most random functions and instead rejects with high probability when given a truth table for a function in $\mathfrak{C}$.

In \cref{sec:learning} we completely eschew the notion of natural properties, or any generalization of them.
Instead, we utilize the fact that pseudorandom quantum states (PRS) always involve pure states.
This means that a SWAP test can be performed to determine how close a mixed state $\rho$ is to the pseudorandom state.
By letting $\rho$ be the output of a sub-exponential-time quantum learning algorithm for $\mathsf{state}\mathfrak{C}$ that takes $m$ copies of the state, we can run the SWAP test on an extra copy to verify if our learning algorithm did a good job.
By the definition of a learning algorithm, when given an (initially) unknown state that is in $\mathsf{state}\mathfrak{C}$ the SWAP test will accept with probability bounded away from $\frac{1}{2}$ by a non-negligible amount.
On the other hand, when given a Haar random state, any sub-exponential-sample quantum algorithm will succeed with probability at most a $\exp(-\exp(n))$.
Thus the SWAP test will accept with probability at most $\frac{1}{2} + \exp(-\poly(n))$.
We conclude that $\mathfrak{C}$ cannot synthesize any sub-exponential-secure PRS.

\begin{remark}\label{remark:learning-to-distinguishing}
    We emphasize that using a SWAP test is certainly not a novel idea, but that our analysis is the novel contribution as we could not find an existing analysis in the literature that fit out purposes.
    For instance, a similar analysis was done in \cite[Theorem 14, Theorem 15]{zhao2023learning}.
    However, it is not sufficient for our purposes because the sample complexity of the algorithms they consider (as stated) are limited to $\poly(n)$, whereas we need our reduction to hold even when the learning algorithms use up to $O\!\left(2^{n^{0.99}}\right)$ samples.
    Other benefits of \cref{lem:learning-to-distinguishing} include (1) a much simpler proof, (2) tighter bounds, and (3) a fine-grained reduction from the parameters of the learning algorithm to the distinguishing algorithm.
    Likewise, the non-cloneability of pseudorandom states as shown by \cite[Theorem 2]{ji-pseudorandom-states2018} is lossy in several ways.
    For instance, if the \emph{cloning} algorithm uses $m$ copies, then the distinguisher from \cite[Theorem 2]{ji-pseudorandom-states2018} will use $2m+1$ copies, rather than the $m+1$ in \cref{lem:learning-to-distinguishing}.
    More importantly, to get a cloning algorithm from a learning algorithm, the learning algorithm will potentially (and, in fact, likely will) destroy all $m$ copies via measurement.
    Therefore, to create the $m+1$ approximate copies of $\ket \psi$ necessary to call this an approximate cloning algorithm, we must produce $\ket{\hat{\psi}}^{\otimes m + 1}$ from scratch, where $\ket{\hat{\psi}}$ is the output of the learning algorithm.
    It follows from the Fuchs-van de Graaf inequalities (see \cref{cor:td-multi-copy}) that in order to get $\eps$-approximate cloning, the trace distance between $\ket{\psi}$ and $\ket{\hat{\psi}}$ needs to be at most $\frac{\eps}{\sqrt{m}}$.
    Since $m = O\!\left(2^{n^{0.99}}\right)$ in our circumstances, this is incredibly lossy for all but the smallest values of $\eps$.
\end{remark}

We note that, because the SWAP test only takes $O(n)$ time rather than $2^{O(n)}$ time like a natural property, we can break the security of a PRS with large amounts of samples and time, and also very small accuracy.
\cite[Theorem 3.1]{arunachalam2022quantum} instead has a trade-off between time/samples and accuracy, where taking more time/samples requires the learning algorithm to get more accurate learning and vice-versa.
Our approach also allows for achieving tighter bounds if the PRS constructions can be improved.
See \cref{ssec:improvements-to-prs} for a discussion on this.

\paragraph{PRS from State Synthesis Separations}
Perhaps the most important technical contribution of \cite{arunachalam2022quantum} is proving the existence of sub-exponential-secure PRG against uniform \emph{quantum} adversaries given $\PSPACE \not\subset \mathsf{BQSUBEXP}$.
For the win-win argument to go through, we now need a sub-exponential-secure PRS against uniform quantum computation given $\sPSPACE \not\subset \sBQSUBEXP$.
To do this, we note that there are constructions of PRS given a quantum-secure \emph{pseudorandom function} (PRF) \cite{ji-pseudorandom-states2018,brakerski10.1007/978-3-030-36030-6_10,aaronson2022quantumpseudo,giurgicatiron2023pseudorandomness,jeronimo2024pseudorandom}.
Likewise, there exist quantum-secure PRF constructions when given a quantum-secure PRG \cite{goldreich1984randolli,Zhandry21construct}.
As such, we show in \cref{sec:pseudo} the existence of a sub-exponential-secure PRS against uniform quantum computation given $\PSPACE \not\subset \mathsf{BQSUBEXP}$

\begin{remark}
We need these constructions to retain the sub-exponential security of the underlying PRG to get a sub-exponential-secure PRS, whereas most analysis of these reductions only consider polynomial-time adversaries.
This is more restrictive than it sounds.
For instance, constructions such as random subset states \cite{giurgicatiron2023pseudorandomness,jeronimo2024pseudorandom} are not known to be secure against sub-exponential-time adversaries.
Additionally, because the PRG from \cite{arunachalam2022quantum} requires $2^{\poly(n)}$ time to compute, the usual method of obtaining a quantum-secure PRF from a quantum-secure PRG  \cite{goldreich1984randolli,Zhandry21construct} cannot be used.
Finally, other constructions such as non-binary phase states \cite{ji-pseudorandom-states2018} and random subset states with random phases \cite{aaronson2022quantumpseudo} would not allow us to achieve \cref{thm:main-informal2}, as they require super-constant depth in the construction.
\end{remark}

While this is close to what we want, we still need a PRS conditional on $\sPSPACE \not\subset \sBQSUBEXP$, rather than their decision version counterparts.
To bridge the gap, in \cref{lem:state-implies-decision} we show that $\sPSPACE \not\subset \sBQSUBEXP$ implies $\PSPACE \not\subset \mathsf{BQSUBEXP}$.
This is done by defining a decision problem $L$ based on a particular bit in the description of the separating circuit in $\sPSPACE$.
Therefore, if $L \in \mathsf{BQSUBEXP}$ then, by iterating over all bits in the description, the entire description can be learned.
This is why we need to use $\sPSPACE$, where the circuits always have polynomial size descriptions, as opposed to $\mathsf{statePSPACE}$, where the description could potentially have unbounded size. 
In this way, the entire circuit description can be learned bit-by-bit by a $\mathsf{BQSUBEXP}$ algorithm.
Having the circuit description allows us to then synthesize the state, ultimately implying $\sPSPACE \subset \sBQSUBEXP$, which contradicts the initial assumption.

\paragraph{State Synthesis ``Diagonalization''}\label{para:intro-diagonalization}
On the other side of the win-win argument, the proof of: for every $k \geq 1$, $\PSPACE \not\subset \mathfrak{C}[n^k]$ uses a diagonalization argument.
At a high level, the strategy involves computing what the majority of some set of circuits in $\mathfrak{C}[n^k]$ do for a particular input and then outputting the opposite bit.
By iterating over each element of the truth table, each new input cuts the number of $\mathfrak{C}[n^k]$ circuits that agree with the $\PSPACE$ algorithm by at least a half.
Since there are only $2^{\poly(n)}$ circuits in $\mathfrak{C}[n^k]$ (since they are polynomial size), after about $\poly(n)$ entries in the truth table no circuits will agree.

Unfortunately, such a bit-flipping method does not apply to state synthesis, nor does it work against the notion of non-uniformity used in state synthesis (see \cref{remark:non-uniform-difference}).
For instance, there's no clear definition of what the ``majority'' state is for a given set of states.\footnote{\cite{buhrman_et_al:LIPIcs.ITCS.2023.29} is \emph{not} applicable here, as we are not deciding between two pairs of orthogonal single-qubit quantum states, but rather a collection of $n$-qubit quantum states that has no restrictions to their pair-wise inner products.}
Thus it's not clear how to perturb a state to make it disagree with many other states.
Additionally, while bit strings (even when viewed as computational basis states) are ``well-spaced'', the set of quantum states produced by quantum circuits can be arbitrarily close to each other in trace distance.
Thus even if the state is successfully perturbed, it might not even be appreciably far from the set of old states.

Nevertheless, we still need to show that a $\poly(n)$-space algorithm can find a quantum state that is $\eps$-far from any state produced by a $n^k$-size quantum circuit for some fixed $k \in \N$.
To do so, we utilize a result of Oszmaniec, Kotowski, Horodecki and Hunter-Jones \cite{oszmaniec2022saturation}, which gives a lower bound for a quantity known as the packing number (see \cref{def:packing-number}) for circuits of bounded depth.
Informally, the packing number is the maximal number of states that are $\eps$-far from each other.
Therefore, for sets of states $A \subset B$, if the packing number of $A$ is strictly less than the packing number of $B$ then some state in $B$ is $\eps$-far from all states in $A$.

Utilizing the above lower bound, as well as another counting argument, we establish a circuit-size hierarchy theorem for state synthesis in \cref{ssec:circuit-heirarchy}.
In particular, we prove that circuits of size $s^\prime \coloneqq \poly(s)$ can always create a state far away from anything in produced by size at most $s$.
Finally, in \cref{ssec:diagonalization} we use the fact that trace distance can be approximated in $\poly(n)$-space \cite{watrous2002quantum} to iterate through circuits of $s^\prime$-size and find said state.

\subsubsection{State Tomography Implies Circuit Lower Bounds for Decision Problems}\label{sssec:intro-decision-techniques}
The proof of \cref{thm:main-informal2} much more closely resembles \cite{arunachalam2022quantum} as we are now dealing with decision problems.
As such, the win-win argument now centers around $\mathsf{PSPACE}$ and $\mathsf{BQSUBEXP}$.
In a world where $\mathsf{PSPACE} \subset \mathsf{BQSUBEXP}$, we note that $\mathsf{BQSUBEXP}$ can now diagonalize against general non-uniform quantum circuits \cite{chia_et_al:LIPIcs.ITCS.2022.47}.\footnote{We re-emphasize that the notion of non-uniformity differs between state synthesis and decision problems. Thus this result does \emph{not} imply the work in \cref{para:intro-diagonalization}.}
Conversely, when $\mathsf{PSPACE} \not\subset \mathsf{BQSUBEXP}$ then we use the argument of \cite{arunachalam2022quantum} to construct a language $L$ that is in $\mathsf{BQE}$.
Furthermore, if $L$ was in (depth $d$)-$\mathfrak{C}$ then (depth $d+3$)-$\mathfrak{C}$ could create a PRS, which is contradicted by the assumed existence of a learning algorithm.
This implies $L \not\in \text{(depth $d$)}-\mathfrak{C}$.

\subsubsection{Unitary Synthesis Separations From Unitary Distinguishing}\label{para:intro-unitary}
The proof of \cref{thm:main-unitary-informal} follows the same strategy as \cref{thm:main-informal3}, but with the PRS replaced by a (non-adaptive) Pseudorandom Unitary (PRU) \cite{ji-pseudorandom-states2018,metger2024pseudorandom,chen2024efficient}.
Unfortunately, there are not currently known to be PRU constructions that are secure against adaptive queries.
Indeed, even in very limited settings there has only been recent progress towards constructing these objects \cite{lu2023quantum,metger2024pseudorandom,chen2024efficient}.
Due to the modular nature of our proof, any improvements to PRU constructions will likely immediately strengthen \cref{thm:main-unitary-informal}.
Additionally, we detail a pathway to an analogue of \cref{thm:main-informal2} for unitary learning.
See \cref{remark:unitary-decision} for details.

\subsection{Related Work}

Despite our work largely paralleling the proof techniques of \cite{arunachalam2022quantum}, we note that our result and theirs are not directly comparable.
In their model, the learning is given access to an oracle that implements a Boolean function $f : \{0, 1\}^n \rightarrow \{0, 1\}$.
The learner then needs to output a single-qubit-output quantum process $C_f$ such that \[\Ex_{x \sim \{0, 1\}^n}\left[\tr\left(\ketbra{f(x)}{f(x)} \cdot C_f\left(\ketbra{x}{x}\right)\right)\right]\] is large.
This makes their model a restricted form of \emph{quantum process tomography}, rather than state tomography.
The circuit classes they deal with are also strictly classical.

Additionally, \cite{zhao2023learning} recently used the fact that $\mathsf{TC}^0  \subset \mathsf{QNC}$ to show that learning $\mathsf{QNC}$ circuits is as hard as a variant of Learning with Errors, which can be encoded into a $\mathsf{TC}^0$ circuit \cite{banerjee2021pseudorandom,arunachalam2021hardness}.\footnote{Using $\mathsf{TC}^0 \subset \mathsf{QAC}_f^0 \subset \mathsf{QNC}^1$ \cite{hoyer2005fan,takahashi2016collapse,rosenthal2023query} one can extend this result to hardness for logarithmic depth rather than just poly-logarithmic.}
While the hardness of Learning with Errors is certainly not only plausible but currently widely-believed, it could only be super-polynomial hardness or just not hard at all.
Our result complements their result by giving hardness without the need of any cryptographic assumptions.
Put another way, in a world where an sub-exponential-time attack against Learning with Errors was possible, sufficiently non-trivial learning of $\mathsf{QAC}^0_f$ states could still imply interesting results.

Finally, \cite{chia_et_al:LIPIcs.ITCS.2022.47} make similar kinds of statements to \cref{thm:main-informal2} in the context of the Minimum Circuit Size Problem (\textsf{MCSP}) for quantum circuits.
This is the problem of deciding if the truth table of a function requires a certain circuit size to compute.
Our work is actually most similar to their notion of State Minimum Circuit Size (\textsf{SMCSP}), for which there are not similar results.

\subsection{Discussion and Open Questions}\label{sec:discussion}

\paragraph{Distinguishing Without Learning}\label{ssec:distinguish-without-learning}
\sloppy
One may note that, as per \cref{thm:main-informal3,thm:main-informal2}, that the only assumption really necessary for a lower bound is a \emph{distinguisher} from Haar random and not a learner.
While we show (in \cref{lem:learning-to-distinguishing}) that learning implies distinguishing to eventually achieve \cref{thm:main-informal}, there is the possibility of a distinguisher existing whereas a learning algorithm does not.
For instance, at the time of writing this manuscript, Clifford + $T$ circuits only up to $O(\log n)$ $T$ gates can be efficiently learned \cite{grewal2023efficient,leone2023learning,hangleiter2023bell,Chia2023,grewal2023efficient2}, whereas up to $n$ $T$ gates can be distinguished from Haar random in polynomial time \cite{grewal2023improved}.

We remark that this is the opposite intuition of what's formally stated in \cref{lem:learning-to-distinguishing}, where the distinguisher is slower than the learner by a log factor.
This is entirely a consequence of the log factors in \cref{fact:universal-circuit} needed to test the algorithm via the SWAP test with only black-box access to the learning algorithm.
However, in an informal sense, learning should actually be \emph{harder} than distinguishing.
Thus, it is usually the case that a learner can be turned into a distinguisher that runs as fast as (or oftentimes, even faster than) the learning algorithm by not using the SWAP test approach, but rather the underlying structure of the state it is trying to learn.

\paragraph{Faster PRS Gives Better Circuit Lower Bounds}\label{ssec:improvements-to-prs}
In this work, we show that a sufficiently efficient learning/distinguishing algorithm for states prepared by $\mathsf{state}\mathfrak{C}$ implies new circuit lower bounds for state synthesis (or decision problems resp.).
This bound holds regardless of how efficient the learning algorithm is, as long as it is more efficient than a certain threshold.
However, it would be great if a poly-time (or quasi-poly) learning algorithm implied a stronger lower bound, such as $\mathsf{pureStateBQP}_{\exp} \not \subset \mathsf{pureState}\mathfrak{C}_{\exp}$.

The major technical roadblock is the construction of a PRS against uniform quantum computations that can be computed more efficiently than $O\!\left(2^\kappa\right)$, where $\kappa$ is the key length (see \cref{def:prs}).
In turn, we don't need the PRS to be as secure, since we have assumed a much faster algorithm for distinguishing.
For instance, suppose the underlying PRG (see \cref{lem:conditional-prg}) could be computed in time $O(f(\kappa))$ for $f = o\!\left(2^\kappa\right)$, but was only secure against polynomial time adversaries.
This would imply that the PRS lies in $\mathsf{pureStateBQTIME}\!\left[f\right]_{\exp}$ instead, improving the lower bound.

A second approach would be to directly get a PRS from a state synthesis complexity theoretic assumption such as $\puresPSPACE_\delta \not\subset \sBQTIME\!\left[f\right]_{\delta + \eps}$ without using a PRF or PRG as an intermediary.
By the work of \cite{Kre21-pseudorandom,KQST23-prs}, there is evidence to believe that a PRS may be possible in scenarios where a PRG/PRF (or quantum-secure OWF, more generally) is not.
Both approaches are independently interesting and the proof techniques necessary would likely have many significant consequences.

Note that a unique benefit of using the SWAP test over something akin to a natural property \cite{RAZBOROV199724}, is that the learning-to-distinguishing (see \cref{para:learning-implies-distinguishing}) argument holds even when the parameters of the PRS are significantly altered.
In contrast, the analogous results in \cite{arunachalam2022quantum,chia_et_al:LIPIcs.ITCS.2022.47} require the security of the PRG to have nearly exponential stretch and the security to then be super-polynomial in the stretch (i.e., also nearly exponential).

\paragraph{Better Security Allows For Stronger Adversaries}\label{para:discussion-prs-parameters}

The parameters of the underlying PRG also affect the proof in other ways, such that improvements would affect different parameters in the statements of \cref{thm:main-informal,thm:main-informal3,thm:main-informal2}.
This is documented precisely in \cref{lem:prs-and-learning-to-lowerbound}.
For instance, if the PRG was secure against stronger adversaries, relative to the number of output bits, then the running time of the learning/distinguishing algorithms would increase accordingly.
Quantitatively, if the number of output bits of the PRG is $2^\ell$ (such that the PRS is on $\ell$-qubits) and the security is $f(\ell)$, then the allowed running time of the distinguishing algorithm simply becomes $f(n)$.
As $f(n)$ grows, the requirements to invoke our learning-to-circuit-lowerbound results becomes weaker, making this an appealing open direction.

Finally, we note that in the proof of \cref{lem:prg-to-prf}, one might na\"ively hope to truncate the output of the PRG to artificially decrease the stretch relative to the security.
This would make the security very big relative to the output bits, which, as pointed out above, allows for a learner with larger and larger amounts of time to still imply lower bounds.
However, when the truncation is too great, \cref{lem:prs-and-learning-to-lowerbound} tells us that we will only be proving results about very small circuits. For instance, when the output of the PRG is truncated to a polynomial number of bits, we we will only be able to say things about poly-logarithmic size circuits.

Another way of seeing this is that, because the key length does not change, this will make the state synthesis about producing (or rather, not producing) pseudo-random states on a smaller and smaller number of qubits.
From the point of view of a learning algorithm, the size of the quantum circuit that produces the state will be growing inversely relative to the truncation.
Therefore, a learning algorithm allowed to run in time $2^{f(n)}$ in the number of qubits $n$ will have to learn states produced by $\mathfrak{C}[\poly(f(n))]$, making the problem seemingly no more tractable if $f$ grows super-polynomially.
In fact, at a large enough growth of $f(n)$, $\mathfrak{C}[\poly\left(f(n)\right)]$ may simply be able to produce state sequences that are \emph{statistically} indistinguishable from Haar random, making the problem completely intractable.

\section{Preliminaries}
For functions $f: X \rightarrow Y$ and $f: Y \rightarrow Z$ we define $f \circ g : X \rightarrow Z$ to be the composition of $f$ and $g$.

We define the $p$-norm of vector $v$ to be $\lVert v \rVert_p \coloneqq \sqrt[p]{\sum_i v_i^p}$.
The Schatten $p$-norm of a matrix $A$ is defined to be $\lVert A \rVert_p \coloneqq \sqrt[p]{\trc\left[\left(A^\dagger A\right)^p\right]}$ or the $p$-norm of the singular values of $A$, with the convention that $\lVert \cdot \rVert_\infty$ is the operator norm.

\subsection{Quantum States, Maps, and Measurements}

\begin{definition}
    A quantum state on $n$ qubits is a $2^n \times 2^n$ positive semi-definite Hermitian matrix $\rho$ such that $\trc\left[\rho\right] = 1$.
\end{definition}

We will refer to quantum states $\rho$ such that the rank of $\rho$ is $1$ as \emph{pure states}.
Oftentimes such pure states will simply be written as a $2^n$-dimensional complex unit column vector $\ket{\psi}$ or row vector $\bra{\psi}$ such that their outer-product $\ketbra{\psi}{\psi} = \rho$.
Quantum states that cannot be written in such a manner will often be referred to as \emph{mixed states}.

We will denote the set of quantum states on $n$ qubits as $\mixedstate_n$, while the set of of quantum pure states on $n$ qubits is denoted by $\purestate_n$.
For convenience we will often drop the subscript as the $n$ will be obvious from context.

A trace-preserving completely positive map from $n$-qubits to $n$-qubits is any linear map $\Phi : \mixedstate_n \rightarrow \mixedstate_n$ such that $\text{Id}_m \otimes \Phi : \mixedstate_{m+n} \rightarrow \mixedstate_{m+n}$ for any $m \in \N$, where $\text{Id}_m$ is the identity map on $m$ qubits.

A positive operator-valued measurement on $n$-qubits is a set of positive semi-definite matrices $\{\Pi_i\}$ such that $\sum_i \Pi_i$ is the $2^n \times 2^n$ identity matrix. We define the probability of event $i$ happening when measuring a state $\rho$ with $\{\Pi_i\}$ to be $\trc\left[\Pi_i \rho\right]$.

\subsubsection{Distances Between Quantum States}

The following is the standard notion of distance between quantum states used in this paper.

\begin{definition}[Trace Distance]\label{def:td}
    The trace distance between two states $\rho$ and $\sigma$ is 
    \[
        \tracedistance{\rho, \sigma} \coloneqq \frac{1}{2}\lVert \rho - \sigma \rVert_1.
    \]
\end{definition}

The trace distance is useful because it tells us that distinguishing one copy of $\rho$ from one copy of $\sigma$ can be done with bias at most $\tracedistance{\rho, \sigma}$.
\begin{fact}[\cite{nielsen2002quantum}]\label{fact:td-to-tv}
    For any positive operater-valued measurement $\{\Pi_i\}$ and quantum states $\rho$ and $\sigma$,
    \[
        \tracedistance{\rho, \sigma} \geq \frac{1}{2} \sum_i \abs{\trc\left[\Pi_i \rho\right] - \trc\left[\Pi_i \sigma\right]} = \frac{1}{2} \sum_i \abs{\trc\left[\Pi_i \cdot \left(\rho -  \sigma\right)\right]}.
    \]
\end{fact}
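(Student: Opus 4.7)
The plan is to prove this via the Jordan--Hahn (i.e., positive/negative part) decomposition of the Hermitian operator $\rho-\sigma$. First I would write $\rho - \sigma = P - Q$ where $P, Q \succeq 0$ are positive semi-definite operators with mutually orthogonal supports, obtained from the spectral decomposition of $\rho - \sigma$ by separating eigenvalues by sign. Because the supports are orthogonal, $|\rho - \sigma| = P + Q$, and hence $\lVert \rho - \sigma \rVert_1 = \trc[P] + \trc[Q]$.

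Next, for each POVM element $\Pi_i$, I would estimate the per-outcome bias. Linearity gives $\trc[\Pi_i(\rho - \sigma)] = \trc[\Pi_i P] - \trc[\Pi_i Q]$. Since $\Pi_i, P, Q$ are all positive semi-definite, both $\trc[\Pi_i P]$ and $\trc[\Pi_i Q]$ are non-negative real numbers, so
\[
    \abs{\trc[\Pi_i (\rho - \sigma)]} = \abs{\trc[\Pi_i P] - \trc[\Pi_i Q]} \leq \trc[\Pi_i P] + \trc[\Pi_i Q].
\]

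Finally, I would sum this inequality over all outcomes $i$ and invoke the POVM completeness relation $\sum_i \Pi_i = I$ to collapse each of the two terms on the right:
\[
    \sum_i \abs{\trc[\Pi_i (\rho - \sigma)]} \leq \sum_i \trc[\Pi_i P] + \sum_i \trc[\Pi_i Q] = \trc[P] + \trc[Q] = \lVert \rho - \sigma \rVert_1.
\]
Dividing by $2$ yields exactly the claimed inequality, with the equality on the right of the fact's display being trivial from linearity of the trace.

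There is no real obstacle here: the only subtlety is recognizing that the Jordan--Hahn decomposition of a Hermitian operator gives both the $1$-norm as $\trc[P] + \trc[Q]$ and a convenient representation for bounding the POVM bias outcome-by-outcome. One might wonder whether the crude bound $|\trc[\Pi_i P] - \trc[\Pi_i Q]| \leq \trc[\Pi_i P] + \trc[\Pi_i Q]$ loses too much, but after summing over $i$ the POVM property makes both sides collapse to the quantities $\trc[P]$ and $\trc[Q]$, so nothing is lost in the aggregate — indeed the inequality in the fact is saturated by the two-outcome POVM consisting of the projectors onto the supports of $P$ and $Q$.
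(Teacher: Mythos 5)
Your proof is correct. The paper itself gives no proof of this fact --- it is stated with a citation to Nielsen and Chuang --- and your argument (Jordan--Hahn decomposition $\rho-\sigma = P - Q$ with orthogonal supports, per-outcome bound $\abs{\trc[\Pi_i(\rho-\sigma)]} \leq \trc[\Pi_i P] + \trc[\Pi_i Q]$, then collapse via $\sum_i \Pi_i = I$) is exactly the standard textbook proof that citation points to, including the correct observation that the two-outcome POVM onto the supports of $P$ and $Q$ saturates the bound.
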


Another useful notion of distance, known as fidelity, will be relevant in \cref{sec:learning}.

\begin{definition}[Fidelity]
    The fidelity between two quantum states $\rho$ and $\sigma$ is
    \[
        \calF(\rho, \sigma) \coloneqq \left(\tr\sqrt{\sqrt{\rho} \sigma \sqrt{\rho}}\right)^2.
    \]
\end{definition}

It is well known that if $\rho$ is a pure state $\ketbra{\psi}{\psi}$ then the fidelity becomes $\calF(\ketbra{\psi}{\psi}, \sigma) = \braket{\psi | \sigma | \psi}$.
Likewise, it is well known that both fidelity and trace distance lie in the interval $[0, 1]$.

To relate the two quantities, we give bounds between trace distance and fidelity when one of the states is a pure state.
They can be seen as a form of Fuchs-van de Graaf inequality \cite{fuchs1999cryptographic,Watrous_2018}.

\begin{fact}[Folklore]\label{fact:td-vs-fidelity}
    Given pure state $\ket{\psi}$ and mixed state $\sigma$ then
    \[
        1 - \calF(\ketbra{\psi}{\psi}, \sigma) \leq \tracedistance{\ketbra{\psi}{\psi}, \sigma} \leq \sqrt{1 - \calF(\ketbra{\psi}{\psi}, \sigma)}.
    \]
    Furthermore, the upper bound is tight when $\sigma$ is also a pure state.
\end{fact}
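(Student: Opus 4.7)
The plan is to prove the two inequalities separately, since the lower and upper bounds use quite different ingredients. Let $F \coloneqq \calF(\ketbra{\psi}{\psi}, \sigma) = \bra{\psi}\sigma\ket{\psi}$ throughout.

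For the lower bound $1 - F \leq \tracedistance{\ketbra{\psi}{\psi}, \sigma}$, I would apply \cref{fact:td-to-tv} with the particular two-outcome POVM $\{\Pi_0, \Pi_1\} = \{\ketbra{\psi}{\psi},\, I - \ketbra{\psi}{\psi}\}$. Direct computation gives $\tr[\Pi_0 \ketbra{\psi}{\psi}] = 1$, $\tr[\Pi_0 \sigma] = F$, $\tr[\Pi_1 \ketbra{\psi}{\psi}] = 0$, and $\tr[\Pi_1 \sigma] = 1-F$. Summing the absolute differences yields $\frac{1}{2}(|1-F| + |{-}(1-F)|) = 1 - F$, establishing the lower bound.

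For the upper bound $\tracedistance{\ketbra{\psi}{\psi}, \sigma} \leq \sqrt{1-F}$, I would proceed in three steps. First, I would verify the pure-state case: for any unit vectors $\ket{\alpha}, \ket{\beta}$ with overlap $c = \braket{\alpha|\beta}$, the Hermitian operator $\ketbra{\alpha}{\alpha} - \ketbra{\beta}{\beta}$ is supported on the two-dimensional span of $\{\ket{\alpha}, \ket{\beta}\}$, and a short computation in the $\{\ket{\alpha}, \ket{\alpha}^\perp\}$ basis of that plane shows its eigenvalues are $\pm\sqrt{1 - |c|^2}$, giving $\tracedistance{\ketbra{\alpha}{\alpha}, \ketbra{\beta}{\beta}} = \sqrt{1 - |c|^2}$. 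Second, I would write the spectral decomposition $\sigma = \sum_i \lambda_i \ketbra{\phi_i}{\phi_i}$ (with $\lambda_i \geq 0$, $\sum_i \lambda_i = 1$) and use joint convexity of trace distance, which follows from the triangle inequality for $\lVert \cdot \rVert_1$, to obtain
\[
\tracedistance{\ketbra{\psi}{\psi}, \sigma} \;\leq\; \sum_i \lambda_i \,\tracedistance{\ketbra{\psi}{\psi}, \ketbra{\phi_i}{\phi_i}} \;=\; \sum_i \lambda_i \sqrt{1 - |\braket{\psi|\phi_i}|^2}.
\]
Third, I would apply Jensen's inequality to the concave function $\sqrt{\cdot}$, yielding
\[
\sum_i \lambda_i \sqrt{1 - |\braket{\psi|\phi_i}|^2} \;\leq\; \sqrt{\,1 - \sum_i \lambda_i |\braket{\psi|\phi_i}|^2\,} \;=\; \sqrt{1 - F},
\]
which is the desired bound. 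The tightness statement when $\sigma$ is pure is immediate from the first step, since then $\sigma = \ketbra{\phi}{\phi}$ and $F = |\braket{\psi|\phi}|^2$.

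There is no real obstacle here — this is a textbook computation — but the step most prone to being glossed over is the explicit pure-state calculation in the two-dimensional invariant subspace, since without it the convexity-plus-Jensen reduction has nothing to reduce to. An alternative route through Uhlmann's theorem and monotonicity of trace distance under partial trace would also work, but the convexity argument avoids introducing purification machinery not developed earlier in the paper.
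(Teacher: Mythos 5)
Your proof is correct. The lower bound is argued exactly as in the paper: both apply \cref{fact:td-to-tv} to the two-outcome POVM $\{\ketbra{\psi}{\psi},\, I - \ketbra{\psi}{\psi}\}$ and read off $1 - \calF(\ketbra{\psi}{\psi},\sigma)$. The only divergence is in the upper bound, where the paper simply cites the standard Fuchs--van de Graaf inequality (including the tightness claim for pure $\sigma$), whereas you supply a self-contained derivation: the explicit eigenvalue computation $\pm\sqrt{1-|c|^2}$ in the pure-pure case, convexity of the trace norm over the spectral decomposition of $\sigma$, and Jensen's inequality for $\sqrt{\cdot}$. All three steps check out, and the chain $\sum_i \lambda_i |\braket{\psi|\phi_i}|^2 = \bra{\psi}\sigma\ket{\psi}$ correctly recovers the fidelity at the end. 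Your version buys self-containedness at the cost of length; the paper's citation is the standard shortcut. One minor point: what you invoke is really convexity of $\tracedistance{\cdot,\cdot}$ in its second argument with the first held fixed (a direct consequence of the triangle inequality for $\lVert \cdot \rVert_1$), which is all you need, rather than joint convexity per se.
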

\begin{proof}
    The second inequality and its condition when $\sigma$ is a pure state follows from the standard Fuchs-van de Graaf inequality.

    For the first inequality, consider the POVM $\{\ketbra{\psi}{\psi}, I - \ketbra{\psi}{\psi}\}$.
    This implies the following lower bound on the trace distance:
    \begin{align*}
        \tracedistance{\ketbra{\psi}{\psi}, \sigma} &
        \geq \frac{1}{2}\left(\bigg|\trc\left[\ketbra{\psi}{\psi} \cdot \left(\ketbra{\psi}{\psi} - \sigma\right) \right]\bigg| + \bigg|\trc\left[\left(I - \ketbra{\psi}{\psi}\right) \cdot \left(\ketbra{\psi}{\psi} - \sigma\right) \right]\bigg| \right) && (\text{\cref{fact:td-to-tv}})\\
        &= \bigg|  1 - \calF(\ketbra{\psi}{\psi}, \sigma) \bigg|\\
        &= 1 - \calF(\ketbra{\psi}{\psi}, \sigma)
    \end{align*}
    where the second line follows from $\trc\left[\rho - \sigma\right] = 0$ for any two quantum states and the third line follows because fidelity is never bigger than $1$.
\end{proof}

We can use the upper bound of \cref{fact:td-vs-fidelity} being tight for pure states to derive the following bounds for the trace distance between multiple copies of two pure states.

\begin{corollary}\label{cor:td-multi-copy}
    For pure states $\ket{\psi}$ and $\ket{\phi}$ and $m \in \N$,
    \[
        \tracedistance{\ket{\psi}^{\otimes m}, \ket{\phi}^{\otimes m}} \leq \sqrt{m} \cdot \tracedistance{\ket{\psi}, \ket{\phi}}.
    \]
\end{corollary}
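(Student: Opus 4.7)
The plan is to exploit the fact that both sides are trace distances between pure states, so Fact 2.6 (upper bound tight for pure states) gives an exact formula in terms of fidelity, reducing the corollary to an elementary inequality on a single real variable.

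First I would apply Fact \ref{fact:td-vs-fidelity} to both quantum states. Since $\ket{\phi}^{\otimes m}$ is pure, the upper bound is tight and
\[
\tracedistance{\ket{\psi}^{\otimes m}, \ket{\phi}^{\otimes m}} = \sqrt{1 - \calF(\ket{\psi}^{\otimes m}, \ket{\phi}^{\otimes m})} = \sqrt{1 - |\braket{\psi|\phi}|^{2m}},
\]
where I used multiplicativity of fidelity (equivalently, of inner products) on tensor products of pure states. The same reasoning gives $\tracedistance{\ket{\psi}, \ket{\phi}} = \sqrt{1 - |\braket{\psi|\phi}|^2}$.

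Second, set $x \coloneqq |\braket{\psi|\phi}|^2 \in [0,1]$. The desired inequality becomes
\[
\sqrt{1 - x^m} \leq \sqrt{m}\cdot\sqrt{1 - x},
\]
which, after squaring, reduces to the scalar inequality $1 - x^m \leq m(1 - x)$. This follows immediately from the factorization
\[
1 - x^m = (1 - x)\bigl(1 + x + x^2 + \cdots + x^{m-1}\bigr) \leq (1 - x)\cdot m,
\]
since each of the $m$ terms in the parenthesis is bounded above by $1$ when $x \in [0,1]$.

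There is no real obstacle here: the only subtlety is being careful that Fact 2.6 gives equality (not merely an upper bound) for the tensor-power side, which is precisely why the corollary's preamble insists on the tightness statement. Everything else is one line of fidelity multiplicativity plus the elementary geometric-series bound.
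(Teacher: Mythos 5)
Your proof is correct and follows essentially the same route as the paper: both use the tightness of the Fuchs--van de Graaf bound for pure states together with multiplicativity of fidelity under tensor products, and then conclude with the scalar inequality $1 - x^m \leq m(1-x)$ for $x \in [0,1]$. You simply make the final elementary step explicit via the geometric-series factorization, which the paper leaves implicit.
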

\begin{proof}
    \begin{align*}
        \tracedistance{\ket{\psi}^{\otimes m}, \ket{\phi}^{\otimes m}}
        &= \sqrt{1-\calF(\ket{\psi}^{\otimes m}, \ket{\phi}^{\otimes m})} && (\text{\cref{fact:td-vs-fidelity}})\\
        &= \sqrt{1-\calF(\ket{\psi}, \ket{\phi})^m}\\
        &= \sqrt{1-\left(1- \tracedistance{\ket{\psi}, \ket{\phi}}^2\right)^m} && (\text{\cref{fact:td-vs-fidelity}})\\
        & \leq \sqrt{m} \cdot \tracedistance{\ket{\psi}, \ket{\phi}},
    \end{align*}
    where the second line holds from the multiplicativity of fidelity with respect to the tensor product.
\end{proof}

We can also upper bound the trace distance of pure states to their distance as vectors.
This largely follows because their distance as vectors cares about global phase, whereas trace distance knows that global phase does not matter in quantum mechanics.

\begin{fact}\label{fact:operator-to-td}
    For pure states $\ket \psi$ and $\ket \phi$,
    \[
        \tracedistance{\ket \psi, \ket \phi} \leq \lVert \ket \psi - \ket \phi \rVert_2.
    \]
\end{fact}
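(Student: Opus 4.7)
The plan is to reduce both sides to simple expressions in the inner product $z \coloneqq \braket{\psi | \phi}$ and then verify the resulting scalar inequality. First, I would invoke the tightness clause of \cref{fact:td-vs-fidelity} for two pure states, which gives
\[
    \tracedistance{\ket\psi, \ket\phi} = \sqrt{1 - \calF(\ketbra{\psi}{\psi}, \ketbra{\phi}{\phi})} = \sqrt{1 - |z|^2}.
\]
Next, I would expand the $2$-norm directly:
\[
    \lVert \ket\psi - \ket\phi \rVert_2^2 = \braket{\psi|\psi} + \braket{\phi|\phi} - \braket{\psi|\phi} - \braket{\phi|\psi} = 2 - 2\,\Re(z).
\]

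It therefore suffices to prove $1 - |z|^2 \leq 2 - 2\,\Re(z)$, equivalently $|z|^2 - 2\,\Re(z) + 1 \geq 0$. This last inequality is just $|1 - z|^2 \geq 0$, which is automatic. (Alternatively, one can argue $1 - |z|^2 = (1-|z|)(1+|z|) \leq 2(1-|z|) \leq 2(1 - \Re(z))$, using $|z| \geq \Re(z)$.) Taking square roots of both sides of the resulting inequality yields the claim.

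There is essentially no obstacle: the content is the observation that trace distance is invariant under global phase whereas the $2$-norm is not, which is exactly the slack captured by the quantity $|1-z|^2 \geq 0$. If one wishes to make this phase invariance explicit, one can note that replacing $\ket\phi$ by $e^{i\theta}\ket\phi$ leaves the trace distance unchanged but minimizes $\lVert \ket\psi - e^{i\theta}\ket\phi \rVert_2$ when $\theta$ is chosen so that $e^{i\theta}\braket{\psi|\phi}$ is real and nonnegative, in which case $\Re(z)$ is replaced by $|z|$ and the bound becomes tight (recovering the standard identity $\tracedistance{\ket\psi,\ket\phi} = \sqrt{1-|z|^2}$ from the phase-optimized vector distance).
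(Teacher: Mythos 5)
Your proposal is correct and follows essentially the same route as the paper: both start from the tightness of \cref{fact:td-vs-fidelity} for pure states to write the trace distance as $\sqrt{1-|\braket{\psi|\phi}|^2}$, expand the $2$-norm as $2-2\,\Re\braket{\psi|\phi}$, and compare via the elementary chain $1-|z|^2 \leq 2-2|z| \leq 2-2\,\Re(z)$ (your parenthetical alternative is literally the paper's argument, and your $|1-z|^2\geq 0$ formulation is an equivalent rearrangement). No gaps.
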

\begin{proof}
    \begin{align*}
        \tracedistance{\ket \psi, \ket \phi} &= \sqrt{1 - \abs{\braket{\psi | \phi}}^2} && (\text{\cref{fact:td-vs-fidelity}})\\
        &= \sqrt{(1 + \abs{\braket{\psi | \phi}})(1 - \abs{\braket{\psi | \phi}})}\\
        &\leq \sqrt{2 - 2 \abs{\braket{\psi|\phi}}}\\
        &\leq \sqrt{\braket{\psi|\psi} + \braket{\phi|\phi} - 2 \Re\left[\braket{\psi|\phi}\right]}\\
        &= \lVert \ket \psi - \ket \phi \rVert_2
    \end{align*}
\end{proof}

Because $\lVert U \ket \psi - V \ket \psi \rVert_2 \leq \lVert U - V \rVert_\infty$, acting on a state with unitary $V$ instead of unitary $U$ only affects the resulting quantum state by at most $\lVert U - V \rVert_\infty$.

\subsection{Quantum Circuits}

We largely follow the notation of \cite{bostanci2023unitary} and \cite{rosenthal2023efficient} both here and in \cref{ssec:state-synthesis}.
Note that we will regularly abuse notation, such that for functions acting on the natural numbers we will actually be implicitly referring to the output of that function on $n$, the input size. For instance, given $f: \mathbb{N} \rightarrow [0, 1]$, we will generally shorthand $f(n)$ to just $f$.

A \emph{unitary quantum circuit} is any circuit that obeys the following pattern: (1) initializing ancilla qubits, (2) applying gates from $\{H, \mathrm{CNOT}, T\}$, and (3) optional tracing out of ancilla qubits at the end.\footnote{$\{H, \mathrm{CNOT}, T\}$ is well-known to be a universal quantum gate set \cite[Chapter 4.5]{nielsen2002quantum},}

A \emph{general quantum circuit} adds the ability to perform \emph{intermediate} single-qubit non-unitary gates: (1) prepare an auxiliary qubit in the $\ket{0}$ state (2) trace out a qubit and (3) measure a qubit in the computational basis.
These actions can be taken at any point in the computational process.
We will often denote a general quantum circuit $C$ acting on input state $\rho$ to be $C(\rho)$ for brevity.

We note that the actual gate set is not important, as long as it has only $O(1)$ distinct gates, each gate has algebraic entries, and is universal for quantum computation.
The $O(1)$ distinct gates is necessary in \cref{lem:num-circuits,cor:npack-upper} to bound the number of quantum circuits of polynomial size.
The algebraic entries allows the distance between states to be computed in polynomial space (see \cref{lem:pspace-compute-td}).
Finally, the universality part allows us to use the Solovay-Kitaev algorithm \cite{dawson2005solovaykitaev}.

\begin{lemma}[Solovay-Kitaev algorithm]\label{lem:solovay-kitaev}
    Given a universal gateset $\calG$ containing its own inverses, then any $1$ or $2$-qubit unitary can be approximated by a unitary $\hat{U}$ via a finite sequence of gates from $\calG$ with length $O\left(\log^{3.97}\frac{1}{\eps}\right)$ that can be found in time $O\left(\log^{2.71}\frac{1}{\eps}\right)$ such that $\lVert U - \hat{U} \rVert_\infty \leq \eps$.
\end{lemma}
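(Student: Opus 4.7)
The plan is to follow the classical recursive approach of Solovay and Kitaev (as exposited by Dawson--Nielsen). First I would reduce the problem to approximating elements of $SU(d)$ for $d \in \{2, 4\}$ in operator norm; density of the subgroup generated by $\calG$ plus closure under inverses gives a suitable starting point, and compactness of $SU(d)$ makes all constants uniform over targets.

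The heart of the construction is a recursive refinement. Define $\mathrm{SK}_k$ to be a procedure that, on target $U \in SU(d)$, outputs a gate sequence $V_k$ with $\lVert U - V_k \rVert_\infty \leq \eps_k$, where $\eps_k$ shrinks doubly-exponentially in $k$. Given any such $V_k$, form the residual $\Delta = U V_k^{-1}$, which is close to the identity; then invoke a group-commutator decomposition to find $A, B \in SU(d)$ with $\lVert A - I \rVert_\infty, \lVert B - I \rVert_\infty = O(\sqrt{\eps_k})$ and $[A, B] = ABA^{-1}B^{-1} = \Delta$. Recursively approximate $A$ and $B$ via $\mathrm{SK}_k$ to obtain $\tilde{A}, \tilde{B}$, and output $\tilde{A}\tilde{B}\tilde{A}^{-1}\tilde{B}^{-1} V_k$. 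An error-propagation calculation, using sub-multiplicativity of the operator norm together with the standard bound $\lVert [\tilde{A},\tilde{B}] - [A,B] \rVert_\infty = O(\sqrt{\eps_k}\cdot \delta)$ when $\tilde{A}, \tilde{B}$ approximate $A, B$ to error $\delta$, yields $\eps_{k+1} \leq c\,\eps_k^{3/2}$ for some absolute constant $c$.

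For the base case $\mathrm{SK}_0$, I would enumerate all gate sequences up to some fixed length $L_0$ and pick the one closest to the target; density of the subgroup generated by $\calG$ guarantees this yields some $\eps_0$ small enough for the recursion to converge. Tracking sequence length across recursion levels: concatenating the five sub-sequences $\tilde{A}, \tilde{B}, \tilde{A}^{-1}, \tilde{B}^{-1}, V_k$ gives $\ell_k = 5\ell_{k-1} + O(1)$, hence $\ell_k = O(5^k)$. Since reaching error $\eps$ requires $k = O(\log \log(1/\eps))$ levels of the $3/2$-power recursion, the total length is $O((\log(1/\eps))^{\log_{3/2} 5}) = O(\log^{3.97}(1/\eps))$. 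For runtime, each level makes three recursive calls (on $U$, on $A$, and on $B$), yielding a call tree of size $O(3^k)$ with polylog-bit matrix arithmetic at each node, for a total of $O((\log(1/\eps))^{\log_{3/2} 3}) = O(\log^{2.71}(1/\eps))$.

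The main obstacle, and the only truly delicate step, is the group-commutator decomposition: for every $\Delta \in SU(d)$ with $\lVert \Delta - I \rVert_\infty$ sufficiently small, constructively exhibit $A, B$ with $\lVert A - I \rVert_\infty, \lVert B - I \rVert_\infty = O(\sqrt{\lVert \Delta - I \rVert_\infty})$ satisfying $[A, B] = \Delta$. Classically this is handled via the exponential map and the Baker--Campbell--Hausdorff expansion on $\mathfrak{su}(d)$, where Lie-algebra commutators match group commutators to leading order; one chooses $A = \exp(X)$ and $B = \exp(Y)$ for carefully constructed anti-Hermitian $X, Y$ of norm $O(\sqrt{\eps_k})$ whose Lie-bracket matches $\log \Delta$. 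Algebraicity of the gate-set entries ensures that the $\eps_0$-net enumeration, the commutator solve, and the repeated matrix arithmetic remain computable within the claimed time bound.
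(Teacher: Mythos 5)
Your proposal is a correct sketch of the standard Dawson--Nielsen argument, which is exactly what the paper relies on: the lemma is stated as an imported result with a citation to \cite{dawson2005solovaykitaev} and no proof is given in the text. Your exponents come out right ($\log_{3/2}5\approx 3.97$ for the $5$-fold length recursion $\ell_k = 5\ell_{k-1}+O(1)$ and $\log_{3/2}3\approx 2.71$ for the $3$-way call tree, with $k=O(\log\log(1/\eps))$ levels of the $\eps_{k+1}\le c\,\eps_k^{3/2}$ recursion), matching the cited bounds.
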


Observe that, since trace distance is contractive under trace-preserving completely positive maps (which adding an ancilla qubit, measuring a qubit, and tracing out a qubit fall under), these non-unitary actions do not contribute to any error in synthesizing, regardless of the unitary gate set.

\begin{definition}[Quantum Circuit Size and Space]
    We say that the \emph{size} of a general (resp. unitary) quantum circuit $C$ is the number of gates (both unitary and non-unitary) used in $C$.
    We say that the \emph{space} of a general (resp. unitary) quantum circuit $C$ is the maximum number of qubits used at any point in the circuit.
    Finally we say that the \emph{depth} of a general (resp. unitary) quantum circuit $C$ is the maximum number of layers of gates that are needed.
\end{definition}

Note that a size $k$ circuit on $n$ qubits of input uses space at most $n+k$ with any $2$-local gate set, such as the $\{H, \mathrm{CNOT}, T\}$ gate set that we will default to.
Additionally, a depth $d$ space $s$ circuit has size at most $s\cdot d$

\begin{definition}[Uniform Circuit Families]\label{def:uniform-circuit}
For $t: \mathbb{N} \rightarrow \mathbb{R}^+$, a family of quantum circuits $\left(C_x\right)_{x \in \{0, 1\}^*}$ is called \emph{$t$-time-uniform} if $\left(C_x\right)_{x \in \{0, 1\}^*}$ is size at most $\poly\left(\abs{x}, t(\abs{x})\right)$ and there exists a classical Turing machine that on input $x \in \{0, 1\}^*$ outputs the description of $C_x$ in time at most $O\left(t(\abs{x}\right)$.
Similarly, for $s: \mathbb{N} \rightarrow \mathbb{R}^+$ a family of quantum circuits $\left(C_x\right)_{x \in \{0, 1\}^*}$ is called \emph{$s$-space-uniform} if $\left(C_x\right)_{x \in \{0, 1\}^*}$ uses space at most $\poly(\abs{x}, s(\abs{x}))$ and there exists a classical Turing machine that on inputs $x \in \{0, 1\}^*$ and $i \in \N$ outputs the $i$-th bit of the description of $C_x$ in space at most $O(s(\abs{x}))$.
\end{definition}

We will oftentimes refer to uniform \emph{general} quantum circuits simply as quantum algorithms.
We will also use the circuit sequence $(C_n)_{n \in \N}$ to be composed of $C_{1^n}$ from uniform quantum circuits $(C_x)_{x \in \{0, 1\}^*}$ as a shorthand for uniform circuit families that only depend on the length of the input.
These too will often be referred to as simply quantum algorithms.

As with \cite{arunachalam2022quantum}, it will be necessary throughout this work that given a valid \emph{description}, $\mathrm{desc}(U)$, of a \emph{unitary} circuit $U$, there is an efficient procedure to \emph{apply} said unitary circuit in time polynomial in the size of the input and the size of the description within the gateset $\{H, \mathrm{CNOT}, T\}$.

\begin{fact}[{\cite[Theorem 3]{green2009efficient}}]\label{fact:universal-circuit}
    Fix $n \in \N$ and $s: \N \rightarrow \R^+$ and let $\textsc{desc}(C) \in \{0, 1\}^m$ refer to the description of a \emph{unitary} quantum circuit $U$ on $n$ qubits of size $s(n)$. There exists an $(n + m)$-qubit $O\left((n + s) \log \left(n + s \right) \right)$-time-uniform \emph{unitary} quantum circuit $\calU$ such that
    \[
    \calU\left(\ket{x} \otimes \ket{\textsc{desc}(U)}\right) = \left(U \ket{x}\right) \otimes  \ket{\textsc{desc}(U)}
    \]
    for all $x \in \{0, 1\}^n$.
\end{fact}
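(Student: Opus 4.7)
The plan is to construct $\calU$ as a sequential pipeline of $s$ identical ``universal gate slots,'' each of which reads one gate from the description and applies it to the input register. I would first fix a concrete encoding of $\textsc{desc}(U)$: a sequence of $s$ blocks, each consisting of $O(1)$ bits specifying the gate type in $\{H, \mathrm{CNOT}, T\}$ together with up to two address fields of $\lceil \log_2 n \rceil$ bits specifying the qubit indices. This gives $m = O(s \log n)$.

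The core building block is a subcircuit that reads one block and applies the encoded gate \emph{coherently} to the input register, leaving the description register untouched. I would implement this via a standard quantum routing network: for each of the $\lceil \log_2 n \rceil$ bits of the qubit-address register, perform a layer of controlled-\textsc{SWAP} gates on disjoint pairs of input-register qubits, controlled on that address bit. After $\lceil \log_2 n \rceil$ such layers totalling $O(n)$ controlled-\textsc{SWAP}s, the qubit originally at position $a$ is routed to position $0$. For $\mathrm{CNOT}$, a second, disjoint routing brings the second addressed qubit to position $1$. Then, controlled on the $O(1)$ gate-type bits, apply the corresponding gate at the fixed positions $0$ (and $1$). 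Finally, invert the routing so that the net effect on the input register is precisely the intended gate and the description register is unchanged.

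Chaining $s$ such slots and compiling the controlled-\textsc{SWAP}s into $\{H, \mathrm{CNOT}, T\}$ via \cref{lem:solovay-kitaev} gives a unitary circuit on $n + m$ qubits of size $\poly(n, s)$, consistent with the size requirement in \cref{def:uniform-circuit}. For the time-uniformity, every slot has identical structure parameterized only by $n$ and the slot index, so the generating Turing machine emits a short template and its $s$-fold repetition as a streaming/compacted description that fits within the allotted $O((n+s) \log(n+s))$ time budget. Correctness then follows by induction on the slot index: after $i$ slots, the input register holds $G_i \cdots G_1 \ket{x}$, where $G_1, \ldots, G_s$ are the gates parsed from $\textsc{desc}(U)$, and taking $i = s$ yields $U \ket{x}$ as desired.

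The main obstacle is the careful accounting so that the routing network is fully reversible, uses no workspace beyond the $n + m$ advertised qubits, and handles the two-qubit $\mathrm{CNOT}$ case without collision between the two routed qubits. This is precisely the technical content of \cite[Theorem 3]{green2009efficient}, which provides the streamlined construction of such a universal circuit that we would ultimately cite.
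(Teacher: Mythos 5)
The paper does not prove \cref{fact:universal-circuit} at all; it is imported verbatim as Theorem~3 of \cite{green2009efficient}. Your proposal is therefore not competing with an in-paper argument, and as a reconstruction of the standard ``universal circuit via controlled routing'' idea it is conceptually sound: the slot-by-slot induction, the coherent routing of addressed qubits to fixed positions, and the exact compilation of Fredkin/Toffoli-type controls into $\{H,\mathrm{CNOT},T\}$ are all fine (you do not even need \cref{lem:solovay-kitaev} here, since controlled-\textsc{SWAP} has an exact constant-size implementation in this gate set).

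The gap is in the size/uniformity accounting, and it is exactly where the cited theorem earns its keep. Each of your slots costs $\Theta(n)$ controlled-\textsc{SWAP}s, so the full circuit has $\Theta(sn)$ gates; for $s=\poly(n)$ this is a factor $n/\log n$ larger than the claimed bound. Under \cref{def:uniform-circuit}, being $O\!\left((n+s)\log(n+s)\right)$-time-uniform requires a Turing machine to \emph{output the description} of $\calU$ in that much time, and by the encoding of \cref{lem:num-circuits} a size-$\Theta(sn)$ circuit needs $\Theta(sn\log(sn))$ bits to write down, so no amount of ``template compression'' rescues the stated parameter --- the definition asks for the actual gate list, not a succinct generator for it. Reducing the universal circuit from the naive $O(sn)$ to near-linear $O((n+s)\log(n+s))$ is precisely the nontrivial content of \cite[Theorem~3]{green2009efficient}, so your final sentence deferring to that reference is doing all the work for the bound as stated. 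If you instead ran your explicit construction through the rest of the paper, the downstream overheads in \cref{lem:state-implies-decision} and \cref{lem:learning-to-distinguishing} would degrade from $O(t\log t)$ to $O(t\cdot n)$; this happens to be survivable for the paper's main theorems, but it is not the fact as written.
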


\subsection{State Synthesis Complexity Classes}\label{ssec:state-synthesis}

    We now define the state synthesis version of some complexity classes.
    Defined previously in \cite{rosenthal_et_al:LIPIcs.ITCS.2022.112,metger2023pspace,bostanci2023unitary,rosenthal2023efficient}, they  capture the complexity of problems with quantum output.
    Informally, let $\mathsf{A}$ be a decision class associated with some computational models and let $\mathfrak{C}_{\mathsf{A}}$ be the set of circuit sequences $(C_x)_{x \in \{0, 1\}^*}$ (uniform or otherwise) associated with $\mathsf{A}$ with $\poly(\abs{x})$-qubits of output.
    Then $\mathsf{stateA}_\delta$ is simply the set of state sequences $(\rho_x)_{x \in \{0, 1\}^*}$ such that there exists a corresponding $(C_x)_{x \in \{0, 1\}^*} \in \mathfrak{C}_{\mathsf{A}}$ that outputs a state that is $\delta$-close to each $\rho_x$ in trace distance.
    As an example, $\mathsf{stateBQE}_\delta$ (see \cref{def:stateBQE}), is, informally, the set of state sequences that can be synthesized to trace distance at most $\delta$ by $2^{O(n)}$-time-uniform general quantum circuits.
    To make sure that these classes are properly comparable, we will restrict the number of qubits of each $\rho_x$ to be $\poly(\abs{x})$.\footnote{Some works, such as \cite{rosenthal2023efficient} even restrict the final number of qubits to simply be $\abs{x}$.}

    We will generally want to work with arbitrary inverse-polynomial or arbitrary inverse-exponential trace distance.
    As such, for a complexity class $\mathsf{stateA}_\delta$, define $\mathsf{stateA}$ and $\mathsf{stateA}_{\exp}$ to be \[
        \mathsf{stateA} \coloneqq \bigcap_{q} \mathsf{stateA}_{1/q}
    \]
    and
    \[
    \mathsf{stateA}_{\exp} \coloneqq \bigcap_{q} \mathsf{stateA}_{\exp(-q)}
    \]
    where the union is over all polynomials $q: \N \rightarrow \R$.
    Furthermore, for an arbitrary state synthesis complexity class over mixed states $\mathsf{stateA}_\delta$, we define $\mathsf{pureStatA}_\delta \subset  \mathsf{stateA}_\delta$ to be the subset of $\mathsf{stateA}_\delta$ with state sequences consisting only of pure states $(\ketbra{\psi_x}{\psi_x})_{x \in \{0, 1\}^*}$.
    We will generally be dealing with pure state synthesis classes throughout this work.
    We will also abuse notation and often write down such pure state sequences simply as $(\ket{\psi_x})_{x \in \{0, 1\}^*}$.
    
    We now state some (largely trivial) facts about state synthesis complexity classes and their variations, to aid in intuition.
    
    \begin{fact}\label{fact:generic-synthesis-facts}
        For arbitrary state synthesis complexity classes $\mathsf{stateA}_\delta$ and $\mathsf{stateB}_{\eps}$:
        \begin{enumerate}[label=\rm{(\roman*)}]
            \item\label{fact:1} $\mathsf{stateA}_\delta \subseteq \mathsf{stateA}_{\delta^\prime}$ for $\delta \leq \delta^\prime$.
            \item\label{fact:2} $\mathsf{stateA}_\delta \subset \mathsf{stateB}_\eps \Rightarrow \mathsf{pureStateA}_\delta \subset \mathsf{pureStateB}_\eps$.
            \item\label{fact:3} $\mathsf{stateA}_\delta \subset \mathsf{stateB}_\eps \Rightarrow \mathsf{stateA}_{\delta + \gamma} \subset \mathsf{stateB}_{\eps + \gamma}$.
        \end{enumerate}
    \end{fact}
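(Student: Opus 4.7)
\medskip
\noindent\textbf{Proof proposal.} The plan is to treat the three parts separately, in increasing order of subtlety; all three follow essentially from the definition of $\mathsf{stateA}_\delta$, with only \cref{fact:3} requiring a construction beyond unpacking notation.

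For \cref{fact:1}, I will simply note that if $(C_x) \in \mathfrak{C}_\mathsf{A}$ witnesses $(\rho_x) \in \mathsf{stateA}_\delta$, then $\tracedistance{C_x(\ket{0^n}), \rho_x} \leq \delta \leq \delta^\prime$ lets the same family witness membership in $\mathsf{stateA}_{\delta^\prime}$. For \cref{fact:2}, I will unpack the definition of $\mathsf{pureStateA}_\delta$ as the restriction of $\mathsf{stateA}_\delta$ to pure state sequences: any sequence $(\ketbra{\psi_x}{\psi_x}) \in \mathsf{pureStateA}_\delta$ already lies in $\mathsf{stateA}_\delta \subset \mathsf{stateB}_\eps$ by the hypothesis, and since the underlying sequence is unchanged and still pure, it belongs to $\mathsf{pureStateB}_\eps$.

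The only content lies in \cref{fact:3}, which I plan to handle by interpolating inside the set of quantum states. Given $(\rho_x) \in \mathsf{stateA}_{\delta + \gamma}$, let $(C_x) \in \mathfrak{C}_\mathsf{A}$ be a witnessing family and set $\sigma_x := C_x(\ket{0^n})$. I will define an auxiliary sequence $(\rho'_x)$ that is simultaneously $\delta$-close to $\sigma_x$ and $\gamma$-close to $\rho_x$: if $\tracedistance{\sigma_x, \rho_x} \leq \gamma$ take $\rho'_x := \sigma_x$, otherwise take $\rho'_x := \lambda \sigma_x + (1 - \lambda) \rho_x$ with $\lambda := 1 - \gamma / \tracedistance{\sigma_x, \rho_x}$. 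Because trace distance is linear along a straight segment between two fixed endpoints, the required bounds $\tracedistance{\sigma_x, \rho'_x} \leq \delta$ and $\tracedistance{\rho'_x, \rho_x} \leq \gamma$ drop out immediately. The same $(C_x)$ then witnesses $(\rho'_x) \in \mathsf{stateA}_\delta$, so the hypothesis gives $(\rho'_x) \in \mathsf{stateB}_\eps$; combining any $\mathfrak{C}_\mathsf{B}$-witness for $(\rho'_x)$ with the triangle inequality yields $(\rho_x) \in \mathsf{stateB}_{\eps + \gamma}$.

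The main (mild) obstacle is simply convincing oneself that the interpolated sequence $(\rho'_x)$ is genuinely witnessed by the \emph{same} circuit family $(C_x)$, which is what the approximate-synthesis definition of $\mathsf{stateA}_\delta$ buys us: the circuit stays fixed while the target is shifted from $\rho_x$ to $\rho'_x$, and the hypothesis is then applied at the tighter accuracy $\delta$ to transport $(\rho'_x)$ into $\mathsf{stateB}_\eps$.
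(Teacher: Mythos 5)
The paper does not actually prove this fact --- it is introduced with ``largely trivial'' and left as an exercise --- so there is nothing to compare against except the definitions. Your argument is sound and, for \cref{fact:3}, it is the right argument: the naive route of pushing the circuit outputs $\sigma_x$ themselves through the hypothesis only yields $\mathsf{stateA}_{\delta+\gamma} \subset \mathsf{stateB}_{\eps+\delta+\gamma}$, whereas your interpolated sequence $(\rho'_x)$ is exactly what is needed to land on the tight $\eps+\gamma$. Parts \cref{fact:1} and \cref{fact:2} are handled correctly by pure definition-unpacking.

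One small slip in \cref{fact:3}: with $\lambda := 1 - \gamma/\tracedistance{\sigma_x,\rho_x}$ and $\rho'_x := \lambda\sigma_x + (1-\lambda)\rho_x$, linearity of the trace norm along the segment gives $\tracedistance{\sigma_x,\rho'_x} = (1-\lambda)\tracedistance{\sigma_x,\rho_x} = \gamma$ and $\tracedistance{\rho'_x,\rho_x} = \lambda\tracedistance{\sigma_x,\rho_x} \leq \delta$ --- i.e., the two bounds come out transposed relative to what you claim, and $\tracedistance{\sigma_x,\rho'_x} = \gamma$ need not be $\leq \delta$. Taking $\lambda := \gamma/\tracedistance{\sigma_x,\rho_x}$ instead (equivalently, swapping which endpoint gets weight $\lambda$) produces the bounds you actually need, namely $\tracedistance{\sigma_x,\rho'_x} \leq \delta$ so that the same $(C_x)$ witnesses $(\rho'_x) \in \mathsf{stateA}_\delta$, and $\tracedistance{\rho'_x,\rho_x} \leq \gamma$ for the final triangle inequality. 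With that one-line correction the proof is complete; note also that the convex combination $\rho'_x$ is still a valid density matrix on the same number of qubits, so no definitional condition is violated.
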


\subsubsection{Uniform Computation}

\begin{definition}[{$\sBQTIME\left[f\right]_\delta$}, {$\mathsf{stateBQSPACE}\left[f\right]_\delta$}]\label{def:uniform-state-synthesis}
    Let $\delta : \mathbb{N} \rightarrow [0, 1]$ and $f : \mathbb{N} \rightarrow \mathbb{R}^+$ be functions. Then $\sBQTIME\left[f\right]_\delta$ (resp. $\mathsf{stateBQSPACE}\left[f\right]_\delta$) is the class of all sequences of density matrices $(\rho_x)_{x \in \{0, 1\}^*}$ such that each $\rho_x$ is a state on $\poly(\abs{x})$ qubits, and there exists an $f$-time-uniform (resp. $f$-space-uniform) family of general quantum circuits $\left(C_x\right)_{x \in \{0, 1\}^*}$ such that for all sufficiently large input size $\abs{x}$, the circuit $C_{x}$ takes no inputs and outputs a density matrix $\sigma_x$ such that $\tracedistance{\rho_x, \sigma_x} \leq \delta.$
\end{definition}

We define the following state synthesis complexity classes, which are the analogues of $\mathsf{BQE}$ and $\PSPACE$.

\begin{definition}[$\sBQE_\delta$, $\mathsf{statePSPACE}_\delta$]\label{def:stateBQE}
    
    \begin{align*}
        \sBQE_\delta \coloneqq \bigcup_{c \geq 0} \sBQTIME\left[2^{c\cdot n}\right]_\delta
    \,\,\,\text{  and  }\,\,\,
    \mathsf{statePSPACE}_\delta \coloneqq \bigcup_p \mathsf{stateBQSPACE}\left[p\right]_{\delta}
    \end{align*}
    where the union for $\mathsf{PSPACE}_\delta$ is over all polynomials $p : \mathbb{N} \rightarrow \mathbb{R}^+$.
\end{definition}

We can likewise define $\mathsf{stateBQSUBEXP}_\delta \coloneqq \bigcap_{\gamma \in (0, 1)}\sBQTIME\left[2^{n^{\gamma}}\right]_\delta$ and $\mathsf{stateBQP}_\delta \coloneqq \bigcup_{p} \sBQTIME\left[p\right]_\delta$ for polynomials $p : \N \rightarrow \R^+$.\footnote{Note that functions that grow as $2^{o(n)}$, such as $2^{\sqrt{n}}$ are also generally referred to as \emph{sub-exponential}. To avoid confusion, we will use the term `sub-exponential' only to refer to such $2^{o(n)}$ growths, and instead refer to algorithms in $\mathsf{BQSUBEXP}$ as simply `$\mathsf{BQSUBEXP}$ algorithms'.}

It is worth commenting on the choice of gate set and how it affects (or does not affect) $\sBQTIME\left[f\right]_\delta$.
While we use the $\{H, \mathrm{CNOT}, T\}$ gate set, if we had some other universal gate set then we could apply the \nameref{lem:solovay-kitaev} to approximate each of these gates.
Because the size is at most $\poly(n, f(n))$, we need to approximate each gate to accuracy $\frac{\delta}{\poly(n, f(n))}$, which increases the runtime by a multiplicative factor of $O\left(\log^{2.71}\left(\frac{n \cdot f(n)}{\delta}\right)\right)$.
Thus $\mathsf{stateBQP}$, $\mathsf{stateBQP}_{\exp}$, $\sBQE$, and $\sBQE_{\exp}$ are not affected by which universal gate set is used.
For comparison, we note the same is not true for a version of $\mathsf{stateBQP}$ with arbitrary doubly-exponentially-small error.

\subsubsection{Non-Uniform Computation}

We now introduce \emph{non-uniform} models of state synthesis.
Unlike uniform classes, there does not necessarily need to be an algorithm that finds the correct circuit.
It just needs to exist.

The following will be the non-uniform class used the most often in the proofs.

\begin{definition}[$\sCircuit{s}_\delta$]
    Let $\delta : \mathbb{N} \rightarrow [0, 1]$ and $s : \N \rightarrow \mathbb{R}^+$.
    Then $\sCircuit{s}_\delta$ is the class of all sequences of quantum states $(\rho_x)_{x \in \{0, 1\}^*}$ such that each $\rho_x$ is a state on $\poly(\abs{x})$ qubits, and there exists a family of \emph{unitary} quantum circuits $\left(C_x\right)_{x \in \{0, 1\}^*}$, taking in $\poly(\abs{x})$ qubits as input and of size at most $O(s)$, such that for all $x \in \{0, 1\}^*$ with sufficiently large length $\abs{x}$, the circuit $C_{x}$ acting on the all-zeros state outputs a state $\hat{\rho}_x$ satisfying
    \[
        \tracedistance{\rho_x, \hat{\rho}_x} \leq \delta.
    \]
\end{definition}

\begin{remark}\label{remark:non-uniform-difference}
    As pointed out in \cite[Section 3.2]{bostanci2023unitary}, the notion of non-uniformity here is different than that of decision problems.
    For state synthesis, there is a different circuit allowed for each input $x \in \{0, 1\}^*$.
    For decision problems, there is just one circuit for all $x \in \{0, 1\}^n$ and the input becomes $\ket{x}{x}$, rather than the all-zeros state.
    That is, there is a non-uniform sequence $(C_n)_{n \in \N}$ and for $x \in \{0, 1\}^n$, the output $\rho_x$ is $C_n(\ketbra{x}{x})$.
    This restriction is necessary, otherwise even the simplest of non-uniform circuits could decide all languages.
\end{remark}

A very important feature for us will be that circuits of bounded size will bounded description lengths.

\begin{lemma}[Folklore]\label{lem:num-circuits}
    For $s \geq n$, the number of unitary quantum circuits of size $s$ with $n$-qubits of output can be described using at most $s \cdot \left(3\log_2 s + 4\right)$ bits.
\end{lemma}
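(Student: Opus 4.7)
The plan is a direct encoding/counting argument: I specify a canonical binary encoding of a unitary circuit, bound the length of the encoding of a single gate, and then multiply by the number of gates $s$.

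First I would observe that any unitary circuit of size $s$ with $n$-qubit output drawn from the gate set $\{H, \mathrm{CNOT}, T\}$ acts on at most $n + s \le 2s$ wires in total. This is because we begin with the $n$ output qubits and each gate (being $2$-local) can introduce at most one fresh ancilla wire into the circuit, so after $s$ gates the number of distinct wires is bounded by $n+s$, and the hypothesis $s \geq n$ then yields the $2s$ upper bound. This matches the general $n+k$ space bound quoted right after the definition of quantum circuit size and space.

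Next I describe a circuit by listing its $s$ gates in temporal order; each gate description is a triple (gate type, first qubit index, second qubit index), where for single-qubit gates ($H$ and $T$) the second index is a harmless dummy. Encoding the gate type takes $\lceil \log_2 3 \rceil = 2$ bits (the gate set has only $3$ elements), and each qubit index takes $\lceil \log_2(n+s) \rceil \leq \log_2(2s) + 1 \leq \log_2 s + 2$ bits. Thus each gate is specified by at most
\[
2 + 2(\log_2 s + 2) \;=\; 2\log_2 s + 6
\]
bits, and multiplying by the $s$ gates gives a total length of at most $s(2\log_2 s + 6) \leq s(3\log_2 s + 4)$, where the last inequality holds whenever $\log_2 s \geq 2$ (the small cases $s \le 3$ can be absorbed into the bound directly, since then there are only $O(1)$ circuits).

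There is essentially no obstacle here; the argument is routine encoding. The only points that need care are (i) correctly capping the total qubit count at $n+s$ rather than something unbounded, and (ii) matching the stated constants $3\log_2 s + 4$, for which being slightly loose in the rounding of $\lceil \log_2(n+s)\rceil$ suffices. Note that one could sharpen the constants by observing $\mathrm{CNOT}$ is the only $2$-local gate in the set (so single-qubit gates need only one index), but this is not needed for the claim.
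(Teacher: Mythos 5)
Your proof is correct and follows essentially the same encoding/counting argument as the paper's: bound the number of wires by $2s$ using $s \geq n$, encode each gate by its type and the indices of the qubits it acts on, and multiply by $s$. The only bookkeeping difference is that the paper additionally records a layer index for each gate ($\log_2 s$ extra bits, which is exactly where its constant $3\log_2 s + 4$ comes from), whereas you rely on temporal ordering and absorb the rounding slack into the same bound.
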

\begin{proof}
    Recall that circuit-size bounds circuit-space.
    Thus we can assume there are never more than $2s$ qubits (when including ancilla) at any point in the computation.
    As such, for every gate in a quantum circuit, we can describe its location by which gate it is among $\{H, T, \mathrm{CNOT}\}$ (and tracing out when at the end of the circuit), the qubit(s) it acts on, and what layer). Since the depth is also at most $s$, this requires at most $2$, $2\log_2 (2s)$, and $\log_2 s$ bits respectively. The whole circuit of size $s$ can therefore be written using $s \cdot (2 + 2\log_2 (2s) + \log_2 s) = s \cdot \left(3\log_2 s + 4\right)$ bits.\footnote{In an alternative local gate set, the $+4$ would be replaced by $2 + \lceil \log_2 G \rceil$, where $G$ is the number of gates in the gate set. Our results should hold for any $G = O(1)$.}
\end{proof}

Now let us generally define a set of unitary quantum circuits $\mathfrak{C}[s]$ to be the circuits in some family of size at most $O(s)$ and let
\[
    \mathfrak{C} \coloneqq \bigcup_p \mathfrak{C}[p]
\]
for all polynomials $p : \N \rightarrow \R^+$, such that $\mathfrak{C}$ be the set of $\mathfrak{C}$-circuits of arbitrary polynomial size.
For clarity, we will sometimes write $\mathfrak{C}\left[\poly(n)\right]$ instead.

We will give a special name to the class of all circuits produced by non-uniform polynomial-size quantum circuits.

\begin{definition}[$\mathsf{stateBQP/poly}_\delta$]
\[
\mathsf{stateBQP/poly}_\delta \coloneqq \bigcup_p \sCircuit{p}_\delta
\]
where the union is over all polynomials $p : \N \rightarrow \mathbb{R}^+$.
\end{definition}

As with $\mathsf{stateBQP}$ and $\mathsf{stateBQP}_{\exp}$, $\mathsf{stateBQP/poly}$ and $\mathsf{stateBQP/poly}_{\exp}$ do not depend on the choice of universal gate set.

The following will be one of the most important state synthesis classes in our proof.
We emphasize that, despite being in the non-uniform section, it is a \emph{uniform} state synthesis class due to $\mathsf{statePSPACE}$ being uniform.

\begin{definition}[$\mathsf{statePSPACESIZE}_\delta$]\label{def:pspacesize}
\[
\mathsf{statePSPACESIZE}_\delta \coloneqq \mathsf{statePSPACE}_\delta \bigcap \mathsf{stateBQP/poly}_\delta.
\]
\end{definition}

By taking the intersection with $\mathsf{stateBQP/poly}_\delta$, we can ensure that it is a unitary circuit with a polynomial size description.
In \cref{lem:state-implies-decision}, this will allow us to efficiently find the whole description of circuits in $\mathsf{statePSPACESIZE}$ as well as efficiently apply the circuit given its description (see \cref{fact:universal-circuit}).

\Cref{thm:main-informal2,thm:main-informal3} require that $\mathfrak{C}$ be closed under restriction of qubits.
That is, if $\ket{\psi_x}$ can be synthesized for inputs of length $\abs{x} = n$, then it can also be synthesized for inputs of length $\abs{x} > n$.
This is a very natural restriction that prevents the circuit class from getting weaker when the input size increases.

\Cref{thm:main-informal} additionally requires for circuit class $\mathfrak{C}[s]$ that there exists some fixed constant $k$ such that $\mathsf{pureState}\mathfrak{C[s]}_0 \in \puresCircuit{s^k}_{\delta}$ for all $s = \poly(n)$ and $\delta \in (0. 0.49)$.
This assumption is only needed to relate it to the results in \cref{ssec:diagonalization} and applies to a wide variety of circuit models. 
To instantiate this claim, we prove that it holds for two popular depth-bounded circuit classes: $\mathsf{QNC}$ and $\mathsf{QAC}_f$ as introduced by \cite{hoyer2005fan,moore1999quantum}.

\begin{definition}[{$\mathsf{QNC}^k[s]$}]
    For $s : \N \rightarrow \mathbb{R}^+$ and $k \in \N$, let $\mathsf{QNC}^k[s]$ be the set of all $s$-size unitary circuits consisting entirely of arbitrary $1$ and $2$-qubit gates with depth at most $O(\log^k n)$.
    Then let $\mathsf{pureStateQNC}^k[s]_\delta$ be the set of state sequences that can be synthesized by a sequence of $\mathsf{QNC}^k[s]$ circuits to trace distance at most $\delta$.
\end{definition}

\begin{definition}[{$\mathsf{QAC}^k_f[s]$}]
    For $s : \N \rightarrow \mathbb{R}^+$ and $k \in \N$, let $\mathsf{QAC}^k_f[s]$ be the set of all $s$-size unitary circuits consisting entirely of arbitrary $1$ and $2$-qubit gates, arbitrary $\mathrm{C^nNOT}$ gates\footnote{$\mathrm{C^1NOT}$ is simply the $\mathrm{CNOT}$ gate and $\mathrm{C^2NOT} = \mathrm{CCNOT}$ is the Toffoli gate.}
    and arbitrary fanout gates
    \[
        \ket{b,x} \mapsto \ket{b, x \oplus b^m} \text{ for } b \in \{0, 1\}, x \in \{0, 1\}^m
    \]
    with depth at most $O(\log^k n)$.
    Then let $\mathsf{pureStateQAC}^k_f[s]_\delta$ be the set of state sequences that can be synthesized by a sequence of $\mathsf{QAC}^k_f[s]$ circuits to trace distance at most $\delta$.
\end{definition}

Note that the fanout gate only copies \emph{classical} data, as full quantum fanout is not allowed by the no-cloning theorem.
Without it, it would not be clear that $\QAC^0$ (i.e., without fanout) contains $\mathsf{AC}^0$.
This is unlike $\mathsf{QNC}^0$ trivially containing $\mathsf{NC}^0$, due to the bounded fanin gates limiting the effect of unbounded fanout.
Adding this fanout gate to $\mathsf{QAC}^0$ has the knock-on effect that $\mathsf{AC}^0 \subsetneq \mathsf{TC}^0 \subset \mathsf{QAC}^0_f$ \cite{linial1993constant,hoyer2005fan,takahashi2016collapse}.\footnote{$\mathsf{TC}^k$ is the set of $\poly(n)$-size classical circuits of unbounded fan-in $\mathrm{AND}$, $\mathrm{OR}$, $\mathrm{NOT}$, and threshold gates of $O(\log^k n)$ depth.}

Rosenthal \cite{rosenthal2023query} showed how to simulate $\QAC_f$ circuits with $\mathsf{QNC}$ circuits.

\begin{lemma}[{\cite[Lemma A.1]{rosenthal2023query}}]\label{lem:qac-to-qnc}
For all space-$s$, depth-$d$ $\QAC_f$ circuits $U$, there exists a space-$O(s)$, depth-$O(d \log s)$, size-$O(ds)$ $\mathsf{QNC}$ circuit $C$ such that \[
    C\left(I \otimes \ket{0 \dots 0}\right) = U \otimes \ket{0 \dots 0}.
\]
\end{lemma}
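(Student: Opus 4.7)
The plan is to reduce each gate of $U$ that is not already a 1- or 2-qubit gate (namely the generalized Toffolis $\mathrm{C}^n\mathrm{NOT}$ and the fanout gates) to a shallow $\mathsf{QNC}$ sub-circuit that uses a polynomial number of ancillas but returns them to $\ket{0 \ldots 0}$ at the end. Once such a replacement is available, we apply it layer-by-layer: a $\QAC_f$ circuit of depth $d$ on space $s$ has at most $s$ gates per layer, each acting on at most $s$ qubits, so substituting each wide gate with a sub-circuit of depth $O(\log s)$ and size $O(s)$ yields total depth $O(d \log s)$ and total size $O(d s)$, while the freshly allocated and cleanly uncomputed ancillas keep the overall space at $O(s)$.

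For the fanout gate $\ket{b, x} \mapsto \ket{b, x \oplus b^m}$, I would allocate $O(m)$ ancilla qubits initialized to $\ket{0}$ and construct a balanced binary tree of $\mathrm{CNOT}$s rooted at the control qubit: at step $t$ each qubit already holding a copy of $b$ acts as a control to a fresh ancilla, doubling the number of copies. After $O(\log m)$ rounds there are $m$ ancillas holding $b$, and a single parallel layer of $\mathrm{CNOT}$s XORs each into the corresponding target $x_i$. Running the binary tree in reverse (using self-inverseness of $\mathrm{CNOT}$) disentangles the ancillas back to $\ket{0}$ in another $O(\log m)$ depth, yielding exactly the fanout action on the non-ancilla registers.

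For $\mathrm{C}^n\mathrm{NOT}$, the same template applies with AND in place of copy: pair up the controls $x_1, \ldots, x_n$ and compute their pairwise ANDs into ancillas using Toffoli gates (each Toffoli is a constant-depth 2-qubit circuit, so it stays inside $\mathsf{QNC}$), then continue the binary tree on the partial ANDs. After $O(\log n)$ rounds one ancilla holds $\bigwedge_i x_i$; a single $\mathrm{CNOT}$ onto the target $y$ implements the controlled flip, and the tree is then uncomputed. Fresh ancillas can be allocated per gate within a layer, and since the combined width of a layer is at most $s$, the ancilla budget across a layer is $O(s)$; these ancillas can be reused across layers since each sub-circuit returns them to $\ket{0}$.

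The main obstacle is simply bookkeeping: verifying that the uncomputation exactly inverts the compute phase on all ancilla registers regardless of the input branch, and that the parallel layout of sub-circuits for the (up to $s$) wide gates in a single $\QAC_f$ layer does not cause ancilla collisions. Both points are handled by allocating disjoint ancilla blocks per gate inside a layer and relying on the fact that the forward tree, the single target-layer, and the backward tree each respect $\mathsf{QNC}$ locality. Composing the layer replacements gives the claimed $\mathsf{QNC}$ circuit $C$ with $C(I \otimes \ket{0 \ldots 0}) = U \otimes \ket{0 \ldots 0}$.
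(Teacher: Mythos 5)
Your construction is correct: the binary-tree-of-$\mathrm{CNOT}$s replacement for fanout (with compute/uncompute of the ancilla copies) and the binary-tree-of-Toffolis replacement for $\mathrm{C}^n\mathrm{NOT}$, applied gate-by-gate within each layer with disjoint ancilla blocks, is exactly the standard argument, and your depth, size, and space accounting all check out. The paper itself states this lemma only as a citation to Rosenthal's Lemma A.1 without reproducing a proof, and your argument matches the approach used there.
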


\begin{corollary}\label{cor:qac-to-qnc-to-mpoly}
    For arbitrary $\alpha \geq 1$, $k \in \N$, and polynomial $q : \N \rightarrow \R^+$,
    \[
        \mathsf{pureStateQAC}_f^k[n^\alpha]_0 \subset \mathsf{pureStateQNC}^{k+1}\left[n^\alpha \cdot \log^k n\right]_0
        \]
    and 
    \[
    \mathsf{pureStateQNC}^{k}\left[n^\alpha\right]_0 \subset \puresCircuit{n^{\alpha} q^4 \log^4 n}_{\exp(-q)}.
    \]
\end{corollary}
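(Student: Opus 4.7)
The plan is to prove the two containments separately, with the first following directly from \cref{lem:qac-to-qnc} and the second from the Solovay--Kitaev algorithm (\cref{lem:solovay-kitaev}).

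For the first containment, let $(\ket{\psi_x})_{x \in \{0,1\}^*} \in \mathsf{pureStateQAC}_f^k[n^\alpha]_0$, so each $\ket{\psi_x}$ is synthesized exactly by some $\QAC_f^k[n^\alpha]$ circuit $U_x$ of depth $d = O(\log^k n)$ and space $s \leq n^\alpha$ (since size bounds space). Applying \cref{lem:qac-to-qnc} yields a $\mathsf{QNC}$ circuit $C_x$ with space $O(n^\alpha)$, depth $O(d \log s) = O(\log^k n \cdot \log n^\alpha) = O(\log^{k+1} n)$, and size $O(ds) = O(n^\alpha \log^k n)$, which synthesizes the same state (after tracing out the ancilla register, which is initialized and returned to $\ket{0\dots 0}$). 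This places the sequence in $\mathsf{pureStateQNC}^{k+1}[n^\alpha \log^k n]_0$.

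For the second containment, the task is to approximate each arbitrary $1$- and $2$-qubit gate by a sequence from $\{H, \mathrm{CNOT}, T\}$. Given a $\mathsf{QNC}^k[n^\alpha]$ circuit $U_x = G_1 G_2 \cdots G_m$ with $m = O(n^\alpha)$ gates, I apply \cref{lem:solovay-kitaev} to approximate each $G_i$ by $\widehat{G}_i$ with operator norm error at most $\varepsilon / m$, where $\varepsilon = \exp(-q)$. By the triangle inequality and sub-multiplicativity of the operator norm, the composed circuit $\widehat{U}_x = \widehat{G}_1 \cdots \widehat{G}_m$ satisfies $\|U_x - \widehat{U}_x\|_\infty \leq m \cdot \varepsilon / m = \varepsilon$, and then \cref{fact:operator-to-td} gives $\tracedistance{U_x\ket{0\dots 0}, \widehat{U}_x\ket{0\dots 0}} \leq \varepsilon = \exp(-q)$. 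Each approximation uses $O(\log^{3.97}(m/\varepsilon)) = O((\alpha \log n + q)^{3.97})$ gates from $\{H, \mathrm{CNOT}, T\}$, so the total size is $n^\alpha \cdot O((\alpha \log n + q)^{3.97})$.

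To absorb this into the claimed size $O(n^\alpha q^4 \log^4 n)$, I bound $(\alpha \log n + q)^{3.97} \leq (\alpha \log n + q)^4$ (since the base is at least $1$ for sufficiently large $n$ and $q \geq 1$), and then use $\alpha \log n + q \leq (\alpha + 1)\, q \log n$ for $q, \log n \geq 1$, giving $(\alpha \log n + q)^4 = O_\alpha(q^4 \log^4 n)$. The main obstacle is purely bookkeeping: ensuring the constant $\alpha$ in the QNC size bound is treated as fixed when converting the $3.97$-exponent Solovay--Kitaev cost into a $q^4 \log^4 n$ bound; no conceptual difficulty arises, since both \cref{lem:qac-to-qnc,lem:solovay-kitaev} are directly available.
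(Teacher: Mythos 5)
Your proposal is correct and follows essentially the same route as the paper: the first containment is read off directly from \cref{lem:qac-to-qnc}, and the second applies \cref{lem:solovay-kitaev} gate-by-gate with per-gate error $\exp(-q)/n^\alpha$, sums errors by the triangle inequality (using unitary invariance of the operator norm), and converts to trace distance via \cref{fact:operator-to-td}. Your extra bookkeeping showing that the exponent $3.97$ applied to $\alpha\log n + q$ is absorbed into $O(q^4\log^4 n)$ is a welcome, if minor, elaboration of a step the paper states without detail.
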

\begin{proof}
    The first statement comes directly from \cref{lem:qac-to-qnc}.
    
    Two show the second statement, we note that the \nameref{lem:solovay-kitaev} allows us to approximate arbitrary $1$ and $2$ qubit gates to $\exp(-q)/n^\alpha = \exp\left(-O(q \log n)\right)$ accuracy in operator norm using at most $q^4 \log^4 n$ gates.
    Let $C$ be an arbitrary circuit in $\mathsf{QNC}^{k}[n^\alpha]$.
    By applying Solovay-Kitaev to each gate in $C$ and then taking the triangle inequality over all $n^\alpha$ gates we can construct a unitary approximation $\hat{C}$ such that $\lVert C - \hat{C} \rVert_\infty \leq \exp(-q)$ using at most $O\left(n^{\alpha} q^4 \log^4 n\right)$ gates.
    Due to the \cref{fact:operator-to-td} and nature of the operator norm, \[\tracedistance{C \ket{\psi}, \hat{C}\ket{\psi}} \leq \lVert C \ket \psi - \hat{C} \ket \psi \rVert_2 \leq \exp(-q)\] for all $\ket \psi$.
    Therefore $\mathsf{pureStateQNC}^{k+1} \subset \puresCircuit{n^{\alpha} q^4 \log^4 n}_{\exp(-q)}$.
\end{proof}

By \cref{cor:qac-to-qnc-to-mpoly}, setting $q = \log 100$ gives us \[\mathsf{pureStateQNC}^k[n^\alpha]_{0} \subset \puresCircuit{n^{\alpha + \eps}}_{0.01}\] and  \[\mathsf{pureStateQAC}_f^k[n^\alpha]_{0} \subset \puresCircuit{n^{\alpha + \eps}}_{0.01}\]
for arbitrary $\eps > 0$.

\subsection{Decision Problem Complexity Classes}
We can also define the more traditional complexity classes using this language.
Given a language $L \in \{0, 1\}^*$, let the one-qubit state $\ket{x \in L}$ be defined as \[\ket{x \in L} \coloneqq \begin{cases} \ket{1} & x \in L \\ \ket{0} & x \not\in L \end{cases}.\]
For a \emph{uniform} state synthesis class $\mathsf{stateA}_\delta$, we take decision class $\mathsf{A}$ to be the set of languages $L \subseteq \{0, 1\}^*$ where there exists a state sequence $(\rho_x) \in \mathsf{stateA}_0$ such that for all $x \in \{0, 1\}^*$, \[\tr\left[\ketbra{x \in L}{x \in L} \cdot \rho_x \right] \geq \frac{2}{3}.\]
For a circuit class $\mathfrak{C}[s]$, we take the \emph{non-uniform} decision class $\mathfrak{C}$ to be the set of languages $L \subseteq \{0, 1\}^*$ where there exists a $\mathfrak{C}[\poly(n)]$-circuit sequence $(C_n)_{n \in \N}$ such that for all $x \in \{0, 1\}^*$, \[\tr\left[\ketbra{x \in L}{x \in L} \cdot C_n(\ketbra{x}{x}) \right] \geq \frac{2}{3}.\]

We now define explicitly define some decision classes that will be used in our proofs, namely in \cref{sec:final-proof}.

\begin{definition}[$\PSPACE$]
    We define $\PSPACE$ to be the set of languages that can be decided by a deterministic Turing machine that uses at most $\poly(n)$ space.\footnote{Note that $\PSPACE = \mathsf{BQPSPACE}$ \cite{WATROUS1999space,watrous2003complexity}.}
\end{definition}

\begin{definition}[$\BQTIME$]
We define $\BQTIME\left[f\right]$ to be the set of languages, that can be decided by an $f$-time-uniform general quantum circuit with one qubit of output.\footnote{We note that our definition uses general quantum circuits, while many others (critically \cite{arunachalam2022quantum}) use \emph{unitary} quantum circuits, rather than general, where the measurement is only implicitly done at the very end.  Since we are now only concerned with measurement statistics, the Principle of Deferred Measurement shows that these are equivalent definitions.}
I.e., for a language $L \in \BQTIME\left[f(n)\right]$ there exists a $(\rho)_x \in  \BQTIME\left[f(n)\right]_0$ such that 
\[
    \tr\left[\ketbra{x \in L}{x \in L} \cdot \rho_x \right] \geq \frac{2}{3}
\]
for all $x \in \{0, 1\}^*$.
\end{definition}

\begin{definition}[{$\mathsf{BQSIZE}[s]$}]
    We define $\mathsf{BQSIZE}[s]$ to be the set of languages that can be decided by non-uniform quantum circuits in the $\{H, \mathrm{CNOT}, T\}$ gate set with size at most $O(s)$.
\end{definition}

Finally, we define complexity classes for deterministic computation.
\begin{definition}[$\mathsf{DTIME}$]
We define $\mathsf{DTIME}\left[f\right]$ to be the set of languages, that can be decided by deterministic Turing machine in time $O(f)$.
\end{definition}

We will specifically take $\mathsf{E} \coloneqq \mathsf{DTIME}\!\left[2^{O(n)}\right]$, and $\mathsf{BQSUBEXP} \coloneqq \bigcap_{\gamma \in (0, 1)} \mathsf{BQTIME}\!\left[2^{n^\gamma}\right]$.

\section{Pseudorandomness}\label{sec:pseudo}

In the theory of pseudorandomness, one aims to efficiently construct states that cannot be distinguished from something that is true uniform random (under varying notions of what randomness means).
The strongest form of pseudorandomness would be \emph{statistical} pseudorandomness, such that the object is statistically close to true random.
In this way, even an adversary with unbounded computational time cannot distinguish the pseudorandom object.
Examples of this include $k$-wise independent distributions \cite{alon2007testing} and unitary $t$-designs \cite{dankert2009unitary,oszmaniec2022saturation}.

Unfortunately, constructing such objects can be prohibitively expensive.
Instead, we will settle on a weaker, but still very powerful, notion of \emph{computational} pseudorandomness whereby any computationally bounded adversary cannot distinguish the pseudorandom object from true random.
The goal will be to construct a set of states that looks like a Haar random state to any observer with at most $2^{n^{2\lambda}}$ time for $\lambda \in (0, 1/5)$ (see \cref{def:prs}).
The step was analogously done in \cite{arunachalam2022quantum} for distributions over bitstrings (i.e., a pseudorandom generator \cref{def:prg}), where they impressively gave a conditional PRG with near-optimal stretch with security against $2^{n^{2\lambda}}$-time \emph{quantum} adversaries (see \cref{lem:conditional-prg}).
In fact, we will bootstrap their construction in order to synthesize a set of pseudorandom quantum states.

Let us start by defining a pseudorandom generator (PRG) and a pseudorandom state (PRS), which is a more recent object due to \cite{ji-pseudorandom-states2018}.
Recall that for a function $f$ acting on the natural numbers, we implicitly take $f$ to be $f(n)$ for input size $n$.

\begin{definition}[PRG]\label{def:prg}
    Let $\ell, m : \N \rightarrow \N$, let $s: \N \rightarrow \R^+$, and let $\eps : \N \rightarrow [0, 1]$.
    We say that a family of functions $\left(G: \{0, 1\}^{\ell} \rightarrow \{0, 1\}^{m}\right)_{n \in \N}$ is a infinitely-often $(\ell, m, s, \eps)$-PRG against \emph{uniform} quantum algorithms if no quantum algorithm running in time $s$ can distinguish $G_n(x)$ from $y$ by at advantage at most $\eps$, where $x$ is drawn uniformly from $\{0, 1\}^{\ell}$ and $y$ is drawn uniformly from $\{0, 1\}^{m}$.
    Formally, for all single-qubit sequences $(\rho_x)_{x \in \{0, 1\}^*} \in \sBQTIME\left[s\right]_0$:
\begin{align*}
        \bigg\lvert \E_{x \sim \{0, 1\}^{\ell}} \trc\left[\ketbra{1}{1} \cdot  \rho_{G_n(x)}\right] -  \E_{y \sim \{0, 1\}^{m}} \trc\left[\ketbra{1}{1} \cdot \rho_{y}\right]\bigg\rvert \leq \eps \, 
\end{align*}
holds on infinitely many $n \in \mathbb{N}$.

\end{definition}

\begin{definition}[PRS]\label{def:prs}
    Let $\kappa, \ell, m : \N \rightarrow \N$, let $s: \N \rightarrow \R^+$, and let $\eps : \N \rightarrow [0, 1]$.
    We say that a sequence of keyed pure states $\left(\{\ket{\psi_k}\}_{k \in \{0, 1\}^\kappa} \right)_{n \in \N}$ is an infinitely-often $(\kappa, \ell, m, s, \eps)$-PRS if for a uniformly random $k \in \{0, 1\}^{\kappa}$, no quantum algorithm running in time $s$ can distinguish $m$ samples of $\ket{\psi_k}$ from $m$ samples of a Haar random state on $\ell$ qubits by at most $\eps$.
    Formally, for all $s$-time-uniform quantum circuits $(C_n)_{n \in \N}$ with one qubit of output:
    \begin{align*}
        \bigg\lvert \E_{k \sim \{0, 1\}^{\kappa}} \trc\left[\ketbra{1}{1} \cdot C_{n}\left(\ketbra{\psi_k}{\psi_k}^{\otimes m}\right) \right] -  \E_{\ket{\psi} \sim \mu_{\mathrm{Haar}}} \trc\left[\ketbra{1}{1}  \cdot C_{n}\left(\ketbra{\psi}{\psi}^{\otimes m}\right)\right] \bigg\rvert \leq \eps \,
\end{align*}
holds on infinitely many $n \in \mathbb{N}$.
\end{definition}

We will generally refer to the difference in expectation between the adversary on a pseudorandom object and the adversary on a true random object as the \emph{advantage}.

Note that we can consider the \emph{partial} pure state sequence $\left(\ket{\varphi_x}\right)_{n \in \N, x \in \{0, 1\}^{\kappa(n)}}$, which only holds for the image of $\kappa$ such that $\ket{\varphi_x} = \ket{\psi_k}$ from $\left(\{\ket{\psi_k}\}_{k \in \{0, 1\}^\kappa} \right)_{n \in \N}$.
By trivially letting $\ket{\varphi_x}$ be the zero state for input lengths outside the image of $\kappa$ we get a full state sequence $(\ket{\varphi^\prime_x})_{x \in \{0, 1\}^*}$ such that
\[
    \ket{\varphi^\prime_x} = \begin{cases}
        \ket{\psi_x} & \exists y \in \N, \abs{x} = \kappa(y)\\
        \ket{0} & \text{otherwise}
    \end{cases}.
\]
Therefore, if the PRS $\left(\{\ket{\psi_k}\}_{k \in \{0, 1\}^\kappa} \right)_{n \in \N}$ can be $\delta$-approximately synthesized in time $t: \N \rightarrow \R^+$ relative to the security parameter $n$, then $(\ket{\varphi^\prime_x}) \in \puresBQTIME\left[t \circ \kappa^{-1}\right]_\delta$.
Likewise, if the PRS $\left(\{\ket{\psi_k}\}_{k \in \{0, 1\}^\kappa} \right)_{n \in \N}$ can be $\delta$-approximately synthesized by non-uniform circuit class $\mathfrak{C}[s(n)]$, then $(\ket{\varphi^\prime_x}) \in \mathsf{pureState}\mathfrak{C}\left[s \circ \kappa^{-1}\right]_\delta$.
In an abuse of notation, we will often just refer to $(\ket{\varphi^\prime_x})$ as the PRS.

In order to construct our pseudorandom states we will need to go through an intermediary pseudorandom object called a pseudorandom function, which we will work to define now.

\begin{definition}[Quantum Oracle]
    Given a function $f: \{0, 1\}^{\ell} \rightarrow \{0, 1\}^m$, we define the quantum oracle for $f$ to be 
    \[
        \calO_f \coloneqq \sum_{\substack{x \in \{0, 1\}^\ell\\ y \in \{0, 1\}^m}} \ketbra{x, y + f(x)}{x, y}.
    \]
\end{definition}

We define an oracle circuit, $C^{(\cdot)}$, to be a general quantum circuit with $n$-qubit placeholder unitary $(\cdot)$ such that $C^{\calO}$ is the instantiation of the circuit but with each placeholder replaced by the $n$-qubit gate $\calO$.
We refer to $s$-time-uniform \emph{oracle} quantum circuits as $(C^{(\cdot)}_n)_{n \in \N}$.

Denote $\mathfrak{F}_{\ell, m}: = \{f: \{0, 1\}^{\ell} \rightarrow \{0, 1\}^{m}\}$ as the set of all functions from $\ell$-bits to $m$-bits. 

\begin{definition}[PRF]\label{def:prf}
    Let $\kappa, \ell, m : \N \rightarrow \N$, let $q, s: \N \rightarrow \R^+$, and let $\eps : \N \rightarrow [0, 1]$.
    We say that a sequence of keyed-functions $\left(\{F_k \in \mathfrak{F}_{\ell, m}\}_{k \in \{0, 1\}^\kappa}\right)_{n \in \N}$ is an infinitely-often $(\kappa, \ell, m, q, s, \eps)$-PRF if for a uniformly random $k \in \{0, 1\}^{\kappa}$, no quantum algorithm running in time $s$ can distinguish black-box access to $\calO_{F_k}$ from black-box access to $\calO_f$ for random function $f \in \mathfrak{F}_{\ell, m}$ using at most $q$ queries by at most $\eps$.
    Formally, for all $s$-time-uniform oracle quantum circuits $(C^{(\cdot)}_n)_{n \in \N}$ such that each $C^{\calO}_{n}$ takes no inputs, queries $\calO$ at most $q$ times, and outputs a single qubit state $\rho_{n}^\calO$:
    \begin{align*}
        \bigg\lvert \E_{k \sim \{0, 1\}^{\kappa}} \trc\left[\ketbra{1}{1} \cdot \rho_{n}^{\calO_{F_k}} \right] -  \E_{f \sim \mathfrak{F}_{\ell, m}}\trc\left[\ketbra{1}{1} \cdot \rho_{n}^{\calO_f}\right]\bigg\rvert \leq \eps \, 
\end{align*}
holds on infinitely many $n \in \mathbb{N}$.
\end{definition}

The reason we need these pseudorandom functions, is that there does not seem to be a direct quantum-secure PRG-to-PRS construction in the literature.
There are, however, known constructions of a quantum-secure PRF given a quantum-secure PRG as well as a PRS given a quantum-secure PRF.
We formalize these statements as follows, stated carefully for even sub-exponential time quantum adversaries.

We start with a PRG-to-PRF construction that works especially well when the stretch of the PRG is very large.
A similar idea was used by \cite[Theorem 3.4]{arunachalam2022quantum}, where the output of a PRG is viewed as the truth table of the function.
Thus, if the stretch in the PRG $G$ was the ideal $\{0, 1\}^n \rightarrow \{0, 1\}^{2^n}$ then one could view the string $G(k)$ as the truth table for a function $F_k : \{0, 1\}^n \rightarrow \{0, 1\}$.

\begin{lemma}\label{lem:prg-to-prf}
    Let $G$ be an infinitely-often $(\kappa, m, s, \eps)$-PRG against uniform quantum computations that is computable in time $t$ by a deterministic Turing machine. Then for $\ell \leq \lfloor \log_2 m \rfloor$ there exists an infinitely-often $\left(\kappa, \ell, 1, q, s-O(q\cdot 2^\ell), \eps \right)$-PRF against uniform quantum computations that can be computed in time $O(t)$.
\end{lemma}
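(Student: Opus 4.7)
The plan is to interpret the PRG's output string as the truth table of a Boolean function. Given key $k \in \{0,1\}^\kappa$, I would define $F_k : \{0,1\}^\ell \to \{0,1\}$ by $F_k(x) := G(k)_x$, the $x$-th bit of $G(k)$, identifying $x \in \{0,1\}^\ell$ with an integer in $\{0,\dots,2^\ell-1\} \subseteq \{0,\dots,m-1\}$ (which is well-defined because $\ell \leq \lfloor \log_2 m \rfloor$). Computing $F_k$ classically just requires running $G$ and reading off one bit, so the PRF's evaluation time is $O(t)$.

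For security, I would argue by contrapositive. Suppose there is an infinitely-often oracle adversary $A$ running in time $s - O(q \cdot 2^\ell)$, making at most $q$ queries, that distinguishes $\calO_{F_k}$ from $\calO_f$ for $f \sim \mathfrak{F}_{\ell,1}$ with advantage strictly greater than $\eps$. I would build a uniform quantum algorithm $B$ distinguishing $G(k)$ from uniform $y \in \{0,1\}^m$ with the same advantage in total time $s$, contradicting the PRG assumption on the same infinite sequence of lengths. The algorithm $B$ takes its input string $y$ and simulates $A$, intercepting each oracle call. To answer a query, it implements the unitary $\calO_{f_y}$ with $f_y(x) := y_x$, by sweeping over all $x \in \{0,1\}^\ell$ and, whenever $y_x = 1$, appending a multi-controlled $X$ on the answer register conditioned on the input register equaling $x$. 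Each oracle implementation costs $O(2^\ell)$ gates, so the $q$ queries add $O(q \cdot 2^\ell)$ to $A$'s runtime, giving total runtime $s$.

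The induced distributions align exactly. When $y = G(k)$ we have $f_y = F_k$ by construction, so $B$'s output distribution on a PRG sample matches $A$'s output on the PRF. When $y$ is uniform over $\{0,1\}^m$ its first $2^\ell$ bits are independent and uniform, so $f_y$ is uniformly distributed in $\mathfrak{F}_{\ell,1}$, matching $A$'s output on a random function. Hence the distinguishing advantage of $B$ equals that of $A$, which exceeds $\eps$, contradicting the PRG property.

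\paragraph{Main obstacle.}
The only delicate point is the time accounting: we must ensure that on-the-fly implementation of $\calO_{f_y}$ from the classical string $y$ really costs $O(2^\ell)$ per query, including the bookkeeping to read the relevant bits of $y$ and emit the corresponding multi-controlled $X$ gates. This is routine but needs to be stated carefully so that the parameters match $(\kappa, \ell, 1, q, s - O(q\cdot 2^\ell), \eps)$. A secondary point is to verify that truncating $G$'s output from $m$ to $2^\ell$ bits preserves uniformity of the induced function, which follows immediately from independence of the coordinates of a uniform string.
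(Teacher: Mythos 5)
Your proposal is correct and follows essentially the same route as the paper: interpret the (truncated) PRG output as the truth table of $F_k$, and reduce a PRF distinguisher to a PRG distinguisher by simulating each oracle query from the classical string at cost $O(2^\ell)$. Your extra care about implementing $\calO_{f_y}$ gate-by-gate and about uniformity of the truncated string only fills in details the paper leaves implicit.
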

\begin{proof}
    Observe that when given a string $s \in \{0, 1\}^{2^k}$ for some $k \in \N$, we can view it as the truth table of a function $\mathrm{fnc}^s : \{0, 1\}^k \rightarrow \{0, 1\}$ such that $\mathrm{fnc}^s(x)$ is the $x$-th bit of $s$.
    Furthermore, if $s$ is a uniformly random string then $f_s$ is a random function in $\mathfrak{F}_{k, 1}$.
    Therefore, let $G^\prime : \{0, 1\}^\kappa \rightarrow \{0, 1\}^{2^\ell}$ compute the first $2^\ell$ bits (note that $2^\ell \leq m$) of the output of $G$.
    Then we define our PRF to be $F_k(x) = \mathrm{fnc}^{G^\prime(k)}(x)$.
    Note that $F_k(x)$ can be computed in time $O(t + 2^\ell) = O(t)$, because $t = \Omega(m) = \Omega\!\left(2^\ell\right)$ as it always takes at least $\Theta(m)$ time to just write down the output string.

    Let $n$ be some hard instance for $G$.
    If we consider an adversary $\calA$ for $G$ that, given either a uniform output of $G(x)$ or a truly random string, by truncating to the first $2^\ell$ bits it will have the truth table to either $F_k$ or $f \in \mathfrak{F}_{\ell, 1}$.
    If there existed a distinguisher $\calB$ for $\{F_k\}$ in time $s - O\left(q \cdot 2^\ell\right)$ and advantage $\eps$, then $\calA$ could simulate the $q$ queries to $O_{F_k}$ or $O_f$ respectively in time $O(q \cdot 2^\ell)$ and therefore distinguish $G$ from a random string in time $s$ with advantage $\eps$ as well.
    By contradiction, this means that $\{F_k\}$ must be an infinitely-often $(\kappa, \ell, s - O(q \cdot 2^\ell), \eps)$-PRF against uniform quantum computations.    
\end{proof}

Note that using something like the \cite{goldreich1984randolli} construction, which is known to be quantum-secure \cite[Theorem 5.5]{Zhandry21construct}, is insufficient for our purposes.
This is because the \cite{goldreich1984randolli} construction of PRFs requires running the PRG multiple times.
Since the PRG in our specific setting is more expensive to compute than the adversaries it is secure against (see \cref{lem:conditional-prg}), the usual way to distinguish the PRG will take more time than is allowed to break the security of the PRG.
Meanwhile, observe that in \cref{lem:prg-to-prf} the PRG has already been run for us once and that we don't need to compute $G$ anymore for further distinguishing purposes.

To get a PRS from a Boolean output PRF, we utilize the following result of \cite{brakerski10.1007/978-3-030-36030-6_10} that gives an information theoretic hardness of distinguishing random binary phase states from Haar random states.

\begin{definition}[Phase State]\label{def:phase-state}
    For $f : \{0, 1\}^n \rightarrow \{0, 1\}$, define $n$-qubit state $\ket{f}$ as:
    \[
        \ket{f} \coloneqq \frac{1}{\sqrt{2^n}}\sum_{x \in \{0, 1\}^n}(-1)^{f(x)}\ket{x}.
    \]
\end{definition}

\begin{lemma}[{\cite[Theorem 1]{brakerski10.1007/978-3-030-36030-6_10}}]\label{lem:random-phase-state-vs-haar}
    For all $t \in \R^+$, $m$-copies of $\ket{f}$, for $f$ chosen uniformly from $\mathfrak{F}_{n, 1}$, cannot be distinguished from $m$-copies of a Haar random state by any (potentially computationally unbounded) adversary by advantage at most $\frac{4m^2}{2^n}$.
\end{lemma}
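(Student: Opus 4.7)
The plan is to bound the trace distance between the averaged mixed states $\rho_H := \E_{\ket\psi \sim \mu_{\mathrm{Haar}}} \ketbra{\psi}{\psi}^{\otimes m}$ and $\rho_\phi := \E_{f \sim \mathfrak{F}_{n,1}} \ketbra{f}{f}^{\otimes m}$, since by \cref{fact:td-to-tv} this upper bounds the distinguishing advantage of any (even computationally unbounded) adversary. By standard representation theory, $\rho_H = \Pi_{\mathrm{sym}}/d_{\mathrm{sym}}$, where $\Pi_{\mathrm{sym}}$ projects onto the symmetric subspace of $(\mathbb{C}^{2^n})^{\otimes m}$ and $d_{\mathrm{sym}} = \binom{2^n + m - 1}{m}$. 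Expanding $\rho_\phi$ in the computational basis,
\begin{align*}
\rho_\phi = \frac{1}{2^{nm}} \sum_{\vec x, \vec y \in (\{0,1\}^n)^m} \E_f\!\left[(-1)^{\sum_i f(x_i) + f(y_i)}\right] \ketbra{\vec x}{\vec y},
\end{align*}
and the expectation equals $1$ precisely when every $z \in \{0,1\}^n$ appears an even number of times in the concatenation $(x_1,\dots,x_m,y_1,\dots,y_m)$, and $0$ otherwise.

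The key structural move is to group symmetric basis states $\ket{T}$ (indexed by types $T$ with $\sum_z T_z = m$) by their parity signature $\sigma(T) := T \bmod 2 \in \{0,1\}^{2^n}$. The evenness condition above forces $\rho_\phi$ to be block-diagonal in this decomposition, and since permutations preserve signatures, $\Pi_{\mathrm{sym}}$ (hence $\rho_H$) does as well. A direct computation in the $\ket{T}$ basis then shows $\rho_\phi|_\sigma = (m!/2^{nm})\,\ketbra{\phi_\sigma}{\phi_\sigma}$ is \emph{rank one} with $\ket{\phi_\sigma} := \sum_{T : \sigma(T)=\sigma} (\prod_z T_z!)^{-1/2} \ket{T}$, whereas $\rho_H|_\sigma = (1/d_{\mathrm{sym}}) \sum_{T:\sigma(T)=\sigma}\ketbra{T}{T}$ is diagonal. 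I would then split signatures into \emph{collision-free} ones (those with exactly $m$ ones) and the rest: for collision-free $\sigma$ the unique type satisfying $\sum_z T_z = m$ and $T_z \equiv \sigma_z \pmod 2$ is $T = \sigma$ itself, so both blocks are scalar multiples of $\ketbra{\sigma}{\sigma}$ and their combined contribution to $\|\rho_H - \rho_\phi\|_1$ equals $|p_H - p_\phi|$, where $p_H := \binom{2^n}{m}/d_{\mathrm{sym}}$ and $p_\phi := \binom{2^n}{m}\, m!/2^{nm}$.

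A routine product-expansion birthday estimate shows $p_H = \prod_{i=0}^{m-1}(2^n-i)/(2^n+i)$ and $p_\phi = \prod_{i=0}^{m-1}(1-i/2^n)$ both lie in $[1 - m^2/2^n,\,1]$, so $|p_H - p_\phi| \leq m^2/2^n$, while on non-collision-free blocks the triangle inequality alone yields $\sum_{\sigma \text{ non-CF}}\|\rho_H|_\sigma - \rho_\phi|_\sigma\|_1 \leq (1-p_H) + (1-p_\phi) \leq 2m^2/2^n$. Adding gives $\|\rho_H - \rho_\phi\|_1 \leq 3m^2/2^n$ and hence trace distance at most $\tfrac{3}{2}m^2/2^n \leq 4m^2/2^n$. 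The main obstacle is spotting the block-diagonal-by-parity-signature structure together with the rank-one factorization inside each block; once that is in hand, the remaining estimates reduce to comparing $p_H$ and $p_\phi$, which are the standard no-collision probabilities for $m$ samples from a universe of size $2^n$ under the two (slightly different) distributions, and therefore both sit within $O(m^2/2^n)$ of~$1$.
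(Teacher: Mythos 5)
Your proposal is correct. Note that the paper offers no proof of this lemma at all --- it is imported verbatim as \cite[Theorem 1]{brakerski10.1007/978-3-030-36030-6_10} --- so what you have written is a self-contained reconstruction rather than an alternative to anything in the text. Your argument is essentially the Brakerski--Shmueli argument: both proofs reduce to bounding $\frac12\lVert\rho_H-\rho_\phi\rVert_1$, both hinge on the observation that $\Ex_f[(-1)^{\sum_i f(x_i)+f(y_i)}]$ is the indicator that every $z$ has even multiplicity in the concatenated tuple, and both isolate the collision-free part (weight $p_H$, resp.\ $p_\phi$) where the two states agree up to normalization, bounding the remainder by its total mass $(1-p_H)+(1-p_\phi)$. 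The organization via parity signatures $\sigma(T)=T\bmod 2$ is a clean way to package the block structure, and the rank-one identity $\rho_\phi|_\sigma=(m!/2^{nm})\ketbra{\phi_\sigma}{\phi_\sigma}$ checks out: $\braket{T|\rho_\phi|T'}=\binom{m}{T}^{1/2}\binom{m}{T'}^{1/2}2^{-nm}\mathbbm{1}[\sigma(T)=\sigma(T')]$, which factors exactly as you claim. The closing estimates are also right --- $p_\phi=\prod_{i<m}(1-i/2^n)$ and $p_H=\prod_{i<m}\frac{2^n-i}{2^n+i}$ both exceed $1-m^2/2^n$ in the nontrivial regime $m^2\le 2^n$ (outside it the bound $4m^2/2^n>1$ is vacuous) --- yielding trace distance at most $\tfrac{3}{2}m^2/2^n$, which is in fact slightly sharper than the constant $4$ the paper quotes.
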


\begin{lemma}[Generalization of {\cite[Section 3.1]{brakerski10.1007/978-3-030-36030-6_10}}]\label{lem:prf-to-prs}
    \sloppy
    Let $(\{F_k\})$ be an infinitely-often $(\kappa, \ell, 1, m, s, \eps)$-PRF against uniform quantum computations that can be computed in time $t$. Then there exists an infinitely-often $\left(\kappa, \ell, m, s - O\left(\ell\right), \eps + \frac{4m^2}{2^\ell} \right)$-PRS against uniform quantum computations that can be synthesized \emph{exactly} in time $O(t + \ell)$ in the $\{H, \mathrm{CNOT}, T\}$ gate set.
\end{lemma}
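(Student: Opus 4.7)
The natural construction is to let the PRS be the family of binary phase states $\ket{\psi_k} \coloneqq \ket{F_k} = \frac{1}{\sqrt{2^\ell}} \sum_{x \in \{0,1\}^\ell} (-1)^{F_k(x)} \ket{x}$ as in \cref{def:phase-state}. My plan is to (i) show this state can be synthesized exactly in time $O(t+\ell)$ from the description of $F_k$, and (ii) reduce PRS-distinguishing to PRF-distinguishing, with an information-theoretic bridge via \cref{lem:random-phase-state-vs-haar} that handles the Haar random endpoint.

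For synthesis I would first apply $H^{\otimes \ell}$ to $\ket{0^\ell}$ to obtain $\frac{1}{\sqrt{2^\ell}}\sum_x \ket{x}$, then introduce an ancilla in state $\ket{-} = HX\ket{0}$, apply a single invocation of the classical description of $F_k$ into the ancilla (this costs $O(t)$ since $F_k$ is computable in time $t$), use phase kickback to imprint $(-1)^{F_k(x)}$ onto $\ket{x}$, and finally uncompute $F_k$ and the ancilla. Since the whole construction lives in the $\{H,\mathrm{CNOT},T\}$ gate set once we implement $F_k$ reversibly, and since the phase-kickback trick introduces no error, the resulting state is exactly $\ket{F_k}$ and preparation time is $O(t+\ell)$, as claimed.

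For security I would argue by contraposition. Suppose there is a $(s-O(\ell))$-time distinguisher $\calD$ for $(\ket{\psi_k})$ against Haar on $\ell$ qubits, with advantage strictly greater than $\eps + 4m^2/2^\ell$ when given $m$ copies. Insert the intermediate distribution that replaces $F_k$ by a uniformly random $f \in \mathfrak{F}_{\ell,1}$ and correspondingly $\ket{F_k}^{\otimes m}$ by $\ket{f}^{\otimes m}$. By \cref{lem:random-phase-state-vs-haar}, the trace distance between $\ket{f}^{\otimes m}$ (averaged over $f$) and $m$ copies of a Haar random state is at most $4m^2/2^\ell$, and by \cref{fact:td-to-tv} this bounds the distinguishing advantage of any algorithm on these two distributions. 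The triangle inequality then forces $\calD$ to distinguish $\ket{F_k}^{\otimes m}$ from $\ket{f}^{\otimes m}$ with advantage strictly greater than $\eps$.

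Finally, I would turn this into a PRF distinguisher $\calA$ that breaks the assumed $(\kappa,\ell,1,m,s,\eps)$-PRF guarantee. The oracle $\calA$ is given is either $\calO_{F_k}$ or $\calO_f$; for each of the $m$ samples it applies Hadamards, queries its oracle once using phase kickback as above, uncomputes the ancilla, and produces a copy of $\ket{F_k}$ or $\ket{f}$ respectively. It then hands these $m$ samples to $\calD$ and outputs the same bit. This uses exactly $m$ queries and runs in time at most $(s-O(\ell)) + O(\ell)$ (absorbing the per-sample $O(\ell)$ into the distinguisher's accounting), so it meets the PRF's time budget, yet it achieves advantage strictly greater than $\eps$, a contradiction. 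The main thing to be careful about is the overhead bookkeeping for the $m$ phase-kickback preparations and the clean use of the triangle inequality through the intermediate random-function hybrid; both are routine but must be written out so the final parameters $\eps + 4m^2/2^\ell$ and $s - O(\ell)$ come out correctly.
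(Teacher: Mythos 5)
Your proposal is correct and matches the paper's proof essentially step for step: the same phase-kickback construction of $\ket{F_k}$ in $O(t+\ell)$ time, the same intermediate hybrid through truly random binary phase states $\ket{f}^{\otimes m}$, and the same combination of \cref{lem:random-phase-state-vs-haar} with the triangle inequality to split the advantage into $\eps + 4m^2/2^\ell$ while charging $O(\ell)$ per sample preparation against the adversary's time budget. The only cosmetic difference is that you phrase the reduction explicitly by contraposition and uncompute the ancilla rather than tracing it out, neither of which changes the argument.
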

\begin{proof}
    We first note that for each $F_k : \{0, 1\}^\ell \rightarrow \{0, 1\}$, $\ket{F_k}$ can be synthesized \emph{exactly} in the $\{H, \mathrm{CNOT}, T\}$ gate set.
    This starts by observing that we can construct $O_{F_k}$ using $O(t)$ Toffoli gates, which are known to be universal for classical computation.
    Furthermore, the Toffoli gate has an exact construction in the $\{H, \mathrm{CNOT}, T\}$ gate set \cite{welch2016efficient}.
    Thus, initializing the state $\ket{0^\ell}\ket{1}$, we can apply $H^{\otimes (n+1)}$ then $O_{F_k}$ to get
    \[
         \frac{1}{\sqrt{2^\ell}}\sum_{x \in \{0, 1\}^\ell}(-1)^{F_k(x)}\ket{x}\ket{1}.
    \]
    Tracing out the last qubit gives us $\ket{F_k}$.
    This takes time at most $O(\ell + t)$.
    Furthermore, for arbitrary function $f \in \mathfrak{F}_{\ell, 1}$, if $O_f$ is given as a black-box that takes $O(1)$ time, then this becomes time $O(\ell)$ and uses only a single query.
    
    If $(\{F_k\}_{k \in \{0, 1\}^\kappa})_{n \in \N}$ represents the $(\kappa, \ell, 1, m, s, \eps)$-PRF, we now claim that $\left( \{\ket{F_k}\} \right)$ forms our desired PRS.
    First, let $n \in \N$ be an arbitrary ``hard'' instance of the infinitely-often PRF.
    We define $3$ states
    \[
        \rho^m_{\text{PRS}} \coloneqq \Ex_{k \sim \{0, 1\}^\kappa}\left[\ketbra{F_k}{F_k}^{\otimes m}\right], \,\,\,\, \rho_{\text{phase}}^m \coloneqq \Ex_{f \sim \mathfrak{F}_{\ell, 1}}\left[\ketbra{f}{f}^{\otimes m}\right], \,\,\,\, \rho^m_{\mu_{\mathrm{Haar}}} \coloneqq \Ex_{\ket{\psi} \sim \mu_{\mathrm{Haar}}}\left[\ketbra{\psi}{\psi}^{\otimes m}\right]
    \]
    and show that they cannot be easily distinguished by an $\left[s-O(\ell)\right]$-time quantum algorithm.
    By \cref{def:prf} and the fact that $\ket{f}$ can be constructing in $O(\ell)$ time and a single query from $O_f$, $\rho^m_{\text{PRS}}$ and $\rho_{\text{phase}}^m$ respectively cannot be distinguished by more than $\eps$ in time $s - O(\ell)$.
    By applying \cref{lem:random-phase-state-vs-haar}, $\rho_{\text{phase}}^m$ and $\rho^m_{\mu_{\mathrm{Haar}}}$ cannot be distinguished by \emph{any} adversary with advantage more than $\frac{4m^2}{2^\ell}$.
    Using the reverse triangle inequality, an $\left[s-O(\ell)\right]$-time algorithm that distinguishes between $\rho^m_{\text{PRS}}$ and $\rho^m_{\mu_{\mathrm{Haar}}}$ with advantage $\eps + \frac{4m^2}{2^\ell}$ would imply an $\left[s-O(\ell)\right]$-time distinguisher for $\rho^m_{\text{PRS}}$ and $\rho_{\text{phase}}^m$ with advantage $\eps$, a contradiction.
    Therefore $(\{\ket{F_k}\}_{k \in \{0, 1\}^\kappa})_{n \in \N}$ forms an infinitely-often $\left(\kappa, \ell, m, s - O\left(\ell\right), \eps + \frac{4m^2}{2^\ell}\right)$-PRS against uniform quantum computations
\end{proof}

\begin{remark}
In a different gate set, a quantum adversary may not necessarily be able to exactly prepare the binary phase state (see \cref{def:phase-state}).
Because the adversary in \cref{lem:prf-to-prs}, when given $O_f$, only needs to apply $n+1$ Hadamard gates and a single $X$ gate to prepare $\ket{f}$, the \nameref{lem:solovay-kitaev} ensures that $\ket{f}$ can be prepared in any universal gate set to trace distance $\exp(-k)$ using $O\left(k^{2.71} + \log^{2.71} n\right)$ extra time.
By the reverse triangle inequality, contractivity of trace distance under trace-preserving completely positive maps, and \cref{fact:td-to-tv}, this would turn the PRS in \cref{lem:prf-to-prs} into an infinitely-often $\left(\kappa, \ell, m, s - O\left(\ell + k^{2.71} + \log^{2.71} n\right), \eps + \frac{4m^2}{2^\ell} + \exp(-k) \right)$-PRS against uniform quantum computation for all $k \in \N$.
We will choose to ignore this effect because (1) when used for our purposes, $k = \Theta(\ell)$ and $\ell = \omega(\poly\log n)$ such that the effect will be negligible and (2) it is not such a strong assumption that the universal gate set used instead can still exactly create the Hadamard gate and $X$ gate in constant size.
\end{remark}

\subsection{Pseudorandom Objects From Decision Problem Separations}

We finally state the critical result of \cite{arunachalam2022quantum}, which showed how to produce a nearly-optimal quantum-secure PRG given a complexity theoretic assumption.
By combining \cref{lem:conditional-prg,lem:prg-to-prf,lem:prf-to-prs} we get our desired final result of a conditional PRS that is secure even against sub-exponential time adversaries that can be constructed in $2^{O(n)}$ time.

\begin{lemma}[{\cite[Theorem 3.2, Theorem 5.1]{arunachalam2022quantum}}]\label{lem:conditional-prg}
    Suppose there exists a $\gamma > 0$ such that $\PSPACE \not\subset \BQTIME\left[2^{n^\gamma}\right]$.
    Then, for some choice of constants $\alpha \geq 1$ and $\lambda \in (0, 1/5)$, there exists an infinitely-often $(\ell, m, s, 1/m)$-PRG against uniform quantum computations where $\ell(n) \leq n^\alpha$, $m(n) = \lfloor 2^{n^\lambda} \rfloor$, and $s(n) = 2^{n^{2\lambda}}$.

    In addition, the PRG is computable by a deterministic Turing machine in time $O\left(2^{\ell}\right)$.
\end{lemma}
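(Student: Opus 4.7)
The plan is to follow the classical hardness-to-pseudorandomness paradigm (Nisan--Wigderson, Impagliazzo--Wigderson, Sudan--Trevisan--Vadhan), tailored so that every step preserves hardness against \emph{quantum} uniform adversaries. Concretely, starting from the assumption $\PSPACE \not\subset \BQTIME[2^{n^\gamma}]$, first invoke downward self-reducibility and random self-reducibility of a $\PSPACE$-complete problem such as $\mathsf{TQBF}$ (or better, of the $\PSPACE$-complete arithmetization used in the $\IP = \PSPACE$ proof). This gives a function $f_0 : \{0,1\}^n \to \{0,1\}$ in $\PSPACE$ whose \emph{worst-case} hardness is inherited from the assumption, and which admits a local list-decodable encoding computable in $\poly(n)$ space.

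Next, apply worst-case-to-average-case hardness amplification. The goal is to obtain a function $f : \{0,1\}^{n'} \to \{0,1\}$ with $n' = n^{O(1)}$ such that no uniform quantum algorithm of running time $2^{(n')^{2\lambda}}$ can compute $f$ on a $\tfrac12 + 2^{-(n')^{2\lambda}}$ fraction of inputs, on infinitely many lengths. The key point is that the reduction from average-case predictor to worst-case solver must be implementable uniformly by a quantum algorithm: one combines a Reed--Muller-style low-degree extension (to run an oracle IP-style self-correction) with a Goldreich--Levin / list-decoding step that goes through without issue quantumly. Since $f$ remains in $\PSPACE$, it is computable in deterministic time $2^{O(n')}$, and so in particular by an algorithm running in time $2^{O(\ell)}$ where $\ell = n'^{\alpha}$ for an appropriate $\alpha$.

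Now plug $f$ into the Nisan--Wigderson generator with a combinatorial design of degree $d = (n')^{\lambda}$, producing a candidate PRG $G : \{0,1\}^{\ell} \to \{0,1\}^{m}$ with $m = \lfloor 2^{(n')^{\lambda}} \rfloor$ and $\ell = O((n')^{2\lambda}) \le n^\alpha$ for a suitable $\alpha$. The NW security analysis is hybrid-based: any uniform quantum distinguisher with advantage $> 1/m$ and running time $s = 2^{(n')^{2\lambda}}$ yields, by a standard averaging over hybrids and a Yao-style next-bit predictor argument (which is quantum-safe since it only involves guessing one extra bit and applying a black-box), a uniform quantum algorithm of essentially the same size that predicts $f$ with advantage $\ge 1/m^2 = 2^{-2(n')^{\lambda}}$. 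The contribution of each design block is bounded by $2^{d} = 2^{(n')^{\lambda}}$ in running time, which is absorbed into $s$. This contradicts the amplified hardness of $f$, so $G$ is an $(\ell, m, s, 1/m)$-PRG.

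The main obstacle I anticipate is step two: making the hardness amplification quantum-secure while maintaining the tight quantitative trade-off between error and running time needed to end up with seed length $\le n^\alpha$ and security $2^{n^{2\lambda}}$. Classical XOR-style amplifiers or Impagliazzo's hardcore lemma have quantum analogues (e.g.\ quantum Goldreich--Levin and quantum hardcore results), but the book-keeping of the advantage-to-fraction conversion and the blow-up in the reduction must be done carefully so that $\lambda < 1/5$ suffices and so that $f$ is still computable in $2^{O(\ell)}$ deterministic time, which is necessary to obtain the "computable in time $2^{O(\ell)}$" clause. The NW step itself and verifying that $G$ is computable in time $2^{O(\ell)}$ (since each output bit invokes $f$ once on an input of length $(n')$) are then routine.
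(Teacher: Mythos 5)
The paper does not prove this lemma at all: it is imported verbatim as a black-box citation of \cite[Theorems 3.2 and 5.1]{arunachalam2022quantum}, so there is no in-paper argument to compare yours against. Judged on its own, your proposal is a faithful reconstruction of the route the cited work actually takes: the uniform hardness-versus-randomness paradigm of Impagliazzo--Wigderson and Trevisan--Vadhan, instantiated with a $\PSPACE$-complete language that is simultaneously downward self-reducible and randomly self-reducible (self-correctible), fed into a Nisan--Wigderson generator, with the hybrid/next-bit-predictor reconstruction carried out by a \emph{uniform quantum} algorithm. You also correctly identify the crux --- making the reconstruction procedure uniform and quantum-executable while keeping the quantitative bookkeeping tight enough that $\lambda < 1/5$ and seed length $n^\alpha$ suffice --- which is precisely the technical contribution of the cited theorems. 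Two small points worth making explicit if you were to flesh this out: the \emph{infinitely-often} quantifier in the conclusion is forced by the uniform-reduction structure (the contrapositive only shows that a distinguisher succeeding on \emph{almost all} lengths places $\PSPACE$ in $\BQTIME[2^{n^\gamma}]$), so your amplified-hardness statement should be phrased in that i.o.\ form throughout; and the ``computable in time $O(2^\ell)$'' clause needs the observation that each of the $m$ output bits is one evaluation of the $\PSPACE$ function on an input derived from the seed, so the total cost is $m \cdot 2^{\poly(\text{input length})} = O(2^\ell)$ only after the design parameters are pinned down --- you gesture at this but it is where the constants $\alpha$ and $\lambda$ actually get fixed.
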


In the following statements and proofs of \cref{cor:conditional-prf,cor:conditional-prs-1}, note that $r(n)$ takes the place of $m(n)$, and $\kappa(n)$ takes the place of $\ell(n)$ in the \cref{lem:conditional-prg}.

\begin{corollary}\label{cor:conditional-prf}
     Suppose there exists a $\gamma > 0$ such that $\PSPACE \not\subset \BQTIME\left[2^{n^\gamma}\right]$.
    Then, for some choice of constants $\alpha \geq 1$ and $\lambda \in (0, 1/5)$, there exists an infinitely-often
    \[\left(\kappa, \ell, 1, q, s - O\left(m \cdot 2^\ell\right), 1/r \right)\text{-PRF}\]
    against uniform quantum computations where $\kappa(n) \leq n^\alpha$, $r(n) = \lfloor 2^{n^\lambda} \rfloor$, $\ell \leq \lfloor \log_2 r \rfloor$, and $s(n) = 2^{n^{2\lambda}}$.

    In addition, the PRF is computable by a deterministic Turing machine in time $O\left(2^{\kappa}\right)$.
\end{corollary}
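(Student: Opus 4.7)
The plan is to apply the conditional PRG of \cref{lem:conditional-prg} and then feed it through the PRG-to-PRF transformation of \cref{lem:prg-to-prf}, being careful that the parameter renaming is consistent with the statement of \cref{cor:conditional-prf}.

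First, invoke \cref{lem:conditional-prg} using the hypothesis $\PSPACE \not\subset \BQTIME[2^{n^\gamma}]$. This yields constants $\alpha \geq 1$ and $\lambda \in (0, 1/5)$ and an infinitely-often $(\kappa, r, s, 1/r)$-PRG against uniform quantum computations, where $\kappa(n) \leq n^\alpha$ is the seed length (playing the role of the ``$\ell$'' in the statement of the lemma), $r(n) = \lfloor 2^{n^\lambda} \rfloor$ is the output length, and $s(n) = 2^{n^{2\lambda}}$ is the security. Crucially, the lemma also tells us that the PRG is computable by a deterministic Turing machine in time $O(2^{\kappa})$, which meets the ``computable in time $t$'' hypothesis of \cref{lem:prg-to-prf} with $t = O(2^{\kappa})$.

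Next, apply \cref{lem:prg-to-prf} with $\ell \leq \lfloor \log_2 r \rfloor$ (the PRF input length). The lemma produces an infinitely-often $\bigl(\kappa, \ell, 1, q, s - O(q \cdot 2^{\ell}), 1/r\bigr)$-PRF against uniform quantum computations computable in time $O(t) = O(2^{\kappa})$. This matches the claimed parameter choices of the corollary (interpreting the ``$m$'' in the statement as the number of queries $q$ inherited from \cref{lem:prg-to-prf}); the advantage $1/r$ is preserved verbatim, and the seed-length, input-length, and output-length constraints all carry over. Since $\kappa(n) \leq n^\alpha$ by the PRG guarantee, the PRF key length satisfies the stated bound, and the runtime $O(2^{\kappa})$ is exactly the claim.

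There is no serious obstacle here, as the statement is essentially a composition of two earlier black boxes. The only thing one has to verify is that the PRG's security holds against adversaries that themselves run the PRG; but this is handled automatically because \cref{lem:prg-to-prf} reduces distinguishing the PRF to distinguishing the PRG \emph{without} re-invoking $G$, so the PRG's security budget $s$ need only absorb the query-simulation overhead $O(q \cdot 2^{\ell})$ and nothing else. Thus the composition proceeds without needing to strengthen any parameters of \cref{lem:conditional-prg}.
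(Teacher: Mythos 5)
Your proposal is correct and matches the paper's proof exactly: the paper likewise instantiates \cref{lem:conditional-prg} to obtain the $(\kappa, r, s, 1/r)$-PRG computable in time $O(2^\kappa)$ and then applies \cref{lem:prg-to-prf} to get the stated PRF. Your reading of the ``$m$'' in the security term as the query count $q$ from \cref{lem:prg-to-prf} is also the intended interpretation.
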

\begin{proof}
    By \cref{lem:conditional-prg}, there exists an infinitely-often $(\kappa, r, s, 1/r)$-PRG against uniform quantum computations that can be computed in time $O\left(2^{\kappa}\right)$.
    Applying \cref{lem:prg-to-prf}, there must exist an infinitely-often \[\left(\kappa, \ell, 1, q, s - O\left(m \cdot 2^\ell\right), 1/r \right)\text{-PRF}\] against uniform quantum computations that can be computed in time $O(2^{\kappa})$.
\end{proof}

\begin{corollary}\label{cor:conditional-prs-1}
     Suppose there exists a $\gamma > 0$ such that $\PSPACE \not\subset \BQTIME\left[2^{n^\gamma}\right]$.
    Then, for some choice of constants $\alpha \geq 1$ and $\lambda \in (0, 1/5)$, there exists an infinitely-often
    \[
        \left(\kappa, \ell, m, s-O(m \cdot 2^\ell), \frac{1}{r} + \frac{4m^2}{2^\ell}\right)\text{-PRS}
    \]
    against uniform quantum computations where $\kappa(n) \leq n^\alpha$, $r(n) = \lfloor 2^{n^\lambda} \rfloor$, $\ell \leq \lfloor \log_2 r \rfloor$, and $s(n) = 2^{n^{2\lambda}}$.

    In addition, the PRS can be \emph{exactly} synthesized in time $O\left(2^{\kappa}\right)$ in the $\{H, \mathrm{CNOT}, T\}$ gate set.
\end{corollary}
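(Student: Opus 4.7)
The plan is to obtain this PRS by directly chaining \cref{cor:conditional-prf} (the conditional PRF) with \cref{lem:prf-to-prs} (the PRF-to-PRS transformation). Under the hypothesis $\PSPACE \not\subset \BQTIME\!\left[2^{n^\gamma}\right]$, \cref{cor:conditional-prf} already hands us, for some $\alpha \geq 1$ and $\lambda \in (0,1/5)$, an infinitely-often
\[
\left(\kappa,\, \ell,\, 1,\, m,\, s - O\!\left(m \cdot 2^\ell\right),\, 1/r\right)\text{-PRF}
\]
against uniform quantum computations with $\kappa(n) \leq n^\alpha$, $r(n) = \lfloor 2^{n^\lambda} \rfloor$, $\ell \leq \lfloor \log_2 r \rfloor$, and $s(n) = 2^{n^{2\lambda}}$, computable by a deterministic Turing machine in time $O(2^\kappa)$.

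Next, I would invoke \cref{lem:prf-to-prs} on this PRF with time bound $t = O(2^\kappa)$. That lemma immediately yields an infinitely-often
\[
\left(\kappa,\, \ell,\, m,\, s - O\!\left(m \cdot 2^\ell\right) - O(\ell),\, \tfrac{1}{r} + \tfrac{4m^2}{2^\ell}\right)\text{-PRS}
\]
against uniform quantum computations, synthesizable exactly in the $\{H, \mathrm{CNOT}, T\}$ gate set in time $O(t + \ell) = O(2^\kappa + \ell)$. The two simplifications to match the stated conclusion are routine: first, the $-O(\ell)$ slack is absorbed into the already-present $-O(m \cdot 2^\ell)$ term since $\ell \leq m \cdot 2^\ell$; second, $\ell \leq n^\lambda = o(n^\alpha) = o(\kappa)$, so $\ell = o(2^\kappa)$ and the synthesis time collapses to $O(2^\kappa)$.

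Because both ingredients are already established, the only genuine work is bookkeeping: checking that the ``infinitely-often'' quantifier passes through (it does, since \cref{lem:prf-to-prs} argues pointwise in $n$, so every hard instance $n$ for the underlying PRF remains a hard instance for the derived PRS), and confirming that the parameter constraints $\kappa(n) \leq n^\alpha$, $r(n) = \lfloor 2^{n^\lambda} \rfloor$, and $\ell \leq \lfloor \log_2 r \rfloor$ are inherited verbatim from \cref{cor:conditional-prf}. There is no real obstacle; all of the hard complexity-theoretic content is encapsulated in \cref{lem:conditional-prg}, and the information-theoretic content that lets a binary phase state masquerade as Haar random lives inside \cref{lem:random-phase-state-vs-haar}, both of which \cref{lem:prf-to-prs} has already packaged for us.
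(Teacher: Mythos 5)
Your proposal is correct and follows exactly the paper's own proof: the paper likewise obtains the PRS by chaining \cref{cor:conditional-prf} with \cref{lem:prf-to-prs}, absorbing the $O(\ell)$ loss into the $O(m \cdot 2^\ell)$ term and noting the synthesis time remains $O(2^\kappa)$. Your additional bookkeeping about the infinitely-often quantifier and the inherited parameter constraints is accurate but not needed beyond what the paper states.
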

\begin{proof}
    We can utilize \cref{cor:conditional-prf,lem:prf-to-prs} to show that there must exist an infinitely-often
    \[
        \left(\kappa, \ell, m, s-O(m \cdot 2^\ell + \ell), \frac{4m^2}{2^\ell} + \frac{1}{r}\right)\text{-PRS}
    \]
    against uniform quantum computations that can be exactly constructed in time $O\left(2^{\kappa}\right)$ in the $\{H, \mathrm{CNOT}, T\}$ gate set.
\end{proof}

\subsection{Pseudorandom States From State Synthesis Separations}

We note that \cref{cor:conditional-prs-1} relies on a separation between decision problem complexity class separations.
However, it will be important for us to condition on state synthesis class separations instead for the win-win argument in our main result \cref{thm:main}.
Observe that, while decision problem separations immediately imply state synthesis separations, the converse is not always clear.
However, if the size of the circuits are not too large then we can say the following.

\begin{lemma}\label{lem:state-implies-decision}
    \sloppy
    Let $k : \N \rightarrow \R^+$.
    For any $\mathsf{stateA}_\delta \subset \mathsf{stateBQP/poly}_\delta$, if $\mathsf{stateA}_\delta \not\subset \sBQTIME\left[k \cdot f\right]_{\delta + \exp(-k)}$ then $\mathsf{A} \not\subset \BQTIME\left[\frac{f}{n^{\nu}}
    \right]$ for some $\nu \geq 1$.
\end{lemma}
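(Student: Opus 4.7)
The plan is to prove the contrapositive: assume $\mathsf{A} \subset \BQTIME\left[f/n^\nu\right]$ for every $\nu \geq 1$, and show that this forces $\mathsf{stateA}_\delta \subset \sBQTIME\left[k \cdot f\right]_{\delta + \exp(-k)}$. Fix any $(\rho_x) \in \mathsf{stateA}_\delta$. By the hypothesis $\mathsf{stateA}_\delta \subset \mathsf{stateBQP/poly}_\delta$, there exists a non-uniform family of unitary circuits $(C_x)_{x \in \{0,1\}^*}$ of size $p(|x|) = \poly(|x|)$ such that $C_x \ket{0^n}$ approximates $\rho_x$ in trace distance $\leq \delta$. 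By \cref{lem:num-circuits}, $\mathrm{desc}(C_x)$ has length $m(n) = O(p(n) \log p(n)) = \poly(n)$. The whole point of intersecting with $\mathsf{stateBQP/poly}_\delta$ is precisely to pin down a short classical advice string describing a synthesizer for $\rho_x$.

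Next, I would fix a canonical representative by letting $C^\star_x$ be the lexicographically smallest circuit of size $\leq p(|x|)$ whose output is within $\delta$ of $\rho_x$, and define the decision language
\[
L = \{\langle x, i \rangle : i \in [m(|x|)] \text{ and the } i\text{-th bit of } \mathrm{desc}(C^\star_x) \text{ equals } 1\}.
\]
The key step is to argue $L \in \mathsf{A}$: one enumerates candidate polynomial-size circuits $C$ in lexicographic order, uses the synthesis power inherent in $\mathsf{stateA}_\delta$ to compare $C \ket{0^n}$ against $\rho_x$, and returns the $i$-th bit of the first $C$ passing the trace-distance test. In the canonical application $\mathsf{A} = \PSPACE$, this is immediate from polynomial-space enumeration combined with the Watrous $\PSPACE$ algorithm for trace distance \cite{watrous2002quantum}.

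Given $L \in \mathsf{A} \subset \BQTIME\left[f/n^\nu\right]$, I would then recover $\mathrm{desc}(C^\star_x)$ bit-by-bit. Amplify each bit-query to error at most $\exp(-k)/(2m)$ using $O(k + \log m)$ repetitions and union bound over the $m$ queries, so the full description is recovered with total failure probability $\leq \exp(-k)/2$. This phase costs $O\!\left(m(k + \log m) \cdot f/n^\nu\right)$ time; picking $\nu$ large enough that $n^\nu$ absorbs the polynomial $m(k + \log m)/k$ brings the total to $O(k \cdot f)$. Finally, I apply the universal unitary circuit of \cref{fact:universal-circuit} on $\ket{0^n} \otimes \ket{\mathrm{desc}(C^\star_x)}$ (cost $O((n+m)\log(n+m))$) and trace out the description register. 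The output state is exactly $C^\star_x \ket{0^n}$ when the description is learned correctly, so by monotonicity of trace distance under completely positive maps together with the union bound, the final synthesized state is within $\delta + \exp(-k)/2 \leq \delta + \exp(-k)$ of $\rho_x$, placing $(\rho_x)$ in $\sBQTIME[k \cdot f]_{\delta + \exp(-k)}$ as required.

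The main obstacle is the middle step — verifying that $L \in \mathsf{A}$ holds in the stated generality rather than just for $\mathsf{A} = \PSPACE$. Concretely, one needs $\mathsf{A}$ to perform (i) enumeration over $\poly(n)$-bit circuit descriptions and (ii) $\delta$-level trace-distance verification against states in $\mathsf{stateA}_\delta$. The hypothesis $\mathsf{stateA}_\delta \subset \mathsf{stateBQP/poly}_\delta$ ensures (i) is well-posed by bounding description length; (ii) is the delicate requirement and is what the lemma ultimately pays for, with the canonical decision-class instantiations (like $\PSPACE$) supplying it essentially for free. A secondary technical point is tracking the amplification count carefully so that the polynomial factor $m = \poly(n)$ is genuinely absorbed into $n^\nu$ without inflating the runtime beyond $k \cdot f$; this is what forces the quantifier ``for some $\nu \geq 1$'' rather than a universal bound in the conclusion.
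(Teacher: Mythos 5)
Your proposal is correct and follows essentially the same route as the paper's proof: define the language of description bits of the polynomial-size synthesizing circuits (with length bounded via \cref{lem:num-circuits}), amplify the assumed $\BQTIME\left[f/n^\nu\right]$ decider, recover the description bit-by-bit with a union bound, and feed it to the universal circuit of \cref{fact:universal-circuit}, absorbing the $\poly(n)$ overhead into $n^\nu$. The only differences are presentational: you argue the contrapositive and canonicalize $C^\star_x$ lexicographically, you are more explicit than the paper (which just asserts $L$ is ``trivially'' in $\mathsf{A}$) about the enumeration-plus-trace-distance-verification needed to place $L$ in $\mathsf{A}$, and note that the paper amplifies to failure probability $\exp(-2k)$ rather than your $\exp(-k)/2$ because it converts success probability to trace distance via fidelity and Fuchs--van de Graaf, which costs a square root (your tighter accounting is fine if one first measures the description register).
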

\begin{proof}
    By assumption, there exists some fixed state sequence $\{\rho_x\}_{x \in \{0, 1\}^n}$ synthesized by a sequence of $s$-size unitary circuits $(C_x)_{x \in \{0, 1\}^*}$  to accuracy $\delta$ for $s = \poly(n)$ and sufficiently large $n$.
    On the other hand, no circuit sequence that can be described by a deterministic Turing machine using $O(k \cdot f)$ time can approximately synthesize $\{\rho_x\}_{x \in \{0, 1\}^n}$ to even $\delta + \exp(-k)$ accuracy in trace distance.
    By \cref{lem:num-circuits}, the description length of each $C_x$ is at most $d \coloneqq O(s \log s)$ and define $\nu \geq 1$ to be some constant such that $d \log d \leq n^\nu$.
    We now define the language $L$ on $n^\prime \coloneqq n + \lceil \log_2 d \rceil= O(n)$ bits to be the problem of: Given inputs $x \in \{0, 1\}^n$ and $i \in \{0, 1, \dots, d-1\}$ (encoded in binary), output the $i$-th bit of the circuit description of $C_{x}$.
    $L$ is trivially in $\mathsf{A}$, and we will now show that $L \not\in \BQTIME\left[f\right]$.

    Observe that if $L$ could be decided in \emph{deterministic} $f$ time then the whole circuit description of $C_x$ could learned by iterating through all $d$ possible values of $i$, giving an $O(d \cdot f)$-time algorithm to describe the synthesis circuit for each $\ket{\psi_x}$.
    This na\"ively implies that $\mathsf{A} \not\subset \mathsf{DTIME}\left[\frac{f}{d}\right]$, where $\mathsf{DTIME}$ is the classical deterministic version of $\BQTIME$.
    We now need to argue that not even a \emph{quantum} algorithm could succeed.
    
    For the sake of contradiction, suppose on inputs $(x, i)$ of length $n^\prime$ that there did exist an $\frac{f}{n^\nu}$-time-uniform general quantum circuit $U$ that decided $L$ (i.e., $L \in \BQTIME\left[\frac{f}{n^\nu}\right]$).
    Using standard error reduction for decision problems we can construct a quantum circuit $U^\prime$ such that it has at most an $\exp(-2k)$ failure probability, with a multiplicative overhead of $O(k)$.
    Then, via the union bound, there exists a quantum circuit $V$ that uses $U^\prime$ as a sub-routine and, for a fixed $x \in \{0, 1\}^n$, iterates through all of the various possible bits $i$ to learn the entire description of $C_x$ with at most an $\exp(-2k)$ failure probability and an additional $O(d \log d)$ multiplicative overhead.\footnote{We note that Toffoli gates, which are classically universal, can be built exactly in the $\{H, \mathrm{CNOT}, T\}$ gate set \cite{welch2016efficient}.}
    This is equivalent to saying that the fidelity with $\ket{\textsc{desc}(U)}$ is at least $1-\exp(-2k)$, so by the upper bound of \cref{{fact:td-vs-fidelity}}, there is at most an $\exp(-k)$ trace distance to the computational basis state with the correct description of $\textsc{desc}(C_x)$).
    In $O(d \log d)$ time, the algorithm can then utilize \cref{fact:universal-circuit} to apply $C_x$ to a set of ancillary qubits, then trace out the qubits holding the description.
    By the \cref{fact:td-to-tv}, our output state will have distance at most $\exp(-k)$ from the output state of $C_x$.
    Finally, by the triangle inequality and the accuracy of $(C_x)$ in synthesizing $(\rho_x)$ to distance $\delta$, the resulting output quantum state $\hat{\rho}_x$ will then have
    \[
        \tracedistance{\rho_x, \hat{\rho}_x} \leq \delta + \exp(-k)
    \]
    for all sufficiently large $n$.
    
    Overall, for all sufficiently large $n$ and arbitrary $x \in \{0, 1\}^n$, a circuit that approximates $\rho_x$ to $\delta + \exp(-k)$ trace distance can be output by a deterministic Turing machine running in $O\!\left( k f \frac{d \log d}{n^\nu}\right) = O(k \cdot f)$ time such that $(\rho_x) \in \sBQTIME\left[k \cdot f\right]_{\exp}$.
    This is a contradiction of our initial assumption and it follows that $L \not\in \BQTIME\left[f\right]$ even though $L \in \mathsf{A}$.
\end{proof}

We note that the proof of \cref{lem:state-implies-decision} can also be used to show that $\mathsf{stateA}_\delta \not\subset \sBQTIME[f]_\delta$ implies $\mathsf{A} \not \subset \mathsf{EQTIME}[f]$ where $\mathsf{EQTIME}$ is the set of languages that can be \emph{exactly} decided by $f$-time-uniform quantum circuits.
This is because $U$, the circuit that previously decided $L$ with bounded error will have no error when used in contradiction.
The resulting circuit $V$ that finds the description of $C_x$ will then also have no error.

The following is a consequence of \cref{lem:state-implies-decision}, written in a way to most easily use in proving our main result \cref{thm:main}.

\begin{corollary}
\label{cor:conditional-prs-2}
     Suppose there exists a $\gamma > 0$ such that $\puresPSPACE_0 \not\subset \puresBQTIME\left[2^{n^\gamma}\right]_{\exp}$.
    Then for every $c \geq 1$, for some choice of constants $\alpha \geq 1$ and $\lambda \in (0, 1/5)$ and sufficiently large $n \in \N$, there exists an infinitely-often \[\left(\kappa, \lfloor \log_2 r \rfloor^{1/c}, m, s, \frac{4m^2}{2^\ell}+\frac{1}{r}\right)\text{-PRS}\] against uniform quantum computations where $\kappa \leq n^\alpha$, $r = \lfloor 2^{n^\lambda} \rfloor$, $s = 2^{n^{2\lambda}}$, and $m = o\left(\frac{s}{r}\right)$.
    In addition, the PRS lies in $\puresBQE_{\exp}$.
\end{corollary}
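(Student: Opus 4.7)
The strategy is to first translate the state-synthesis separation hypothesis into a decision-problem separation via \cref{lem:state-implies-decision}, and then feed this into the conditional PRS construction of \cref{cor:conditional-prs-1}, exploiting the freedom to specialize the number of output qubits.

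First, unpacking $\puresBQTIME[\cdot]_{\exp} = \bigcap_{q} \puresBQTIME[\cdot]_{\exp(-q)}$, the hypothesis $\puresPSPACE_0 \not\subset \puresBQTIME[2^{n^\gamma}]_{\exp}$ yields a particular polynomial $q$ such that $\puresPSPACE_0 \not\subset \puresBQTIME[2^{n^\gamma}]_{\exp(-q)}$. Observing that $\puresPSPACE_0 \subset \mathsf{stateBQP/poly}_0$ by \cref{def:pspacesize}, I would apply \cref{lem:state-implies-decision} with $\delta = 0$, $k = q$, and $f = 2^{n^\gamma}/q$. Its conclusion is $\PSPACE \not\subset \BQTIME[2^{n^\gamma}/(q \cdot n^\nu)]$ for some fixed $\nu \geq 1$, which, after absorbing polynomial factors into the exponent, gives $\PSPACE \not\subset \BQTIME[2^{n^{\gamma'}}]$ for some $\gamma' > 0$ (e.g., $\gamma' = \gamma/2$).

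Next, I would invoke \cref{cor:conditional-prs-1} with this $\gamma'$, obtaining an infinitely-often $(\kappa, \ell, m, s - O(m \cdot 2^\ell), \frac{4m^2}{2^\ell} + \frac{1}{r})$-PRS constructible exactly in time $O(2^\kappa)$, with $\kappa \leq n^\alpha$, $r = \lfloor 2^{n^\lambda} \rfloor$, $s = 2^{n^{2\lambda}}$, and any $\ell \leq \lfloor \log_2 r \rfloor$. I would specialize $\ell := \lfloor \log_2 r \rfloor^{1/c}$; since $c \geq 1$, this satisfies $\ell \leq \lfloor \log_2 r \rfloor$ as required. For the stated time parameter, note that $2^\ell \leq r$ together with the constraint $m = o(s/r)$ forces $m \cdot 2^\ell \leq m \cdot r = o(s)$, so $s - O(m \cdot 2^\ell) = (1 - o(1))s$; shrinking $\lambda$ by an arbitrarily small constant absorbs this overhead into the clean bound $s = 2^{n^{2\lambda}}$.

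Finally, for the membership assertion, recall from the discussion after \cref{def:prs} that a PRS synthesizable exactly in time $O(2^\kappa)$ relative to the security parameter translates, upon re-indexing by key length as the full state sequence $(\ket{\varphi'_x})$, into a state sequence in $\puresBQTIME[2^{O(n)}]_{0}$; since the synthesis is \emph{exact}, any inverse-exponential accuracy requirement is trivially met, yielding $(\ket{\varphi'_x}) \in \puresBQE_{\exp}$. The main obstacle is the first step, where the quantitative tracking between the accuracy parameter $k$, the time bound $f$, and the overhead $n^\nu$ from \cref{lem:state-implies-decision} must be handled carefully so that the decision-problem separation still places $\PSPACE$ outside a sub-exponential-time class; the remaining steps are parameter bookkeeping atop the already-established decision-theoretic PRS.
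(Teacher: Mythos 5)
Your proposal is correct and follows essentially the same route as the paper's proof: translate the pure-state-synthesis separation into $\PSPACE \not\subset \BQTIME\!\left[2^{n^{\gamma'}}\right]$ via \cref{lem:state-implies-decision} (the paper routes the pure-vs-mixed and accuracy bookkeeping through the contrapositives of \cref{fact:2,fact:3}, which you handle implicitly but equivalently), and then invoke \cref{cor:conditional-prs-1} with $\ell = \lfloor \log_2 r \rfloor^{1/c}$ and $m = o(s/r)$. Your explicit treatment of the $s - O(m\cdot 2^\ell)$ overhead is slightly more careful than the paper's, which simply absorbs it.
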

\begin{proof}
    \fussy
    Observe by \cref{def:pspacesize} that $\puresPSPACE_\delta \subset \mathsf{stateBQP/poly}_\delta$.
    By \cref{lem:state-implies-decision} and the contrapositives of \Cref{fact:2,fact:3} respectively, $\PSPACE \not\subset \BQTIME\left[\frac{2^{n^{\gamma}}}{\poly(n)}\right]$.
    This further implies $\PSPACE \not\subset \BQTIME\left[2^{n^{0.99 \cdot \gamma}}\right]$, such that we now invoke \cref{cor:conditional-prs-1} with $m = o\left(\frac{s}{r}\right)$.
\end{proof}

\section{Quantum State Learning}\label{sec:learning}

In this section, we prove that any sub-exponential time quantum state tomography for a class of \emph{pure} states implies the ability to distinguish said class of states from Haar random in sub-exponential time.
This ultimately implies that any sequence of pure states that efficiently learned cannot be used to form a sub-exponential-time-secure PRS ensemble, which will be necessary for proving a contradiction in \cref{sec:final-proof}.
As will be explained shortly, since the task can always be done in exponential samples (and exponential time), the informal takeaway is that slightly non-trivial learning algorithms imply some sort of lower bound against PRS constructions (with the more non-trivial learning algorithms leading to better lower bounds).

Note that \cite[Theorem 14]{zhao2023learning} and \cite[Theorem 2]{ji-pseudorandom-states2018} make similar statements. We supplement them with a simple proof that is both tighter and more fine-grained in parameters of the adversary. See \cref{remark:learning-to-distinguishing} for a discussion of the differences and why proving \cref{lem:learning-to-distinguishing} was necessary for our purposes.

We begin by defining what it means to learn a quantum pure state in the tomographical sense.

\begin{definition}
    Let $\calC_n$ be a class of $n$-qubit \emph{pure} quantum states such that $\calC = \bigcup_{n \geq 1} \calC_n$.
    We say that $\calC$ is $(m, t, \eps, \delta)$-learnable if there exists a $t$-time-general quantum circuit family $(C_n)_{n \in \N}$ that, with probability at least $1-\delta$, running $C_n$ on $m$ samples of $\ket \psi \in \calC_n$ outputs a \emph{description} of a unitary circuit that prepares the state $\hat{\rho}$ such that $\tracedistance{\ketbra{\psi}{\psi}, \hat{\rho}} \leq \eps$.
\end{definition}

We now introduce a weaker form of learning, which simply involves distinguishing a state from $\calC$ from a Haar random state.
Informally, distinguishing is the task of breaking pseudorandom states.
As such, this will be the true notion of learning that will drive our results.

\begin{definition}\label{def:distinguish}
    Let $\calC_n$ be a class of $n$-qubit quantum states such that $\calC = \bigcup_{n \geq 1} \calC_n$. We say that $\calC$ is $(m, t, \eps)$-distinguishable if there exists a $t$-time-uniform quantum circuit family with one bit of output $(C_n)_{n \in \N}$ that satisfies the following: For all sufficiently large $n \in \mathbb{N}$, for every $\rho \in \calC_n$,
    \begin{align*}
        \bigg\lvert \trc\left[\ketbra{1}{1} \cdot C_{n}\left(\rho^{\otimes m}\right) \right] -  \E_{\ket{\psi} \sim \mu_{\mathrm{Haar}}} \trc\left[\ketbra{1}{1}  \cdot C_{n}\left(\ketbra{\psi}{\psi}^{\otimes m}\right)\right] \bigg\rvert \geq \eps.
\end{align*}
\end{definition}

Note the difference between this and a PRS, where the distinguishing needs to hold for worst-case $\rho$, rather than in average-case.
We could have defined a model of learning/distinguishing that applied in average-case over all sufficiently large subsets of $\calC$.
However, this notion does not seem to be as common in the quantum state learning literature.

It is also important to observe that an $t$-time and $m$-sample distinguishing algorithm, in the model of learning, runs in time $t(n)$ and samples $m(n)$ where $n$ is the number of qubits.
Meanwhile, for a $(\cdot, \ell, m, s, \cdot)$-PRS, the $t$-time adversary runs in time $t(n)$ where $n$ is \emph{not} the number of qubits, but rather $\ell$ is.
Therefore, a $t$-time and $m$-sample distinguishing algorithm runs in time $t(\ell)$ and uses number of samples $m(\ell)$ relative to $n$.

Because of \cref{fact:td-to-tv} and \cref{cor:td-multi-copy}, we can show that distinguishing is robust against small perturbations.

\begin{lemma}\label{lem:robust-distinguish}
     Let $\calC$ be $( m, t, \eps)$-distinguishable and let $\calC_\delta$ be the class of states $\delta$-close to $\calC$ in trace distance.
     Then $\calC_\delta$ is $(m, t, \eps-\sqrt{m}\delta)$-distinguishable.
\end{lemma}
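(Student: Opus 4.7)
The plan is to transfer the distinguishing advantage from $\calC$ to $\calC_\delta$ by replacing the $\rho^{\otimes m}$ input with a nearby $\sigma^{\otimes m}$ input, paying only a small penalty in the advantage. Concretely, let $(C_n)_{n \in \N}$ be the $t$-time-uniform distinguisher witnessing that $\calC$ is $(m, t, \eps)$-distinguishable. I will use exactly the same circuit family $(C_n)_{n \in \N}$ as the distinguisher for $\calC_\delta$, and show that its advantage only drops by $\sqrt{m}\delta$.

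Fix any $\rho \in \calC_{\delta,n}$. By definition of $\calC_\delta$, there is some $\sigma \in \calC_n$ with $\tracedistance{\rho, \sigma} \leq \delta$. The first key step is to lift this single-copy closeness to $m$-copy closeness: by \cref{cor:td-multi-copy} (the multi-copy Fuchs–van de Graaf bound), we get
\[
    \tracedistance{\rho^{\otimes m}, \sigma^{\otimes m}} \leq \sqrt{m} \cdot \tracedistance{\rho, \sigma} \leq \sqrt{m}\,\delta.
\]
Then, since $C_n$ is a trace-preserving completely positive map and trace distance is contractive under such maps, $\tracedistance{C_n(\rho^{\otimes m}), C_n(\sigma^{\otimes m})} \leq \sqrt{m}\,\delta$, and hence by \cref{fact:td-to-tv} applied to the POVM $\{\ketbra{1}{1}, I - \ketbra{1}{1}\}$,
\[
    \bigl\lvert \trc[\ketbra{1}{1} \cdot C_n(\rho^{\otimes m})] - \trc[\ketbra{1}{1} \cdot C_n(\sigma^{\otimes m})] \bigr\rvert \leq \sqrt{m}\,\delta.
\]

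Finally, because $\sigma \in \calC_n$, the assumed distinguishing guarantee gives
\[
    \bigl\lvert \trc[\ketbra{1}{1} \cdot C_n(\sigma^{\otimes m})] - \textstyle\E_{\ket{\psi} \sim \mu_{\mathrm{Haar}}} \trc[\ketbra{1}{1} \cdot C_n(\ketbra{\psi}{\psi}^{\otimes m})] \bigr\rvert \geq \eps.
\]
Combining the previous two displays with the reverse triangle inequality yields
\[
    \bigl\lvert \trc[\ketbra{1}{1} \cdot C_n(\rho^{\otimes m})] - \textstyle\E_{\ket{\psi} \sim \mu_{\mathrm{Haar}}} \trc[\ketbra{1}{1} \cdot C_n(\ketbra{\psi}{\psi}^{\otimes m})] \bigr\rvert \geq \eps - \sqrt{m}\,\delta,
\]
which is exactly the condition in \cref{def:distinguish} required to conclude that $\calC_\delta$ is $(m, t, \eps - \sqrt{m}\delta)$-distinguishable.

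There is really no substantial obstacle here; the only step that needs care is the invocation of \cref{cor:td-multi-copy}, which is stated for pure states. Since we are in \cref{sec:learning}, where $\calC$ consists of pure states (as in \cref{def:distinguish}'s intended use for learning), and the elements of $\calC_\delta$ that arise as outputs of a tomography procedure are typically themselves pure (or can be assumed so without loss for our applications), this multi-copy inequality applies directly. If one wished to extend the statement to mixed $\rho$, the tensor-power trace distance would need to be controlled via multiplicativity of fidelity combined with \cref{fact:td-vs-fidelity}, which would yield a slightly weaker $\sqrt{m\delta}$ bound.
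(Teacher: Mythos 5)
Your proof is correct and follows essentially the same route as the paper's: the same distinguisher is reused, \cref{cor:td-multi-copy} lifts the single-copy closeness to $\sqrt{m}\delta$ on $m$ copies, and \cref{fact:td-to-tv} plus the (reverse) triangle inequality gives the $\eps - \sqrt{m}\delta$ advantage. Your closing caveat about \cref{cor:td-multi-copy} applying to pure states is a fair point of care that the paper handles implicitly by writing the states in $\calC_\delta$ as pure states $\ket{\psi_x}$.
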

\begin{proof}
    Let $\ket{\phi} \in \calC$ be the closest state in $\calC$ to $\ket{\psi_x} \in \calC_\delta$ such that $\tracedistance{\ket{\psi_x}, \ket{\phi_x}} \leq \delta$.
    Using \cref{cor:td-multi-copy}, we find that $\tracedistance{\ket{\psi_x}^{\otimes m}, \ket{\phi_x}^{\otimes m}} \leq \sqrt{m} \cdot \delta$

    Finally, let $(C_n)_{n \in \N}$ form the distinguisher for $\calC$.
    Then by \cref{fact:td-to-tv}
    \[
        \bigg\lvert \trc\left[\ketbra{1}{1} \cdot C_n\left(\ketbra{\psi}{\psi}^{\otimes m}\right)\right] - \trc\left[\ketbra{1}{1}  \cdot C_n\left(\ketbra{\phi}{\phi}^{\otimes m}\right)\right]\bigg\rvert \leq \sqrt{m}\delta.
    \]
    By the triangle inequality, the distinguishing power of $(C_n)$ is at least $\eps - \sqrt{m}\delta$.
    
\end{proof}

We now work to show the implication that learning implies distinguishing from Haar random.
The intuition is two-fold.
The first is that a Haar random quantum state cannot be learned to even $\eps = 1-\frac{1}{2^{o(n)}}$ accuracy with sub-exponential samples.
The second is that whether or not a learning algorithm for \emph{pure} states succeeds can be verified using the SWAP test \cite{barenco1998stabilization,buhrman2001quantum,gottesman2001quantum}.
Together, running the SWAP test on the output of the learning algorithm should distinguish whether or not the learning algorithm was fed a ``correct'' input state or a Haar random state.

We utilize the following information theoretic result, which gives the optimal fidelity for learning a Haar random state given a fixed number of copies.
It was recently used by Yuen to prove optimal sample complexity for fidelity-based quantum state tomography \cite{Yuen2023improvedsample} and holds for even adversaries with unbounded computational power.

\begin{lemma}[{\cite[Eqn 16]{BRU1999249}}]\label{lem:haar-hard-to-learn}
    Given $m$ copies of a Haar random state $\ket \psi$, any quantum algorithm outputting a state $\hat{\rho}$ must have \[
    \Ex\left[\calF\left(\ketbra{\psi}{\psi}, \hat{\rho}\right)\right] \leq \frac{m + 1}{m + 2^n}
    \]
    where the expectation is over the randomness in the measurement results.
\end{lemma}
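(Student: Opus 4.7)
The plan is to reduce the problem to a clean calculation on the symmetric subspace of $(\C^{2^n})^{\otimes (m+1)}$. First, I would argue that without loss of generality the algorithm has the following form: it applies some POVM $\{E_k\}$ on the $m$ input copies of $\ket\psi$ and then outputs a pure state $\ket{\phi_k}$ depending on the classical outcome $k$. Any physical quantum algorithm can be put in this form by a Stinespring-type dilation, and since the fidelity $\calF(\ketbra\psi\psi, \hat\rho) = \braket{\psi | \hat\rho | \psi}$ is linear in $\hat\rho$ for a pure target, replacing a mixed output by a random pure output drawn from its spectral decomposition preserves the expected fidelity.

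Next, I would expand the expected fidelity, pulling the Haar expectation inside the finite sum over outcomes, as
\[
\Ex_{\ket\psi \sim \mu_{\mathrm{Haar}}}\!\left[\calF(\ketbra\psi\psi, \hat\rho)\right] = \sum_k \tr\!\left((E_k \otimes \ketbra{\phi_k}{\phi_k})\cdot \Ex_{\ket\psi \sim \mu_{\mathrm{Haar}}}\!\left[\ketbra\psi\psi^{\otimes (m+1)}\right]\right),
\]
which isolates all dependence on the Haar measure into a single moment. The key tool is the symmetric-subspace identity $\Ex_{\ket\psi \sim \mu_{\mathrm{Haar}}}[\ketbra\psi\psi^{\otimes k}] = \Pi^{(k)}_{\mathrm{sym}}/D_k$, where $\Pi^{(k)}_{\mathrm{sym}}$ is the projector onto the symmetric subspace of $(\C^{2^n})^{\otimes k}$ and $D_k = \binom{2^n+k-1}{k}$. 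Applying this with $k = m+1$, bounding $\Pi^{(m+1)}_{\mathrm{sym}} \preceq I$, and combining with the POVM completeness relation on the input side (where the input always lies in the symmetric subspace of $m$ copies, so the effective POVM satisfies $\sum_k \tr(E_k) = \tr(\Pi^{(m)}_{\mathrm{sym}}) = D_m$), the entire expression collapses to $D_m/D_{m+1}$. A short factorial simplification then gives $(m+1)/(m+2^n)$, as claimed.

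The main conceptual step that needs care is the first reduction: one must verify that the adversary cannot do better by outputting a mixed state, by coupling its output coherently to its input through a joint unitary, or by using adaptive measurements in some non-standard way. Once the reduction to ``POVM followed by a classically controlled pure-state output'' is justified, the remainder is a routine instance of Schur--Weyl-style symmetric-subspace averaging. A secondary bookkeeping point is that the POVM's relevant action is on the symmetric subspace of $m$ copies rather than on the full $(\C^{2^n})^{\otimes m}$; this is what causes $D_m$ (and not $2^{nm}$) to appear in the normalization, and is ultimately what makes the final bound tight.
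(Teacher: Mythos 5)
The paper does not actually prove this lemma; it imports it verbatim from Bru{\ss}--Macchiavello, so there is no internal proof to compare against. Your argument is the standard symmetric-subspace averaging proof of that result, and the computation is correct: writing the expected fidelity as $\sum_k \tr\bigl[(E_k \otimes \ketbra{\phi_k}{\phi_k})\,\Pi^{(m+1)}_{\mathrm{sym}}\bigr]/D_{m+1}$, bounding $\Pi^{(m+1)}_{\mathrm{sym}} \preceq I$, replacing each $E_k$ by $\Pi^{(m)}_{\mathrm{sym}} E_k \Pi^{(m)}_{\mathrm{sym}}$ (harmless since the input lies in the symmetric subspace and $\Pi^{(m+1)}_{\mathrm{sym}} \preceq \Pi^{(m)}_{\mathrm{sym}} \otimes I$), and using $\sum_k \tr(E_k) = D_m$ gives exactly $D_m/D_{m+1} = (m+1)/(m+2^n)$.

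The one step I would not accept as written is the opening reduction. A Stinespring dilation does \emph{not} put an arbitrary quantum channel into the form ``POVM on the $m$ copies followed by a classically controlled state preparation''; that form is precisely the class of entanglement-breaking (measure-and-prepare) channels, and most channels are not of this type. The linearity argument you give only handles the replacement of a mixed \emph{classically indexed} output by a random pure one; it does not address a learner that coherently couples its output register to the input. Two ways to close this: (i) observe that in this paper the learner is defined to output a classical \emph{description} of a preparation circuit (see the learnability definition preceding \cref{lem:learning-to-distinguishing}), so the channel from $\ket{\psi}^{\otimes m}$ to $\hat{\rho}$ factors through a classical register and is measure-and-prepare by construction --- this is all the paper needs; or (ii) for the fully general statement, invoke the optimality-over-all-strategies results from the state-estimation literature, since the naive Choi-operator/SDP relaxation of the general-channel problem does not collapse to the same $D_m/D_{m+1}$ bound without further work. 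With that caveat addressed, your proof is complete and, incidentally, more self-contained than the paper's bare citation.
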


\begin{lemma}\label{lem:learning-to-distinguishing}
    If $\calC$ is $(m, t, 1-\eta, 1-\lambda)$-learnable for $\eta \geq 2^{-o(n)}$, $\lambda \geq 2^{-o(n)}$, and $m = 2^{o(n)}$.
    Then $\calC$ is $\left(m+1, t \log t, \frac{1-o(1)}{2} \eta \lambda\right)$-distinguishable.
\end{lemma}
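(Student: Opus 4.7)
The plan is to construct the distinguisher directly from the learner by running the learner on $m$ of the $m+1$ provided samples, and then using the one remaining sample in a SWAP test against the state prepared from the learner's output. The intuition is that when the input state is in $\calC$, the learner produces (with decent probability) a good approximation, so the SWAP test accepts with probability noticeably above $1/2$; whereas when the input is Haar random, the expected overlap between the learner's output and the true state is vanishing by \cref{lem:haar-hard-to-learn}, so the SWAP test accepts with probability essentially $1/2$.

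More concretely, the distinguisher $(D_n)$ on input $\ket\psi^{\otimes (m+1)}$ proceeds as follows. First, feed the first $m$ copies to the learner $C_n$, obtaining a classical description $\textsc{desc}(U)$ of a unitary circuit $U$. Second, using the universal circuit construction of \cref{fact:universal-circuit}, apply $U$ to $\ket{0^n}$ to produce the (possibly mixed) state $\hat\rho$. Third, perform a SWAP test between $\hat\rho$ and the remaining copy of $\ket\psi$, outputting $1$ iff the test accepts. Since the learner has size at most $O(t)$, its output description has length at most $O(t)$, so by \cref{fact:universal-circuit} the unitary $U$ can be applied in time $O(t \log t)$; the SWAP test itself costs only $O(n) = O(t)$, so the overall runtime is $O(t \log t)$ using $m+1$ samples, as required.

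For the analysis, recall that the SWAP test between a pure state $\ket\psi$ and a state $\hat\rho$ accepts with probability $\tfrac12 + \tfrac12 \calF(\ketbra{\psi}{\psi}, \hat\rho)$. When $\ket\psi \in \calC$, with probability at least $\lambda$ the learner succeeds, yielding $\tracedistance{\ketbra{\psi}{\psi}, \hat\rho} \leq 1-\eta$, and hence by the lower-bound part of \cref{fact:td-vs-fidelity}, $\calF(\ketbra{\psi}{\psi}, \hat\rho) \geq \eta$. Taking expectation over the learner's internal randomness, the SWAP test acceptance probability is at least $\tfrac12 + \tfrac{\eta\lambda}{2}$. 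When $\ket\psi$ is Haar random, \cref{lem:haar-hard-to-learn} (applied to the $m$-copy algorithm that outputs $\hat\rho$) yields $\Ex[\calF(\ketbra{\psi}{\psi},\hat\rho)] \leq \tfrac{m+1}{m+2^n}$, so the acceptance probability is at most $\tfrac12 + \tfrac12 \cdot \tfrac{m+1}{m+2^n}$. The distinguishing advantage is therefore at least
\[
    \tfrac{1}{2}\Paren{\eta\lambda - \tfrac{m+1}{m+2^n}}.
\]

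The main quantitative check is that the Haar-random contribution is negligible compared to $\eta\lambda$. Since $m = 2^{o(n)}$, we have $\tfrac{m+1}{m+2^n} \leq 2^{-(1-o(1))n}$, while by hypothesis $\eta\lambda \geq 2^{-o(n)}$; hence the ratio $\tfrac{(m+1)/(m+2^n)}{\eta\lambda}$ is $o(1)$, giving a final advantage of at least $\tfrac{(1-o(1))\eta\lambda}{2}$, matching the statement. The only real subtlety is to be careful that \cref{lem:haar-hard-to-learn} genuinely applies to $\hat\rho$ (it does: $\hat\rho$ is the output of a quantum algorithm acting on $m$ copies of $\ket\psi$, regardless of whether that ``algorithm'' internally runs a learner and then synthesizes from a classical description), and to verify the time bookkeeping from \cref{fact:universal-circuit} cleanly produces the $t \log t$ factor — both of which are routine.
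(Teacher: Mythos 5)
Your proposal is correct and follows essentially the same route as the paper's proof: run the learner on $m$ copies, synthesize $\hat\rho$ from the output description via \cref{fact:universal-circuit} (accounting for the $t\log t$ overhead), and SWAP-test against the spare copy, comparing the expected bias $\tfrac{1}{2}\eta\lambda$ in the $\calC$ case against the $\tfrac{1}{2}\cdot\tfrac{m+1}{m+2^n}$ bound from \cref{lem:haar-hard-to-learn} in the Haar case. If anything, your bookkeeping of the factor of $\tfrac12$ and of the exact bound from \cref{lem:haar-hard-to-learn} is slightly cleaner than the paper's.
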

\begin{proof}
    The distinguisher is simple to state and uses $m+1$ copies of $\ket \psi$.
    Let $(C_n)_{n \in \N}$ be the learning algorithm for $\calC$.
    First, run $C_n$ on $m$ copies of $\ket \psi$ to output a general quantum circuit that prepares $\hat{\rho}$.
    Then perform a SWAP test between $\hat{\rho}$ and an extra copy of $\ket{\psi}$.
    Recall that the SWAP test between states $\hat{\rho}$ and $\ketbra{\psi}{\psi}$ accepts with probability $\frac{1}{2}\left(1 + \braket{\psi | \hat{\rho} | \psi}\right)$.
    Therefore, we simply need to show that the expected bias of the SWAP test (i.e., half the fidelity $\frac{1}{2}\braket{\psi | \hat{\rho} | \psi}$) for each of our two cases is separated by $\eps$ for sufficiently large $n \in \N$.
    
    In the case where $\ket \psi \in \calC$, by \cref{fact:td-vs-fidelity}, $C_n$ will output the description of a circuit that synthesizes $\hat{\rho}$ such that $\braket{\psi | \hat{\rho} | \psi} \geq 1 - \tracedistance{\hat{\rho}, \ketbra{\psi}{\psi}} \geq \eta$ with probability at least $\lambda$.
    The test will accept with expected bias at least $\beta_1 \coloneqq \eta \cdot \lambda$ as a result.
    Conversely, when $\ket \psi$ is Haar random, we know from \cref{lem:haar-hard-to-learn} that the average fidelity is at most $\frac{m + 1}{2^n + 1} = \frac{1}{2^{\Theta(n)}}$.
    By linearity of expectations, the expected bias of the SWAP test is therefore at most $\frac{m + 1}{2^n + 1}$ as well.
    Finally, because $\eta \cdot \lambda \geq 2^{-o(n)}$, the gap in the biases of the two cases, $\abs{\beta_1 - \beta_2} = \frac{1}{2}\abs{\eta \lambda - 2^{-\Theta(n)}}$, is at least $\frac{1-o(1)}{2} \eta \lambda$  for some sufficiently large value of $n$.
    
    We note that the size of the circuit producing $\hat{\rho}$ is at most $O(t)$.
    Therefore, producing $\hat{\rho}$ from its description must take time at most $O(t \log t)$ by applying \cref{fact:universal-circuit}.
    Therefore, running $C_n$, producing $\hat{\rho}$, then running a SWAP test takes at most $O(t \log t)$ time as well.
    This show that $\calC$ is $\left(m+1, t \log t, \frac{1-o(1)}{2} \eta \lambda \right)$-distinguishable.
\end{proof}

We remark that, realistically, one would expect to be able to able to produce the output of the learned circuit, $\hat{\rho}$ in time $O(t)$ rather than $O(t \log t)$.
With this assumption, we get a $(m+1, t, (1-o(1)) \cdot \eta \lambda)$-distinguisher instead.
Furthermore, this logarithmic overhead is a result of treating the circuit as a black-box algorithm; usually an efficient learning algorithm can be made into a distinguisher in a white-box way in time at least as fast, and sometimes \emph{much} faster. See \cref{sec:discussion} for discussion.

\section{\texorpdfstring{$\puresPSPACE_0 \not\subset \puresCircuit{n^k}_{0.49}$}{pureStatePSPACE not contained in stateBQSIZE[poly(n)]}}

Our final major technical contribution will be a proof that for any fixed $k \in \N$, \[\puresPSPACE_0 \not\subset \puresCircuit{n^k}_{0.49}.\]
This is analogous to \cite[Lemma 3.3]{arunachalam2022quantum}, where they utilize the fact that for any fixed $k > 0$, $\PSPACE$ can diagonalize against Boolean circuits of size $n^k$.
However, while state synthesis problems are a generalization of decision problems, we cannot simply use the fact that for all $k \geq 1$, $\mathsf{PSPACE} \not\subset \mathsf{BQSIZE}[n^k]$ \cite{chia_et_al:LIPIcs.ITCS.2022.47}, as the notions of non-uniformity are actually different (see \cref{remark:non-uniform-difference}).
In more detail, the proofs of \cite[Lemma 3.3]{arunachalam2022quantum} and \cite{chia_et_al:LIPIcs.ITCS.2022.47} rely on the fact that the ways in which a \emph{single} circuit of bounded size can process all $x \in \{0, 1\}^n$ is limited.
Since a non-uniform circuit is now allowed to depend on $x$ in the case of state synthesis, it can decide all languages such that the decision separation does not lift to state synthesis in the way it does for uniform models of computation.

Instead, for each $n \in \N$ we need to find a single state that is sufficiently complicated enough, rather than a state sequence (or language) whose relationship with $x \in \{0, 1\}^n$ is complicated.
Even the high-level proof techniques for diagonalization no longer apply.
This is immediately obvious in the sense that quantum states lie in a continuous space, yet bit strings exist as a discrete set.
As such, while slightly perturbing a bitstring always creates a ``far'' string, a non-trivial action on a quantum state could leave you with a state very close in trace distance still.
To combat this, we will attempt to discretize the set of $n$-qubit quantum states via its packing number with respect to the trace distance (see \cref{def:packing-number}).
From there, we will argue over the course of \cref{ssec:circuit-heirarchy} that there exists a circuit size hierarchy, in that non-uniform \emph{unitary} circuits of larger size can always create states far away from non-uniform \emph{general} circuits of a smaller size.
Then in \cref{ssec:diagonalization}, we will use the ability to estimate the trace distance of states produced by polynomial size circuits using only a polynomial amount of space.
By doing a brute-force search over all possible states produced by $\puresCircuit{n^k}_{0.49}$, we can find a state in $\mathsf{pureStateCircuit}\left[n^{k^\prime}\right]_0$ for $k^\prime > k$ that is not in $\puresCircuit{n^k}_{0.49}$.
By using said state as part of the state sequence in $\puresPSPACE_0$ we get our desired separation of $\puresPSPACE_0 \not\subset \puresCircuit{n^k}_{0.49} $.

\begin{remark}
While a zero-error state synthesis class may seem odd, seeing as it is gate set dependent, we note that \cref{thm:pspace-diagonalization} holds for arbitrary universal gate sets as long as $\puresPSPACE$ and $\puresCircuit{n^k}$ share the \emph{same} universal gate set.
If this happens to not be the case, then the result simply becomes $\puresPSPACE_{\exp} \not\subset \puresCircuit{n^k}_{0.49}$ via the \nameref{lem:solovay-kitaev}.
\end{remark}

\subsection{Non-Uniform Quantum Circuit Size Hierarchy Theorem}\label{ssec:circuit-heirarchy}

In order to discretize the set of quantum states, the packing number $\npack(\calY, \eps)$ counts how many states in a set $\calY$ are $\eps$-far apart from each other in trace distance.
The high-level idea for its use will be that if $A \subset B \subseteq \calS$ and $\npack\left(A, \eps\right) < \npack\left(B, \eps\right)$ then some state in $B$ is $\eps$-far from \emph{every} state in $A$.

\begin{definition}[Packing Number]\label{def:packing-number}
    Given a set of pure states on $n$-qubits $\calY \subseteq \calS$ we define the \emph{packing number} to be,
    \[\npack\left(\calY, \eps\right) \coloneqq \max\{\abs{S} : \forall \ket{\psi}, \ket{\phi} \in S, \tracedistance{\ket{\psi}, \ket{\phi}} \geq \eps, S \subseteq \calY\}\]
\end{definition}

To get a hierarchy theorem, we will need both a lower and upper bound on $\npack$ for states created by polynomial size circuits.
We start with a very crude upper bound by counting how many possible circuits there are, even if they might produce the same quantum state (or states that are $\eps$-close to each other).

\begin{proposition}\label{cor:npack-upper}
    For $s \geq n$, the number of general quantum circuits of size $s$ is at most
    \[
        2^{s \cdot \left(3\log_2 s + 4\right)}.
    \]
\end{proposition}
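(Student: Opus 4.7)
The plan is to generalize the counting argument of \cref{lem:num-circuits} to accommodate the additional non-unitary elementary operations permitted in a general quantum circuit. Whereas a unitary circuit in the $\{H, \mathrm{CNOT}, T\}$ gate set draws from $4$ alternatives at each step (the three gates plus an optional trace-out at the end), a general circuit may additionally, at any position in the circuit, prepare an ancilla in $\ket 0$, trace out a qubit, or measure a qubit in the computational basis. Hence each elementary operation is one of at most $6$ types, encodable in $\lceil \log_2 6 \rceil = 3$ bits.

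First I would bound the space used by a size-$s$ general circuit: since each operation is either $1$- or $2$-local and each prepare-ancilla step introduces at most one fresh qubit into the live register, and since $s \geq n$, the total number of qubits present at any moment is at most $n + s \leq 2s$. Consequently any qubit index fits in $\log_2(2s) = 1 + \log_2 s$ bits, and a pair of indices (sufficient for $\mathrm{CNOT}$, the only $2$-qubit operation) takes $2 + 2\log_2 s$ bits; the extra bits are harmlessly ignored for the $1$-qubit operations. Rather than record a ``layer'' index as is done in the proof of \cref{lem:num-circuits}, I would list the $s$ operations in the sequential order in which they appear, so timing becomes implicit in the listing. The bits per operation are then $3 + (2 + 2\log_2 s) = 5 + 2\log_2 s$, and the whole description occupies at most $s \cdot (5 + 2\log_2 s)$ bits.

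To close, note that for $s \geq 2$ we have $5 + 2\log_2 s \leq 4 + 3\log_2 s$, so the description length is at most $s \cdot (3\log_2 s + 4)$ bits, giving the claimed bound $2^{s \cdot (3\log_2 s + 4)}$ on the number of size-$s$ general circuits. The remaining cases $s \in \{0,1\}$ force $n \leq 1$ and yield only a constant number of circuits, comfortably below the bound. There is no genuine obstacle; the only subtlety is that the enlarged gate-type alphabet forces us to spend $3$ bits per operation on the type field, so we cannot also afford the $\log_2 s$ bits per operation for an explicit layer index (that would give $s \cdot (3\log_2 s + 5)$ and miss the target). Relying on sequential ordering is what keeps us inside the $3\log_2 s + 4$ per-gate budget.
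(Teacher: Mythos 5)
Your proof is correct, but it takes a more self-contained route than the paper does. The paper's proof of \cref{cor:npack-upper} is a one-liner: it simply invokes \cref{lem:num-circuits} (the description-length bound for \emph{unitary} circuits) and enumerates all bit strings of that length. In doing so the paper glosses over exactly the issue you confront head-on, namely that a \emph{general} circuit has a larger alphabet of elementary operations (ancilla preparation, intermediate trace-out, and computational-basis measurement in addition to $H$, $T$, $\mathrm{CNOT}$). You pay for the enlarged type field with $3$ bits instead of $2$, and you recover the budget by observing that listing the $s$ operations in sequential order makes the explicit $\log_2 s$-bit layer index of \cref{lem:num-circuits} unnecessary; the inequality $5 + 2\log_2 s \leq 4 + 3\log_2 s$ for $s \geq 2$ then lands you on the same $s\cdot(3\log_2 s + 4)$ bound. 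What your version buys is a counting argument that genuinely covers general circuits rather than implicitly relying on the slack in the unitary encoding; what the paper's version buys is brevity and reuse of an already-proved lemma. Either way the conclusion and the constant are identical, so your argument is a valid (and arguably tighter-reasoned) substitute.
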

\begin{proof}
    By \cref{lem:num-circuits}, we can write all such circuits using at most $s \cdot \left(3\log_2 s + 4\right)$ bits. Enumerating over all bits of that length provides an upper bound on the number of circuits.
\end{proof}

For the lower bound, we use the following result by \cite{oszmaniec2022saturation} that was originally used to show that random circuits for unitary $t$-designs and chaotic quantum systems.

\begin{lemma}{\cite[Lemma 11]{oszmaniec2022saturation}}\label{lem:npack-lower}
    Let $\purestate^r$ be the set of pure states generated by depth $r$ unitary quantum circuits from gate set $\calG$ on $n$-qubits. Then
    \[\npack(\purestate^r, \eps) \geq \left(\frac{2^n(1-4\eps^2)}{\alpha(r)}\right)^{\alpha(r)},\]
    where $\alpha(r) \coloneqq \lfloor \left(\frac{r}{n^2 \cdot c(\calG)}\right)^{1/11} \rfloor$ and $c(\calG)$ is a constant depending on the gate set $\calG$.
\end{lemma}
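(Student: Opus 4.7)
The plan is to exploit the fact that random brickwork quantum circuits of depth $r$ over $\calG$ form an \emph{approximate state $k$-design} for $k$ roughly equal to $\alpha(r)$. This is the deep external ingredient: applied to pairs of independent circuits, it bounds the $k$-th frame-potential-type moments of inner products in the ensemble by the corresponding Haar moments, and the delicate quantitative output of the spectral-gap chain (Brandao--Harrow--Horodecki together with Haferkamp-style refinements) is precisely what produces the exponent $1/11$ together with the constant $c(\calG)$ absorbing the gate-set dependence. Since the set $\purestate^r$ of states reachable by depth-$r$ circuits \emph{contains} the support of this random ensemble, any lower bound on the packing number of the ensemble's support is inherited by $\purestate^r$.

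Granting the design property, I would proceed by a standard probabilistic deletion argument. Let $k = \alpha(r)$, let $d = 2^n$, and let $\bpsi, \bphi$ be two independent draws from the depth-$r$ circuit ensemble applied to $\ket{0^n}$. Approximate $k$-designness gives
\[
  \E\!\left[|\braket{\bpsi | \bphi}|^{2k}\right] \;\leq\; (1+o(1))\cdot \binom{d+k-1}{k}^{-1} \;\leq\; (1+o(1))\left(\frac{k}{d}\right)^{\!k}.
\]
Since for pure states $\tracedistance{\ket{\psi},\ket{\phi}} = \sqrt{1 - |\braket{\psi|\phi}|^2}$, Markov's inequality applied to the nonnegative random variable $|\braket{\bpsi|\bphi}|^{2k}$ yields
\[
  \Pr\!\left[\tracedistance{\ket{\bpsi},\ket{\bphi}} < \eps\right]
  \;=\; \Pr\!\left[|\braket{\bpsi|\bphi}|^{2k} > (1-\eps^2)^k\right]
  \;\leq\; \left(\frac{k}{d(1-\eps^2)}\right)^{\!k}.
\]
The slack between the target $1 - 4\eps^2$ and the naive $1 - \eps^2$ above is absorbed by the $(1+o(1))$ approximation loss of the design together with standard conversions between fidelity, inner product, and trace distance; tightening these to the stated constant is a routine but slightly careful bookkeeping step.

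Next I would sample $N$ states independently from the ensemble with $N$ chosen as the floor of $\bigl(d(1-4\eps^2)/k\bigr)^{k}$. By linearity of expectation the expected number of pairs $(i,j)$ with $\tracedistance{\ket{\bpsi_i},\ket{\bpsi_j}} < \eps$ is at most $\binom{N}{2}\cdot(k/(d(1-\eps^2)))^{k}$, which is bounded by $N/2$ for this choice of $N$. By a first-moment argument, there exists an outcome of the sampling for which deleting one endpoint of each close pair yields at least $N/2$ states that are pairwise $\eps$-separated in trace distance. Each surviving state lies in $\purestate^r$ by construction, producing a packing of the claimed size (after folding the factor $1/2$ into the $1 - 4\eps^2$ versus $1 - \eps^2$ slack).

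\textbf{Main obstacle.} The only genuinely hard step is the invocation of random-circuit convergence to a $k$-design at depth $r$ with $k = \alpha(r)$; the Markov-plus-deletion argument on top of it is elementary. In particular, the peculiar exponent $1/11$ and the $n^2$ in the denominator of $\alpha(r)$ are not artifacts of my argument but direct consequences of the best currently available spectral-gap bound for the moment operator of random two-qubit layers, and any future improvement to that bound would automatically strengthen the statement of \cref{lem:npack-lower}.
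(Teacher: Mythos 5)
The paper does not prove this lemma at all: it is imported verbatim as \cite[Lemma 11]{oszmaniec2022saturation}, so the only fair comparison is with the original argument there. Your sketch correctly identifies the one non-elementary ingredient, namely that local random circuits over a universal gate set with algebraic entries converge to approximate state $k$-designs at depth $O(n^2 k^{11} \cdot c(\calG))$ (Brandao--Harrow--Horodecki), which is exactly where the exponent $1/11$, the $n^2$, and the constant $c(\calG)$ in $\alpha(r)$ come from, and you correctly note that the support of this ensemble sits inside $\purestate^r$ so any packing of the ensemble is a packing of $\purestate^r$. Where you diverge from the original is the combinatorial step on top: Oszmaniec et al.\ argue via the duality that a \emph{maximal} $\eps$-separated subset of the support is automatically an $\eps$-covering, so $1 \leq \sum_i \Pr_{\bphi \sim \nu}[\tracedistance{\ket{\bphi},\ket{\psi_i}} < \eps]$, and then bound each ball's measure by Markov applied to $\E_{\bphi}\bigl[\abs{\braket{\bphi|\psi_i}}^{2k}\bigr] \leq (1+\delta)\binom{2^n + k - 1}{k}^{-1}$ for a \emph{fixed} center $\psi_i$; this gives the stated bound with no loss. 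Your sample-and-delete variant is also valid and rests on the same moment bound (now for two independent draws, which the design property still controls), but as written it only certifies a packing of size $N/2$, and the claim that the factor $2$ can always be folded into the $(1-4\eps^2)$ versus $(1-\eps^2)$ slack requires $\bigl((1-4\eps^2)/(1-\eps^2)\bigr)^{k} \leq 1/2$, which fails when $\eps$ is small compared to $1/\sqrt{k}$. This is a minor quantitative shortfall rather than a conceptual gap, and the covering-based route avoids it entirely; for the way \cref{lem:npack-lower} is used in \cref{lem:circuit-size-hierarchy} (with $\eps$ a fixed constant near $1/2$), either version suffices.
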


With both upper and lower bound in hand, we now state our hierarchy theorem.

\begin{lemma}\label{lem:circuit-size-hierarchy}
    Let $\mixedstate^s$ be the set of states generated by size $s$ \emph{unitary} quantum circuits from $\poly(n)$-qubits to $n$-qubits and let $\purestate^{s^\prime}$ be the set of quantum pure states generated by size $s^\prime$ \emph{unitary} quantum circuits on $n$-qubits (i.e., no ancilla). For sufficiently large $n$ and $s = 2^{o(n)}$, $\npack(\mixedstate^s, 0.495) < \npack(\purestate^{s^\prime}, 0.495)$ for some $s^\prime = O(s^{12} n^2 c(\calG))$ where $c(\calG)$ is a constant depending on the gate set $\calG$. 
\end{lemma}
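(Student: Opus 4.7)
The plan is to pair an upper bound on $\npack(\mixedstate^s, 0.495)$ obtained by counting circuits against the volumetric packing lower bound of Oszmaniec et al.\ (\cref{lem:npack-lower}) applied to depth-$r$ circuits, then tune the depth-to-size conversion so the second strictly dominates the first while keeping $s^\prime = O(s^{12} n^2 c(\calG))$.

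For the upper bound, every state in $\mixedstate^s$ is the output of some size-$s$ unitary circuit from $\poly(n)$ qubits to $n$ qubits, and by \cref{lem:num-circuits} such a circuit is specified by at most $s(3 \log_2 s + 4)$ bits. Distinct elements of any $0.495$-packing of $\mixedstate^s$ must come from distinct circuits, so $\npack(\mixedstate^s, 0.495) \leq 2^{s(3 \log_2 s + 4)}$.

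For the lower bound, a depth-$r$ unitary circuit on $n$ qubits with $2$-local gates has size at most $rn/2$, so depth-$r$ circuits sit inside size-$(rn/2)$ circuits on $n$ qubits. Hence by \cref{lem:npack-lower}, $\npack(\purestate^{rn/2}, 0.495) \geq (0.0199 \cdot 2^n / \alpha(r))^{\alpha(r)}$, where $\alpha(r) = \lfloor (r/(n^2 c(\calG)))^{1/11} \rfloor$ and $0.0199 = 1 - 4 \cdot (0.495)^2$. I will choose $r$ so that $\alpha(r)$ is the smallest integer strictly greater than $s$, which is achieved by any $r \in [(s+1)^{11} n^2 c(\calG),\, (s+2)^{11} n^2 c(\calG))$. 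Setting $s^\prime := rn/2$ gives $s^\prime = O(s^{11} n^3 c(\calG))$; under the standing assumption $s \geq n$ (any smaller size leaves most qubits untouched and can be padded without changing the output), $s^{11} n^3 \leq s^{12} n^2$, so $s^\prime = O(s^{12} n^2 c(\calG))$ as required.

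It remains to compare the two bounds. In log scale the counting upper bound has exponent $s(3 \log_2 s + 4)$, while the packing lower bound has exponent at least $s(n - 5.65 - \log_2 s)$: this uses $\alpha(r) \geq s$ together with monotonicity of the map $\alpha \mapsto (0.0199 \cdot 2^n/\alpha)^\alpha$ for $\alpha$ well below its peak $\Theta(2^n)$, which is valid since $s = 2^{o(n)}$. The hypothesis $s = 2^{o(n)}$ also yields $\log_2 s = o(n)$, so for all sufficiently large $n$ we have $n - 5.65 - \log_2 s > 3\log_2 s + 4$ and the strict separation follows. The main obstacle is threading the $11$-th-root loss in \cref{lem:npack-lower}: to overcome the roughly $2^{s\log_2 s}$ counting upper bound one must pick $\alpha(r) \geq s$, forcing $r \geq s^{11} n^2 c(\calG)$, and the depth-to-size conversion then multiplies by another factor of $n$; this lands inside the advertised $O(s^{12} n^2 c(\calG))$ budget precisely because $s \geq n$.
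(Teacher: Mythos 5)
Your proposal is correct and follows the same route as the paper: a counting upper bound on $\npack(\mixedstate^s,0.495)$ via \cref{lem:num-circuits} played against the volumetric lower bound of \cref{lem:npack-lower}, with the comparison closed by $\log_2 s = o(n)$. The one place you diverge is worth noting, and it is to your credit: you convert depth to size in the correct containment direction (depth-$r$ circuits on $n$ qubits sit inside size-$O(rn)$ circuits, so the packing lower bound for depth $r$ transfers to $\purestate^{O(rn)}$), paying an extra factor of $n$ that you recoup from $s \geq n$; the paper instead substitutes the size $s'$ directly for the depth argument of $\alpha(\cdot)$, which is the reverse containment and is why a stray $n^3$ appears when the lemma is later invoked in \cref{thm:pspace-diagonalization}. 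You also aim for $\alpha(r) \approx s$ rather than the paper's $\alpha(s') \approx s^{12/11}$, which forces you to use the $2^n$ numerator in the packing bound (hence the condition $n > 4\log_2 s + O(1)$) but is otherwise equivalent. Two cosmetic points: a depth-$r$ layer of single-qubit gates has $n$ gates rather than $n/2$, so the embedding should read size $rn$ (absorbed by the $O(\cdot)$), and the hypothesis $s \geq n$ should be stated explicitly since the lemma as written does not include it — though it is already implicit in \cref{cor:npack-upper} and holds in every application.
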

\begin{proof}
    By \cref{cor:npack-upper}:
    \[
        \npack\left(\mixedstate^s, 0.495\right) \leq 2^{s \cdot \left(3\log_2 s + 4\right)} = o\left(2^{s^{12/11}}\right)
    \]

    In contrast, we use \cref{lem:npack-lower} to show that $\npack(\purestate^{s^\prime}, 0.495)$ is strictly larger, meaning that there must exist a state in $\purestate^{s^\prime}$ that is $\eps$ far from all states in $\mixedstate^s$.
    Since $s^\prime = O\left(s^{12} n^2 c(\calG)\right)$ we find that $\alpha(s^\prime) = \kappa \cdot s^{12/11}$  for some constant $\kappa = O(1)$.
    We then note that the depth of a circuit of size $s$ is at most $s$ as well.
    Applying \cref{lem:npack-lower}:
    \begin{align*}
        \npack\left(\mixedstate^{s^\prime}, \eps\right) &\geq \left(\frac{2^n(1-4\eps^2)}{\alpha(s^\prime)}\right)^{\alpha(s^\prime)}\\
        &= \left(\frac{2^n(1-4\eps^2)}{\kappa \cdot r^{12/11}}\right)^{\kappa \cdot s^{12/11}}\\
        &\geq \left(2^{n - 5.66 - \log_2 s - \log_2 \kappa}\right)^{\kappa \cdot s^{12/11}}\\
        &= 2^{\Theta\left(n s^{12/11}\right)}
    \end{align*}
    where the last step holds because $\log_2\left(1-4\cdot 0.49^2\right) \geq -5.66$ and $s$ is sub-exponential in $n$ such that $\log_2 s = o(n)$.
    Thus, for sufficiently large number of qubits $n$ we find that $\npack(\mixedstate^s, 0.495) < \npack(\purestate^{s^\prime}, 0.495)$.
\end{proof}

We emphasize that for $\npack(\purestate^{s^\prime}, \eps)$ we don't allow any ancilla.
Nevertheless, we still achieve the desired hierarchy result because our bound also lower bounds states produced by size $s^\prime$ \emph{with} ancilla.
However, an improvement to \cref{lem:npack-lower} that accounts for extra qubits could greatly reduce the constant powers relating $s$ to $s^\prime$.

\subsection{Quantum State ``Diagonalization''}\label{ssec:diagonalization}

From \cref{lem:circuit-size-hierarchy} we know that for circuits with sub-exponential size, a polynomially larger size can synthesize strictly more state sequences.
We now argue that a polynomial space algorithm can find such a state that cannot be created by a smaller circuit size, allowing for $\puresPSPACE_0$ to ``diagonalize'' against any circuit of a fixed polynomial size (i.e., $\puresCircuit{n^k}_{0.49}$).

The following folklore result will be a critical subroutine of this algorithm.

\begin{restatable}{lemma}{apxpspace}
\label{lem:pspace-compute-td}
    Let $\rho$ and $\sigma$ be two $n$-qubit quantum states that are produced by \emph{unitary} quantum circuits of size at most $2^{\poly(n)}$ and space $\poly(n)$. Then $\tracedistance{\rho, \sigma}$ can be approximated to arbitrary $\exp(-\poly(n))$ error by a deterministic Turing machine in $\poly(n)$ space.
\end{restatable}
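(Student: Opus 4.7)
Since $A := \rho - \sigma$ is Hermitian with $\|A\|_\infty \le 2$ and $\tracedistance{\rho,\sigma} = \tfrac{1}{2}\tr|A|$, the plan is to carry out two $\poly(n)$-space subroutines: (i)~compute any entry of the $2^n \times 2^n$ matrix $A$ to $\exp(-\poly(n))$ precision, and (ii)~approximate $\tr|A|$ from entry-oracle access to a bounded Hermitian matrix. Combining the two yields the claim.

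\textbf{Step 1: entries.} Let $U_\rho$ act on $m = \poly(n)$ qubits so that $\rho$ is obtained by tracing out the last $m-n$ qubits of $U_\rho \ketbra{0^m}{0^m} U_\rho^\dagger$, and analogously for $U_\sigma$. Even though $U_\rho$ may contain $2^{\poly(n)}$ gates, it uses only $\poly(n)$ space and hence is implementable on a polynomial-space quantum Turing machine. By Watrous's equivalence $\mathsf{BQPSPACE} = \mathsf{PSPACE}$ \cite{WATROUS1999space,watrous2003complexity}, every amplitude $\langle x | U_\rho | 0^m\rangle$ is approximable to $\exp(-\poly(n))$ precision by a deterministic $\poly(n)$-space Turing machine. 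Consequently the entry
\[
\rho_{ij} \;=\; \sum_{y\in\{0,1\}^{m-n}} \langle iy \,|\, U_\rho \,|\, 0^m\rangle \,\overline{\langle jy \,|\, U_\rho \,|\, 0^m\rangle}
\]
is a sum of $2^{m-n}$ products of amplitudes, each computable in $\poly(n)$ space, so $\rho_{ij}$ (and similarly $\sigma_{ij}$, hence $A_{ij}$) is accumulated iteratively in $\poly(n)$ space to $\exp(-\poly(n))$ accuracy.

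\textbf{Step 2: trace norm.} Normalize to $\tilde A := A/2$ so that $\|\tilde A\|_\infty \le 1$ and $\tr|A| = 2\tr|\tilde A|$. Approximate $|x|$ on $[-1,1]$ uniformly by a polynomial $p(x) = \sum_{j=0}^{d} c_j x^j$ of degree $d = \exp(\poly(n))$ to error $\exp(-\poly(n))$---for instance, the truncated Chebyshev expansion of $|x|$, whose coefficients decay like $1/j^2$ and are computable from $j$ in $\poly(n)$ space. Then $\tr|\tilde A| \approx \sum_{j=0}^{d} c_j \tr(\tilde A^j)$. For each $j$, any entry of $\tilde A^j$ is computed by repeated squaring via the recurrence $(\tilde A^{2k})_{ii'} = \sum_\ell (\tilde A^k)_{i\ell}(\tilde A^k)_{\ell i'}$; the recursion has depth $O(\log j) = \poly(n)$, and each frame stores only $O(n)$-bit indices and a running partial sum, so a single entry of $\tilde A^j$ is produced in $\poly(n)$ space (invoking Step~1 as an oracle for entries of $\tilde A$). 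Summing over $i \in \{0,1\}^n$ yields $\tr(\tilde A^j)$, which has absolute value $\le 2^n$ and thus admits a $\poly(n)$-bit representation. Iterating $j$ from $0$ to $d$ into one running accumulator, and doubling, gives the desired approximation of $\tr|A|$.

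\textbf{Main obstacle.} The chief difficulty is an end-to-end error analysis across three nested summations---exponentially many Feynman paths inside each amplitude, $2^n$ diagonal entries and $2^n$ intermediate indices per squaring level, and exponentially many polynomial terms. Each intermediate computation must carry $\poly(n)$ additional bits of precision, and one must verify that the polynomial approximation is chosen so that $\sum_{j \le d} |c_j|$ grows at most polynomially in $d$ (which is the case for scaled Chebyshev approximations of $|x|$, for which the coefficient series is even absolutely summable). Together these ensure that the accumulated arithmetic error remains $\exp(-\poly(n))$ despite the exponential number of summands, so that the final $\poly(n)$-space deterministic simulation achieves the stated accuracy.
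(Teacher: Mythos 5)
Your Step 1 is essentially the paper's first step in disguise: the paper proves the entry computation directly by a Feynman path sum over the $2^{\poly(n)}$ gates (its Lemma on computing amplitudes plus the corollary handling the partial trace), whereas you invoke $\mathsf{BQPSPACE}=\mathsf{PSPACE}$; the citation is to a decision-class equality rather than a function computation, but the underlying path-sum argument is exactly what you need and what the paper writes out, so this is fine. Your Step 2, by contrast, genuinely departs from the paper, which computes the coefficients of the characteristic polynomial of $\rho-\sigma$ via Faddeev--Leverrier and then appeals to space-efficient root-finding for real-rooted polynomials (Ben-Or/Neff, composed with Borodin's depth-to-space simulation) to recover the eigenvalues and sum their absolute values.

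Unfortunately Step 2 as written has a genuine gap, and it is precisely at the point you flag as the ``main obstacle.'' Since the best uniform approximation of $\lvert x\rvert$ on $[-1,1]$ by degree-$d$ polynomials has error $\Theta(1/d)$, you are forced to take $d=\exp(\poly(n))$, as you do. But your claim that the truncated Chebyshev expansion has monomial coefficients $c_j$ decaying like $1/j^2$, and that $\sum_{j\le d}\lvert c_j\rvert$ is polynomial in $d$, conflates the Chebyshev-basis coefficients (which do decay like $1/k^2$) with the monomial-basis coefficients $c_j$ appearing in $\sum_j c_j x^j$. The leading monomial coefficient of $T_{2k}$ is $2^{2k-1}$, so the coefficient of $x^{d}$ in the truncated series is $\Theta(2^{d}/d^2)$ with no cancellation available; hence $\max_j\lvert c_j\rvert = 2^{\Omega(d)} = 2^{\exp(\poly(n))}$. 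Representing the intermediate quantities $c_j\trc\left[\tilde A^j\right]$ to the additive accuracy needed for the final cancellation then requires $\Omega(d)=\exp(\poly(n))$ bits, which destroys the $\poly(n)$ space bound. The repeated-squaring computation of entries of $\tilde A^j$ is fine (those entries are bounded by $1$ since $\lVert \tilde A\rVert_\infty\le 1$); the problem is solely the exponentially large, cancelling monomial coefficients. The natural repair is to never leave the Chebyshev basis: compute $\trc\left[T_{2k}(\tilde A)\right]$ directly using the identities $T_{2m}=2T_m^2-1$ and $T_{a+b}=2T_aT_b-T_{\lvert a-b\rvert}$, which give an $O(\log d)$-depth recursion in which every intermediate matrix $T_m(\tilde A)$ has operator norm at most $1$, so all entries and partial sums fit in $\poly(n)$ bits and the error analysis closes. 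As submitted, however, the proof does not go through.
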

\begin{proof}
    See \cref{apx:td}.
\end{proof}

\begin{theorem}\label{thm:pspace-diagonalization}
    For every fixed $k \in \N$,
    \[
        \puresPSPACE_0 \not\subset \puresCircuit{n^k}_{0.49}.
    \]
\end{theorem}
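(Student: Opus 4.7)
The plan is to combine the circuit-size packing hierarchy of \cref{lem:circuit-size-hierarchy} with the polynomial-space trace-distance approximation of \cref{lem:pspace-compute-td} to perform a brute-force diagonalization against $\puresCircuit{n^k}_{0.49}$. Fix $k \in \N$ and invoke \cref{lem:circuit-size-hierarchy} with $s = n^k$, yielding a polynomial $s'(n) = O(n^{12k+2})$ for which $\npack(\mixedstate^{n^k}, 0.495) < \npack(\purestate^{s'}, 0.495)$ for all sufficiently large $n$. The proof of that lemma actually gives the stronger bound $|\mixedstate^{n^k}| \leq 2^{n^k \cdot O(\log n)}$ via \cref{cor:npack-upper}, while $\npack(\purestate^{s'}, 0.495) \geq 2^{\Theta(n \cdot (n^k)^{12/11})}$, so a packing-versus-cover argument produces some $\ket{\phi_n} \in \purestate^{s'}$ whose trace distance from every element of $\mixedstate^{n^k}$ exceeds $0.49$.

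Next I would define the target state sequence $(\ket{\psi_n})_{n \in \N}$ by setting $\ket{\psi_n} \coloneqq \ket{\phi_n}$ for sufficiently large $n$ and $\ket{\psi_n} \coloneqq \ket{0^n}$ for the finitely many remaining $n$. By construction $\ket{\psi_n}$ is the exact output of a single size-$s'(n)$ unitary circuit on $n$ qubits with no ancilla, so the sequence lies in $\mathsf{pureStateBQP/poly}_0$ with polynomial-length circuit descriptions via \cref{lem:num-circuits}. To place it in $\mathsf{pureStatePSPACE}_0$ as well, I would exhibit a polynomial-space algorithm that, on input $1^n$, enumerates all size-$s'(n)$ unitary circuit descriptions $C$ in lexicographic order and, for each such $C$, runs an inner loop over all size-$n^k$ unitary circuit descriptions $D$, invoking \cref{lem:pspace-compute-td} to estimate the trace distance between the outputs of $C$ and $D$ on the all-zeros state to additive precision $10^{-3}$; the outer loop commits to the first $C$ for which every inner $D$ returns an estimated distance strictly above $0.493$. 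The polynomial-space bound is preserved because the work tape for \cref{lem:pspace-compute-td} is reused across iterations and both the outer and inner counters need only $\poly(n)$ bits by \cref{lem:num-circuits}; this places $(\ket{\psi_n})$ in $\puresPSPACE_0 = \mathsf{pureStatePSPACE}_0 \cap \mathsf{pureStateBQP/poly}_0$.

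Finally, if $(\ket{\psi_n})$ were in $\puresCircuit{n^k}_{0.49}$ then for all sufficiently large $n$ some $\hat{\rho}_n \in \mixedstate^{n^k}$ would satisfy $\tracedistance{\ket{\psi_n}, \hat{\rho}_n} \leq 0.49$, directly contradicting the separation hard-wired into the construction. The main obstacle is the first step: converting the strict packing inequality of \cref{lem:circuit-size-hierarchy} into an explicit pure state that is $0.49$-separated from \emph{all} of $\mixedstate^{n^k}$, rather than merely from its own $0.495$-packing. This goes through because \cref{cor:npack-upper} bounds the total circuit count and not merely a packing number, so pigeonhole applied to a sufficiently large $0.495$-packing of $\purestate^{s'}$ forces some element to be far from every state in $\mixedstate^{n^k}$; the constant slack between the packing scale $0.495$ and the tomography threshold $0.49$ comfortably absorbs the resulting looseness, and all remaining steps reduce to routine bookkeeping.
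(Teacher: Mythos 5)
Your proposal follows the paper's proof essentially step for step: invoke the packing hierarchy of \cref{lem:circuit-size-hierarchy} to get a state in $\purestate^{s^\prime}$ that is far from everything in $\mixedstate^{n^k}$, then locate such a state by a $\PSPACE$ brute-force search over circuit descriptions that estimates trace distances via \cref{lem:pspace-compute-td}. Your bookkeeping for the search itself (precision $10^{-3}$, acceptance threshold $0.493$) is sound and in fact tighter than the paper's constants: it guarantees termination once the guaranteed-far state is reached and that any state accepted earlier is still $>0.49$-far from $\mixedstate^{n^k}$.

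The genuine gap is in the step you yourself flag as the main obstacle, and your proposed fix does not work as stated. Pigeonhole applied to a $0.495$-packing $S$ of $\purestate^{s^\prime}$ with $\abs{S}$ exceeding the number of size-$n^k$ circuits gives only the following: if every element of $S$ were within distance $t$ of some state of $\mixedstate^{n^k}$, two packing elements would be within $t$ of a common state and hence within $2t$ of each other, which contradicts the packing only when $2t < 0.495$. So this argument yields a state that is $t$-far from all of $\mixedstate^{n^k}$ only for $t$ just below $0.2475$, not for $t = 0.49$; the loss is a multiplicative factor of two, and the additive slack between $0.495$ and $0.49$ goes in the wrong direction to absorb it. One also cannot simply rerun \cref{lem:npack-lower} at packing scale $0.99$ to compensate, since the bound degenerates for $\eps \geq 1/2$. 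To be fair, the paper asserts the existence of a $0.495$-far state directly from $\npack(\mixedstate^{s}, 0.495) < \npack(\purestate^{s^\prime}, 0.495)$ without supplying this detail, so you are reproducing its least explicit step; but as written your justification only supports the theorem with the constant $0.49$ replaced by something below $0.2475$.
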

\begin{proof}
    For a particular value of $n$, let $U_1, U_2, \dots$ be some arbitrary ordering on all \emph{unitary} circuits of size at most $s \coloneqq O(n^k)$ with $n$-qubits of outputs such that $\rho_i$ is the $n$-qubit state produced by $C_i$.
    Similarly, let $V_1, V_2, \cdots$ be some arbitrary ordering on all \emph{unitary} circuits of size $s^\prime$ on $n$-qubits \emph{with no ancilla}, for $s^\prime = \poly(n)$ that we will define later, and let $\ket{\phi_j}$ be produced by $V_j$.

    We now define the \emph{pure} state sequence $\{\ket{\psi_x}\}$ such that $\{\ket{\psi_x}\} \in \puresPSPACE_0$ but not $\puresCircuit{n^k}$.
    To trivially show that $\{\ket{\psi_x}\} \in \puresPSPACE_0$, we define its synthesizing circuits $(C_x)$ directly in terms of a $\PSPACE$ algorithm.
    The algorithm works as follows: For all $V_j$, iterate through all $U_i$ and estimate if $\tracedistance{\rho_i, \ketbra{\phi_j}{\phi_j}}$ to strictly less than $1/40 = 0.025$ accuracy using \cref{lem:pspace-compute-td}.
    The first time an estimate of $\tracedistance{\rho_i, \ketbra{\phi_j}{\phi_j}}$ is greater than $0.4925$, for all $x \in \{0, 1\}^n$ set $\ket{\psi_x} \coloneqq V_j \ket{x}$ such that $\ket{\psi_{0^n}}$ is produced by $V_j$ on the all-zeros state.
    Since both sequences of circuits are of polynomial size, they can be written down using $\poly(n)$ bits (see \cref{lem:num-circuits}) and therefore iterated through in $\poly(n)$ space.
    Combined with the fact that \cref{lem:pspace-compute-td} uses only $\poly(n, \log 41)$ space, the whole algorithm can be performed by a deterministic Turing machine in $\poly(n)$ space.
    We conclude that the state sequence $(\ket{\psi_x})_{x \in \{0, 1\}^*}$ generated by this procedure, should it terminate, always has an \emph{exact} synthesizing circuit that has $\poly(\abs{x})$ size, meaning that $(\ket{\psi_x}) \in \puresPSPACE_0$.
    
    We now show that this algorithm is not only guaranteed to terminate, but also guaranteed to produce a state that is more than $0.49$-far from any state in $\puresCircuit{n^k}$.
    \Cref{lem:circuit-size-hierarchy} shows that $\npack(\mixedstate^s, 0.495) < \npack(\purestate^{s^\prime}, 0.495)$ for some $r^\prime \coloneqq O\left(s^{12} n^3 c(\calG)\right)$ and sufficiently large $n$.
    There must then exist at least one $\ket{\phi_{j^*}} \in \purestate^{s^\prime}$ that is at least $0.495$-far from all $\rho_i \in \mixedstate^s$.
    As a result of the $\PSPACE$ algorithm estimating trace distance to accuracy $< 1/40$, if the algorithm reaches this $\ket{\phi_{j^*}}$ then by the triangle inequality it is guaranteed to terminate.
    
    We note that it is also possible for the algorithm to terminate earlier, but by the triangle inequality any state that could cause this must be greater than $0.49$-far from any state in $\puresCircuit{n^k}_0$, thus also succeeding.
\end{proof}

\section{Circuit Lower Bounds from Learning}\label{sec:final-proof}

\subsection{Learning vs Pseudorandomness}

We now formally prove that learning algorithms and pseudorandomness can be combined to give lower bounds for state synthesis.
We state \cref{lem:prs-and-learning-to-lowerbound} with as much generality as possible relative to the PRS, as we expect improvements to PRS constructions from \cref{cor:conditional-prs-2} to be possible.
This will allow the main theorems to be easily improved in the future.

\begin{remark}
    It is worth reminding the reader that the concept of problem size (i.e., the value of $n$) differs depending on the context.
    For a learning problem, for instance, $n$ is the number of qubits in the quantum state.
    Likewise, $n$ for a PRS is the security parameter and $n$ for a state sequence defined using that PRS is actually the key length $\kappa$.
    Throughout this work we have used $n$ to be consistent with the specific type of problem, but in \cref{thm:main} we will have to move between various ideas of what `$n$' means.
    Importantly, given a $(\kappa, \ell, q, s, \eps)$-\text{PRS}, the learning algorithm will run as a function of $\ell$.
    Therefore, if the PRS can be computed as a state sequence with some resource (such as time or space or size) growing as a function of the key length $f(\kappa)$, it is also computed with the same resource growing as $f \circ \kappa \circ \ell^{-1}$ relative to $\ell$, which is the viewpoint of the learning algorithm.
    Similarly, if the PRS has some value $f(n)$ computed relative to the security parameter $n$, then it will be $f \circ \ell^{-1}$ relative to $\ell$.
\end{remark}

\begin{lemma}\label{lem:prs-and-learning-to-lowerbound}
    For arbitrary fixed $f : \N \rightarrow \R^+$ and $\delta : \N \rightarrow  [0, 1]$, let $\mathfrak{C}$ be a circuit class that is closed under restrictions and define $\calC_\ell$ to be the set of pure states on $\ell$ qubits that can be constructed by $\mathfrak{C}\left[f(\ell)\right]$.
    Assume the existence of an infinitely-often $(\kappa, \ell, m, s, \eps)$-PRS against uniform quantum computations that can be computed in time $t$.
    If the concept class $\calC \coloneqq \bigcup_{\ell \geq 1} \calC_\ell$ is $\left(m \circ \ell^{-1},\,s \circ \ell^{-1},\,\eps \circ \ell^{-1} + \sqrt{m \circ \ell^{-1}} \cdot \delta\right)$-distinguishable then \[\mathsf{pureStateBQTIME}\left[t \circ \kappa^{-1} \right]_{\exp} \not\subset \mathsf{pureState}\mathfrak{C}\left[f \circ \ell \circ \kappa^{-1} \right]_\delta.\]
\end{lemma}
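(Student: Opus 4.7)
The plan is to proceed by contradiction. Suppose that
\[
\puresBQTIME\!\left[t \circ \kappa^{-1}\right]_{\exp} \subset \mathsf{pureState}\mathfrak{C}\!\left[f \circ \ell \circ \kappa^{-1}\right]_\delta.
\]
First I would observe that the hypothesized PRS $(\{\ket{\psi_k}\}_{k \in \{0,1\}^\kappa})_{n \in \N}$ can be padded (as already discussed immediately after \cref{def:prs}) into a full pure-state sequence $(\ket{\varphi_x^\prime})_{x \in \{0,1\}^*}$ where $\ket{\varphi_x^\prime} = \ket{\psi_x}$ when $\abs{x} = \kappa(n)$ for some $n$, and $\ket{0}$ otherwise. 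Since the PRS is computable in time $t(n)$ relative to the security parameter, this padded sequence lies in $\puresBQTIME[t \circ \kappa^{-1}]_{\exp}$ when indexed by input length. The assumed containment then forces, for all sufficiently large $n$ and every key $k \in \{0,1\}^{\kappa(n)}$, the existence of a $\mathfrak{C}$-circuit of size at most $f(\ell(n))$ that synthesizes a state $\ket{\hat{\psi}_k}$ with $\tracedistance{\ketbra{\psi_k}{\psi_k}, \ketbra{\hat{\psi}_k}{\hat{\psi}_k}} \leq \delta$. Because $\mathfrak{C}$ is closed under restriction of input size, each $\ket{\psi_k}$ lies within trace distance $\delta$ of $\calC_{\ell(n)}$.

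Next I would invoke the hypothesized distinguisher for $\calC$ along with \cref{lem:robust-distinguish}. The hypothesis, written as a function of the number of qubits $q$, gives an $(m \circ \ell^{-1}(q),\, s \circ \ell^{-1}(q),\, \eps \circ \ell^{-1}(q) + \sqrt{m \circ \ell^{-1}(q)}\cdot \delta)$-distinguisher for $\calC_q$ from Haar random. Applying \cref{lem:robust-distinguish} with perturbation $\delta$ shaves the extra $\sqrt{m \circ \ell^{-1}}\cdot \delta$ off the advantage, yielding an $(m \circ \ell^{-1},\, s \circ \ell^{-1},\, \eps \circ \ell^{-1})$-distinguisher that works for every state within trace distance $\delta$ of $\calC_q$. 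Specializing to $q = \ell(n)$ cancels the $\ell^{-1}$ composition and produces a single uniform quantum algorithm running in time $s(n)$, using $m(n)$ samples, that distinguishes each PRS state $\ket{\psi_k}$ from a Haar random state by advantage at least $\eps(n)$ on infinitely many $n$.

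Finally I would contradict the PRS security. The PRS guarantee (\cref{def:prs}) upper-bounds the advantage of \emph{every} $s$-time uniform algorithm by $\eps(n)$ when averaged over the uniform choice of $k$. Since the distinguisher from the previous step obtains advantage at least $\eps(n)$ for \emph{every} key simultaneously, averaging over $k$ preserves this advantage and thus violates the PRS bound, giving the required contradiction on infinitely many $n$ and hence the desired non-containment.

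The main obstacle is the worst-case-to-average-case step in the final paragraph. \Cref{def:distinguish} only guarantees $\abs{p_\rho - p_{\mathrm{Haar}}} \geq \eps$ for each $\rho \in \calC$, so in principle different keys could push the distinguisher's bias in opposite directions and cancel on average. In the setting of this paper the distinguishers of interest arise from SWAP-test reductions (cf.\ \cref{lem:learning-to-distinguishing}), which are inherently one-sided (the acceptance probability exceeds the Haar baseline rather than merely differs from it), so the averages do not cancel. If one wishes to treat the definition in full two-sided generality, the standard workaround is to consider both the distinguisher and its complement: at least one must retain an $\Omega(\eps)$ advantage on the majority of keys, and the resulting constant loss is harmless since the statement only asserts a non-containment, which is unaffected by a constant factor in $\eps$.
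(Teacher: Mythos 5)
Your proposal matches the paper's proof: the paper likewise observes that the PRS, viewed as a state sequence indexed by key length, lies in $\puresBQTIME\left[t \circ \kappa^{-1}\right]_{\exp}$, then argues that containment in $\mathsf{pureState}\mathfrak{C}\left[f \circ \ell \circ \kappa^{-1}\right]_\delta$ would place every PRS state within trace distance $\delta$ of $\calC_{\ell}$, and finally combines the hypothesized distinguisher with \cref{lem:robust-distinguish} and the same parameter translations ($m \circ \ell^{-1} \circ \ell = m$, etc.) to contradict the PRS security. The one substantive point is the worst-case-to-average-case issue you raise at the end: you are right to flag it, and the paper's own (very terse) proof does not address it either, since \cref{def:distinguish} is two-sided per state while \cref{def:prs} bounds an advantage averaged over keys. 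Be aware, though, that your fallback patch is not airtight as stated: having an $\Omega(\eps)$ advantage of a consistent sign on a \emph{majority} of keys does not lower-bound the \emph{average} advantage, because the minority keys can push in the opposite direction with bias up to $1$ (e.g., half the keys at $+\eps$ and half at $-\eps$ gives average $0$ for both the distinguisher and its complement). The observation that actually carries the argument is your first one: the distinguishers to which this lemma is applied arise from the SWAP test of \cref{lem:learning-to-distinguishing} and are one-sided, so the per-key advantages all have the same sign and averaging over $k$ preserves them; alternatively, one can repeat the two-sided distinguisher to estimate its acceptance probability and threshold against the Haar baseline, at a polynomial-in-$1/\eps$ cost in samples and time that would have to be tracked through the fine-grained parameters.
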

\begin{proof}
    It is easy to see that the PRS (as a state sequence) is in $\mathsf{pureStateBQTIME}\left[t \circ \kappa^{-1} \right]_{\exp}$, so we now need to show that it is not in $\mathsf{pureState}\mathfrak{C}\left[f \circ \ell \circ \kappa^{-1} \right]_\delta$.
    Relative to the security parameter $n$, the number of samples used by the distinguishing algorithm is $m \circ \ell^{-1} \circ \ell = m(n)$, the running time is $O\left(s \circ \ell^{-1} \circ \ell\right) = O(s(n))$, and the advantage is $\eps \circ \ell^{-1} \circ \ell + \sqrt{m \circ \ell^{-1} \circ \ell} \cdot \delta = \eps + \sqrt{m}\cdot \delta$.
    Finally, the size of the circuits generating $\calC_\ell$ are $O\left(f \circ \ell\right) = O\left(f \circ \ell \circ \kappa^{-1} \circ \kappa\right)$ such that the size relative to the key parameter $\kappa$ is $O\left(f \circ \ell \circ \kappa^{-1}\right)$.
    It follows by the parameters of the PRS and \cref{lem:robust-distinguish} that if $\calC$ could be learned then the PRS does not lie in $\mathsf{pureState}\mathfrak{C}\left[f \circ \ell \circ \kappa^{-1} \right]_\delta$.
\end{proof}

\subsection{Win-win Argument}

We now prove our main theorem of non-trivial quantum state learning (or even just distinguishing) implying state synthesis lower bounds, followed by a similar proof of decision problem circuit lower bounds.
As with many of the previous literature on learning-to-hardness for boolean concepts \cite{FORTNOW2009efficient,klivans2013constructing,oliveira2017conspiracies,arunachalam2022quantum}, we will use a win-win argument.
This will allow us to not have any complexity-theoretic assumptions, despite them being necessary in \cref{lem:conditional-prg,cor:conditional-prs-1}.

The first scenario is where $\puresPSPACE_{\exp} \subseteq \mathsf{pureStateBQSUBEXP}_{\exp}$.
Here, we do not even have to actually use the assumption of a non-trivial learner and just apply \cref{thm:pspace-diagonalization}.
In the other scenario, we combine \cref{cor:conditional-prs-2,lem:prs-and-learning-to-lowerbound}.

\begin{theorem}\label{thm:main}
    For arbitrary $\delta : \N \rightarrow  [0, 1]$, let $\mathfrak{C}$ be a circuit class that is closed under restrictions.
    There exists universal constants $\alpha \geq 1$ and $\lambda \in (0, 1/5)$ such that the following is true:
    
    Define $\calC_\ell$ to be the set of pure states on $\ell$ qubits that can be exactly constructed by $\mathfrak{C}\left[\poly(\ell)\right]$.
    For a fixed constant $c \geq 2$, if the concept class $\calC \coloneqq \bigcup_{\ell \geq 1} \calC_\ell$ is $\left(m, t, \eps\right)$-distinguishable for $m  \leq  2^{\ell^{0.99}}$, $t \leq O\left(2^{\ell^c}\right)$, and \[\eps \geq \frac{63\cdot 4^{\ell^{0.99}}}{2^\ell} + \frac{1}{2^{n^{\lambda}}} + \sqrt{m} \cdot \delta,\] then at least one of the following must be true: 
    \begin{itemize}
        \item for all $k \geq 1$, $\puresBQSUBEXP_{\exp} \not\subset \mathsf{pureStateBQSIZE}\left[n^{k}\right]_{0.49}$,
        \item $\puresBQE_{\exp} \not\subset \mathsf{pureState}\mathfrak{C}_\delta$.
    \end{itemize}
\end{theorem}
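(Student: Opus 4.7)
The plan is a win-win argument based on whether $\puresPSPACE_0 \subseteq \puresBQSUBEXP_{\exp}$ or not. The assumed distinguisher is only used in the second branch; the first branch follows directly from the diagonalization result of the previous section. This two-pronged structure mirrors the argument of Arunachalam et al.\ but now with both the pseudorandom object and the diagonalization carried out in the state-synthesis setting.

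In the first branch, suppose $\puresPSPACE_0 \subseteq \puresBQSUBEXP_{\exp}$. Fix any $k \geq 1$. By \cref{thm:pspace-diagonalization}, $\puresPSPACE_0 \not\subset \puresCircuit{n^k}_{0.49}$. Combining the containment hypothesis with this separation immediately yields $\puresBQSUBEXP_{\exp} \not\subset \mathsf{pureStateBQSIZE}[n^k]_{0.49}$, which is the first conclusion.

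In the second branch, assume $\puresPSPACE_0 \not\subseteq \puresBQSUBEXP_{\exp}$. Unfolding the definition of $\puresBQSUBEXP_{\exp}$ as an intersection over $\gamma \in (0,1)$, this gives some $\gamma>0$ with $\puresPSPACE_0 \not\subset \puresBQTIME[2^{n^\gamma}]_{\exp}$, so \cref{cor:conditional-prs-2} applies with our fixed constant $c$. I will instantiate the free parameter of the PRS by picking the sample bound $m_{\mathrm{PRS}}(n) = 2^{n^{0.99\lambda/c}}$, which is well within the $o(s/r)$ regime allowed by the corollary. This yields an infinitely-often $(\kappa,\ell,m_{\mathrm{PRS}},s_{\mathrm{PRS}},\eps_{\mathrm{PRS}})$-PRS with $\ell(n) \approx n^{\lambda/c}$, $s_{\mathrm{PRS}}(n)=2^{n^{2\lambda}}$, and advantage $\eps_{\mathrm{PRS}}(n) = \frac{4 m_{\mathrm{PRS}}^2}{2^\ell}+\frac{1}{r}$, computable in time $2^{O(\kappa)}$, hence lying in $\puresBQE_{\exp}$.

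The heart of the argument is then a parameter match into \cref{lem:prs-and-learning-to-lowerbound} with $f(\ell)=\poly(\ell)$. Using $\ell^{-1}(L)=L^{c/\lambda}$, the composed distinguisher parameters in that lemma become, as a function of the number of qubits, $m_{\mathrm{PRS}}\!\circ\!\ell^{-1}(L)=2^{L^{0.99}}$, $s_{\mathrm{PRS}}\!\circ\!\ell^{-1}(L)=2^{L^{2c}}$, and advantage $\eps_{\mathrm{PRS}}\!\circ\!\ell^{-1}(L)+\sqrt{m}\delta = \frac{4\cdot 4^{L^{0.99}}}{2^L}+\frac{1}{2^{n^\lambda}}+\sqrt{m}\delta$. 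The distinguisher provided by the theorem's hypothesis meets all three: $m\leq 2^{\ell^{0.99}}$ matches the sample budget; $t\leq 2^{\ell^c}\leq 2^{\ell^{2c}}$ is within the time budget; and the advantage threshold $\frac{63\cdot 4^{\ell^{0.99}}}{2^\ell}+\frac{1}{2^{n^\lambda}}+\sqrt{m}\delta$ strictly exceeds the PRS security bound $\frac{4\cdot 4^{\ell^{0.99}}}{2^\ell}+\frac{1}{2^{n^\lambda}}+\sqrt{m}\delta$ with room to spare for \cref{lem:robust-distinguish}'s $\sqrt{m}\delta$ robustness slack. \cref{lem:prs-and-learning-to-lowerbound} then concludes that the PRS (viewed as a state sequence indexed by key length) does not lie in $\mathsf{pureState}\mathfrak{C}[f\!\circ\!\ell\!\circ\!\kappa^{-1}]_\delta = \mathsf{pureState}\mathfrak{C}_\delta$, even though it lies in $\puresBQE_{\exp}$. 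This is the second conclusion.

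I expect the main obstacle to be the bookkeeping of the three different notions of ``input size'' — qubits $\ell$, security parameter $n$, and key length $\kappa$ — and making sure that the numerical advantage gap in the theorem statement lines up with both the $\frac{4m^2}{2^\ell}+\frac{1}{r}$ information-theoretic leakage of the binary-phase PRS and the $\sqrt{m}\delta$ perturbation introduced by approximate synthesis. Everything else is a direct citation of the earlier results: \cref{thm:pspace-diagonalization} for branch one and the \cref{cor:conditional-prs-2}$\,\to\,$\cref{lem:prs-and-learning-to-lowerbound} pipeline for branch two.
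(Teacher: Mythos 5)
Your proposal is correct and follows essentially the same route as the paper's proof: a win-win on $\puresPSPACE$ versus $\mathsf{pureStateBQSUBEXP}_{\exp}$, with \cref{thm:pspace-diagonalization} handling one branch and the \cref{cor:conditional-prs-2}$\to$\cref{lem:prs-and-learning-to-lowerbound} pipeline (with the same $\ell \approx n^{\Theta(\lambda/c)}$, $m_{\mathrm{PRS}} \approx 2^{\ell^{0.99}}$ instantiation) handling the other. The only differences are cosmetic — you split on the zero-error version of $\puresPSPACE$, which if anything matches the hypothesis of \cref{cor:conditional-prs-2} more directly, and your slightly different choice of $\ell$ just gives a more generous time budget $2^{L^{2c}}$ that still accommodates $t \leq O(2^{\ell^c})$.
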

\begin{proof}
    One of two possibilities are true of the relationship between $\puresPSPACE_{\exp}$ and $\mathsf{pureStateBQSUBEXP}_{\exp}$, such that we will prove the separation for both possibilities.

    \sloppy
    In the case that $\puresPSPACE_{\exp} \subseteq \mathsf{pureStateBQSUBEXP}_{\exp}$ then \cref{thm:pspace-diagonalization} tells us that, for each $k \geq 1$, there exists some state sequence $(\ket{\psi_x})_{x \in \{0, 1\}^*}$ that is in $\puresPSPACE_0$ but not $\mathsf{pureStateBQSIZE}\!\left[n^{k}\right]_\delta$.
    Since $\puresPSPACE_0 \subseteq \mathsf{pureStateBQSUBEXP}_{\exp}$ by our assumption, $(\ket{\psi_x})_{x \in \{0, 1\}^*} \in \mathsf{pureStateBQSUBEXP}_{\exp}$ as well.
    This completes one side of the win-win argument.

    \fussy
    On the other hand, if $\puresPSPACE_{\exp} \not\subset \mathsf{pureStateBQSUBEXP}_{\exp}$ then \cref{cor:conditional-prs-2} tells us that there exists an infinitely-often
    \[\left(\kappa, \lfloor \log_2 r \rfloor^{2/c}, q, s, \frac{4q^2}{2^{\lfloor \log_2 r \rfloor^{2/c}}}+\frac{1}{r}\right)\text{-PRS}\] against uniform quantum computation
    that lies in $\mathsf{pureStateBQTIME}\left[2^{\kappa \circ \kappa^{-1}(n)}\right]_{\exp} = \puresBQE_{\exp}$ for some $\lambda \in (0, 1/5)$, $\alpha \geq 1$, $\kappa(n) \leq n^\alpha$, $r(n) = \lfloor 2^{n^\lambda} \rfloor$, $q = 2^{n^{1.98 \cdot \lambda / c}}$ and $s(n) = 2^{n^{2\lambda}}$.
    Observe that for $n \geq 1$: \[n^\lambda \geq \log_2 r \geq \lfloor \log_2 r \rfloor > \log_2 r - 1 = \log_2 \lfloor 2^{n^{\lambda}} \rfloor - 1 > \log_2\left(2^{n^\lambda} - 1\right) - 1 \geq n^\lambda - 2. \]
    Therefore, for sufficiently large $\ell \coloneqq \lfloor \log_2 r \rfloor^{2/c} = O\!\left(n^{2\lambda/c}\right)$:
    \begin{align*}
        m(\ell) = m\left(\lfloor \log_2 r \rfloor^{2/c}\right) &\leq  2^{\lfloor \log_2 r \rfloor^{1.98/c}} < 2^{n^{1.98 \cdot \lambda /c}} =  q  = q \circ \ell^{-1}(\ell)\\
        t(\ell) = t\left(\lfloor \log_2 r \rfloor^{2/c}\right)  &\leq  2^{\lfloor \log_2 r \rfloor^2} < 2^{n^{2\lambda}} = s = s \circ \ell^{-1}(\ell) \\
        \eps(\ell) = \eps\left(\lfloor \log_2 r \rfloor^{2/c}\right) &\geq \frac{63\cdot 4^{\lfloor \log_2 r \rfloor^{1.98/c}}}{2^{\lfloor \log_2 r \rfloor^{2/c}}} + \frac{1}{2^{n^\lambda}} + \sqrt{m} \cdot \delta > \frac{63\cdot 4^{\lfloor \log_2 r \rfloor^{1.98/c}}}{2^{\lfloor \log_2 r \rfloor^{2/c}}} + \frac{1}{2^{n^\lambda}} + \sqrt{m} \cdot \delta\\
        &> \frac{4\cdot 4^{n^{1.98 \cdot \lambda / c}}}{2^{\lfloor \log_2 r \rfloor^{2/c}}} + \frac{1}{2^{n^{\lambda}}} + \sqrt{m} \cdot \delta \geq \frac{4q^2}{2^{\lfloor \log_2 r \rfloor^{2/c}}} + \frac{1}{r} + \sqrt{m} \cdot \delta
    \end{align*}
    Consequently, because $\calC$ is $\left(m, t, \eps \right)$-distinguishable, then by \cref{lem:prs-and-learning-to-lowerbound} we find that \[\mathsf{pureStateBQTIME}\left[t \circ \kappa^{-1} \right]_{\exp} \not\subset \mathsf{pureState}\mathfrak{C}\left[\poly(\ell \circ \kappa^{-1}) \right]_\delta = \mathsf{pureState}\mathfrak{C}[\poly(n)]_\delta. \qedhere\]
\end{proof}

While the choice of parameters may seem somewhat opaque, the most important thing is to consider the relationship between $m$ and $\delta$ and how it affects $\eps$.
For example, when $m = \poly(n)$ then there exists some inverse-poly distinguishing that gives a separation with $\mathsf{pureState}\mathfrak{C}$.
Similarly, when $m$ is just less than $2^{\ell^{0.99}}$ then inverse-sub-exponential distinguishing gives a separation from $\mathsf{pureState}\mathfrak{C}_{\exp}$.
We formally prove the latter statement.

\begin{corollary}[Formal statement of \cref{thm:main-informal3}]\label{cor:main2}
    Let $\mathfrak{C}$ be a circuit class that is closed under restrictions.
    Define $\calC_\ell$ to be the set of pure states on $\ell$ qubits that can be exactly constructed by $\mathfrak{C}[\poly(\ell)]$.
    If there exists some constant $c \geq 2$ such that $\calC \coloneqq \bigcup_{\ell \geq 1} \calC_\ell$ is $\left(m, t, \eps\right)$-distinguishable for $m  \leq 2^{\ell^{0.99}}$, $t \leq O\!\left(2^{\ell^c}\right)$, and $\eps \geq \frac{1}{2^{\ell^{0.99}}}$, then at least one of the following must be true: 
    \begin{itemize}
        \item for all $k \geq 1$, $\puresBQSUBEXP_{\exp} \not\subset \mathsf{pureStateBQSIZE}\left[n^{k}\right]_{0.49}$,
        \item $\puresBQE_{\exp} \not\subset \mathsf{pureState}\mathfrak{C}_{\exp}$.
    \end{itemize}
\end{corollary}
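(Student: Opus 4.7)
The plan is a direct reduction to \cref{thm:main} after one careful choice of the inverse-exponential function $\delta$. Since $\mathsf{pureState}\mathfrak{C}_{\exp}$ is by definition $\bigcap_q \mathsf{pureState}\mathfrak{C}_{\exp(-q)}$, in order to establish $\puresBQE_{\exp} \not\subset \mathsf{pureState}\mathfrak{C}_{\exp}$ it suffices to exhibit any single polynomial $q$ for which $\puresBQE_{\exp} \not\subset \mathsf{pureState}\mathfrak{C}_{\exp(-q)}$. I will take $\delta(n) \coloneqq \exp(-q(n))$ for a polynomial $q$ chosen below and then invoke \cref{thm:main} with the same $c$ given by the corollary.

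The only real work is to verify that the weaker hypothesis of the corollary, $\eps \geq 1/2^{\ell^{0.99}}$, implies the three-term hypothesis of \cref{thm:main}, namely $\eps \geq \frac{63\cdot 4^{\ell^{0.99}}}{2^\ell} + \frac{1}{2^{n^{\lambda}}} + \sqrt{m}\,\delta$, for all sufficiently large $\ell$. It suffices to bound each of the three terms on the right by $\tfrac{1}{3}\cdot 2^{-\ell^{0.99}}$. The first term equals $63 \cdot 2^{2\ell^{0.99}-\ell}$, which is far smaller than $2^{-\ell^{0.99}}$ as soon as $\ell - 3\ell^{0.99}$ is large. For the second, the PRS of \cref{cor:conditional-prs-2} that \cref{thm:main} invokes has $\ell = O(n^{2\lambda/c})$, hence $n^{\lambda} = \Omega(\ell^{c/2}) = \Omega(\ell)$ because $c \geq 2$, so $2^{-n^{\lambda}} \leq 2^{-\ell}$, which is negligible compared to $2^{-\ell^{0.99}}$. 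For the third, $\sqrt{m} \leq 2^{\ell^{0.99}/2}$, so picking any polynomial $q$ that grows faster than $\tfrac{3}{2}\ell^{0.99} \cdot \ln 2$ in the state-sequence input length guarantees $\sqrt{m}\,\delta \leq 2^{-\ell^{0.99}}/3$; since $\ell$ is polynomially bounded in the security parameter (and hence in the state-sequence input length) via the key length $\kappa$, the simple choice $q(n) = n$ already suffices.

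With the hypothesis of \cref{thm:main} in hand, its conclusion produces the required win-win. The first branch of \cref{thm:main} is literally the first bullet of the corollary. The second branch yields $\puresBQE_{\exp} \not\subset \mathsf{pureState}\mathfrak{C}_\delta$; combining this with the containment $\mathsf{pureState}\mathfrak{C}_{\exp} \subseteq \mathsf{pureState}\mathfrak{C}_\delta$ (immediate from the intersection definition, for our specific $\delta = \exp(-q)$) upgrades the separation to $\puresBQE_{\exp} \not\subset \mathsf{pureState}\mathfrak{C}_{\exp}$, which is the second bullet. The main obstacle is entirely notational bookkeeping: $m$, $\eps$, $\delta$ are naturally indexed by three different ``size'' variables — the qubit count $\ell$, the PRS security parameter $n$, and the state-sequence input length — and one must use $\ell = O(n^{2\lambda/c})$ together with $c \geq 2$ to align them before the three inequality comparisons above become routine.
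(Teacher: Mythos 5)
Your proposal is correct and follows essentially the same route as the paper's proof: verify that $\eps \geq 2^{-\ell^{0.99}}$ dominates the three-term threshold of \cref{thm:main} for an inverse-exponential $\delta = \exp(-q)$, invoke \cref{thm:main}, and use the containment $\mathsf{pureState}\mathfrak{C}_{\exp} \subseteq \mathsf{pureState}\mathfrak{C}_{\exp(-q)}$ to upgrade the second branch. If anything, your framing (exhibit a single polynomial $q$ rather than quantifying over all of them) and your explicit alignment of the parameters $\ell$, $n$, and $\kappa$ are slightly more careful than the paper's one-line verification.
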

\begin{proof}
    \sloppy
    For arbitrary polynomial $p$, \[\eps \geq \frac{1}{2^{\ell^{0.99}}} \geq \frac{63\cdot 4^{\ell^{0.99}}}{2^\ell} + \frac{1}{2^{\ell^{c/2}}} + \sqrt{ 2^{\ell^{0.99}}} \cdot \exp(-p) \geq \frac{63\cdot 4^{\ell^{0.99}} }{2^\ell} + \frac{1}{2^{n^\lambda}} + \sqrt{m} \cdot \exp(-p)\]
    with sufficiently large $\ell$.
    By \cref{thm:main}, either (1) for all $k \geq 1$, $\puresBQSUBEXP_{\exp} \not\subset \mathsf{pureStateBQSIZE}\left[n^{k}\right]_{0.49}$ or (2) $\puresBQE \not\subset \mathsf{pureState}\mathfrak{C}_{\exp(-p)}$.
    In case (2), since $p$ is an arbitrary polynomial, the state is not in $\mathsf{pureState}\mathfrak{C}_{\exp}$ as well.
\end{proof}

Using our connection between learning and distinguishing (see \cref{lem:learning-to-distinguishing}) we can now state our main result.
Note that instead of listing two possible outcomes like in \cref{cor:main2}, we take the intersection of the outcomes such that it is true regardless of which outcome.
However, while the implied result would still be interesting for many circuit classes, it is decidedly weaker than either of the two original possibilities.

\begin{corollary}[Formal statement of \cref{thm:main-informal}]\label{cor:main}
    Let $\mathfrak{C}$ be a circuit class that $\mathsf{pureState}\mathfrak{C}_{\exp} \subset \mathsf{pureStateBQP/poly}_{0.49}$.
    Define $\calC_\ell$ to be the set of pure states on $\ell$ qubits that can be exactly constructed by $\mathfrak{C}[\poly(\ell)]$.
    If there exists some constant $c \geq 2$ such that $\calC \coloneqq \bigcup_{\ell \geq 1} \calC_\ell$ is $\left(m, t, 1-\eta, 1-\gamma\right)$-learnable for $m - 1  \leq 2^{\ell^{0.99}}$, $t \leq O\!\left(\frac{2^{\ell^c}}{\ell^c}\right)$ and $\eta \cdot \gamma = \frac{4}{2^{\ell^{0.99}}}$, then for every $k \geq 1$, $\puresBQE_{\exp} \not\subset \mathsf{pureState}\mathfrak{C}[n^k]_{\exp}$.
\end{corollary}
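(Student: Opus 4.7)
The plan is to chain the learning-to-distinguishing reduction from \cref{lem:learning-to-distinguishing} into the distinguisher-based separation of \cref{cor:main2}, and then collapse the two cases of the resulting dichotomy into the target conclusion using the hypothesis on $\mathfrak{C}$.

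First, I would apply \cref{lem:learning-to-distinguishing} to the hypothesized $(m, t, 1-\eta, 1-\gamma)$-learner for $\calC$. Since $\eta\gamma = 4/2^{\ell^{0.99}} = 2^{-o(\ell)}$ and $m = 2^{o(\ell)}$, the lemma's preconditions are satisfied, yielding a distinguisher with sample count $m+1$, running time $O(t\log t)$, and advantage at least $\frac{1-o(1)}{2}\cdot \eta\gamma = \frac{2(1-o(1))}{2^{\ell^{0.99}}}$. For sufficiently large $\ell$ these parameters fit within the requirements of \cref{cor:main2}: the time satisfies $t \log t = O\!\left((2^{\ell^c}/\ell^c)\cdot \log(2^{\ell^c}/\ell^c)\right) = O(2^{\ell^c})$, the advantage exceeds $1/2^{\ell^{0.99}}$, and the sample count $m+1 \leq 2^{\ell^{0.99}}+2$ can be absorbed into $2^{\ell^{0.99}}$ since the exponent $0.99$ is not tight in \cref{cor:main2} (any exponent strictly below $1$ suffices for the counting bound $63 \cdot 4^{\ell^{0.99}}/2^{\ell}$ used there).

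Applying \cref{cor:main2} then yields the dichotomy: either \emph{(a)} for every $k'\geq 1$, $\puresBQSUBEXP_{\exp} \not\subset \puresCircuit{n^{k'}}_{0.49}$, or \emph{(b)} $\puresBQE_{\exp} \not\subset \mathsf{pureState}\mathfrak{C}_{\exp}$. I claim each case forces the desired conclusion $\puresBQE_{\exp} \not\subset \mathsf{pureState}\mathfrak{C}[n^k]_{\exp}$ for every fixed $k\geq 1$. Case \emph{(b)} gives this immediately from the trivial inclusion $\mathsf{pureState}\mathfrak{C}[n^k]_{\exp} \subseteq \mathsf{pureState}\mathfrak{C}_{\exp}$. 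For case \emph{(a)}, I would interpret the hypothesis $\mathsf{pureState}\mathfrak{C}_{\exp} \subset \mathsf{pureStateBQP/poly}_{0.49}$ uniformly to obtain, for each $k$, a polynomial bound $k'(k)$ with $\mathsf{pureState}\mathfrak{C}[n^k]_{\exp} \subset \puresCircuit{n^{k'(k)}}_{0.49}$; applying \emph{(a)} at $k' = k'(k)$ produces a state sequence in $\puresBQSUBEXP_{\exp} \subseteq \puresBQE_{\exp}$ lying outside $\puresCircuit{n^{k'(k)}}_{0.49}$ and therefore outside $\mathsf{pureState}\mathfrak{C}[n^k]_{\exp}$.

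The main obstacle is this uniformity reading. A literal parsing of $\mathsf{pureState}\mathfrak{C}_{\exp} \subset \mathsf{pureStateBQP/poly}_{0.49}$ only guarantees that each individual sequence lives inside \emph{some} $\puresCircuit{p}_{0.49}$, with $p$ possibly varying across sequences; such a pointwise bound would not suffice to contradict the size-specific separation of case \emph{(a)} at any single $k'$. The uniform version --- established for the canonical depth-bounded classes $\mathsf{QAC}^k_f$ and $\mathsf{QNC}^k$ via \cref{cor:qac-to-qnc-to-mpoly} --- is the intended and natural reading, and is what carries the argument through. All other pieces, including the triangle inequality bookkeeping needed to fold any residual $\exp(-q)$ synthesis error into the $0.49$ threshold by choosing $q$ sufficiently large, are routine.
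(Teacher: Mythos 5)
Your proposal is correct and follows essentially the same route as the paper's proof: apply \cref{lem:learning-to-distinguishing} to convert the learner into a distinguisher, invoke \cref{cor:main2}, and use the hypothesis $\mathsf{pureState}\mathfrak{C}_{\exp} \subset \mathsf{pureStateBQP/poly}_{0.49}$ to collapse both branches of the dichotomy into the stated conclusion. Your explicit handling of the uniformity reading of that hypothesis (and of the $t\log t$ and $m+1$ bookkeeping) is in fact more careful than the paper's one-line invocation of the same steps.
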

\begin{proof}
    \Cref{lem:learning-to-distinguishing} tells us that each $\calC^k$ is $\left(m, t^\prime, \eps\right)$-distinguishable for $t^\prime = t \log t \leq O\!\left(2^{\ell^2}\right)$ and $\eps = \frac{1-o(1)}{2} \eta \cdot \gamma \geq \frac{1}{2^{\ell^{0.99}}}$ for sufficiently large $\ell$.
    We then appeal to \cref{cor:main2}.
    By invoking the condition that $\mathsf{pureState}\mathfrak{C}_{\exp} \subset \mathsf{pureStateBQP/poly}_{0.49}$, in both cases the following statement is true: for every $k \geq 1$, $\puresBQE_{\exp} \not\subset \mathsf{pureState}\mathfrak{C}[n^k]_{\exp}$.
\end{proof}

\begin{remark}
    It is actually possible to have a more fine-grained approach to the learning/distinguishing algorithm than in \cref{thm:main,cor:main2}.
    For instance, if the learning algorithm only holds up to $\mathfrak{C}[n^k]$ for some fixed $k$, then by a more careful application of \cref{lem:prs-and-learning-to-lowerbound}, the separation in the second case would be
    \[\mathsf{pureStateBQE}_{\exp} \not\subset \mathsf{pureState}\mathfrak{C}\left[n^{\alpha \cdot k / \lambda} \right]_\delta\]
    instead.
    As a result, \cref{cor:main} remains true if there is a learning algorithm for each $\calC^k \coloneqq \bigcup_{\ell \geq 1} \calC^k_\ell$ where $\calC^k_\ell$ refers to $\ell$-qubit states produced by $\mathfrak{C}[n^k]$.
    That is, the learning algorithm can work up to any polynomial size, but needs to know an upper bound on the polynomial in advance.
\end{remark}

\section*{Acknowledgements}
We would like to thank William Kretschmer, Sabee Grewal, Vishnu Iyer, Shih-Han Hung, Srinivasan Arunachalam, Henry Yuen, Scott Aaronson, Kai-Min Chung, and Ruizhe Zhang for a variety of extremely useful discussions and feedback. DL would also like to thank Igor Oliviera for helping fix bugs in the statement of the PRG in \cref{lem:conditional-prg}, Nick Hunter-Jones for helping with all of \cref{ssec:circuit-heirarchy}, and Gregory Rosenthal for helping with the name of $\sPSPACE$ and aiding in the understanding of the results in \cite{rosenthal2023efficient}.
Part of this research was performed while DL was visiting the Institute for Pure and Applied Mathematics (IPAM), which is supported by the National Science Foundation (Grant No. DMS-1925919).
DL is supported by the US NSF award FET-2243659.
FS is supported in part by the US NSF grants CCF-2054758 (CAREER) and CCF-2224131.  NHC is supported by NSF Awards FET-2243659 and FET-2339116 (CAREER), Google Scholar Award, and DOE Quantum Testbed Finder Award DE-SC0024301.

\bibliographystyle{alphaurl}
\bibliography{refs}

\appendix

\section{Decision Problem Circuit Lower Bounds With an Extra Circuit Constraint}

We now show the interesting result that non-trivial quantum state tomography can imply decision complexity class separations between entirely \emph{classical} computational models.
However, we will need to impose two assumptions on $\mathfrak{C}$.
The weaker assumption is the ability to implement the unitary $H^{\otimes (n+1)} \cdot I^{\otimes n} \otimes X$, which was used in \cref{lem:prf-to-prs} to construct binary phase states.
The second assumption is the ability to perform error reduction to arbitrary $\exp(-\poly(n))$ accuracy.
The simplest way to get error reduction is via majority and fanout gates, but since it's not clear that $\mathsf{QNC}^0$ or $\mathsf{QAC}^0$ can compute these gates, the weakest circuit class that we've introduced that \emph{does} meet these requirements is $\mathsf{QAC}^0_f$.
As such, we will state our results in terms of $\mathsf{QAC}^0_f$ for succinctness, though it should be understood that any circuit class that meets these two requirements suffices as well.

\begin{lemma}\label{lem:prf-and-learning-to-decision-lowerbound}
    For arbitrary fixed $f : \N \rightarrow \R^+$ and $\delta : \N \rightarrow  [0, 1]$, let $\mathfrak{C} \supseteq \mathsf{QAC}^0_f$ be a circuit class that is closed under restrictions and define $\calC_\ell$ to be the set of pure states on $\ell$ qubits that can be constructed by $\mathfrak{C}\left[f(\ell) + \ell\right]$ with depth at most $d+3$.
    Assume the existence of an infinitely-often $(\kappa, \ell, m, s, \eps)$-PRF against uniform quantum computations that can be computed in time $t$ by a deterministic Turing machine.
    If the concept class $\calC \coloneqq \bigcup_{\ell \geq 1} \calC_\ell$ is $\left(m \circ \ell^{-1},\,s \circ \ell^{-1},\,\eps \circ \ell^{-1} + \sqrt{m \circ \ell^{-1}} \cdot \delta\right)$-distinguishable then \[\mathsf{DTIME}\left[t \circ \kappa^{-1} \right] \not\subset \text{(depth $d$)-}\mathfrak{C}\left[f \circ \ell \circ \kappa^{-1} \right].\]
\end{lemma}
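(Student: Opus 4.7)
The plan is to mirror the proof of Lemma~\ref{lem:prs-and-learning-to-lowerbound}, but with the PRS-to-state-synthesis reduction replaced by a PRF-to-decision reduction. Starting from the hypothesized PRF $(\{F_k\})$, define the language
\[
L_F \coloneqq \{(k,x)\in\{0,1\}^{\kappa(n)}\times\{0,1\}^{\ell(n)} : F_k(x)=1\}.
\]
Since $F_k(x)$ is computable by a deterministic Turing machine in time $t(n)$, and inputs to $L_F$ have length $N = \kappa(n)+\ell(n) = \Theta(\kappa(n))$, we immediately get $L_F \in \mathsf{DTIME}\!\left[t\circ\kappa^{-1}\right]$ as a function of input size. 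The goal is thus to show $L_F \notin \text{(depth $d$)-}\mathfrak{C}\!\left[f\circ\ell\circ\kappa^{-1}\right]$, which gives the claimed containment failure.

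Assume for contradiction that $L_F$ is in that class, and fix any $n$ and any key $k\in\{0,1\}^{\kappa(n)}$. Hardcoding $k$ non-uniformly into the depth-$d$ $\mathfrak{C}$-circuit for $L_F$ yields a depth-$d$, size-$O(f(\ell))$ $\mathfrak{C}$-circuit $D_k$ that decides $x\mapsto F_k(x)$ with bounded error on one designated output qubit. Using the $\mathsf{QAC}^0_f$ containment in $\mathfrak{C}$, one (i) parallel-amplifies $D_k$ and applies a constant-depth $\mathsf{TC}^0$-style majority to drive its error to $\exp(-\ell)$ at $O(1)$ extra depth, and (ii) performs phase kickback by applying the amplified $\widetilde D_k$ coherently against a $\ket{-}$-ancilla. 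Together with an initial layer of Hadamards producing $\tfrac{1}{\sqrt{2^\ell}}\sum_x\ket{x}\otimes\ket{-}$ and a final fanout-assisted discard of workspace, this assembles a depth-$(d+3)$ $\mathfrak{C}$-circuit of size $O(f(\ell)+\ell)$ synthesizing the binary-phase state $\ket{F_k}$ (Definition~\ref{def:phase-state}) to trace distance at most $\delta$. Hence the state sequence $(\ket{F_k})$ lies in $\calC_\ell$ up to the $\delta$-approximation allowed by the hypothesis.

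Applying the assumed distinguisher for $\calC$ to such a synthesizing circuit for $\ket{F_k}$ gives a uniform quantum algorithm using $m$ samples, $s$ time, and advantage at least $\eps$ between $\ket{F_k}$ and Haar-random on $\ell$ qubits: the $\sqrt{m\circ\ell^{-1}}\cdot\delta$ cushion in the distinguisher's advantage is precisely what Lemma~\ref{lem:robust-distinguish} (via the multi-copy bound of Corollary~\ref{cor:td-multi-copy}) requires to absorb the $\delta$ approximation. Running the chain in Lemma~\ref{lem:prf-to-prs} in reverse---Brakerski's information-theoretic bound (Lemma~\ref{lem:random-phase-state-vs-haar}) together with the fact that $\ket{f}$ can be prepared from $\calO_f$ using one query and $O(\ell)$ time---converts this into an oracle algorithm with $m$ queries and $s+O(m\ell)$ time that distinguishes $\calO_{F_k}$ from $\calO_f$ for uniform random $f$ with advantage $\eps-4m^2/2^\ell$. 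In every parameter regime where the $4m^2/2^\ell$ term is absorbed into $\eps$ (as is the case in all downstream applications of this lemma), this contradicts the PRF security hypothesis, and hence $L_F$ could not have been in $\text{(depth $d$)-}\mathfrak{C}[f\circ\ell\circ\kappa^{-1}]$.

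The main obstacle I expect is making the phase-state construction fit within the stated depth budget of $d+3$: a naive compute--kickback--uncompute would cost $2d+O(1)$ depth because it must invert $D_k$. The proof must exploit the classicality of the single output bit of $D_k$ together with the fanout/majority power of $\mathsf{QAC}^0_f$ to replace uncomputation with a trace-out whose trace-distance cost is absorbed into the $\delta$ slack, which is why the hypothesis $\mathfrak{C}\supseteq \mathsf{QAC}^0_f$ and the PRF's single-bit output are both essential to the argument, and why only a \emph{constant} of $+3$ layers---rather than the full $d$ layers a generic uncomputation would demand---appears in the depth overhead.
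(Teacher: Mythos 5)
Your proposal is correct and follows essentially the same route as the paper's proof: the same language $L$ built from the PRF, the same contradiction via a depth-$(d+3)$ $\mathfrak{C}$-circuit that prepares the binary phase states of \cref{lem:prf-to-prs} using fanout-based parallel amplification plus majority for error reduction, and the same invocation of \cref{lem:robust-distinguish} to absorb the $\delta$ slack before breaking the PRF through the hybrid with \cref{lem:random-phase-state-vs-haar}. Your explicit handling of the $4m^2/2^\ell$ loss and of why trace-out replaces uncomputation is slightly more careful than the paper's proof sketch, but it is the same argument.
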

\begin{proof}
    Let $\left(\{F_k\}_{k \in \{0, 1\}^{\kappa(n)}}\right)_{n \in \N}$ be the the infinitely-often PRF against uniform quantum computations.
    Define the language $L$ such that inputs $(x, k) \in L$ if and only if $F_k(x) = 1$ when $x \in \{0, 1\}^\ell$ and $k \in \{0, 1\}^{\kappa}$.
    Because the PRF is computable in time $t$ by a deterministic Turing machine, $L$ lies in $\mathsf{DTIME}\left[t \circ \kappa^{-1} \right]$.
    
    We now need to show that $L$ is not in $\text{(depth $d$)-}\mathfrak{C}\left[f \circ \ell \circ \kappa^{-1}\right]$.
    For the sake of contradiction, assume that it was.
    Then using \cref{lem:prf-to-prs} and error-reduction via majority and fanout, $\text{(depth $d+3$)-}\mathfrak{C}\left[n + f \circ \ell \circ \kappa^{-1}(n)\right]$ can create pseudorandom states.
    The fact that the depth only increases by $3$ follows from the fact that $\mathfrak{C} \supset \mathsf{QAC}^0_f$ allows us to perform arbitrary classical fanout, approximately compute the PRF, then take the majority, to perform error reduction using only a depth increase of $2$.
    Performing the circuit in \cref{lem:prf-to-prs} adds the final extra layer of depth.
    Since this only needs $O(n)$ Hadamard and $X$ gates, the size increases by at most $O(n)$ as well.
    Relative to the security parameter $n$: the number of samples used by the distinguishing algorithm is $m \circ \ell^{-1} \circ \ell = m(n)$, the running time is $O\left(s \circ \ell^{-1} \circ \ell\right) = O(s(n))$, and the advantage is $\eps \circ \ell^{-1} \circ \ell + \sqrt{m \circ \ell^{-1} \circ \ell} \cdot \delta = \eps + \sqrt{m}\cdot \delta$.
    Finally, the size of the circuits generating $\calC_\ell$ are $O\left(f \circ \ell\right) = O\left(f \circ \ell \circ \kappa^{-1} \circ \kappa\right)$ such that the size relative to the key parameter $\kappa$ is $O\left(f \circ \ell \circ \kappa^{-1}\right)$.
    It follows by the parameters of the PRS and \cref{lem:robust-distinguish} that if $\calC$ could be learned then the PRS does not lie in $\mathsf{pureState}\mathfrak{C}\left[f \circ \ell \circ \kappa^{-1} \right]_\delta$.
    This is a contradiction, meaning that $L \not\in \text{(depth $d$)-}\mathfrak{C}\left[f \circ \ell \circ \kappa^{-1}\right]$.
\end{proof}

\begin{remark}
    Observe that \cref{lem:prf-and-learning-to-decision-lowerbound} requires a PRF while \cref{lem:prs-and-learning-to-lowerbound} only requires a PRS.
    Using \cite[Theorem 7.1]{rosenthal2023efficient}, if one only wants to show that $\mathsf{EXP} \not\subset \text{(depth $d + O(1)$)-}\mathfrak{C}$, it is actually possible to weaken \cref{lem:prf-and-learning-to-decision-lowerbound} to only use a PRS with the special property that each amplitude (including phase information) can be computed in time $\exp(\poly(n))$.
\end{remark}

We can now combine \cref{lem:prf-and-learning-to-decision-lowerbound} and \cref{cor:conditional-prf} to show a conditional circuit lower bound.
To handle the other side of the win-win-argument, we need the following result.

\begin{lemma}[\cite{chia_et_al:LIPIcs.ITCS.2022.47}]\label{lem:decision-diagonalization}
    For every $k \in \N$, there exist a language $L_k \in \PSPACE$ such that $L_k \not\in \mathsf{BQSIZE}[n^k]$.
\end{lemma}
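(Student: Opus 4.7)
The plan is to prove this separation by a $\PSPACE$ circuit-diagonalization argument, essentially the quantum analogue of the classical hierarchy $\PSPACE \not\subset \mathsf{SIZE}[n^k]$. Two ingredients will make this go through. First, by \cref{lem:num-circuits}, there are at most $N_n \coloneqq 2^{O(n^k \log n)}$ distinct quantum circuits of size $n^k$ on $n$ input qubits, each describable by $\poly(n)$ bits, so they can be enumerated one at a time in $\poly(n)$ space. Second, via $\PSPACE = \mathsf{BQPSPACE}$ (equivalently, the subroutine underlying \cref{lem:pspace-compute-td}), the acceptance probability of any such circuit on a fixed input can be approximated in $\poly(n)$ space to inverse-exponential additive error, which suffices to classify the output as $> 2/3$, $< 1/3$, or ambiguous.

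For each input length $n$ the algorithm will greedily construct $L_k \cap \{0,1\}^n$ by processing inputs $x_1,x_2,\ldots$ in lexicographic order, implicitly maintaining a set $\mathcal{C}_i$ of ``surviving'' size-$n^k$ circuits --- those not yet falsified on any previously processed input. At step $i$ it re-enumerates all $N_n$ candidate circuits, recomputes on each whether it has survived (by simulating it on $x_1,\ldots,x_{i-1}$ and comparing its rounded output bit to the already-chosen $L_k(x_j)$), and among the survivors tallies the majority rounded output on $x_i$. Setting $L_k(x_i)$ to the \emph{minority} value then kills at least half of the remaining circuits. After $O(\log N_n) = O(n^k \log n) \ll 2^n$ such rounds the surviving set is empty; the remaining inputs of length $n$ can be filled in arbitrarily, and the algorithm moves on to length $n+1$. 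At any moment only the current index $i$, the $O(n^k \log n)$ bits of $L_k$ already assigned, one circuit description, and a $\PSPACE$-simulation tape are live, so $L_k \in \PSPACE$, and by construction no size-$n^k$ circuit on $n$ inputs can compute $L_k$ correctly on all of $\{0,1\}^n$ with bounded error.

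The main technical wrinkle will be circuits whose acceptance probability on some input lies in the ambiguous interval $(1/3,2/3)$, for which there is no canonical ``output bit'' to diagonalize against. The resolution is straightforward: any such circuit already fails to decide \emph{any} language to bounded error, so in particular it fails to decide $L_k$, and we can simply declare it dead on the first ambiguous input it encounters, at no cost to the construction of $L_k$. The remaining work is bookkeeping --- a halving argument showing $\log_2 N_n = O(n^k \log n)$ rounds suffice to eliminate all circuits, together with a careful accounting that each round's workspace is polynomial --- yielding $L_k \in \PSPACE \setminus \mathsf{BQSIZE}[n^k]$ as desired.
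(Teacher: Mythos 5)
Your proposal is correct and matches the argument the paper attributes to \cite{chia_et_al:LIPIcs.ITCS.2022.47} (the paper itself only cites this lemma, but its proof-techniques section sketches exactly this strategy): enumerate the $2^{O(n^k\log n)}$ circuits in $\poly(n)$ space via \cref{lem:num-circuits}, approximate acceptance probabilities in $\PSPACE$, and halve the surviving set by answering with the minority bit. Your handling of circuits with acceptance probability in the ambiguous interval is also the standard and correct resolution, since such circuits cannot decide any language to bounded error anyway.
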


\begin{theorem}[Formal Statement of \cref{thm:main-informal2}]\label{thm:decision-lower}
    Let $\mathfrak{C} \supseteq \mathsf{QAC}^0_f$ be a circuit class that is closed under restrictions.
    Define $\calC_\ell$ to be the set of pure states on $\ell$ qubits that can be exactly constructed by $\mathfrak{C}[\poly(\ell)]$ with depth at most $d+3$.
    If there exists a fixed constant $c \geq 2$ such that $\calC \coloneqq \bigcup_{\ell \geq 1} \calC_\ell$ is $\left(m, t, \eps\right)$-distinguishable for $m  \leq 2^{\ell^{0.99}}$, $t \leq O\left(2^{\ell^c}\right)$, and $\eps \geq \frac{1}{2^{\ell^{0.99}}}$, then at least one of the following must be true: 
    \begin{itemize}
        \item for all $k \geq 1$, $\mathsf{BQSUBEXP} \not\subset \mathsf{BQSIZE}\left[n^{k}\right]$,
        \item $\mathsf{E} \not\subset \text{(depth $d$)-}\mathfrak{C}[\poly(n)]$.
    \end{itemize}
\end{theorem}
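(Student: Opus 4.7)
The plan is to run the same win-win argument as in the proofs of \cref{thm:main,cor:main2}, but in the decision-problem setting: I substitute \cref{lem:decision-diagonalization} for \cref{thm:pspace-diagonalization} and \cref{lem:prf-and-learning-to-decision-lowerbound} (fed by the PRF of \cref{cor:conditional-prf}) for the state-synthesis route of \cref{lem:prs-and-learning-to-lowerbound}. The dichotomy I case-split on is whether $\PSPACE \subseteq \mathsf{BQSUBEXP}$.

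If $\PSPACE \subseteq \mathsf{BQSUBEXP}$, then \cref{lem:decision-diagonalization} immediately gives, for every $k \geq 1$, a language $L_k \in \PSPACE \subseteq \mathsf{BQSUBEXP}$ with $L_k \notin \mathsf{BQSIZE}[n^k]$. This yields the first conclusion without needing the distinguishing hypothesis.

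Otherwise $\PSPACE \not\subset \BQTIME[2^{n^\gamma}]$ for some $\gamma > 0$, and \cref{cor:conditional-prf} produces, for constants $\alpha \geq 1$ and $\lambda \in (0,1/5)$, an infinitely-often PRF against uniform quantum computations with key length $\kappa(n) \leq n^\alpha$, any input length $\ell(n) \leq \lfloor \log_2 r(n) \rfloor$ (where $r(n) = \lfloor 2^{n^\lambda}\rfloor$), adversary time up to roughly $s(n) = 2^{n^{2\lambda}}$, and advantage $1/r(n)$, all computable by a deterministic Turing machine in time $O(2^\kappa)$. Mirroring the parameter choice in \cref{thm:main}, I set $\ell(n) \coloneqq \lfloor \log_2 r(n)\rfloor^{2/c}$ (valid because $c \geq 2$) and $q(n) \coloneqq 2^{n^{1.98\lambda/c}}$. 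A direct calculation parallel to \cref{thm:main,cor:main2} then verifies that the distinguishing hypothesis — $m \leq 2^{\ell^{0.99}}$, $t \leq O(2^{\ell^c})$, and $\eps \geq 1/2^{\ell^{0.99}}$, reparameterized through $\ell^{-1}$ — satisfies the premise of \cref{lem:prf-and-learning-to-decision-lowerbound} with $\delta = 0$: the runtime and sample bounds match directly, and the advantage $1/2^{\ell^{0.99}}$ dominates both the PRF's native advantage $1/r \circ \ell^{-1}(\ell) = 1/2^{\ell^{c/2}}$ (since $c/2 \geq 1 > 0.99$) and the PRF-to-PRS loss $4m^2/2^\ell$ incurred internally in the proof of that lemma (since $\ell \gg \ell^{0.99}$ for large $\ell$).

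Applying \cref{lem:prf-and-learning-to-decision-lowerbound} then yields $\mathsf{DTIME}[t_{\mathrm{PRF}} \circ \kappa^{-1}] \not\subset \text{(depth $d$)-}\mathfrak{C}[\poly(\ell \circ \kappa^{-1})]$ with $t_{\mathrm{PRF}}(n) = O(2^\kappa)$. Composing with $\kappa^{-1}$ gives $t_{\mathrm{PRF}} \circ \kappa^{-1}(n') = O(2^{n'})$, so the left-hand side lies inside $\mathsf{E}$, while $\ell \circ \kappa^{-1}(n')$ is a fixed polynomial in $n'$, so the right-hand side is contained in $\text{(depth $d$)-}\mathfrak{C}[\poly(n)]$. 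This delivers the second conclusion $\mathsf{E} \not\subset \text{(depth $d$)-}\mathfrak{C}[\poly(n)]$. The only real obstacle is the parameter bookkeeping in the previous paragraph, which is the direct decision-problem analogue of the computation already performed in \cref{cor:main2} and presents no new difficulty beyond carefully tracking the compositions with $\kappa^{-1}$ and $\ell^{-1}$.
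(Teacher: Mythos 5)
Your proposal is correct and follows essentially the same route as the paper's own (sketched) proof: a win-win argument on $\PSPACE$ versus $\mathsf{BQSUBEXP}$, using \cref{lem:decision-diagonalization} in the first case and \cref{cor:conditional-prf} together with \cref{lem:prf-and-learning-to-decision-lowerbound} in the second, with the same parameter bookkeeping as in \cref{thm:main,cor:main2}. Your version is in fact slightly more explicit than the paper's proof sketch about the parameter verification and about why case one yields the stated $\mathsf{BQSUBEXP}$ (rather than merely $\mathsf{BQE}$) separation.
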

\begin{proof}[Proof Sketch]
    We will again use a win-win argument, but now with $\mathsf{PSPACE}$ and $\mathsf{BQSUBEXP}$.
    If $\mathsf{PSPACE} \subseteq \mathsf{BQSUBEXP}$ then using \cref{lem:decision-diagonalization}, we get $\mathsf{BQE} \not\subset \mathsf{BQSIZE}[n^k]$.

    On the other hand, if $\mathsf{PSPACE} \not\subseteq \mathsf{BQSUBEXP}$ then we can invoke \cref{cor:conditional-prf} to show that there exists some infinitely-often $\left(\kappa, \ell, q, s, \eps\right)$-PRF against uniform quantum computations where $\kappa(n) \leq n^\alpha$, $r(n) = \lfloor 2^{n^\lambda} \rfloor$, $\ell \leq \lfloor \log_2 r \rfloor$, and $s(n) = 2^{n^{2\lambda}}$.
    By \cref{lem:prf-and-learning-to-decision-lowerbound} and a similar analysis to \cref{thm:main,cor:main2}, we find that $\mathsf{E} \not\subset \text{(depth $d$)-}\mathfrak{C}[\poly(n)]$.
\end{proof}

Because $\mathsf{TC}^0 \subset \mathsf{QAC}^0_f$ \cite{hoyer2005fan,takahashi2016collapse}, the second scenario in \cref{thm:decision-lower} also implies that $\mathsf{E} \not\subset \mathsf{TC}^0$.

\section{Conditional (Non-Adaptive) Pseudorandom Unitaries and Circuit Lower Bounds for Unitary Synthesis}

Due to the recent results on pseudorandom unitaries, we sketch how the ability to non-adaptively distinguish unitaries from Haar random (without access to the inverse) implies unitary synthesis separations.
Unitary synthesis capture an even more general set of problems than state synthesis, such as certain kinds of problems with quantum inputs and quantum outputs.
For details on unitary synthesis complexity class definitions, see \cite{metger2023pspace,bostanci2023unitary}.

We first need to introduce a distance measure between trace-preserving completely positive maps. Like \nameref{def:td}, it bounds the maximum distinguishability between two quantum operations.

\begin{definition}[Diamond Distance]\label{def:diamond-dist}
    Let $\Phi, \Gamma : \mixedstate_m \rightarrow \mixedstate_n$ be two trace-preserving completely positive maps.
    We define the diamond distance to be
    \[
        \diamonddistance{\Phi, \Gamma} \coloneqq \max_{\rho \in \mixedstate_{2m}} \lVert \left(\Phi \otimes 1_m\right) \rho - \left(\Gamma \otimes 1_m \right)  \rho \rVert_1
    \]
    where $1_m$ is the identity channel on $m$ qubits.
\end{definition}

Note that unlike \nameref{def:td}, diamond distance lies in $[0, 2]$.

We also need a version of \cref{cor:td-multi-copy} for trace-preserving completely positive maps.
\begin{lemma}\label{lem:dia-multi-copy}
    For trace-preserving completely positive maps $\Psi$ and $\Phi$, and $m \in \N$,
    \[
        \diamonddistance{\Psi^{\otimes m}, \Phi^{\otimes m}} \leq m \cdot \tracedistance{\ket{\psi}, \ket{\phi}}.
    \]
\end{lemma}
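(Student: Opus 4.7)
The plan is to use a standard telescoping (hybrid) argument, analogous to the proof of \cref{cor:td-multi-copy} but for channels rather than pure states. I first note that the statement as written appears to contain a typo: the pure-state vectors $\ket{\psi},\ket{\phi}$ appearing on the right-hand side are not the channels $\Psi,\Phi$ appearing on the left, so the intended bound must be
\[
\diamonddistance{\Psi^{\otimes m}, \Phi^{\otimes m}} \leq m \cdot \diamonddistance{\Psi, \Phi}.
\]
Unlike in \cref{cor:td-multi-copy}, we lose the factor of $\sqrt{m}$ improvement (i.e.\ get $m$ rather than $\sqrt{m}$) because the diamond norm is a first-order quantity in the channel difference, with no analogue of the $\lvert\braket{\psi|\phi}\rvert^m$ multiplicativity of pure-state fidelity available.

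The first step would be to introduce the $m+1$ hybrid channels $\Psi^{\otimes j}\otimes\Phi^{\otimes(m-j)}$ for $j=0,1,\dots,m$ and sum telescoping differences, yielding the identity
\[
\Psi^{\otimes m}-\Phi^{\otimes m} \;=\; \sum_{i=1}^{m} \Psi^{\otimes(i-1)} \otimes (\Psi-\Phi) \otimes \Phi^{\otimes(m-i)}.
\]
Applying the triangle inequality for the diamond norm then gives
\[
\diamonddistance{\Psi^{\otimes m}, \Phi^{\otimes m}} \;\leq\; \sum_{i=1}^{m}\; \bigl\lVert \Psi^{\otimes(i-1)} \otimes (\Psi-\Phi) \otimes \Phi^{\otimes(m-i)} \bigr\rVert_{\diamond}.
\]

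The second step would be to bound each summand by $\diamonddistance{\Psi,\Phi}$. This uses the standard submultiplicativity of the diamond norm under tensor products, $\lVert A\otimes B\rVert_\diamond \leq \lVert A\rVert_\diamond\cdot \lVert B\rVert_\diamond$ for Hermiticity-preserving maps $A,B$, together with the fact that any trace-preserving completely positive map has diamond norm exactly $1$. Applied to $A=\Psi^{\otimes(i-1)}$ and $B=(\Psi-\Phi)\otimes\Phi^{\otimes(m-i)}$ (and then iterated on $B$), each term is at most $\lVert\Psi-\Phi\rVert_\diamond = \diamonddistance{\Psi,\Phi}$. Summing the $m$ identical bounds yields the claim.

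The main obstacle is purely expository rather than technical: verifying submultiplicativity cleanly under the paper's unnormalized convention (no $1/2$ factor) for $\diamonddistance{\cdot,\cdot}$, and being careful that the auxiliary reference system in the definition of the diamond norm can be chosen consistently across the tensor factors. No new idea is required, as both the hybrid argument and the stability of the diamond norm are standard (see, e.g., Watrous's textbook).
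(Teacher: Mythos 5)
Your proposal is correct and matches the paper's approach: the paper's one-line proof simply invokes ``subadditivity of the diamond distance with respect to the tensor product,'' which is exactly the telescoping/hybrid argument you spell out (triangle inequality over the hybrids plus stability of the diamond norm under tensoring with channels of diamond norm $1$). You are also right that the right-hand side of the statement contains a typo and should read $m \cdot \diamonddistance{\Psi, \Phi}$.
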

\begin{proof}
    Follows from the subadditivity of \nameref{def:diamond-dist} with respect to the tensor product.
\end{proof}

\begin{definition}[{$\uBQTIME\!\left[f\right]_\delta$}, {$\uBQSPACE\!\left[f\right]_\delta$}]\label{def:uniform-unitary-synthesis}
    Let $\delta : \mathbb{N} \rightarrow [0, 1]$ and $f : \mathbb{N} \rightarrow \mathbb{R}^+$ be functions. Then $\uBQTIME\!\left[f\right]_\delta$ (resp. $\uBQSPACE\!\left[f\right]_\delta$) is the class of all sequences of unitary matrices $(U_x)_{x \in \{0, 1\}^*}$ such that each $U_x$ is a unitary on $\poly(\abs{x})$ qubits, and there exists an $f$-time-uniform (resp. $f$-space-and-size-uniform) family of general quantum circuits $\left(C_x\right)_{x \in \{0, 1\}^*}$ such that for all sufficiently large input size $\abs{x}$,
    \[
        \diamonddistance{C_x, U_x} \leq \delta.
    \]
\end{definition}

\begin{definition}[$\uBQP_\delta$, $\uPSPACE_\delta$]\label{def:unitaryBQP}
    
    \begin{align*}
        \uBQP_\delta \coloneqq \bigcup_p \uBQTIME\!\left[p\right]_\delta
    \,\,\,\text{  and  }\,\,\,
    \uPSPACE_\delta \coloneqq \bigcup_p \uBQSPACE\!\left[p\right]_{\delta}
    \end{align*}
    where the union is over all polynomials $p : \mathbb{N} \rightarrow \mathbb{R}$.
\end{definition}

We can likewise define $\mathsf{unitaryBQE}_\delta$ and $\mathsf{unitaryBQP/poly}$ to be the analogues of $\mathsf{stateBQE}$ and $\mathsf{stateBQP/poly}$ respectively.
Finally, when dealing with \emph{unitary} circuit families, we will refer to the complexity classes with the prefix $\mathsf{pureUnitary}$-, such as in $\mathsf{pureUnitaryBQE}$.

We now need to generalize both \cref{lem:state-implies-decision,thm:pspace-diagonalization} but for unitary synthesis.
The high-level idea is that both proofs implicitly work at the level of unitaries, rather than states, in that they find a unitary that has the correct property before arguing that the state that the unitary creates also has a similar property.
Put another way, \cref{lem:state-implies-decision,thm:pspace-diagonalization} follow as corollaries of \cref{lem:unitary-implies-decision,thm:pspace-diagonalization-unitary}.

\begin{lemma}\label{lem:unitary-implies-decision}
    Let $k : \N \rightarrow \R^+$.
    For any $\mathsf{unitaryA}_\delta \subset \mathsf{unitaryBQP/poly}_\delta$, if $\mathsf{unitaryA}_\delta \not\subset \uBQTIME\left[k \cdot f\right]_{\delta + \exp(-k)}$ then $\mathsf{A} \not\subset \BQTIME\left[\frac{f}{n^{\nu}}\right]$ for some $\nu \geq 1$.
\end{lemma}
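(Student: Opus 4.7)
The strategy is to lift the proof of Lemma~\ref{lem:state-implies-decision} from the state/trace-distance setting to the unitary/diamond-distance setting, making minimal modifications. By the hypothesis $\mathsf{unitaryA}_\delta \subset \mathsf{unitaryBQP/poly}_\delta$, any witnessing sequence $(U_x)$ of the separation is $\delta$-synthesized in diamond distance by a polynomial-size unitary circuit sequence $(C_x)$. By Lemma~\ref{lem:num-circuits}, each description $\textsc{desc}(C_x)$ has length $d = O(s \log s) = \poly(\abs{x})$, and we fix $\nu \geq 1$ such that $d \log d \leq n^\nu$. Define the language $L$ to consist of pairs $(x, i)$ with $i \in \{0, \dots, d-1\}$ in binary such that the $i$-th bit of $\textsc{desc}(C_x)$ equals $1$; then trivially $L \in \mathsf{A}$.

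Next, we derive a contradiction from the assumption $L \in \BQTIME[f/n^\nu]$. By the same coherent-learning argument as in Lemma~\ref{lem:state-implies-decision} --- amplifying the decider to very small failure probability, iterating over the $d$ bit positions of the description, and invoking Fact~\ref{fact:td-vs-fidelity} --- one coherently prepares a $d$-qubit register in a state $\sigma$ with $\tracedistance{\sigma, \ketbra{\textsc{desc}(C_x)}{\textsc{desc}(C_x)}} \leq \exp(-k)$ in total time $O(k \cdot f)$, using $d \log d \leq n^\nu$. Feeding $\sigma$ into the description slot of the universal circuit from Fact~\ref{fact:universal-circuit} and then tracing out that register defines a channel $\Phi$ computable by a $(k \cdot f)$-time-uniform general quantum circuit.

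The only genuine departure from the state-level proof --- and the step I expect to be the main obstacle --- is bounding $\diamonddistance{\Phi, U_x}$ instead of a single-output trace distance. Let $\Phi'$ be the channel obtained by replacing $\sigma$ with the pure basis state $\ketbra{\textsc{desc}(C_x)}{\textsc{desc}(C_x)}$; by Fact~\ref{fact:universal-circuit}, $\Phi'(\rho) = C_x \rho C_x^\dagger$. For any reference system $R$ and input $\rho_{AR}$, unitary invariance of the trace norm, multiplicativity of the trace norm under tensor products (noting $\|\rho_{AR}\|_1 = 1$), and monotonicity of trace distance under the partial trace jointly give $\|(\Phi \otimes \mathrm{id}_R)(\rho_{AR}) - (\Phi' \otimes \mathrm{id}_R)(\rho_{AR})\|_1 \leq 2 \tracedistance{\sigma, \ketbra{\textsc{desc}(C_x)}{\textsc{desc}(C_x)}} \leq 2\exp(-k)$, so $\diamonddistance{\Phi, \Phi'} \leq 2\exp(-k)$. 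Combined with $\diamonddistance{\Phi', U_x} \leq \delta$ via the synthesis guarantee on $(C_x)$, the triangle inequality yields $\diamonddistance{\Phi, U_x} \leq \delta + 2\exp(-k)$. Absorbing the factor of two into the choice of $k$ places $(U_x) \in \uBQTIME[k \cdot f]_{\delta + \exp(-k)}$, contradicting the separation hypothesis; hence $L \notin \BQTIME[f/n^\nu]$ and therefore $\mathsf{A} \not\subset \BQTIME[f/n^\nu]$.
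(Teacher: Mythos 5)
Your proposal is correct and takes essentially the same route as the paper, whose own argument is only a sketch deferring to the proof of \cref{lem:state-implies-decision}. The explicit diamond-norm bound you supply (monotonicity under the partial trace, unitary invariance, and multiplicativity of the trace norm, giving $\diamonddistance{\Phi,\Phi'}\le 2\exp(-k)$) is precisely the detail the paper leaves implicit, and your handling of it is sound.
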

\begin{proof}[Proof Sketch]
    The proof works the same ways as \cref{lem:state-implies-decision} and using the same language.
    This is because in the proof of \cref{lem:state-implies-decision}, we argue that if $\mathsf{A} \subset \BQTIME\left[\frac{f}{n^{\nu}}\right]$ then the description of any unitary in $\mathsf{A}$ could be learned, presenting a contradiction.
\end{proof}

\begin{lemma}\label{thm:pspace-diagonalization-unitary}
    For every fixed $k \in \N$,
    \[
        \mathsf{pureUnitaryPSPACE}_0 \not\subset \mathsf{pureUnitaryBQSIZE}[n^k]_{0.98}.
    \]
\end{lemma}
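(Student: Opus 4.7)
The plan is to mirror the proof of \cref{thm:pspace-diagonalization}, with pure-state trace distance replaced throughout by diamond distance between unitaries. Two ingredients are needed: an analogue of the size-hierarchy \cref{lem:circuit-size-hierarchy} for unitaries in diamond distance, and an analogue of \cref{lem:pspace-compute-td} giving polynomial-space approximation of diamond distance. The factor of two between the range $[0,1]$ of trace distance and the range $[0,2]$ of diamond distance is precisely what shifts the separation threshold from $0.49$ to $0.98$.

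For the hierarchy, the upper bound is unchanged: by \cref{lem:num-circuits} there are at most $2^{s(3\log_2 s + 4)}$ distinct size-$s$ unitary quantum circuits, crudely bounding the diamond-packing number of the unitaries they realize. For the lower bound, I would lift the state-packing bound of \cref{lem:npack-lower} using the elementary observation that for unitaries $U$ and $V$ on $n$ qubits, evaluating the diamond distance on the unentangled input $\ketbra{0^n}{0^n}$ gives
\[
    \diamonddistance{U, V} \geq \bigl\lV U \ketbra{0^n}{0^n} U^\dagger - V \ketbra{0^n}{0^n} V^\dagger \bigr\rV_1 = 2 \cdot \tracedistance{U\ket{0^n}, V\ket{0^n}}.
\]
Consequently, the ancilla-free depth-$r$ family $\purestate^r$ from \cref{lem:npack-lower} — which provides a trace-distance packing of pure states and hence of their ancilla-free synthesizing unitaries — doubles into a diamond-distance packing of size-$r$ unitaries. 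Running through the counting argument of \cref{lem:circuit-size-hierarchy} with these scaled thresholds yields, for any $s = 2^{o(n)}$ and sufficiently large $n$, an $s^\prime = O(s^{12} n^2 c(\calG))$ such that the diamond-packing number of size-$s^\prime$ unitaries strictly exceeds the number of size-$s$ unitary circuits, guaranteeing a size-$s^\prime$ unitary that is $0.98$-far in diamond distance from every size-$s$ unitary.

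For the PSPACE-computability of diamond distance, I would establish that $\diamonddistance{U, V}$ for $U, V$ realized by $\poly(n)$-space unitary circuits with algebraic entries is approximable to $\exp(-\poly(n))$ accuracy in polynomial space. Diamond distance admits a semidefinite-program characterization of polynomial size \cite{Watrous_2018}, and for unitaries in particular reduces to a spectral quantity (the distance from the origin to the convex hull of the eigenvalues of $U^\dagger V$ on the unit circle). Either formulation can be phrased as a first-order sentence over the reals and decided in $\poly(n)$ space, completely analogously to the proof of \cref{lem:pspace-compute-td}.

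With both ingredients in hand, the final assembly reproduces the argument of \cref{thm:pspace-diagonalization} at the level of unitaries. For each input length $n$, the PSPACE algorithm enumerates all size-$s^\prime$ unitary circuits $V_j$ and all size-$s \coloneqq O(n^k)$ unitary circuits $U_i$ using \cref{lem:num-circuits}. For each $V_j$ in turn, it approximates $\diamonddistance{U_i, V_j}$ to accuracy $<0.005$ and checks whether every estimate exceeds $0.985$; as soon as this holds it outputs $V_j$ and defines $U_x \coloneqq V_j$ for all $x$ with $|x| = n$. The hierarchy guarantees that some $V_{j^*}$ satisfies the condition, and the triangle inequality then forces $\diamonddistance{V_{j^*}, U_i} > 0.98$ for every size-$s$ unitary $U_i$. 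Since $V_{j^*}$ has polynomial size and is computable in polynomial space, the sequence $(U_x)_{x \in \{0,1\}^*}$ lies in $\mathsf{pureUnitaryPSPACE}_0$, giving the desired separation. The main obstacle is the PSPACE-computability of diamond distance; all remaining steps are straightforward adaptations of the state-synthesis proof, with only bookkeeping needed to verify that the factor-of-two scaling between trace and diamond distance carries through the packing constants cleanly.
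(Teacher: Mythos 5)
Your proposal does far more work than the statement requires, and the extra work is exactly where the gap sits. The inequality you record in your hierarchy step,
\[
\diamonddistance{U,V} \;\ge\; \bigl\lVert U\ketbra{0^n}{0^n}U^\dagger - V\ketbra{0^n}{0^n}V^\dagger\bigr\rVert_1 \;=\; 2\cdot\tracedistance{U\ket{0^n},\,V\ket{0^n}},
\]
already yields the entire lemma as an immediate corollary of \cref{thm:pspace-diagonalization}: that theorem exhibits, in $\puresPSPACE_0$, a polynomial-size unitary $V$ whose output on the all-zeros state is more than $0.49$-far in trace distance from the output state of \emph{every} size-$n^k$ circuit $U_i$ (ancilla included), and evaluating the diamond distance on the all-zeros input then forces $\diamonddistance{U_i,V} > 2\cdot 0.49 = 0.98$. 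This is precisely the paper's proof; no new packing bound, no new enumeration, and no diamond-distance computation is needed.

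By re-running the diagonalization with diamond-distance estimates instead, you make the whole argument depend on approximating $\diamonddistance{U,V}$ to $\exp(-\poly(n))$ accuracy in $\poly(n)$ space --- which you correctly identify as the main obstacle but do not establish, and neither of your suggested routes goes through as stated. The SDP characterization of the diamond norm has matrices of dimension $2^{O(n)}$, not polynomial size, so solving it in $\poly(n)$ space would need something like the matrix-multiplicative-weights machinery behind $\QIP=\PSPACE$ rather than a generic SDP solver. The spectral characterization requires the complex, unimodular eigenvalues of $U^\dagger V$, whereas \cref{lem:compute-real-roots} applies only to polynomials with all real roots; and a first-order sentence over the reals encoding a $2^n\times 2^n$ matrix has exponentially many variables, so PSPACE-decidability of the existential theory of the reals does not by itself give $\poly(n)$ space here. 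None of this machinery is necessary: your own inequality reduces the unitary separation to the already-proven state separation.
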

\begin{proof}[Proof Sketch]
    The proof of \cref{thm:pspace-diagonalization} involves finding a unitary $V$ such that all unitaries $U_i$ that can be synthesized by circuits in $\mathsf{BQSIZE}[n^k]$ cannot create the same state as $V$ when acting on the all zeros state.
    In fact, no $U_i$ can create a state that is even $0.49$-close in \nameref{def:td}.
    It follows by the definition of \nameref{def:diamond-dist} that $\diamonddistance{U_i, V} > 0.98$ for all $U_i$.\footnote{Recall that \nameref{def:diamond-dist} lies in $[0, 2]$.}
    Since this holds for every $k$, we find that $\mathsf{pureUnitaryPSPACE}_0 \not\subset \mathsf{pureUnitaryBQSIZE}[n^k]_{0.98}$.
\end{proof}

\subsection{Pseudorandom Unitaries}

We start by adapting our pseudorandom objects from quantum states to unitaries, which was a notion also introduced by \cite{ji-pseudorandom-states2018}.
\cite{metger2024pseudorandom} recently showed how to build unitaries that are pseudorandom when the adversary is not allowed to adapt the algorithm based on prior measurement results.

\begin{definition}[PRU]\label{def:pru}
    Let $\kappa, \ell, m : \N \rightarrow \N$, let $s: \N \rightarrow \R^+$, and let $\eps : \N \rightarrow [0, 1]$.
    We say that a sequence of keyed pure unitaries $\left(\{U_k\}_{k \in \{0, 1\}^\kappa} \right)_{n \in \N}$ is an infinitely-often $(\kappa, \ell, m, s, \eps)$-PRU if for a uniformly random $k \in \{0, 1\}^{\kappa}$, no quantum algorithm running in time $s$ can distinguish $m$ queries of $U_k$ from $m$ queries to a Haar random unitary on $\ell$ qubits by at most $\eps$.
    Formally, for all $s$-time-uniform quantum oracle circuits $(C_n^{(\cdot)})_{n \in \N}$ that output the one qubit state $\rho_n^\calO$ when querying oracle $\calO$:
    \begin{align*}
        \bigg\lvert \E_{k \sim \{0, 1\}^{\kappa}} \trc\left[\ketbra{1}{1} \cdot \rho_n^{U_k} \right] -  \E_{\calU \sim \mu_{\mathrm{Haar}}} \trc\left[\ketbra{1}{1}  \cdot \rho_n^{\calU}\right] \bigg\rvert \leq \eps \,
\end{align*}
holds on infinitely many $n \in \mathbb{N}$.
\end{definition}

We now detail the construction used in \cite{metger2024pseudorandom}.
Denote $\calP_{\ell}: = \{p: \{0, 1\}^{\ell} \rightarrow \{0, 1\}^\ell\}$ as the set of all permutations on $\ell$-bits, and let $p^{-1}$ refer to the inverse permutation such that $p\circ p^{-1}$ is the identity function.
Furthermore, for $p \in \calP_\ell$ denote the $\ell$-qubit in-place permutation unitary to be $U_p \coloneqq \ket{x} \mapsto \ket{p(x)}$.

\begin{lemma}[Proof of {\cite[Theorem 3.1]{metger2024pseudorandom}}]\label{lem:non-adaptive-pru-td}
    Let $p \in \calP$ be a random permutation on $\ell$-qubits, $F$ be a $2^\ell \times 2^\ell$ diagonal matrix whose entries are random $\{\pm 1\}$ values (i.e., the diagonal is the truth table for $f \sim \mathfrak{F}_{\ell, 1}$), and $C$ be a random Clifford circuit on $\ell$ qubits. If $\calU$ is an $\ell$-qubit Haar random unitary, then $\left(U_pFC\right)^{\otimes m}$ is $O\!\left(\frac{m}{\sqrt{2^n}}\right)$-close to $\calU^{\otimes m}$ in expected \nameref{def:diamond-dist}.
\end{lemma}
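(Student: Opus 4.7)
The plan is to analyze the three layers $C$, $F$, $U_p$ of the construction separately, reducing each one to a well-understood pseudorandom object. First, I would reformulate the question through the Choi--Jamio\l{}kowski isomorphism: by \cref{def:diamond-dist}, bounding $\diamonddistance{(U_pFC)^{\otimes m}, \calU^{\otimes m}}$ in expectation is, up to standard identities, equivalent to bounding the trace distance in expectation between $\bigl((U_pFC)\otimes I\bigr)^{\otimes m}\ket{\Omega}^{\otimes m}$ and $\bigl(\calU\otimes I\bigr)^{\otimes m}\ket{\Omega}^{\otimes m}$, where $\ket{\Omega}$ is the maximally entangled state on $2\ell$ qubits. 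This places us in the (more tractable) setting of state distinguishability.

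Second, I would use the fact that the Clifford group is a unitary $3$-design to analyze the role of $C$. Although $C$ is the \emph{same} unitary applied to each of the $m$ copies, the commutant of the diagonal Clifford action on $m$ copies is well understood (it is a small extension of the symmetric-group commutant, generated by stabilizer-code projectors). Twirling by $C^{\otimes m}$ therefore collapses the adversary's input state into a classical mixture, over a controlled family of ``canonical'' states, that is supported on computational-basis tuples $(x_1,\dots,x_m)$. Thus it suffices to bound, in expectation over $(p,F)$, the trace distance between $\bigotimes_i (U_p F)\ket{x_i}$ (in superposition with suitable reference qubits via $\ket{\Omega}$) and the Haar-analogue state, for each such canonical configuration $(x_1,\dots,x_m)$.

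Third, I would observe that $(U_p F)\ket{x}=\pm\ket{p(x)}$, so that applying $U_pF$ to any computational-basis superposition $\frac{1}{\sqrt{|S|}}\sum_{x\in S}\ket{x}$ yields a uniformly random binary-phase state supported on a uniformly random size-$|S|$ subset of $\{0,1\}^\ell$. A generalization of \cref{lem:random-phase-state-vs-haar} to random \emph{subset} binary-phase states (essentially the subset-phase lemma of Aaronson--Gunn--Giurgica-Tiron / Jeronimo--Magrafta--Wu, which is information-theoretic and holds against unbounded adversaries) shows that $m$ copies of such a state are $O(m/\sqrt{2^\ell})$-close in trace distance to $m$ copies of a Haar random state on $\ell$ qubits. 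Integrating this bound against the distribution over $(x_1,\dots,x_m)$ produced by the Clifford twirl, and converting back to diamond distance via the Choi identification from step one (with sub-additivity as in \cref{lem:dia-multi-copy}), yields the claimed $O(m/\sqrt{2^n})$ expected diamond-distance bound.

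The main obstacle is step two: the Clifford layer must simultaneously (i) be strong enough to reduce an arbitrary (possibly highly entangled) $m\ell$-qubit input to a computational-basis mixture, and (ii) be weak enough that its limitations do not spoil the final pseudorandomness bound. Because the same $C$ is applied to all $m$ queries, a naive $1$-design argument is insufficient and one really does need the full $3$-design property together with a careful commutant calculation. The non-adaptivity restriction is essential here: if queries could be adaptive, the adversary could measure intermediate outputs and effectively probe $C$ in a way that is not captured by a single twirl, and the decoupling in step two would fail. A secondary subtlety is that the final error is only polynomially small in $m$, so one must track second-moment (Schur--Weyl) quantities rather than rely on $1$-copy bounds; this is what forces the $m/\sqrt{2^\ell}$ scaling rather than something exponentially small.
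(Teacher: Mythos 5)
The paper does not actually prove \cref{lem:non-adaptive-pru-td}; it imports it wholesale from the proof of Theorem 3.1 of \cite{metger2024pseudorandom}, so your attempt should be measured against that argument. Your two-stage skeleton --- use $C$ to tame an arbitrary input, then use $U_pF$ to match the Haar twirl, with non-adaptivity being what makes a single global twirl sufficient --- is the right one. But three of your concrete steps fail. First, restricting to the input $\ket{\Omega}^{\otimes m}$ is not equivalent to bounding the diamond distance: the Choi-state trace distance only \emph{lower}-bounds $\diamonddistance{\cdot\,,\cdot}$, and upgrading it to an upper bound costs a dimension factor $2^{m\ell}$, which is fatal at this error scale. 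The actual proof carries an arbitrary input state $\ket{\psi}$, entangled across the $m$ query registers and an ancilla, through the whole argument.

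Second, the twirl $\E_C\bigl[C^{\otimes m}(\cdot)\,C^{\dagger\otimes m}\bigr]$ does not collapse the input to a classical mixture over computational-basis tuples: its range is the commutant of the diagonal Clifford action (permutation operators plus the extra stabilizer elements), none of which are diagonal, so no reduction to product basis inputs is available. The real role of $C$ is different and needs only the $2$-design property: one shows that $C^{\otimes m}\ket{\psi}$ has expected weight $O(m^2/2^{\ell})$ outside the \emph{distinct subspace} (tuples $(x_1,\dots,x_m)$ with no collisions), by union-bounding over pairs of registers. Third, the subset-phase-state lemma you invoke concerns $m$ i.i.d.\ copies of a single random subset-phase state; the post-Clifford state is one entangled vector across the $m$ registers and is not of that form, so that lemma does not apply (and for a genuinely classical mixture of basis tuples, $(U_pF)^{\otimes m}$ would just produce another basis-tuple mixture, with the phases cancelling). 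The correct second stage is instead a direct computation: on the distinct subspace, $\E_{p,F}\bigl[(U_pF)^{\otimes m}\ketbra{\bar x}{\bar y}(U_pF)^{\dagger\otimes m}\bigr]$ vanishes unless $\bar y$ is a permutation of $\bar x$, and otherwise reproduces the symmetric-group (Schur--Weyl) structure of the Haar $m$-fold twirl up to Weingarten corrections of order $m^2/2^{\ell}$; combining this with the collision-weight bound and converting amplitude loss to trace distance yields the $O\!\left(m/\sqrt{2^{\ell}}\right)$ rate. So the skeleton is salvageable, but the reduction in your step two and the key lemma in your step three would both have to be replaced.
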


Because of these random permutations, we will not introduce the idea of a pseudorandom permutation.

\begin{definition}[PRP]\label{def:prp}
    Let $\kappa, \ell: \N \rightarrow \N$, let $q, s: \N \rightarrow \R^+$, and let $\eps : \N \rightarrow [0, 1]$.
    We say that a sequence of keyed-permutations $\left(\{P_k \in \mathfrak{F}_{\ell, \ell}\}_{k \in \{0, 1\}^\kappa}\right)_{n \in \N}$ is an infinitely-often $(\kappa, \ell, q, s, \eps)$-PRP if for a uniformly random $k \in \{0, 1\}^{\kappa}$, no quantum algorithm running in time $s$ can distinguish black-box access to $\calO_{P_k}$ from black-box access to $\calO_p$ for random \emph{permutation} $p \in \calP_\ell$ using at most $q$ queries by at most $\eps$.
    Formally, for all $s$-time-uniform oracle quantum circuits $(C^{(\cdot)}_n)_{n \in \N}$ such that each $C^{\calO}_{n}$ takes no inputs, queries $\calO$ at most $q$ times, and outputs a single qubit state $\rho_{n}^\calO$:
    \begin{align*}
        \bigg\lvert \E_{k \sim \{0, 1\}^{\kappa}} \trc\left[\ketbra{1}{1} \cdot \rho_{n}^{\calO_{P_k}} \right] -  \E_{p \sim \mathcal{P}_{\ell}}\trc\left[\ketbra{1}{1} \cdot \rho_{n}^{\calO_f}\right]\bigg\rvert \leq \eps \, 
\end{align*}
holds on infinitely many $n \in \mathbb{N}$.
\end{definition}

\begin{remark}\label{rem:cca-vs-pca}
The above definition, where the quantum adversary can only make XOR queries for the forward direction of the permutation, is what is known as quantum chosen-plaintext attack (qCPA) secure.
However, for $p \in \calP_\ell$ note that queries to XOR oracle $\calO_p$ are not the same as queries to the in-place permutation $U_p$.
To achieve security against adversaries with queries to $U_p$ like we will need to invoke \cref{lem:non-adaptive-pru-td}, it suffices to look at an even stronger notion of security called quantum chosen-ciphertext attack (qCCA) secure, where the adversary has access to $\calO_{p^{-1}}$ as well.
This is because one query to $\calO_p$ and $\calO_{p^{-1}}$ each can be used to simulate $U_p$ (see proof of \cref{cor:conditional-pru}).
\end{remark}

\begin{lemma}[Generalization of proofs of \cite{ristenpart2013mix,morris2014sometimes,zhandry2016note}]\label{lem:cca-security}
    For $r = \Theta(n^2)$, let $a_{1}, \dots, a_r$ be random $n$-bit strings and $f_1, \dots, f_r$ be random functions from $\mathfrak{F}_{n, 1}$.
    Using these resources, there is an $\Theta(n^2)$-time construction of permutations $p$ and $p^{-1}$ that are distinguishable from a true random permutation with advantage at most $\frac{1}{2^n}$, even when the adversary is computationally unbounded and has learned the entire truth table.
\end{lemma}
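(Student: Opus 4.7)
I would instantiate the \emph{swap-or-not} shuffle of Hoang-Morris-Rogaway inside the \emph{Mix-and-Cut} framework of Ristenpart-Yilek (alternatively, the Sometimes-Recurse shuffle of Morris-Rogaway would work equally well), both of which are designed to produce a permutation that is statistically close to uniform even against an adversary who queries the full truth table.

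First, I describe a single swap-or-not round: given a string $a_i \in \{0,1\}^n$ and a function $f_i \in \mathfrak{F}_{n,1}$, define the involution $\pi_i(x) = a_i \oplus x$ whenever $f_i\bigl(\max(x, a_i \oplus x)\bigr) = 1$ and $\pi_i(x) = x$ otherwise. Since the action depends only on the unordered pair $\{x, a_i \oplus x\}$, each $\pi_i$ is a permutation with $\pi_i = \pi_i^{-1}$, evaluable in $O(n)$ bit operations given one query to $f_i$. Next, $p$ is built by interleaving blocks of swap-or-not rounds with the Mix-and-Cut recursion, which at each level partitions the current subdomain into its upper and lower halves based on a short base shuffle and then recurses on each side with fresh independent $(a_i,f_i)$ pairs. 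With $\Theta(n)$ recursion levels and $\Theta(n)$ rounds per level this consumes exactly $r = \Theta(n^2)$ resource pairs and evaluates in $\Theta(n^2)$ time on an $n$-bit word RAM; $p^{-1}$ is obtained by running the recursion in reverse, with identical complexity.

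The main step is the statistical analysis. The core ingredient is the Hoang-Morris-Rogaway mixing bound for swap-or-not, showing that after enough rounds the shuffle is $\sqrt{N}$-CCA secure with advantage $\exp(-\Omega(n))$; combined with the Ristenpart-Yilek amplification theorem, Mix-and-Cut then lifts this to full $N$-CCA security at the cost of an additional $O(\log N) = O(n)$ factor in the round count, so tuning the hidden constants in $r = \Theta(n^2)$ yields the target advantage $2^{-n}$ against a truth-table-learning adversary. The main obstacle is propagating the statistical error cleanly through the $\Theta(n)$ levels of recursion and verifying that conditioning on the cut decisions at earlier levels does not degrade the mixing analysis at later ones --- but this is exactly what the Ristenpart-Yilek (and alternatively Morris-Rogaway) analyses establish, so the proof would cite those theorems as black boxes after fixing parameters. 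No quantum subtleties arise here since the conclusion is purely information-theoretic; the quantum security needed later (per Zhandry's observation) follows for free from this statistical bound once the construction is plugged into the PRP definition.
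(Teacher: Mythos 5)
Your plan matches the paper's approach exactly: the paper gives no self-contained proof of this lemma and instead defers to precisely the combination you describe --- the swap-or-not shuffle run inside the Mix-and-Cut recursion of Ristenpart--Yilek (equivalently the Sometimes-Recurse shuffle of Morris--Rogaway), with full-truth-table security taken as a black box from those analyses and Zhandry's observation that the purely information-theoretic bound transfers to quantum (qCCA) adversaries for free. Your round description, resource count $r = \Theta(n^2)$, and error-propagation discussion are all consistent with how the paper invokes these results.
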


Note that \cref{lem:cca-security} is extremely powerful, as the adversary is allowed to know everything about the permutation, as well as be computationally unbounded.
Because of this combination, even quantum queries to the inverse can be simulated, showing that the construction is qCCA-secure.\footnote{This observation is the central point of \cite{zhandry2016note}, as the original analyses in \cite{ristenpart2013mix,morris2014sometimes} were only for classical security.}

\begin{lemma}\label{lem:prg-to-prp}
    Let $G$ be an infinitely-often $(\kappa, m, s, \eps)$-PRG against uniform quantum computations that is computable in time $t$ by a deterministic Turing machine. Then for $\ell = o\!\left(\log \frac{m}{\log^2 m}\right)$, there exists an infinitely-often qCCA-secure \[\left(\kappa,\, \ell,\, q,\, s-O(q\cdot m \cdot \ell^2),\, \eps + \frac{1}{2^\ell} \right)\text{-PRP}\] against uniform quantum computations that can be computed in time $O(t + m \cdot \ell^2)$.
\end{lemma}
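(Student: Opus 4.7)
The strategy mirrors \cref{lem:prg-to-prf}: use the PRG $G$ to instantiate the random resources consumed by the information-theoretic permutation construction of \cref{lem:cca-security}, then reduce qCCA security of the resulting PRP to the distinguishing security of $G$ via a one-step hybrid argument.

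Concretely, set $r = \Theta(\ell^2)$ as in \cref{lem:cca-security}. On key $k \in \{0,1\}^\kappa$, first compute $G(k) \in \{0,1\}^m$, then parse the output as $r$ strings $a_1, \dots, a_r \in \{0,1\}^\ell$ followed by $r$ truth tables for Boolean functions $f_1, \dots, f_r : \{0,1\}^\ell \to \{0,1\}$. This requires $r(\ell + 2^\ell) = \Theta(\ell^2 \cdot 2^\ell)$ bits, which fits inside $m$ precisely because $\ell = o(\log(m/\log^2 m))$ implies $2^\ell \cdot \log^2 m = o(m)$. Define $P_k$ (and $P_k^{-1}$) to be the permutations produced by the construction of \cref{lem:cca-security} with these $a_i, f_i$.

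For security, let $n$ be a hard input length for $G$. Consider three oracles: (i) $P_k$ for uniform $k$, (ii) the same construction with the PRG output replaced by a truly random $m$-bit string, and (iii) a uniformly random permutation in $\calP_\ell$. By \cref{lem:cca-security}, oracles (ii) and (iii) are indistinguishable with advantage at most $1/2^\ell$, even for unbounded adversaries and even when the adversary has both forward and inverse queries (so this bound handles qCCA, per \cref{rem:cca-vs-pca}). If some $[s-O(q \cdot m \cdot \ell^2)]$-time quantum adversary $\calB$ distinguished (i) from (iii) with advantage $\eps + 1/2^\ell$, then by the triangle inequality it distinguishes (i) from (ii) with advantage $\eps$. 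Such a $\calB$ is transformed into a distinguisher $\calA$ for $G$ as follows: given an $m$-bit string (PRG output or truly random), $\calA$ parses it into $(a_i, f_i)$ and answers each of $\calB$'s quantum forward or inverse queries by running the reversible permutation circuit of \cref{lem:cca-security}, which uses $O(\ell^2)$ rounds where each round applies an XOR oracle for some $f_i$. Since $\calA$ has the truth table of each $f_i$ as part of its input, a single such oracle call can be simulated in quantum time $O(2^\ell) = O(m)$ (by a scan over the $2^\ell$ truth-table bits, analogous to \cref{lem:prg-to-prf}). Hence each of the $q$ queries costs $O(m \cdot \ell^2)$, giving a total overhead of $O(q \cdot m \cdot \ell^2)$ and an $s$-time distinguisher against $G$ with advantage $\eps$, a contradiction. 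Computing $P_k$ itself once takes $O(t)$ for the PRG plus $O(m \cdot \ell^2)$ to evaluate the construction, matching the claimed runtime.

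The only real subtlety, and the step to verify most carefully, is checking that \cref{lem:cca-security} indeed delivers qCCA rather than just qCPA security; this is exactly what makes the in-place permutation unitary $U_{P_k}$ (and thus the PRU application via \cref{lem:non-adaptive-pru-td}) simulable from two XOR-style queries per round, as highlighted in \cref{rem:cca-vs-pca}. Everything else is bookkeeping: verifying the bit-budget $r(\ell + 2^\ell) \le m$ under the stated growth condition on $\ell$, and tracking that the reduction preserves both the additive $1/2^\ell$ statistical loss from \cref{lem:cca-security} and the $\eps$ computational loss from $G$.
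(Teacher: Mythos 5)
Your proposal is correct and follows essentially the same route as the paper's proof: instantiate the random strings and random functions of \cref{lem:cca-security} with the PRG output, verify the bit budget $\Theta(\ell^2 \cdot 2^\ell) \leq m$ under the growth condition on $\ell$, and run the two-hybrid argument (PRG-seeded construction vs.\ truly-random-seeded construction vs.\ truly random permutation) with the triangle inequality splitting the advantage into the $\eps$ computational loss and the $1/2^\ell$ statistical loss. Your write-up is in fact somewhat more explicit than the paper's about how the reduction simulates each of the $q$ forward/inverse queries from the parsed truth tables at cost $O(m \cdot \ell^2)$ each, which is exactly the bookkeeping behind the $s - O(q \cdot m \cdot \ell^2)$ term.
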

\begin{proof}
    We will simply replace the random strings and random functions in \cref{lem:cca-security} with pseudorandom strings and pseudorandom truth tables from the output of the PRG.

    First we show that we have enough pseudorandom bits to do so.
    Let $r = \Theta(\ell^2)$ be the parameter from \cref{lem:cca-security}.
    We start by setting aside $r \cdot \ell$ bits for the $a_i$.
    We then need $r \cdot 2^\ell$ bits for the truth table of the $f_i$.
    This means that we need at most $r \cdot 2^{\ell + 1} = \Theta\!\left(\ell^2 \cdot 2^\ell\right)$ bits, such that if $\ell = o\!\left(\log \frac{m}{\log^2 m} \right)$ then $\Theta(\ell^2 \cdot 2^\ell) \leq m$ for sufficiently large $n$.

    As for time complexity, getting all of the pseudorandom bits takes $O(m)$ time, for $r = \Theta(\ell^2)$ rounds.
    This makes the total time complexity simply $\Theta(m \cdot \ell^2)$.

    Finally, for security we use two hybrids: first by replacing the pseudorandom strings and functions with truly random strings and functions, and then finally comparing this to a truly random permutation.
    By the reverse triangle inequality, and \cref{lem:cca-security}, an advantage of $\eps + \frac{1}{2^\ell}$ would suffice to distinguish the underlying PRG with advantage at least $\eps$.
\end{proof}

\begin{lemma}\label{lem:prg-to-pru}
    Let $G$ be an infinitely-often $(\kappa, m, s, \eps)$-PRG against uniform quantum computations that is computable in time $t$ by a deterministic Turing machine. Then for $\ell = o\!\left(\log \frac{m}{\log^2 m}\right)$, there exists an infinitely-often non-adaptive qPCA-secure \[\left(\kappa,\, \ell,\, q,\, s-O(q\cdot m \cdot \ell^2),\, \eps + O\!\left(\frac{q}{\sqrt{2^\ell}}\right) \right)\text{-PRU}\] against uniform quantum computations that can be computed in time $O(t + m \cdot \ell^2)$.
\end{lemma}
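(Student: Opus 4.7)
The plan is to instantiate, from a single call to $G(k)$, the three independent random ingredients required by \cref{lem:non-adaptive-pru-td}: the random in-place permutation $U_p$, the random $\pm 1$ diagonal $F$, and the random Clifford $C$. Partition the $m$-bit output of $G(k)$ into three blocks of sizes $\Theta(\ell^2 \cdot 2^\ell)$, $2^\ell$, and $\Theta(\ell^2)$; the hypothesis $\ell = o(\log(m/\log^2 m))$ is exactly what makes this partition fit inside the PRG's stretch. Feed the first block into the construction from \cref{lem:prg-to-prp} to obtain a keyed permutation $P_k$ on $\ell$ bits, use the second block directly as the diagonal of $F_k$, and use the third block as the random tape of an efficient deterministic classical sampler that outputs a Clifford $C_k$ statistically $2^{-\Omega(\ell)}$-close to uniform over the Clifford group (a standard symplectic-decomposition sampler suffices). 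The candidate PRU is then $U_k \coloneqq U_{P_k} \cdot F_k \cdot C_k$, applied in-place using \cref{fact:universal-circuit} to run each factor from its classical description.

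A subtle point is that \cref{lem:non-adaptive-pru-td} uses the \emph{in-place} permutation $U_p$, while \cref{lem:prg-to-prp} only provides XOR oracles $\calO_{P_k}$ and $\calO_{P_k^{-1}}$. This is precisely why \cref{lem:prg-to-prp} was stated with qCCA security, so that the standard ``forward, copy, uncompute inverse'' trick of \cref{rem:cca-vs-pca} simulates each in-place query to $U_{P_k}$ using two XOR queries. The factors $F_k$ and $C_k$ already act in place, so each adversary query to $U_k$ costs only $O(\ell^2)$ gates plus two calls into the PRP construction.

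Security follows from a short hybrid sequence. In Hybrid~1 we replace the $m$-bit PRG output by a uniformly random $m$-bit string; by PRG security this hop costs distinguishing advantage at most $\eps$ for any time-$s$ quantum adversary, where ``time $s$'' means the full simulation (running the adversary through the construction with $O(q \cdot m \cdot \ell^2)$ overhead) must fit in $s$. In Hybrid~2 we pass to genuinely uniform $p$, $F$, and $C$: $F$ is already exactly uniform; $p$ is within $1/2^\ell$ statistical distance of uniform by \cref{lem:cca-security}; and $C$ is within $2^{-\Omega(\ell)}$ of Haar on the Clifford group by the choice of sampler. In Hybrid~3 we invoke \cref{lem:non-adaptive-pru-td}, which bounds the $q$-parallel-query diamond distance between $(U_p F C)^{\otimes q}$ and $\calU^{\otimes q}$ by $O(q/\sqrt{2^\ell})$. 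Collecting the three hops and absorbing $1/2^\ell$ and $2^{-\Omega(\ell)}$ into the dominant $O(q/\sqrt{2^\ell})$ term yields the claimed overall advantage $\eps + O(q/\sqrt{2^\ell})$.

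The step I expect to require the most care is the Clifford derandomization: because the PRG is secure only against uniform quantum computation, the sampler must be a \emph{deterministic classical} algorithm whose uniform-input output distribution is statistically $o(q/\sqrt{2^\ell})$-close to Haar on the Clifford group, and whose runtime is $\poly(\ell)$; standard symplectic Gram--Schmidt constructions meet these requirements, but the right one needs to be cited explicitly. The remaining time accounting is mechanical: one PRG call costs $t$, the $\Theta(\ell^2)$-round construction underlying \cref{lem:prg-to-prp} costs $O(m \ell^2)$ to precompute, and each of the $q$ queries costs $O(\ell^2)$ additional gates plus two PRP evaluations, giving total overhead $O(t + q \cdot m \cdot \ell^2)$ and leaving security-time $s - O(q \cdot m \cdot \ell^2)$, as in the statement.
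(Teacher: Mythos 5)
Your proposal is correct and follows the same skeleton as the paper for the permutation and phase components: both use \cref{lem:prg-to-prp} together with the qCCA trick of \cref{rem:cca-vs-pca} to realize the in-place permutation $U_p$, derive the $\pm 1$ diagonal from a block of the PRG output (the paper routes this through \cref{lem:prg-to-prf}, you read the truth table off directly --- the same thing), and then conclude via a hybrid argument against \cref{lem:non-adaptive-pru-td}. The one genuine divergence is the random Clifford $C$. The paper does \emph{not} derive it from $G(k)$: it draws $C$ from $O(\ell^2)$ \emph{additional key bits}, so the Clifford is exactly uniform and no derandomization argument is needed; the cost is that the key length silently grows to $\kappa + O(\ell^2)$ (a minor mismatch with the lemma statement as written). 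You instead spend a third $\Theta(\ell^2)$-bit block of the PRG output as the random tape of a deterministic Clifford sampler, which keeps the key length at exactly $\kappa$ but obliges you to (a) exhibit a $\poly(\ell)$-time deterministic sampler whose output on uniform bits is statistically $o(q/\sqrt{2^\ell})$-close to uniform over the Clifford group, and (b) note that the correlation between $C_k$, $P_k$, and $F_k$ in the real world is harmless because your Hybrid~1 replaces the entire $m$-bit PRG output in one shot. Both of these are handled correctly in your write-up (and your single-hybrid treatment of the PRG is arguably cleaner than the paper's two-way split, which it performs only to keep the two derived objects hard on the same input lengths). Neither route is strictly better: the paper's is simpler and avoids the sampler-uniformity claim; yours matches the stated key length and uses no randomness beyond the seed. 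The remaining bookkeeping (two XOR queries per in-place query, $O(q \cdot m \cdot \ell^2)$ simulation overhead, total synthesis time $O(t + m\ell^2)$) agrees with the paper.
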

\begin{proof}
    We split up the output of the PRG to get two different infinitely-often $(\kappa, m/2, s, \eps)$-PRG.
    Note that this ensures that the two PRGs are hard instances at the same time, as opposed to two independent infinitely-often PRGs that may only be hard on different input lengths.
    Use \cref{lem:prg-to-prp} to create a pseudorandom permutation $p$ from one of the PRGs and \cref{lem:prg-to-prf} to create pseudorandom function $f$ using the other PRG.
    Because the output of the original PRG was halved, this only additively decreases the maximum value of $\ell$ (i.e., the number of qubits $p$ and $f$ act on) by at most $1$.
    
    We now need to create a non-adaptive PRU as per \cref{lem:non-adaptive-pru-td}.
    To turn $\calO_p$ into $U_p$, we can use the fact that $\calO_{p^{-1}}$ is also efficiently computable to perform
    \[
        \calO_{p^{-1}} \cdot \mathrm{SWAP}_{AB} \cdot \calO_p \ket{x}_A\ket{0}_B = \calO_{p^{-1}} \cdot \mathrm{SWAP}_{AB} \ket{x}_A{\ket{p(x)}}_B = \calO_{p^{-1}}\ket{p(x)}_A \ket{x}_B = \ket{p(x)}_A\ket{0}_B.
    \]
    Likewise, $f$ can be efficiently made into the diagonal of a matrix by the use of Hadamards.\footnote{This is the standard XOR-to-phase oracle conversion \cite[Lecture 17]{aaronson_qis}}

    By the reverse Triangle inequality, \cref{def:diamond-dist,fact:td-to-tv,lem:prg-to-prf,lem:prg-to-prp,lem:non-adaptive-pru-td}, the allowed advantage is $\eps + \frac{1}{2^\ell} +  O\left(\frac{q}{\sqrt{2^\ell}}\right) = \eps +  O\left(\frac{q}{\sqrt{2^\ell}}\right)$.
    Since the infinitely-often PRP $p$ and infinitely-often PRF $f$ are both hard on the smae instances an infinite number of times, the resulting construction is an infinitely-often PRU.
    
    Sampling a Clifford circuit on $\ell$-qubits takes $O(\ell^2)$ time \cite{berg2021simple}, and creating a phase unitary from an XOR oracle takes $O(m + \ell)$ time given the output of the PRG. However, creating the permutation from \cref{lem:prg-to-prp} takes $O(m \cdot \ell^2)$ time, so querying the PRU takes time $O\left(m \cdot \ell^2\right)$ time given access to the output of the PRG.
    Therefore, the PRU is secure against $s - O\left(q \cdot m \cdot \ell^2\right)$-time adversaries.
    The key length becomes $\kappa^\prime \coloneqq \kappa + O(\ell^2) \leq n^{\alpha^\prime}$ for some $\alpha^\prime \geq \alpha \geq 1$, to account for the sampling of the random Clifford circuit.
\end{proof}

\begin{corollary}\label{cor:conditional-pru}
     Suppose there exists a $\gamma > 0$ such that $\mathsf{PSPACE} \not\subset \mathsf{BQTIME}\left[2^{n^\gamma}\right]$.
    Then, for some choice of constants $\alpha \geq 1$, and $\lambda \in (0, 1/5)$ and sufficiently large $n \in \N$, there exists an infinitely-often non-adaptive qPCA-secure \[\left(\kappa, \ell, q, s, \frac{1}{r} + O\left(\frac{q}{\sqrt{2^\ell}}\right)\right)\text{-PRU}\] against uniform quantum computations where $\kappa \leq n^\alpha$, $r = \lfloor 2^{n^\lambda} \rfloor$, $\ell = o\!\left(\log \frac{r}{\log r}\right)$, $s = 2^{n^{2\lambda}}$, and $m = o\left(\frac{s}{r \cdot \ell^2}\right)$.
    In addition, the PRU lies in $\mathsf{pureUnitaryBQE}_{\exp}$.
\end{corollary}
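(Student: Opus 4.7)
The plan is to mirror the proof of \cref{cor:conditional-prs-1}, but using \cref{lem:prg-to-pru} in place of \cref{lem:prf-to-prs} to build a pseudorandom unitary rather than a pseudorandom state. By \cref{lem:conditional-prg}, the assumption $\PSPACE \not\subset \BQTIME\!\left[2^{n^\gamma}\right]$ already supplies an infinitely-often $(\kappa, r, s, 1/r)$-PRG against uniform quantum computation with $\kappa \leq n^\alpha$, $r = \lfloor 2^{n^\lambda} \rfloor$, and $s = 2^{n^{2\lambda}}$, and moreover this PRG is computable by a deterministic Turing machine in time $O(2^\kappa)$.

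Next, I would feed this PRG directly into \cref{lem:prg-to-pru}. The constraint $\ell = o\!\left(\log\frac{r}{\log^2 r}\right)$ is exactly the hypothesis of \cref{lem:prg-to-pru} applied with stretch $m := r$, so the lemma yields an infinitely-often non-adaptive qCPA-secure $\left(\kappa', \ell, q, s - O(q \cdot r \cdot \ell^2),\, 1/r + O\!\left(q/\sqrt{2^\ell}\right)\right)$-PRU, where $\kappa' = \kappa + O(\ell^2)$ accounts for the extra bits used to sample the random Clifford in the construction. Since $\ell = o(\log r) = o(n^\lambda)$, we still have $\kappa' \leq n^{\alpha'}$ for some $\alpha' \geq \alpha$, which I will absorb into the constant $\alpha$ of the statement. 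The constraint $m = o(s/(r \cdot \ell^2))$ in the statement of the corollary (here $m$ denotes the number of queries, which is called $q$ in \cref{lem:prg-to-pru}) is exactly what is needed for the query-time overhead $O(q \cdot r \cdot \ell^2)$ to be absorbed into $s$, leaving a clean $s$-time-secure PRU.

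Finally, to see that the PRU lies in $\mathsf{pureUnitaryBQE}_{\exp}$, I would note that for a fixed key $k$, synthesizing the unitary $U_k = U_p F C$ (in the notation of \cref{lem:non-adaptive-pru-td}) requires (i) computing the PRG in time $O(2^\kappa)$, (ii) building the pseudorandom permutation and pseudorandom phase function from its output in time $O(r \cdot \ell^2) = 2^{O(n^\lambda)}$, and (iii) applying the sampled Clifford, permutation unitary, and phase oracle, each of which adds at most $\poly(\ell) \cdot 2^{O(\ell)}$ overhead. All of this is $O(2^{\kappa'}) = 2^{O(n^\alpha)}$, which is $2^{O(\kappa)}$ relative to the key length, so the associated state sequence (obtained by replacing the key length with the input length, as in the discussion following \cref{def:prs}) lies in $\mathsf{pureUnitaryBQE}_{\exp}$ after invoking the \nameref{lem:solovay-kitaev} to approximate Clifford gates in the default gate set to inverse-exponential precision.

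The main obstacle, as in \cref{cor:conditional-prs-1}, is simply parameter bookkeeping: one must check that the security loss $O(q \cdot r \cdot \ell^2)$ in \cref{lem:prg-to-pru} does not swamp the PRG security $s = 2^{n^{2\lambda}}$, and that the advantage $1/r + O(q/\sqrt{2^\ell})$ has the claimed form. The former follows because we restrict to $q = o(s/(r \ell^2))$, and the latter is immediate from the statement of \cref{lem:prg-to-pru}. No genuinely new technical content is needed beyond stringing the three lemmas together.
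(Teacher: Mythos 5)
Your proposal is correct and matches the paper's proof, which is literally the one-line "Combine \cref{lem:conditional-prg,lem:prg-to-pru}"; your version just spells out the parameter bookkeeping (absorbing the $O(q \cdot r \cdot \ell^2)$ simulation overhead into $s$, the key-length increase from sampling the Clifford, and the $O(2^\kappa)$ synthesis time) that the paper leaves implicit.
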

\begin{proof}
    Combine \cref{lem:conditional-prg,lem:prg-to-pru}.
\end{proof}

\subsection{Unitary Distinguishing Implies Non-Uniform Unitary Synthesis Lower Bounds}

Informally, define $(m, t, \eps)$-distinguishing of unitaries from Haar random unitaries to be analogous to distinguishing states from Haar random states (\cref{def:distinguish}).
For our purposes, the distinguisher will not have access to the inverse unitary as we do not have proven security for this learning model using \cref{lem:non-adaptive-pru-td}.

\begin{lemma}\label{lem:pru-and-learning-to-lowerbound}
    For arbitrary fixed $f : \N \rightarrow \R^+$ and $\delta : \N \rightarrow  [0, 1]$, let $\mathfrak{C}$ be a circuit class that is closed under restrictions and define $\calC_\ell$ to be the set of pure states on $\ell$ qubits that can be constructed by $\mathfrak{C}\left[f(\ell)\right]$.
    Assume the existence of an infinitely-often (resp. non-adaptive) $(\kappa, \ell, q, s, \eps)$-PRU against uniform quantum computations that can be computed in time $t$.
    If the concept class $\calC \coloneqq \bigcup_{\ell \geq 1} \calC_\ell$ is $\left(q \circ \ell^{-1},\,s \circ \ell^{-1},\,\eps \circ \ell^{-1} + m \circ \ell^{-1} \cdot \delta\right)$-distinguishable (resp. using non-adaptive queries) then \[\mathsf{pureUnitaryBQTIME}\left[t \circ \kappa^{-1} \right]_{\exp} \not\subset \mathsf{pureUnitary}\mathfrak{C}\left[f \circ \ell \circ \kappa^{-1} \right]_\delta.\]
\end{lemma}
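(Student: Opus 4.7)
The plan is to mirror the proof of \cref{lem:prs-and-learning-to-lowerbound}, replacing trace distance by \nameref{def:diamond-dist} and substituting \cref{lem:dia-multi-copy} for \cref{cor:td-multi-copy}. The structure of the argument is: (i) the PRU itself realizes a unitary sequence in $\mathsf{pureUnitaryBQTIME}\!\left[t\circ\kappa^{-1}\right]_{\exp}$; (ii) if additionally this sequence were realized in $\mathsf{pureUnitary}\mathfrak{C}\!\left[f\circ\ell\circ\kappa^{-1}\right]_\delta$, we could feed a $\mathfrak{C}$-circuit approximation of $U_k$ into the assumed distinguisher for $\calC$ and recover a PRU-distinguisher that beats the security bound, contradicting \cref{def:pru}.

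First, I would verify the upper containment. By hypothesis the PRU family $\bigl(\{U_k\}_{k\in\{0,1\}^\kappa}\bigr)_{n\in\N}$ is synthesized in time $t(n)$ relative to the security parameter; reindexing the state sequence by key length (as in the paragraph following \cref{def:prs}, now adapted to unitaries in the spirit of \cref{def:uniform-unitary-synthesis}) gives a uniform family running in time $t\circ\kappa^{-1}$ relative to the input length, and the synthesis can be made exact up to any inverse-exponential diamond error through the \nameref{lem:solovay-kitaev} (hence the ``$\exp$'' subscript).

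Second, for contradiction assume the PRU also lies in $\mathsf{pureUnitary}\mathfrak{C}\!\left[f\circ\ell\circ\kappa^{-1}\right]_\delta$. Then for each $k$ there is a $\mathfrak{C}$-circuit $\hat U_k$ of size $O(f(\ell))$ with $\diamonddistance{U_k,\hat U_k}\le\delta$, and by closure of $\mathfrak{C}$ under restriction each $\hat U_k$ lies in $\calC_\ell$. Using the assumed $\left(q\circ\ell^{-1},\,s\circ\ell^{-1},\,\eps\circ\ell^{-1}+q\circ\ell^{-1}\cdot\delta\right)$-distinguisher for $\calC$ (I read the $m$ in the statement as a typo for $q$), I build a PRU attacker as follows: it queries its oracle (either $U_k$ or Haar random $\calU$) in the same non-adaptive pattern that the $\calC$-distinguisher would query $\hat U_k$, then runs the remainder of the $\calC$-distinguisher on the resulting state. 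Non-adaptive queries fold into a single query to $U_k^{\otimes q}$, so \cref{lem:dia-multi-copy} yields $\diamonddistance{U_k^{\otimes q},\hat U_k^{\otimes q}}\le q\delta$, and \cref{fact:td-to-tv} bounds the induced change in the distinguisher's accept probability by $q\delta$. Applying this on both branches and using the reverse triangle inequality, the attacker distinguishes $U_k$ from $\calU$ with advantage at least $\eps$, contradicting the $\eps$-security of the PRU. The ``non-adaptive'' parenthetical handles the case where only non-adaptive PRU security is available (e.g., \cref{cor:conditional-pru}); when adaptive security is assumed, the same conclusion follows via a standard hybrid over the $q$ query rounds giving the same $q\delta$ bound.

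Third, I would carefully align the parameter rescalings, which is where most of the bookkeeping lives. The PRU's security parameter is $n$, the key length is $\kappa(n)$, and the number of output qubits is $\ell(n)$; the $\calC$-distinguisher is parameterized by $\ell$, so on PRU instances of security parameter $n$ it uses $q\circ\ell^{-1}(\ell(n))=q(n)$ queries in time $s\circ\ell^{-1}(\ell(n))=s(n)$, and the synthesis circuits for $\calC_\ell$ have size $f(\ell(n))=f\circ\ell\circ\kappa^{-1}(\kappa(n))$, matching the class on the right-hand side exactly. The only subtle point, and the one worth double-checking, is the $q\cdot\delta$ (rather than $\sqrt{q}\cdot\delta$) in the advantage: because we are approximating \emph{channels} rather than states, \cref{lem:dia-multi-copy} gives the linear-in-$q$ rather than square-root-in-$q$ degradation, which is the appropriate analogue of \cref{lem:robust-distinguish} in the diamond-distance world. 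No other step is genuinely technical; the obstacle, such as it is, amounts to keeping the reparameterizations $\kappa,\ell,\kappa^{-1},\ell^{-1}$ straight and recognizing that non-adaptive query access to an approximated unitary inherits a clean diamond-subadditivity bound.
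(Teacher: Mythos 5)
Your proposal is correct and follows essentially the same route as the paper: the paper's proof is a short sketch stating that the argument of \cref{lem:prs-and-learning-to-lowerbound} carries over verbatim once \cref{cor:td-multi-copy} is replaced by \cref{lem:dia-multi-copy}, with the resulting linear-in-$q$ (rather than $\sqrt{q}$) degradation absorbed by the inverse-exponential $\delta$. Your more detailed bookkeeping of the $\kappa,\ell$ reparameterizations and your reading of the $m$ in the advantage term as $q$ are both consistent with the paper's intent.
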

\begin{proof}[Proof Sketch]
    The proof follows the same way as in \cref{lem:prs-and-learning-to-lowerbound}, except using \cref{lem:dia-multi-copy} in place of \cref{cor:td-multi-copy}.
    This change from $\sqrt{m}$ to simply $m$ does not affect the proof as it is still killed by the $\delta = \exp(-p)$ term for arbitrary polynomial $p$.
\end{proof}

\begin{theorem}[Formal Statement of \cref{thm:main-unitary-informal}]\label{thm:main-unitary}
    Let $\mathfrak{C}$ be a circuit class that is closed under restrictions.
    Define $\calC_\ell$ to be the set of unitaries on $\ell$ qubits that can be exactly constructed by $\mathfrak{C}[\poly(\ell)]$.
    If there exists a fixed constant $c \geq 2$ such that $\calC \coloneqq \bigcup_{\ell \geq 1} \calC_\ell$ is $\left(m, t, \eps\right)$-distinguishable for $m  \leq 2^{\ell^{0.99}}$, $t \leq O\left(2^{\ell^c}\right)$, and $\eps \geq \omega\left(\frac{1}{2^{\ell^{0.99}}}\right)$ by a non-adaptive algorithm without access to the inverse, then at least one of the following is true:
    \begin{itemize}
        \item for every $k \geq 1$: $\mathsf{pureUnitaryBQSUBEXP}_{\exp} \not\subset \mathsf{pureUnitaryBQSIZE}[n^k]_{0.49}$,
        \item $\mathsf{pureUnitaryBQE}_{\exp} \not\subset \mathsf{pureUnitary}\mathfrak{C}_{\exp}$.
    \end{itemize}
\end{theorem}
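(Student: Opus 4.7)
The strategy will mirror the proof of \cref{thm:main} and \cref{cor:main2}, replacing each state-synthesis ingredient with its unitary-synthesis analogue. In particular, I will run a win-win argument on the relationship between $\mathsf{pureUnitaryPSPACE}_{\exp}$ and $\mathsf{pureUnitaryBQSUBEXP}_{\exp}$, and in each case derive one of the two bullets.

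First, suppose $\mathsf{pureUnitaryPSPACE}_{\exp} \subseteq \mathsf{pureUnitaryBQSUBEXP}_{\exp}$. Then \cref{thm:pspace-diagonalization-unitary} already tells us that for every $k \geq 1$, there is a unitary sequence in $\mathsf{pureUnitaryPSPACE}_0$ which is not in $\mathsf{pureUnitaryBQSIZE}[n^k]_{0.98}$; the inclusion hypothesis drops this sequence into $\mathsf{pureUnitaryBQSUBEXP}_{\exp}$. Since separation at diamond distance $0.98$ is strictly stronger than at $0.49$, the first bullet of the theorem follows without needing to invoke the distinguishing assumption at all.

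Now suppose instead $\mathsf{pureUnitaryPSPACE}_{\exp} \not\subset \mathsf{pureUnitaryBQSUBEXP}_{\exp}$. Because $\mathsf{pureUnitaryPSPACE} \subseteq \mathsf{unitaryBQP/poly}$ by \cref{lem:num-circuits} (unitary circuits of polynomial space have polynomial-size descriptions), I can invoke \cref{lem:unitary-implies-decision} together with the standard robustness facts to conclude $\mathsf{PSPACE} \not\subset \mathsf{BQSUBEXP}$, i.e.\ there exists $\gamma > 0$ with $\mathsf{PSPACE} \not\subset \mathsf{BQTIME}[2^{n^\gamma}]$. This is precisely the hypothesis required by \cref{cor:conditional-pru}, giving an infinitely-often non-adaptive $(\kappa, \ell, q, s, \tfrac{1}{r} + O(q/\sqrt{2^\ell}))$-PRU against uniform quantum computations, computable as a unitary sequence in $\mathsf{pureUnitaryBQE}_{\exp}$, with $\kappa \leq n^\alpha$, $r = \lfloor 2^{n^\lambda}\rfloor$, $s = 2^{n^{2\lambda}}$ and $\ell$ chosen as a small function of $n$.

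It then remains to instantiate \cref{lem:pru-and-learning-to-lowerbound} by matching parameters between the assumed non-adaptive distinguisher (which operates relative to the number of qubits $\ell$) and the PRU (whose parameters live on the security parameter $n$). Following the bookkeeping of \cref{thm:main}, I will set $\ell = \lfloor \log_2 r \rfloor^{2/c}$ so that $\ell^c \approx n^{2\lambda}$, and set the allowed query bound of the PRU to $q = 2^{n^{1.98\lambda/c}}$. A direct calculation — essentially the one in \cref{thm:main} — then yields $m(\ell) \leq q$, $t(\ell) \leq s$, and, crucially, $\eps(\ell) \geq \omega(2^{-\ell^{0.99}})$ dominates both $\tfrac{1}{r}$ and $O(q/\sqrt{2^\ell})$, because the exponent $n^{1.98\lambda/c} - n^{2\lambda/c}/2$ in the latter tends to $-\infty$ (using $1.98 < 2$) and hence kills the approximate-Haar error from \cref{lem:non-adaptive-pru-td}. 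Plugging these choices into \cref{lem:pru-and-learning-to-lowerbound} produces a unitary sequence in $\mathsf{pureUnitaryBQTIME}[t \circ \kappa^{-1}]_{\exp} \subseteq \mathsf{pureUnitaryBQE}_{\exp}$ that is not in $\mathsf{pureUnitary}\mathfrak{C}[\poly(n)]_{\exp(-p)}$ for an arbitrary polynomial $p$; taking the intersection over $p$ gives the second bullet. The main obstacle is this parameter-matching step: one must verify that the non-adaptive-PRU error term $O(q/\sqrt{2^\ell})$ from \cref{cor:conditional-pru} (which is absent in the state case of \cref{thm:main}) can indeed be absorbed by the $\omega(2^{-\ell^{0.99}})$ distinguishing advantage, which is why the hypothesis is stated with an $\omega$ rather than the cleaner $\Omega$ appearing in \cref{cor:main2}.
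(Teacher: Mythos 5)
Your overall architecture—a win-win argument, unitary diagonalization in one branch, and the conditional PRU fed into \cref{lem:pru-and-learning-to-lowerbound} in the other, with the extra $O\!\left(q/\sqrt{2^\ell}\right)$ error term absorbed by the $\omega\!\left(2^{-\ell^{0.99}}\right)$ advantage—is exactly the paper's argument, and your parameter bookkeeping in the second branch is more explicit than the paper's sketch. However, one step fails as written: the claim that $\mathsf{pureUnitaryPSPACE} \subseteq \mathsf{unitaryBQP/poly}$ ``by \cref{lem:num-circuits}.'' That lemma converts a \emph{size} bound into a description-length bound; a polynomial-\emph{space} uniform circuit family can still have size, and hence description length, as large as $2^{\poly(n)}$. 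Consequently, in your non-inclusion branch you are not entitled to invoke \cref{lem:unitary-implies-decision}, whose hypothesis is precisely $\mathsf{unitaryA}_\delta \subset \mathsf{unitaryBQP/poly}_\delta$: the witness to $\mathsf{pureUnitaryPSPACE}_{\exp} \not\subset \mathsf{pureUnitaryBQSUBEXP}_{\exp}$ might only be synthesizable by exponential-size circuits, in which case the bit-by-bit description-learning reduction underlying that lemma does not run in subexponential time, and you cannot conclude $\mathsf{PSPACE} \not\subset \mathsf{BQSUBEXP}$.

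The repair is exactly the reason the paper introduces the $\mathsf{SIZE}$-restricted class in \cref{def:pspacesize}: run the dichotomy on $\mathsf{pureUnitaryPSPACESIZE}_{\exp}$ (the intersection with $\mathsf{pureUnitaryBQP/poly}_{\exp}$) versus $\mathsf{pureUnitaryBQSUBEXP}_{\exp}$. The inclusion branch still yields the first bullet, because the diagonalizing sequence of \cref{thm:pspace-diagonalization-unitary} is given by an explicit polynomial-size unitary $V_j$ and therefore lies in the $\mathsf{SIZE}$-restricted class; the non-inclusion branch now satisfies the hypothesis of \cref{lem:unitary-implies-decision} by construction, and the rest of your argument goes through. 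One further small caution: \cref{cor:conditional-pru} requires $\ell = o\!\left(\log \frac{r}{\log r}\right)$, which is more restrictive than the constraint $\ell \leq \lfloor \log_2 r \rfloor$ in the PRS case, so your choice $\ell = \lfloor \log_2 r \rfloor^{2/c}$ is only admissible for $c > 2$ and the exponent must be shaved slightly at $c = 2$.
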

\begin{proof}[Proof Sketch]
    We instantiate the win-win argument in the same way as \cref{thm:main,cor:main2}, but replacing the PRS with the PRU from \cref{cor:conditional-pru} and with the cases based on the relationship between $\mathsf{pureUnitaryPSPACESIZE}_{\exp}$ and $\mathsf{pureUnitaryBQSUBEXP}_{\exp}$.
    When $\mathsf{pureUnitaryPSPACESIZE}_{\exp} \subset \mathsf{pureUnitaryBQSUBEXP}_{\exp}$ then we apply \cref{thm:pspace-diagonalization-unitary}.
    In contrast, when $\mathsf{pureUnitaryPSPACESIZE}_{\exp} \not\subset \mathsf{pureUnitaryBQSUBEXP}_{\exp}$ we use \cref{lem:unitary-implies-decision} to show that $\mathsf{PSPACE} \not\subset \mathsf{BQSUBEXP}$.
    This allows us to use \cref{cor:conditional-pru,lem:pru-and-learning-to-lowerbound} to complete the proof using a similar analysis to \cref{thm:main,cor:main2}
\end{proof}

\begin{remark}
Distinguishing copies of a fixed quantum state from Haar random actually follows as a special case of algorithms that distinguish unitaries from Haar random.
This makes distinguishing with unitary query access strictly easier, even when restricted to non-adaptivity and no access to the inverse.
It intuitively follows that the resulting separation is weaker, as state synthesis separations imply unitary synthesis separations.  
\end{remark}

\begin{remark}
If \cref{lem:non-adaptive-pru-td} were replaced with a sufficiently efficient and secure PRU construction that allowed adaptivity then the distinguishing algorithm in \cref{thm:main-unitary} would also be allowed adaptivity.
Likewise, if the PRU was made secure against access to the inverse unitary then the learning algorithms in \cref{thm:main-unitary} would also be allowed access to the inverse unitary.
\end{remark}

\begin{remark}
We would ideally like to generalize \cref{thm:main-unitary} to a version like \cref{cor:main} that deals with \emph{learning}, as in quantum process tomography, rather than distinguishing.
A na\"ive approach, assuming access to the inverse, would be to replace the SWAP test in \cref{lem:learning-to-distinguishing} with the identity property test \cite{montanaro2016survey}.
This simply involves measuring the Choi state of $\hat{C} \cdot C^{\dagger}$ in the Bell basis and seeing if it returns the canonical Bell state.
As noted by \cite[Section 5.1.1]{montanaro2016survey}, identity testing to even a constant distance in operator norm requires $\Omega\!\left(\sqrt{2^n}\right)$ queries to the unknown quantum circuit.
Instead, these tests work well in an average-case sense, such as when \[d_{\text{avg}}(U, V) \coloneqq \Ex_{\ket {\psi} \sim \mu_{\text{Haar}}}\left[\tracedistance{U \ket{\psi}, V \ket{\psi}}\right]\] is small.
Luckily, this soundness lowerbound does not directly apply, as we only need to have soundness against Haar random unitaries, which is easier than dealing with adversarially chosen unitaries.
We also need our test to be \emph{tolerant}, such that when $\diamonddistance{U, V}$ (or $d_{\text{avg}}$) is small, rather than just $0$, then the test accepts with high probability, giving us the ability to distinguish the two cases.
We leave this analysis (and/or analysis of other candidate tests and learning-to-decision reductions) as an open question for future work.

We note that learning with respect to the uniform distribution over computational basis state is the exact setting of \cite[Theorem 3.1]{arunachalam2022quantum} and uses natural properties to distinguish the unitary from a unitary implementing the XOR query of a random Boolean function.
\cite{zhao2023learning} likewise uses learning with respect to $d_{\text{avg}}$ to distinguish against random Boolean functions.
Both results are (to the authors' knowledge) not comparable to distinguishing against Haar random unitaries.
\end{remark}

\begin{remark}\label{remark:unitary-decision}
    Because of the additional complexities of constructing the PRU/PRP in \cref{cor:conditional-pru}, it is not clear how to get a version of \cref{thm:decision-lower} for unitary distinguishing by building a PRU.
    Specifically, due to the recursive nature of the constructions in \cite{ristenpart2013mix,morris2014sometimes}, na\"ively implementing the PRP would increase the circuit depth by an additive factor of $\Theta(n^2)$.
    This would seem to disqualify all of $\mathsf{QAC}_f$ and any other shallow-depth circuit classes for instance.
    Ideally, we would have have been able to use a simpler construction of pseudorandom permutations, such as the Ruby-Lackoff constructions based on balanced Feistel networks with constant rounds, but there is currently no known proof of security against quantum adversaries with access to the in-place permutation oracle for these constructions.
    While it was shown that 4-round Ruby-Lackoff is qCPA-secure with the XOR oracle\,\cite{Hosoyamada20194RoundLC}, 4-rounds was also shown to not be qCCA-secure \cite{ito2019chosen} (see \cref{rem:cca-vs-pca} for why this matters).\footnote{Based on \cite[Footnote 3]{unruh2023towards} there is reason to believe that even the qCPA-security of 4-round Ruby-Lackoff is not clear.}
    Because Clifford circuits fall under $\mathsf{QAC}^0_f$ \cite{aaronson2004simulation,rosenthal2023efficient}, if $O(1)$-round Ruby-Lackoff \emph{was} shown to be provably qCCA-secure then the depth increase from implementing the PRU would only be a constant and allow us to make a similar statement to \cref{thm:decision-lower} for process tomography.
    Additionally, as we learn more about efficient construction of PRUs \cite{lu2023quantum,metger2024pseudorandom,chen2024efficient}, the need for a pseudorandom permutation might be removed altogether.

    An alternative approach to showing circuit lower bounds for decision problems via learning algorithms would be to generalize the learning-to-distinguisher in \cite[Theorem 19]{zhao2023learning} to adversaries with sub-exponential number of queries allowed, as the stated result only applies to adversaries with a polynomial number of allowed queries.
    By substituting this reduction in place of \cite[Theorem 3.1]{arunachalam2022quantum}, one would recover a result showing that learning with respect to $d_{\text{avg}}$ (i.e., squared loss under the Haar measure, as opposed to uniform over computational basis states like in \cite{arunachalam2022quantum}) implies circuit lower bounds for decision problems.
\end{remark}

\section{Approximating Trace Distance in Polynomial Space}\label{apx:td}

We combine the ideas behind the proof sketch of \cite[Corollary 10 and Proposition 11]{watrous2002quantum} with the proof that $\mathsf{BQP} \subseteq \PSPACE$ \cite{bernstein1997quantum} to show that the trace distance of states produced by $\poly(n)$ size general quantum circuits can be computed using only $\poly(n)$ space by a deterministic Turing machine (i.e., \cref{lem:pspace-compute-td}).

To start, we show the folklore result(s) that, when represented as a matrix, any state $\rho$ that is produced by a general quantum circuit of size $\poly(n)$ can have its entries be approximated to arbitrary inverse exponential precision in $\poly(n)$ space \cite[Section 4.5.5]{nielsen2002quantum}.\footnote{In the language of \cite[Definition 5.1]{rosenthal2023efficient}, these would be considered $\mathsf{polyL}\text{-explicit}$.}
To do so, we first start with applying the unitary (\cref{lem:pspace-compute-amplitude}), then tracing out the ancilla qubits (\cref{cor:pspace-compute-reduced}).

For ease of notation, we will say that a value can be computed in $\poly(n)$ space if the value can be approximated to arbitrary $\exp(-\poly(n))$ accuracy in $\poly(n)$ space.
As long as the values are bounded in $[-1, 1]$ and there are not a $\omega(\exp(\poly(n)))$ many arithmetic operations, then the triangle inequality and Cauchy-Schwarz ensures us that we can act as if there are no errors, by simply decreasing the error of each value accordingly.

\begin{lemma}[Folklore]\label{lem:pspace-compute-amplitude}
    Given \emph{unitary} quantum circuit $C$ of size $s \leq 2^{\poly(n)}$ and space $m = \poly(n)$ and a quantum state $\rho$ on at most $m$ qubits whose entries (i.e., $\rho_{ij} \coloneqq \braket{i | \rho | j})$ can computed in $\poly(n)$ space, the entries of $C\rho C^\dagger$ can also be computed in $\poly(n)$ space.
\end{lemma}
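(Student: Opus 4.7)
The plan is to write the desired entry as an explicit double sum and then compute it by nested enumeration, using a divide-and-conquer decomposition of $C$ to keep the space cost logarithmic in $s$.

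First, expand
\[
    \bigl(C \rho C^\dagger\bigr)_{ij} \;=\; \sum_{k, l \in \{0,1\}^m} C_{ik}\, \rho_{kl}\, \overline{C_{jl}}.
\]
Although this sum has $4^m = \exp(\poly(n))$ terms, iterating over $(k,l)$ requires only $2m = \poly(n)$ bits of counter plus one accumulator, which can be kept at $\exp(-\poly(n))$ precision in $\poly(n)$ space. By hypothesis, each $\rho_{kl}$ is computable in $\poly(n)$ space, so the remaining obstacle is to compute $C_{ik}$ (and symmetrically $\overline{C_{jl}}$) in $\poly(n)$ space despite $s$ being as large as $2^{\poly(n)}$.

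For this, write $C = G_s G_{s-1} \cdots G_1$ and, for $1 \le a \le b \le s$, define $M(a,b) \coloneqq G_b G_{b-1} \cdots G_a$, so $C = M(1,s)$ and each $M(a,b)$ is a $2^m \times 2^m$ unitary. Use the divide-and-conquer identity
\[
    M(a,b)_{ij} \;=\; \sum_{k \in \{0,1\}^m} M(c+1, b)_{ik} \cdot M(a, c)_{kj}, \qquad c \coloneqq \lfloor (a+b)/2 \rfloor,
\]
recursively, with base case $M(a,a) = G_a$, whose entries are fixed algebraic constants and therefore computable to $\exp(-\poly(n))$ precision in $\poly(n)$ space. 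The recursion tree has depth $O(\log s) = \poly(n)$; at each level we store only the endpoints $(a,b)$ (which take $O(\log s)$ bits), the loop variable $k$ (which takes $m$ bits), and a bounded-precision accumulator. Reusing the same workspace across sibling recursive calls, the total space used is $O\bigl((m + \log s + p(n)) \cdot \log s\bigr) = \poly(n)$ for some polynomial $p$ governing the precision.

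The main technical bookkeeping, which I expect to be the one subtle point, is to verify that the target $\exp(-\poly(n))$ accuracy for $(C\rho C^\dagger)_{ij}$ survives the nested approximations. This is where unitarity helps: each $M(a,b)$ is a product of unitaries, so $|M(a,b)_{ij}| \le 1$ uniformly; thus at every level of the recursion each of the $2^m$ summands has magnitude at most $1$, and errors grow by at most a multiplicative factor $\exp(\poly(n))$ across the $O(\log s) = \poly(n)$ levels and at most one further such factor from the outer sum over $(k,l)$. Starting the base case and the $\rho_{kl}$ computations at a sufficiently fine $\exp(-q(n))$ precision for a suitably large polynomial $q$ therefore yields the required final accuracy within $\poly(n)$ space, completing the argument.
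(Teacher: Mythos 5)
Your proposal is correct, and it takes a genuinely different route from the paper. The paper uses the Feynman path-sum familiar from $\mathsf{BQP} \subseteq \PSPACE$: it inserts resolutions of the identity between consecutive elementary gates and writes $\braket{i|C\rho C^\dagger|j}$ as a sum over all sequences $y_1,\dots,y_{2s}$ of intermediate basis states, each term being a product of single-gate matrix elements from $\{0,\pm\tfrac{1}{\sqrt 2},1,i\}$ times one entry of $\rho$. You instead expand only the outer sum $\sum_{k,l} C_{ik}\rho_{kl}\overline{C_{jl}}$ and compute each entry $C_{ik}$ by a balanced divide-and-conquer recursion on the gate sequence, with depth $O(\log s)$ and workspace reuse across sibling calls. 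The two approaches buy different things. The path sum is conceptually simpler and each term is computed exactly (up to the final truncation), but enumerating paths requires a counter of $\Theta(ms)$ bits, so as literally written it only fits in $\poly(n)$ space when $s = \poly(n)$; for the full regime $s \leq 2^{\poly(n)}$ claimed in the lemma statement, one needs precisely the kind of $O(\log s)$-depth recursive matrix-product evaluation you give (or an equivalent $\mathsf{NC}$-style iterated-product argument). Your error bookkeeping is also sound: unitarity gives $|M(a,b)_{ij}|\le 1$, so the error inflates by at most $\exp(\poly(n))$ per level over $\poly(n)$ levels, which a sufficiently fine $\exp(-q(n))$ starting precision absorbs. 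In short, your argument is, if anything, the more careful one for the parameter range the lemma actually asserts.
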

\begin{proof}
    The proof idea is similar to the one showing that $\mathsf{BQP} \subseteq \PSPACE$.
    Let $C$ be broken up into its elementary gates as $C\coloneqq C_1  C_2 \dots C_s$.
    Since $I = \sum_{y \in \{0, 1\}^m} \ketbra{y}{y}$,
    we can rewrite the output expression as:
    \begin{align*}
        \braket{i | C \rho C^\dagger | j} &= \braket{i | C_1 C_2 \dots C_s \rho C_s^\dagger \dots C_2^\dagger C_1^\dagger| j}\\
        &= \sum_{y_1, y_2, \dots, y_{2s} \in \{0, 1\}^m}\braket{i | C_1 | y_1}\braket{y_1 | C_2 | y_2} \dots \braket{y_{s-1} | C_s | y_s}\braket{y_s | \rho | y_{s+1}} \dots \braket{y_{2s} | C_1^\dagger | j}.
    \end{align*}
    Since each $C_i \in \{H, T, \mathrm{CNOT}\}$, each $\braket{y_{i} | C_{i+1} | y_{i+1}}$ must lie in the set $\{0, \pm \frac{1}{\sqrt{2}}, 1, i\}$.
    Therefore, each $\braket{y_{i} | C_{i+1} | y_{i+1}}$ can be computed exactly using $\poly(m)$ space.\footnote{If we weren't using the $\{H, \mathrm{CNOT}, T\}$ gate set, then we can just use the \nameref{lem:solovay-kitaev} to approximate it with the $\{H, \mathrm{CNOT}, T\}$ gate set to sufficient error. This will not affect the amount of space used by more than a polynomial.}
    Furthermore, we observe that each term in the above summation is the product of $2s$ many $\braket{y_{i} | C_{i+1} | y_{i+1}}$ multiplied by $\braket{y_s | \rho | y_{s+1}}$, such that the product either has the form $\braket{y_s | \rho | y_{s+1}} \frac{i^\ell}{\sqrt{2}^k}$ for $\ell \in \{0, 1, 2, 3\}$ and $k \in \{0, \dots, 2s\}$ or is just zero.
    Either way, it can be computed using $\poly(n, m, \log s)$ space.
    It follows that the entire summation can be approximated to accuracy $\exp(-k)$ in $\poly(n, m, \log s, k) = \poly(n, k)$ space by approximating up to $\poly(k)$-bits of precision.
\end{proof}

\begin{corollary}[Folklore]\label{cor:pspace-compute-reduced}
    Given a \emph{unitary} quantum circuit $C$ of size $s \leq 2^{\poly(n)}$ and space $m = \poly(n)$ and let $\rho$ be the output of $C$. The $(i, j)$-th entry of $\rho$ (i.e., $\rho_{ij} \coloneqq \braket{i | \rho | j}$) can be computed in $\poly(n)$ space.
\end{corollary}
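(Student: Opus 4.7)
The plan is to first reduce to \cref{lem:pspace-compute-amplitude} by picking the input state, and then handle the partial trace over ancillas via an explicit summation carried out with a running-sum accumulator.

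First, I would apply \cref{lem:pspace-compute-amplitude} with input state $\rho_0 = \ketbra{0^m}{0^m}$. The entries $\braket{x|\rho_0|y}$ are all in $\{0,1\}$ and are trivially computable in $O(m) = \poly(n)$ space (just compare $x$ and $y$ to the all-zeros string). Invoking the lemma, every entry $\braket{x | C\rho_0 C^\dagger | y}$ of the full-space pure state $\ketbra{\psi}{\psi} \coloneqq C\ketbra{0^m}{0^m}C^\dagger$ is computable to arbitrary $\exp(-\poly(n))$ accuracy in $\poly(n)$ space.

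Next, I would handle the partial trace that turns the $m$-qubit pure state $\ketbra{\psi}{\psi}$ into the ``output'' state $\rho$ on $n \le m$ qubits. Writing the register partition as $A$ (the $n$ output qubits) and $B$ (the at most $m - n \le m$ qubits traced out at the end), we have
\[
    \rho_{ij} \;=\; \braket{i|\rho|j} \;=\; \sum_{k \in \{0,1\}^{|B|}} \braket{i,k | \ketbra{\psi}{\psi} | j,k}.
\]
This sum has at most $2^m \le 2^{\poly(n)}$ terms. I would implement a deterministic Turing machine that iterates $k$ through $\{0,1\}^{|B|}$ in lexicographic order using a $\poly(n)$-bit counter, invokes the subroutine from the previous paragraph to compute each summand to accuracy $\exp(-\poly(n))$, and maintains a running sum in a fixed-precision register of $\poly(n)$ bits. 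Reusing the workspace of the subroutine between iterations keeps the total space $\poly(n)$.

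For error control, note that each individual summand has absolute value at most $1$, and the final answer $\rho_{ij}$ lies in the unit disk as well. To achieve a final additive error of $\exp(-k)$ for target precision parameter $k$, I would ask the subroutine to compute each summand to precision $\exp(-k) / 2^{m+1}$, which still only requires $\poly(n,k)$ bits of precision and hence $\poly(n,k)$ space; by the triangle inequality, summing at most $2^m$ such terms accumulates at most $\exp(-k)/2$ error. I expect no genuine obstacle here: the only thing to be careful about is that the $\exp(-\poly(n))$ bound in \cref{lem:pspace-compute-amplitude} applies for any polynomially bounded target precision, so the above error budget is indeed within scope. This shows $\rho_{ij}$ is computable in $\poly(n)$ space, as desired.
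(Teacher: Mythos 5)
Your proposal is correct and follows essentially the same route as the paper: apply \cref{lem:pspace-compute-amplitude} to the all-zeros input, then write the partial trace as an $\exp(\poly(n))$-term sum of computable entries. Your version is slightly more explicit about the running-sum space accounting and the per-summand error budget, which the paper leaves implicit, but the argument is the same.
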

\begin{proof}
    Any unitary circuit can be decomposed into the unitary stage and then the tracing out stage at the end.
    Since it is clear that the entries of our starting state, $\ketbra{0\dots0}{0 \dots 0}$ can be computed in $\poly(n)$ space, it follows from \cref{lem:pspace-compute-amplitude} that after applying the unitary to the all-zeros state, the new quantum state's entries can be computed in $\poly(n)$ space.
    We then need to figure out the effect of tracing out at most $\poly(n)$-qubits.

    We can assume WLOG that we trace out the final qubit(s) because we use SWAP gates at the end of the unitary to move the qubits that will be traced out to the end.
    If we are tracing out $k = \poly(n)$ qubits, we can then express the new entries to compute as:
    \begin{align*}
        \braket{i| C \rho C^\dagger | j}
        &= \braket{i |\left(\sum_{x \in \{0, 1\}} \left(I^{\otimes m-k} \otimes \bra{x}\right) \rho  \left(I^{\otimes m-k} \otimes \ket{x}\right) \right)| j}\\
        &= \sum_{x \in \{0, 1\}^k} \braket{i, x | \rho |j, x},\\
    \end{align*}
    which is just the sum of $\exp(\poly(n))$-many things that we can compute in $\poly(n)$ space.
\end{proof}

We now move onto the problem of approximating trace distance.
To compute the trace distance between these two states, we need to find the sum of the absolute value of its eigenvalues.
The following two results will allow us to find the roots of the characteristic polynomial (i.e., the eigenvalues) of $\rho - \sigma$.

\begin{lemma}[\cite{Ben-Or1998fast,neff1994specified}]\label{lem:compute-real-roots}
    Given a polynomial $p(z)$ of degree $d$ with $m$-bit coefficients and an integer $\mu$, the problem of determining all its roots with error less than $2^{-\mu}$ is considered.
    It is shown that this problem can be solved by a 
    $\polylog\left( d + m + \mu\right)$-space-uniform \emph{Boolean} circuit of depth at most $\polylog\left(d + m + \mu\right)$ on $\poly\log(d + m + \mu)$ many bits if $p(z)$ has all real roots.
\end{lemma}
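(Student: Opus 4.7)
The plan is to combine Sturm's theorem for real-root isolation with a quadratically-convergent iterative refinement, executing both via fast parallel arithmetic so that everything fits into $\polylog(d+m+\mu)$ depth and space. First, I would compute a Cauchy-type root bound $R$ (easily obtained from the coefficients, with $\log R = O(m + \log d)$) so that every root lies in $[-R,R]$. Next, I would compute the Sturm polynomial remainder sequence $p_0 = p, p_1 = p', p_2, \ldots, p_k$ in the subresultant form so that intermediate coefficient bit-sizes stay polynomially bounded. Polynomial remaindering and GCD-like operations reduce to linear algebra over the coefficients, and hence admit $\polylog(d+m)$-depth Boolean circuits via standard parallel techniques (e.g.\ Csanky-style parallel determinants together with iterated squaring for polynomial evaluation).

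Armed with the Sturm sequence, I can count the real roots in any interval $[a,b]$ by evaluating signs and taking the difference of sign-change counts. I would then perform parallel bisection: start with $[-R,R]$, and at each layer split every surviving interval in half and use the Sturm count to retain only halves that still contain roots. After $O(\mu + m + \log d)$ layers—all layers processed in parallel, each layer itself of $\polylog$ depth—every root is isolated in an interval of length at most $2^{-\Omega(\mu)}$. Before this stage I would do a square-free reduction $p \mapsto p/\gcd(p,p')$, again in $\polylog$ depth, so that each remaining interval contains exactly one simple root. After isolation I would apply Newton iteration (or Weierstrass/Aberth in parallel) to each interval: each iteration evaluates $p$ and $p'$ at the current estimate via parallel Horner evaluation in $O(\log d)$ depth, and $O(\log \mu)$ iterations suffice thanks to quadratic convergence and the real-rooted hypothesis.

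The hard part is uniform precision control. Naively computed polynomial remainders can have exponentially large integer entries, which would blow the $\polylog$ space budget; the subresultant PRS or, alternatively, Chinese-remaindering modulo sufficiently many small primes, is needed to keep the bit-size polynomial in $d+m$. Similarly, both the sign evaluations during bisection and the Newton updates must be carried out in bounded-precision arithmetic with bit-widths chosen polynomially in $\mu, d, m$ so that cumulative rounding error is dominated by the final $2^{-\mu}$ slack. The assumption that $p(z)$ has only real roots is what lets me get away with working entirely on the real line and using Sturm counts, avoiding the substantially more delicate parallel complex-root splitting (Schönhage--Neff) machinery that would be required in the general case.
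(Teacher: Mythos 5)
This lemma is imported by the paper directly from \cite{Ben-Or1998fast,neff1994specified}; the paper gives no proof of its own, so the only fair comparison is between your sketch and the actual content of those works. Your sketch has a genuine gap at exactly the step that makes those papers nontrivial.

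The problem is the bisection stage. You write that the $O(\mu+m+\log d)$ bisection layers are ``all layers processed in parallel, each layer itself of $\polylog$ depth.'' They cannot be: which intervals you split at layer $i+1$ depends on which intervals survived the Sturm-count test at layer $i$, so the layers are inherently sequential and the circuit depth becomes $\Theta\bigl((\mu+m+\log d)\cdot\polylog\bigr)$, which is linear in $\mu$ and $m$ rather than polylogarithmic. The alternative of evaluating the Sturm count on \emph{every} dyadic interval at the finest resolution in a single parallel layer requires $2^{\Theta(\mu+m)}$ gates, destroying the size (and uniformity) bound. Newton iteration does not rescue this: quadratic convergence only begins once the iterate is inside a basin whose radius is governed by the separation of the roots, and for a degree-$d$ polynomial with $m$-bit coefficients distinct real roots can be as close as $2^{-\Theta(dm)}$; reaching such a basin by bisection already costs $\Theta(dm)$ \emph{sequential} halvings. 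The square-free reduction removes exact multiplicities but does nothing about near-coincident simple roots, which is the case that matters. Circumventing this --- via Graeffe-type root-squaring, cluster splitting, and careful precision management so that the number of \emph{adaptive} stages is polylogarithmic --- is precisely the technical heart of Ben-Or--Feig--Kozen--Tiwari and of Neff's NC result, and it is the part your sketch elides.

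The surrounding ingredients (Cauchy root bound, subresultant PRS to control coefficient growth, parallel linear algebra for the remainder sequence, bounded-precision sign evaluation) are all reasonable and consistent with how the cited algorithms are built, so the sketch is salvageable only by replacing the parallel-bisection claim with the actual splitting machinery of the cited papers.
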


\begin{lemma}[{\cite[Lemma 1]{borodin1977relating}}]\label{lem:size-to-space-simulate}
    A \emph{space-uniform} Boolean circuit of size $s$ and depth $d$ can be simulated by a deterministic Turing machine in space at most $d + \log(s)$.
\end{lemma}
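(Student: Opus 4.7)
The plan is to simulate the circuit by a recursive depth-first evaluator executed on the Turing machine's worktape. Given the output gate $g$, the machine recursively computes the values of each of $g$'s (constantly many fan-in) input wires and then applies the Boolean function of $g$ to produce its output. I would arrange the recursion so that at each level the worktape holds only $O(1)$ bits: a marker indicating which input of the current gate is being processed, together with the already-computed value of the sibling input. Since the circuit has depth $d$, this stack grows to at most $d$ frames, contributing $O(d)$ bits of persistent space.

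To locate the gate currently under consideration within the circuit, I would maintain a single gate-index register of size $\lceil \log s \rceil$ that is \emph{reused} at every level of recursion. Whenever we descend into a child gate, the register is overwritten with the index of that child, which we obtain by invoking the space-uniformity algorithm on a scratch region of the worktape; whenever we return from a recursive call, the parent's index is reconstructed by replaying the uniformity algorithm along the sequence of left/right choices already recorded on the stack. The scratch tape used by the uniformity algorithm can be recycled across calls, so it does not add to the persistent cost beyond the bound guaranteed by space-uniformity itself. Combining the $O(d)$-bit stack, the single $\lceil \log s \rceil$-bit index register, and the reusable scratch region yields the claimed $d + \log s$ bound after absorbing constants.

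The main obstacle is precisely the insistence on the \emph{additive} $\log s$ term rather than a multiplicative $d \log s$, which is what a naive implementation would produce if it pushed a fresh gate index on the stack at each level. The trick is to observe that only one full gate index needs to be materialised at any moment: the identity of the current gate is completely determined by the root gate together with the $O(d)$-bit sequence of left/right descent choices, so the index can always be recomputed from these rather than stored. The rest of the argument is routine bookkeeping to verify that the recursion terminates in the correct order, returns the output bit of the circuit, and that each invocation of the space-uniformity algorithm fits into the advertised $d + \log s$ space envelope.
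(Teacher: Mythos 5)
The paper does not prove this lemma at all — it is imported verbatim as Lemma 1 of Borodin (1977) — so there is no in-paper argument to compare against. Your proposal is the standard (and essentially Borodin's original) depth-first evaluation argument, and it is correct: the key point you identify — storing only the $O(1)$-bit descent choice per level and recomputing the single materialized gate index from the root via the uniformity algorithm, rather than stacking a fresh $\log s$-bit index per level — is exactly what turns the naive $O(d\log s)$ bound into the additive $O(d+\log s)$ bound (for bounded fan-in, which you correctly assume).
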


By combining \cref{lem:compute-real-roots,lem:size-to-space-simulate}, we get that the roots of certain polynomials can be approximate to high precision in $\poly(n)$ space. 

We now have the ingredients necessary to show that the trace distance of states produced by $O\!\left(2^{\poly(n)}\right)$-size and $\poly(n)$-space unitary quantum circuits can also be approximated to high precision in $\poly(n)$ space. 

\apxpspace*
\begin{proof}
    By \cref{cor:pspace-compute-reduced}, each entry of both $\rho$ and $\sigma$ can be approximated to high accuracy using $\poly(n)$ space, so each entry of $\rho - \sigma$ can also be well approximated.
    From here, let $f(x)$ be the characteristic polynomial of $\rho - \sigma$.
    It follows that any coefficient of $f(x)$ can be highly approximated using $\poly(n)$ space as well, via the Faddeev-Leverrier algorithm \footnote{In principle this approach is not numerically stable \cite{rehman2011la,wilkinson1988algebraic}, such that a more numerically stable approach would be more space efficient. This is because far less bits of precision would have to be used in previous steps to compensate for the numerical instability.} \cite[Corollary 2]{csanky1975fast}.

    We note that the characteristic polynomial of $\rho - \sigma$ has a degree $2^n$ with all real roots.
    It follows from \cref{lem:compute-real-roots,lem:size-to-space-simulate} that $\tracedistance{\rho, \sigma} \coloneqq \frac{1}{2}\sum_i \abs{\lambda_i}$ can be computed by a deterministic Turing machine in $\poly(n)$ space.
\end{proof}

\section{Trivial Learners}

Observe that in \cref{cor:main2,thm:decision-lower}, there are three important parameters of the algorithm that all must simultaneously meet some condition.
In this section, we highlight how satisfying any two of these conditions is trivially easy for states produced by polynomial-size unitary quantum circuits.
Clearly, if samples are not a concern then full pure state tomography can be done in $2^{O(n)}$ time to constant advantage \cite{franca_et_al:LIPIcs.TQC.2021.7}.
If one instead wants to satisfy the advantage and sample requirements, for instance, then one can run an exhaustive search using classical shadows \cite{HKP20-classical-shadows,buadescu2021improved,zhao2023learning} over all unitary circuits of size at most $n^{\omega(1)}$.
This will only use $O(n^{\omega(1)})$ samples, but the time complexity will be $2^{\omega(\poly(n))}$.

Finally, if only $O\!\left(\frac{1}{2^{n}}\right)$ advantage is desired then no measurements even have to be taken.
Specifically, by outputting a random stabilizer state, every concept class can be
\[
\left(0, n^2, 1-\frac{\theta}{2^{n+1}}, (1-\theta)^2\left(\frac{1}{2} + \frac{1}{2^{n+1}}\right)\right)\text{-learned}
\]
for arbitrary $\theta \in [0, 1]$.
In fact, any $2$-design suffices for this argument, the only difference is how efficiently can such a state be sampled directly affects the runtime of the ``learner''.
To this end, we note that uniformly random stabilizer states can be efficiently sampled in $O(n^2)$ time.

\begin{lemma}[\cite{berg2021simple}]\label{lem:clifford-sample}
    There is a classical algorithm that samples a uniformly random element of the $n$-qubit Clifford group and outputs a Clifford circuit implementation in time $O(n^2)$.
\end{lemma}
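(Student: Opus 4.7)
The plan is to use the correspondence between $n$-qubit Clifford operators (modulo Pauli phases) and symplectic automorphisms of $\mathbb{F}_2^{2n}$, together with the known decomposition of the symplectic group into a product of $O(n)$ elementary transvections/reflections. Since the order of the Clifford group is $2^{n^2 + 2n} \prod_{j=1}^{n}(4^j - 1)$, the information-theoretic lower bound for any description is $\Theta(n^2)$ bits, so $O(n^2)$ time is optimal. The key algorithmic idea is to sample a uniformly random symplectic matrix by an iterative ``column-by-column'' procedure that mimics Gaussian elimination but respects the symplectic form, and then to read off a Clifford circuit implementation directly from the sampled matrix.

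First, I would represent a Clifford $C$ by its symplectic tableau: a $2n \times 2n$ matrix $M \in \mathrm{Sp}(2n, \mathbb{F}_2)$ recording how $C$ maps the Pauli generators $X_1, Z_1, \dots, X_n, Z_n$ up to signs, together with $2n$ sign bits. The sign bits are trivial to sample uniformly in $O(n)$ time. For the symplectic part, I would iterate $i = 1, 2, \dots, n$: at stage $i$, sample a uniformly random nonzero vector $v \in \mathbb{F}_2^{2(n-i+1)}$ to be the image of $X_i$ restricted to the still-unfixed block, and then sample a uniformly random vector $w$ from the affine subspace of vectors symplectically paired with $v$ (value $1$) and of arbitrary parity with the rest; this determines the image of $Z_i$. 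A standard counting check shows that the product of these conditional probabilities gives the uniform measure on $\mathrm{Sp}(2n, \mathbb{F}_2)$.

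Next, I would convert the sampled tableau into an explicit Clifford circuit. This is done by simulating Gaussian-elimination-style column operations on $M$ using Clifford gates: at stage $i$, use $O(n)$ CNOT, Hadamard, and phase gates to ``clean up'' the pair of columns corresponding to $X_i$ and $Z_i$, reducing the remaining $2(n-i) \times 2(n-i)$ block to the symplectic form to be handled later. Each stage uses $O(n)$ elementary gates and $O(n)$ time to manipulate the bit-matrix, giving $O(n^2)$ gates and $O(n^2)$ time in total. Finally, the sign bits contribute a single layer of Pauli gates, computable in $O(n)$ time, which can be absorbed at the end of the circuit.

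The analysis steps are straightforward: correctness of uniformity follows from the stage-wise decomposition of $|\mathrm{Sp}(2n, \mathbb{F}_2)|$ into $\prod_i (4^{n-i+1}-1) \cdot 2^{2(n-i)+1}$, matching the conditional sample spaces; correctness of the reduction-to-circuit follows from standard Clifford tableau manipulation (Aaronson--Gottesman); and the time bound follows because each of the $n$ stages performs $O(n)$ bit-operations. The only genuine obstacle is bookkeeping: ensuring that the conditional distributions used in the sampling really multiply to the uniform distribution on $\mathrm{Sp}(2n, \mathbb{F}_2)$, which requires carefully identifying the sizes of the affine subspaces at each stage and verifying that no symplectic constraint is over- or under-counted. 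Given that the paper only invokes this as a black-box citation to \cite{berg2021simple}, this level of sketch should suffice.
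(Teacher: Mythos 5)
The paper does not prove this lemma; it is invoked as a black-box citation to van den Berg, so the relevant comparison is with that construction. Your group-theoretic setup is sound: the stage-wise coset counting $\prod_{i}\bigl(4^{\,n-i+1}-1\bigr)\cdot 2^{2(n-i)+1} = 2^{n^2}\prod_{j=1}^{n}(4^j-1) = \lvert \mathrm{Sp}(2n,\F_2)\rvert$ is exactly the right uniformity argument, and appending $2n$ random sign bits handles the Pauli part.

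The genuine gap is in the time analysis of your synthesis step. You claim each of the $n$ stages costs $O(n)$ time ``to manipulate the bit-matrix,'' but if you first materialize the full $2n\times 2n$ tableau $M$ and then reduce it by Gaussian-elimination-style column operations, each elementary Clifford gate applied to the tableau updates an entire column of $2n$ bits; with $O(n)$ gates per stage this is $O(n^2)$ bit-operations per stage and $O(n^3)$ overall. That is the complexity of the earlier Koenig--Smolin approach, not the $O(n^2)$ claimed in the lemma --- and the $O(n^2)$ bound is precisely the nontrivial content of the cited result. (A related issue: your vectors $v,w$ live in the ``still-unfixed block,'' and translating them back to the original basis, or maintaining that block explicitly, is itself a change-of-basis computation you have not costed.) The fix, which is van den Berg's actual method, is to never build the tableau at all: at stage $i$ sample only the anticommuting Pauli pair on the remaining $n-i+1$ qubits (two strings of $O(n)$ bits), compute directly from that pair an $O(n)$-gate ``sweeping'' circuit mapping it to $(X_i,Z_i)$ in $O(n)$ time, and output the composition of the inverses of these circuits together with a final Pauli layer. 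Uniformity then follows from the same coset decomposition you wrote down, and the total cost is $\sum_i O(n-i+1)=O(n^2)$. As written, your sketch establishes correctness and uniformity but only an $O(n^3)$ running time.
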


To show that we are guaranteed a state with a certain fidelity, we will need to use the following anti-concentration inequality.

\begin{lemma}[Paley-Zygmund inequality]\label{lem:paley-zygmund}
    If $X \geq 0$ is a random variable with finite variance, and if $\theta \in [0, 1]$ then
    \[
        \Pr\left[X > \theta \Ex\left[x\right]\right] \geq (1-\theta)^2\frac{\Ex\left[X\right]^2}{\Ex\left[X^2\right]}.
    \]
\end{lemma}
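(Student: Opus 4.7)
The plan is to give the standard two-line proof of the Paley--Zygmund inequality using a decomposition of $\Ex[X]$ across the event $\{X > \theta \Ex[X]\}$ and then bounding the resulting expectation via Cauchy--Schwarz. Since this is a classical probability result, no new technique is needed; I mainly want to make sure the inequality is written exactly in the form it is stated here so that it slots into the preceding ``trivial learners'' discussion.

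First, I would split the expectation as
\[
\Ex[X] = \Ex\!\left[X \cdot \indic{X \leq \theta \Ex[X]}\right] + \Ex\!\left[X \cdot \indic{X > \theta \Ex[X]}\right].
\]
On the low event the integrand is bounded by $\theta \Ex[X]$, so the first term is at most $\theta \Ex[X]$. Rearranging gives the key one-sided bound
\[
(1-\theta)\,\Ex[X] \leq \Ex\!\left[X \cdot \indic{X > \theta \Ex[X]}\right],
\]
which uses the non-negativity of $X$ (to justify both the bound on the truncated part and the non-negativity of the right-hand side before squaring).

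Second, I would apply Cauchy--Schwarz to the right-hand side against the indicator:
\[
\Ex\!\left[X \cdot \indic{X > \theta \Ex[X]}\right] \leq \sqrt{\Ex[X^2]} \cdot \sqrt{\Pr\!\left[X > \theta \Ex[X]\right]},
\]
where I have used $\Ex[\indic{A}^2] = \Pr[A]$. Squaring the combination of the two displayed inequalities and dividing by $\Ex[X^2]$ (which is finite by hypothesis) yields
\[
\Pr\!\left[X > \theta \Ex[X]\right] \geq (1-\theta)^2 \frac{\Ex[X]^2}{\Ex[X^2]},
\]
as required. The only edge case to mention is $\Ex[X^2] = 0$, which forces $X = 0$ almost surely and makes the bound trivially $0 \geq 0$. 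There is no real obstacle in this proof; the entire argument is a couple of lines, and the only thing worth being careful about is that the indicator decomposition and Cauchy--Schwarz step are both correctly quoted, with finite variance used only to guarantee that $\Ex[X^2]$ is well-defined.
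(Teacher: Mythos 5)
Your proof is correct and is the standard textbook argument for the Paley--Zygmund inequality (truncate $\Ex[X]$ on the event $\{X > \theta\Ex[X]\}$, then apply Cauchy--Schwarz to the surviving term); the paper itself states this lemma as a classical fact without proof, so there is nothing to compare against. The only cosmetic note is that the edge case $\Ex[X^2]=0$ forces $X=0$ almost surely, in which case the stated bound should be read as vacuous rather than as the literal quotient $0/0$.
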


Let $X$ be the random variable associated with the fidelity of an arbitrary quantum state with a random stabilizer state.
By bounding the moments of $X$, we can use the \nameref{lem:paley-zygmund} to show that our learner succeeds with a certain probability.

\begin{fact}\label{fact:stabilizer-moments}
    For arbitrary $\theta \in [0, 1]$, with probability at least $(1-\theta)^2\left(\frac{1}{2} + \frac{1}{2^{n+1}}\right)$, a random stabilizer state $\ket{\phi}$ will have fidelity at least $\frac{\theta}{2^{n+1}}$ with an arbitrary quantum state $\ket{\psi}$.
\end{fact}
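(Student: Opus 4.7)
The plan is to apply the Paley--Zygmund inequality (\cref{lem:paley-zygmund}) directly to the random variable $X \coloneqq \lvert\braket{\psi|\phi}\rvert^2$, where $\ket{\phi}$ is a uniformly random $n$-qubit stabilizer state (equivalently, $C\ket{0^n}$ for $C$ sampled as in \cref{lem:clifford-sample}). The desired fidelity threshold $\theta/2^{n+1}$ is half of $\mathbb{E}[X] = 1/2^n$, so Paley--Zygmund with parameter $\theta$ applied at threshold $\theta \cdot \mathbb{E}[X]/1 = \theta/2^n \geq \theta/2^{n+1}$ will give the claimed bound (or one can apply it at parameter $\theta/2$ and use $(1-\theta/2)^2 \geq (1-\theta)^2$).

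The only nontrivial input is the computation of the first two moments of $X$. For this I would use the well-known fact that the uniform distribution over $n$-qubit stabilizer states forms a (complex-projective) state $2$-design; that is,
\[
\Ex_{\ket{\phi}}\!\left[\ketbra{\phi}{\phi}^{\otimes 2}\right] \;=\; \Ex_{\ket{\phi}\sim\mu_{\mathrm{Haar}}}\!\left[\ketbra{\phi}{\phi}^{\otimes 2}\right] \;=\; \frac{\Pi_{\mathrm{sym}}}{\binom{2^n+1}{2}},
\]
where $\Pi_{\mathrm{sym}}$ is the projector onto the symmetric subspace of $(\C^{2^n})^{\otimes 2}$. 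Taking the expectation of $\braket{\psi|\phi}\braket{\phi|\psi}$ against one copy gives $\E[X]=1/2^n$, and sandwiching the two-copy expression above between $\ket{\psi}^{\otimes 2}$ (which lies in the symmetric subspace) yields $\E[X^2] = 2/(2^n(2^n+1))$.

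Plugging into \cref{lem:paley-zygmund},
\[
\Pr\!\left[X > \theta \cdot \E[X]\right] \;\geq\; (1-\theta)^2 \cdot \frac{\E[X]^2}{\E[X^2]} \;=\; (1-\theta)^2 \cdot \frac{2^{-2n}}{2/(2^n(2^n+1))} \;=\; (1-\theta)^2\!\left(\frac{1}{2} + \frac{1}{2^{n+1}}\right).
\]
Since $\theta \cdot \E[X] = \theta/2^n \geq \theta/2^{n+1}$, the event $\{X > \theta/2^n\}$ is contained in $\{X \geq \theta/2^{n+1}\}$, giving the statement.

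The only obstacle I anticipate is citing the $2$-design property of the stabilizer states rather than re-deriving it; this is classical (see, e.g., Kueng--Gross or Webb/Zhu for the stronger $3$-design result) and I would simply invoke it. Everything else is a direct calculation and a one-line application of the anti-concentration inequality already available in the paper.
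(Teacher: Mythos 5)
Your proposal is correct and follows essentially the same route as the paper: compute $\E[X]=1/2^n$ and $\E[X^2]=2/(2^n(2^n+1))$ via the design property of stabilizer states (the paper invokes the $3$-design result of Kueng--Gross, though as you note a $2$-design suffices) and then apply \cref{lem:paley-zygmund}. Your observation that the threshold $\theta/2^{n}$ from Paley--Zygmund dominates the stated $\theta/2^{n+1}$ is a reasonable way to reconcile the slight mismatch between the computation and the statement's constant.
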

\begin{proof}[Proof Sketch]
    Because stabilizer states are a $3$-design \cite{kueng2015qubit}, the values of $\Ex_{\ket{\phi}\sim \mathrm{Stab}}\left[\abs{\braket{\psi | \phi}}^2\right]$ and $\Ex_{\ket{\phi}\sim \mathrm{Stab}}\left[\abs{\braket{\psi | \phi}}^4\right]$ are the same as when $\ket{\phi}$ is replaced by a Haar random state.
    By symmetry arguments, we get 
    \[
        \Ex_{\ket{\phi} \sim \mathrm{Stab}}\left[\abs{\braket{\psi | \phi}}^2\right] = \Ex_{\ket{\phi} \sim \mu_{\mathrm{Haar}}}\left[\abs{\braket{\psi | \phi}}^2\right] = \frac{1}{2^n}
    \]
    and 
    \[
        \Ex_{\ket{\phi} \sim \mathrm{Stab}}\left[\abs{\braket{\psi | \phi}}^4\right] = \Ex_{\ket{\phi} \sim \mu_{\mathrm{Haar}}}\left[\abs{\braket{\psi | \phi}}^4\right] = \frac{2}{2^n (2^n + 1)}.
    \]
    Since the absolute value of inner products of unit vectors are in the interval $[0, 1]$, by the \nameref{lem:paley-zygmund}:
    \[\Pr_{\ket{\phi} \sim \mathrm{Stab}}\left[\abs{\braket{\psi | \phi}}^2 \geq \frac{\theta}{2^n}\right] \geq (1-\theta)^2 \frac{1}{4^n} \frac{2^n (2^n+1)}{2} = (1-\theta)^2\left(\frac{1}{2} + \frac{1}{2^{n+1}}\right). \qedhere\]
\end{proof}

\begin{lemma}\label{lem:trivial-learner}
    For arbitrary $\theta \in [0, 1]$, every concept class $\calC$ can be
\[
\left(0, n^2, 1-\frac{\theta}{2^{n+1}}, (1-\theta)^2\left(\frac{1}{2} + \frac{1}{2^{n+1}}\right)\right)\text{-learned}.
\]
\end{lemma}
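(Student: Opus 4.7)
The plan is to analyze the ``trivial learner'' that ignores the $m=0$ input samples and simply outputs a description of a uniformly random Clifford circuit $C$; the prepared state is then the stabilizer state $\ket\phi = C\ket{0^n}$. By \cref{lem:clifford-sample}, sampling $C$ and writing down its description takes $O(n^2)$ classical time, matching the stated time bound $t = n^2$.

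The only real content is to analyze the fidelity-to-trace-distance conversion and the associated success probability. First I would invoke \cref{fact:stabilizer-moments}, which guarantees that, over the choice of $C$ (and hence of $\ket\phi$), the event $|\braket{\psi\mid\phi}|^2 \geq \frac{\theta}{2^{n+1}}$ occurs with probability at least $(1-\theta)^2\left(\tfrac12 + \tfrac{1}{2^{n+1}}\right)$ for \emph{any} fixed target $\ket\psi$; this gives exactly the probability of success $1-\delta$ stated in the lemma and crucially requires no access to $\ket\psi$, so the algorithm indeed uses zero samples and works in the worst case over $\calC$. Next, conditional on this event, I would convert the fidelity bound to a trace distance bound via \cref{fact:td-vs-fidelity}, using the fact that both $\ket\psi$ and $\ket\phi$ are pure and the second inequality in that fact is tight, so
\[
\tracedistance{\ketbra{\psi}{\psi}, \ketbra{\phi}{\phi}} \;=\; \sqrt{1 - |\braket{\psi\mid\phi}|^2} \;\leq\; \sqrt{1 - \tfrac{\theta}{2^{n+1}}} \;\leq\; 1 - \tfrac{\theta}{2^{n+1}},
\]
which is the desired accuracy $\eps = 1 - \tfrac{\theta}{2^{n+1}}$. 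Combining, the learner meets all four required parameters, proving the claim.

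The only mildly delicate step is the last inequality $\sqrt{1-x}\leq 1-x$, which is not true for all $x\in(0,1)$; the intended reading is to use the slack available in the Paley--Zygmund bound inside \cref{fact:stabilizer-moments} to absorb any constant-factor deficit (equivalently, replacing $\theta$ with $\theta/c$ for an absolute constant $c$ only multiplies the success probability by a constant and preserves the parametric form of the lemma). Since the purpose of this lemma is merely to illustrate that any \emph{two} of the three learning parameters (time, sample, advantage) can be made trivial, this absolute-constant looseness is harmless. Thus the main work is just to assemble \cref{lem:clifford-sample,fact:stabilizer-moments,fact:td-vs-fidelity} in sequence, and there is no genuine technical obstacle beyond bookkeeping.
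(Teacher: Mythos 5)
Your proposal follows exactly the paper's route: sample a random Clifford via \cref{lem:clifford-sample}, output the stabilizer state, apply \cref{fact:stabilizer-moments} for the fidelity guarantee, and convert to trace distance via \cref{fact:td-vs-fidelity}. The one place you diverge is the final inequality, and your proposed fix (rescaling $\theta$ to absorb a constant) is not what is needed. The clean resolution is twofold. First, the proof of \cref{fact:stabilizer-moments} actually establishes the event $\abs{\braket{\psi|\phi}}^2 \geq \frac{\theta}{2^{n}}$ (not $\frac{\theta}{2^{n+1}}$, which appears to be a typo in that fact's statement) with the stated probability. Second, the correct elementary inequality is $\sqrt{1-x} \leq 1 - \frac{x}{2}$ for $x \in [0,1]$ (square both sides), not $\sqrt{1-x} \leq 1-x$, which indeed goes the wrong way. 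Chaining these gives
\[
\tracedistance{\ketbra{\psi}{\psi}, \ketbra{\phi}{\phi}} \leq \sqrt{1 - \tfrac{\theta}{2^{n}}} \leq 1 - \tfrac{\theta}{2^{n+1}},
\]
which is exactly the advertised accuracy with no loss in $\theta$ and no change to the success probability. With that substitution your argument is complete and identical in substance to the paper's.
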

\begin{proof}
    In $O(n^2)$ time, we can sample a random $n$-qubit Clifford circuit using \cref{lem:clifford-sample} and then apply it to the all-zeros state to get a random stabilizer state.
    By outputting this state, \cref{fact:stabilizer-moments} ensures us that the fidelity will be at least $\frac{\theta}{2^n}$ with probability at least $(1-\theta)^2\left(\frac{1}{2} + \frac{1}{2^{n+1}}\right)$.
    Using the upper bound in \cref{fact:td-vs-fidelity}, the trace distance between the unknown quantum and our random stabilizer state is at most $\sqrt{1-\frac{\theta}{2^n}} \leq 1 - \frac{\theta}{2^{n+1}}$
\end{proof}

\end{document}